\def\maxwidth{ %
  \ifdim\Gin@nat@width>\linewidth
    \linewidth
  \else
    \Gin@nat@width
  \fi
}
\definecolor{fgcolor}{rgb}{0.345, 0.345, 0.345}
\definecolor{shadecolor}{rgb}{.97, .97, .97}
\definecolor{messagecolor}{rgb}{0, 0, 0}
\definecolor{warningcolor}{rgb}{1, 0, 1}
\definecolor{errorcolor}{rgb}{1, 0, 0}
\newcommand{\algmargin}{\the\ALG@thistlm}
\newlength{\whilewidth}
\algnewcommand{\parState}[1]{\State%
	\parbox[t]{\dimexpr\linewidth-\algmargin}{\strut #1\strut}}
\numberwithin{equation}{section}
\theoremstyle{plain}
\theoremstyle{plain}
\long\def\comment#1{}
\newtheorem{theorem}{Theorem}
\newtheorem{assumptiona}{Assumption A.\!}
\newtheorem{assumptionb}{Assumption B.\!}
\newtheorem{assumptionc}{Assumption C.\!}
\theoremstyle{definition}
\numberwithin{definition}{section}
\newtheorem{remark}{Comment}[section]
\numberwithin{remark}{section}
\DeclareMathOperator{\diag}{diag}
\DeclareMathOperator{\Var}{Var}
\newcommand{\Pb}{\mathbb{P}}
\newcommand{\R}{\mathbb{R}}
\newcommand{\F}{\mathcal{F}}
\newcommand{\Ep}{{\mathrm{E}}}
\newcommand{\g}{g}
\newcommand{\M}{\mathrm{M}}
\newcommand{\bP}{\mathbb{P}}
\newcommand{\E}{\mathbb{E}}
\newcommand{\setword}[2]{%
  \phantomsection
  #1\def\@currentlabel{\unexpanded{#1}}\label{#2}%
}
\begin{document}

\begin{frontmatter}
\title{Estimation and Uniform Inference in Sparse High-Dimensional Additive Models}
\runtitle{Inference in High-Dimensional Additive Models}
\runauthor{Bach, Klaassen, Kueck, Spindler}
\thankstext{T1}{Version: April, 2024.}

\begin{aug}
\author{\fnms{\textit{Philipp}} \snm{\textit{Bach}}\ead[label=e1]{}},
\author{\fnms{\textit{Sven}} \snm{\textit{Klaassen}}\ead[label=e2]{}},\\
\author{\fnms{\textit{Jannis}} \snm{\textit{Kueck}}\ead[label=e3]{}\thanksref{T2}}
\and
\author{\fnms{\textit{Martin}} \snm{\textit{Spindler}}\ead[label=e4]{}}

\thankstext{T2}{Corresponding author: \textit{kueck@dice.hhu.de}.}


\address{University of Hamburg, Heinrich Heine University Düsseldorf}


%
%
\end{aug}

\begin{abstract}
We develop a novel method to construct uniformly valid confidence bands for a nonparametric component $f_1$ in the sparse additive model $Y=f_1(X_1)+\ldots + f_p(X_p) + \varepsilon$ in a high-dimensional setting. Our method integrates sieve estimation into a high-dimensional Z-estimation framework, facilitating the construction of uniformly valid confidence bands for the target component $f_1$. To form these confidence bands, we employ a multiplier bootstrap procedure. Additionally, we provide rates for the uniform lasso estimation in high dimensions, which may be of independent interest. 
Through simulation studies, we demonstrate that our proposed method delivers reliable results in terms of estimation and coverage, even in small samples.
\end{abstract}
\begin{keyword}[class=MSC]
\kwd[Primary ]{62G08}
\kwd[; Secondary ]{62-07}
\kwd{41A15}
\end{keyword}

\begin{keyword}
\kwd{Additive Models}
\kwd{High-dimensional Setting}
\kwd{Z-estimation}
\kwd{Double Machine Learning}
\kwd{Lasso}
\end{keyword}

\end{frontmatter}

\section{Introduction}
Nonparametric regression offers a way to estimate  the relationship $f$ between a target variable $Y$ and input variables $X=(X_1,\ldots,X_p)^T$ without imposing restrictive functional assumptions. This relationship is expressed as
\[
Y=f(X_1,\ldots,X_p) + \varepsilon,
\]
where $\varepsilon$ denotes the random error term satisfying $\Ep[\varepsilon|X]=0$.  However, in settings with a large number of regressors $p$, especially in cases where  $p$ exceeds the number of observations $n$, the well-known curse of dimensionality makes it practically impossible to estimate the regression function $f(X_1, \ldots, X_p)$. A popular approach to overcome this limitation of nonparametric estimation in practice is to impose an additive structure of the regression function, leading to the following additive models
\begin{align}\label{GAM}
Y= f_1(X_1) + \ldots + f_p(X_p) + \varepsilon,
\end{align}
where $f_j(\cdot)$, $j=1,\ldots,p$, are smooth univariate functions. However, even within this framework, challenges can arise when incorporating a large number of regressors or components, or numerous sieve terms to model the component functions $f_j$. While in many studies, the dimension of the problem is fixed, we follow the recent literature and allow the number of parameters to grow asymptotically with and exceed the sample size. This approach provides a reasonable approximation for various empirical settings involving complex data. 
The concept of additive models has its origins in the works of \citet{friedman1981}, \citet{stone1985} and \citet{hastie1990}. Readers seeking  an introduction to additive models will find useful textbook treatments in \citet{hastie1990} and \citet{wood2017}. A  seminal contribution to the analysis of semi- and nonparametric models, including additive models, was made by \cite{Robinson1988}.  Since then, numerous studies focusing on nonparametric estimation of these models have  been published, with contributions from \cite{LintonNielsen}, \cite{Andrews1990}, \cite{FanJiang}, \cite{Horowitz2004},  \cite{HorowitzMammen2006},  \cite{Mammen2006}, \cite{Horowitz2012} and many more. 

Traditionally, the literature has concentrated on estimation and inference in low-dimensional settings where $p$ is fixed. Recent years, however, have witnessed considerable progress in the understanding and analysis of high-dimensional additive models that allow the number of components to grow with the sample size. For example, the theoretical literature has provided insights into estimation rates in high-dimensional additive models, as in the work of \cite{sardy2004}, \cite{lin2006} and many others \citep{ravikumar2009, meier2009, huang2010, koltchinskii2010, kato2012two, petersen2016, lou2016}. 

The theoretical results in high-dimensional settings rely on a sparsity assumption, which requires that only a small number $s$ of coefficients or components are relevant, meaning they are non-zero. While the number of parameters or covariates exceed the sample size, or the number of parameters can increase along with the sample size, sparsity requires that the number of relevant parameters remains smaller than the sample size. However, it may still increase as the sample size grows, thus introducing additional structure.
While high-dimensional additive models have been studied extensively, there has been limited focus on valid inference within them, particularly in terms of constructing valid hypothesis tests or confidence regions. Approaches to construct confidence bands have been proposed by \cite{haerdle1989}, \cite{sun1994}, \cite{fan2000}, \cite{claeskens2003} and \cite{zhang2010}, but only in the widely studied fixed-dimensional setting. Only recently have new results been derived regarding valid inference in high-dimensional additive models. In the remainder of the introduction section, we will review these recent advancements and highlight our contribution to this emerging literature.\\

The primary aim of our paper is to provide a method for constructing uniformly valid inference and confidence bands in sparse high-dimensional models in the sieve framework. In doing so, we contribute to the growing literature on high-dimensional inference in additive models, especially that on debiased/double machine learning. The double machine learning approach \citep{belloni2014, Chernozhukov2018} offers a general framework for uniformly valid inference in high-dimensional settings. 
Similar methods, such as those proposed by \cite{vandegeer2014} and \cite{zhang2014}, have also produced valid confidence intervals for low-dimensional parameters in high-dimensional linear models. These studies are based on the so-called debiasing approach, which provides an alternative framework for valid inference. The framework entails a one-step correction of the lasso estimator, resulting in an asymptotically normally distributed estimator of the low-dimensional target parameter. For a survey on post-selection inference in high-dimensional settings and its generalizations, we refer to \cite{spindler2015}.\\
In research closely related to ours, \cite{Kozbur2021} proposes a post-nonparametric double selection approach for a scalar functional of a single component.
We consider the same setting as \citet{Kozbur2021}, which encompasses a more general additively separable model
\[
Y= f_{1}(X_1)+ f_{-1}(X_2,\dots,X_p) + \varepsilon,
\]
which includes the additive model
\[
Y= f_1(X_1) + \ldots + f_p(X_p) + \varepsilon.
\]
\citet{Kozbur2021} focuses on inference on functionals of the form $\theta=a(f_1)$ resulting in pointwise confidence intervals based on a penalized series estimator. Although many of these functionals are of interest, \citet{Kozbur2021} approach cannot be easily extended to uniform inference because it considers only differentiable functionals. In contrast, our framework allows us to extend these results and clarify the underlying assumptions. First, building on recent results for  inference on high-dimensional target parameters by \cite{belloni2018uniformly} and \cite{Belloni2014b}, we are able to construct uniformly valid confidence bands for the whole function $f_1$. Second, the post-nonparametric double selection approach in \citet{Kozbur2021} relies on very technical assumptions concerning model selection properties of lasso in Assumption 7 that are difficult to verify (see also the discussion in \cite{lu2019} and Example 2 in \cite{Kozbur2021}). We have much simpler and less technical assumptions since we do not need to rely on model selection properties of lasso. We only make use of the rate of convergence of our lasso estimate to orthogonalize our estimation procedure. In this context, we provide tailored results for uniform lasso estimation that are necessary for establishing valid inference.\\

In a recent study based on the previously mentioned debiasing approach by \cite{zhang2014}, \cite{gregory2016} propose an estimator for the first component $f_1$ in a high-dimensional additive model in which the number of additive components $p$ may increase with the sample size. The estimator is constructed in two steps. The first step entails constructing an undersmoothed estimator based on near-orthogonal projections with a group lasso bias correction. A debiased version of the first step estimator is used to generate pseudo responses $\hat{Y}$. These pseudo responses are then used in the second step, which involves applying a smoothing method to a nonparametric regression problem with $\hat{Y}$ and covariates $X_1$. Under sparsity assumptions concerning the number of nonzero additive components, the so-called oracle property is shown. Accordingly, the proposed estimator in \cite{gregory2016} is asymptotically equivalent to the oracle estimator, which is based on the true functions $f_2,\dots,f_p$. The asymptotic properties of the oracle estimator are well understood and carry over to the proposed debiasing estimate, including the methodology for constructing uniformly valid confidence intervals for $f_1$.
Importantly, \cite{gregory2016} do not explicitly focus on inference, and their analysis requires much stronger assumptions to obtain the oracle property. For example, these assumptions include normally distributed errors independent of $X$, as well as  a bounded support of $X$. Similar to our framework, a large set of basis functions is chosen in their work, such as polynomials or splines, to approximate the components $f_1$ and $f_{-1}$. One distinctive feature of our work compared to \cite{gregory2016} is that we allow the degree of approximating functions to grow to infinity with increasing sample size.\\

A procedure explicitly addressing the construction of uniformly valid confidence bands for the components in high-dimensional additive models has been developed by \cite{lu2019}. The authors emphasize that achieving uniformly valid inference in these models is challenging due to the difficulty of directly generalizing the ideas from the fixed-dimensional case. Whereas confidence bands in the low-dimensional case are mostly built using kernel methods, the estimators for high-dimensional sparse additive models typically rely on sieve estimators based on dictionaries. To derive their results, \cite{lu2019} combine both kernel and sieve methods to draw upon the advantages of each, resulting in a kernel-sieve hybrid estimator.  This is a two-step estimator with tuning parameters for kernel estimation and sieves estimation, such as the bandwidth and penalization levels, which must be chosen by cross-validation. Because of the local structure of the hybrid estimator, the framework 
of  \cite{lu2019} differs from ours in that they consider an additive local approximation model with sparsity (ATLAS), in which they need to impose a local sparsity structure.

The advantage of our proposed estimator is that we do not have to leave the sieves framework while establishing the uniform validity of the resulting confidence bands. Interestingly, \cite{lu2019} conclude that "it is challenging to study the uniform confidence
band through pure sieve-type approaches" (p.~4). However, this is exactly our main contribution: achieving uniform inference within a sieve framework. We accomplish this by framing the problem as one of  high-dimensional Z-estimation, utilizing recent results from \cite{belloni2018uniformly}. Additionally, we offer a theory-driven approach for choosing the penalization level for the lasso estimation steps, eliminating the need for computationally intense cross-validation procedures. Similar to \cite{gregory2016}, \cite{lu2019} assume normally distributed errors that are independent of $X$. In contrast, our model framework allows us to dispense with the normality assumption and only requires the existence of the fourth moment of the error term. Moreover, our main results are also compatible with a heteroskedastic error. Additionally, we can overcome the requirement outlined in \cite{lu2019} that the number of relevant regressors is bounded. Instead, our model allows the number of relevant regressors to grow to infinity with increasing sample size.

\subsection{Organization of the Paper}
The paper is organized as follows. Section \ref{Setting} introduces and motivates the main regression problem in a high-dimensional additive model.  Section \ref{Estimation} presents the estimation method. In Section \ref{main_results}, the main result is provided. Section \ref{Simulation} presents a simulation study, highlighting the small sample properties and implementation of our proposed method. Section \ref{Proofs} provides the proof of the main theorem. The Appendix includes additional technical material. Appendix \ref{AppendixA} presents a general result for uniform inference on a high-dimensional linear functional. Appendix \ref{uniformestimation} provides results in terms of uniform lasso estimation rates in high-dimensions which might be of independent interest. Computational details and additional simulation results are presented in Appendix \ref{compdetails} and Appendix \ref{addsimresults}. 

\subsection{Notation}\label{Notation}
Throughout the paper, we consider a random element $W$ taken from a common probability space $(\Omega,\mathcal{A},P)$.
We denote by $P\in \mathcal{P}_n$ a probability measure out of a large class of probability measures, which may vary with the sample size (because the model is allowed to change with $n$). By $\mathbb{P}_n$ we denote the empirical probability measure. Additionally, let $\E$ and $\E_n$ be the expectations with regard to $P$, and $\mathbb{P}_n$, respectively. $\mathbb{G}_n(\cdot)$ denotes the empirical process
\begin{align*}
\mathbb{G}_n(f):=\sqrt{n}\left(\frac{1}{n}\sum\limits_{i=1}^n f(W_i)-\E[f(W_i)]\right)
\end{align*}
for a class of suitably measurable functions $\mathcal{F}:\mathcal{W}\to \mathbb{R}$. $\|\cdot\|_{P,q}$ denotes the $L^q(P)$-norm. 
Furthermore, $\|v\|_1=\sum_{l=1}^p |v_l|$ denotes the $\ell_1$-norm, $\|v\|_2=\sqrt{v^Tv}$ the $\ell_2$-norm and $\|v\|_0$ equals the number of non-zero components of a vector $v\in\mathbb{R}^p$. We define $v_{-l}:=(v_1,\dots,v_{l-1},v_{l+1}\dots,v_p)^T\in\R^{p-1}$ for any $1\le l\le p$. $\|v\|_\infty=\sup_{l=1,\dots,p }|v_l|$ denotes the $\sup$-norm. Let $c$ and $C$ denote positive constants that are independent of $n$ with values that may change at each appearance. The notation $a_n\lesssim b_n$ means $a_n\le Cb_n$ for all $n$ and some $C$. Furthermore, $a_n=o(1)$ denotes that there exists a sequence $(b_n)_{\ge 1}$ of positive numbers such that $|a_n|\le b_n$ for all $n$, where $b_n$ is independent of $P\in\mathcal{P}_n$ for all $n$, and $b_n$ converges to zero. Finally, $a_n=O_P(b_n)$ means that for any $\epsilon>0$ there exists a $C$ such that $P(a_n>Cb_n)\le\epsilon$ for all $n$.  


\section{Setting}\label{Setting}
\subsection{Motivation and Illustration}\label{motivation}
Before introducing the formal setting in Section \ref{formalsetting}, we would like to motivate the basic ideas with a simplified example. We consider an additive model with two covariables
\begin{align}
Y = f_1(X_1) + f_2(X_2) + \varepsilon.
\label{2comp}
\end{align}
Our goal is to perform valid inference on the object of interest $f_1$. In other words, we would like to provide a uniform confidence band for $f_1$ as illustrated in Figure \ref{illustrationexample}. Hence, we treat $f_2$ in \eqref{2comp} as a nuisance function.
Next, we assume that it is possible to represent the two components through an approximately linear representation. For the first component, the representation is given by
$$f_1(X_1) = \theta_0^T g(X_1) + b_1(X_1).$$
Here, $g(X_1)$ is a basis (e.g., a spline basis, sieve terms or a polynomial series) consisting of $d_1$ terms, $\theta_0$ is the corresponding coefficient vector and $b_1(X_1)$ is an approximation error.
For the second component, we assume an analogous representation 
\begin{align}
\label{f2representation}
f_2(X_2)= \beta_0^T h(X_2) + b_2(X_2),
\end{align}
where the basis $h(X_2)$ consists of $d_2$ terms. 
We allow the dimensions $d_1$ and $d_2$ to grow with the sample size in order to establish the asymptotic results in high-dimensional settings. Accordingly, the number of components $p$ can also grow with the sample size.\footnote{However, for the ease of notation we will later subsume them in the component $f_{-1}(X_2,\dots,X_p)$.}
To derive a uniformly valid confidence band, estimation and inference of 
\begin{align}
f_1(\cdot) \approx \theta_0^T g(\cdot)=\sum\limits_{l=1}^{d_1}\theta_{0,l}g_l(\cdot) 
\label{estim}
\end{align}
is required. In a naive approach to estimate the first component $\theta_{0,1}$ of the vector $\theta_0$, one might employ lasso or other machine learning methods to select the relevant components of the basis expansion terms in $\theta_0$ and $\beta_0$ in the regression model (\ref{2comp}). A possible second step could involve estimating a regression of the dependent variable on all components that have been selected in the lasso estimation step. The final estimator for the first component $\theta_{0,1}$ could then be obtained from this regression, and this procedure could be repeated iteratively for all other components in $f_1(x_1)$. 
\begin{figure}[ht]
\begin{center}
\includegraphics[scale=0.35]{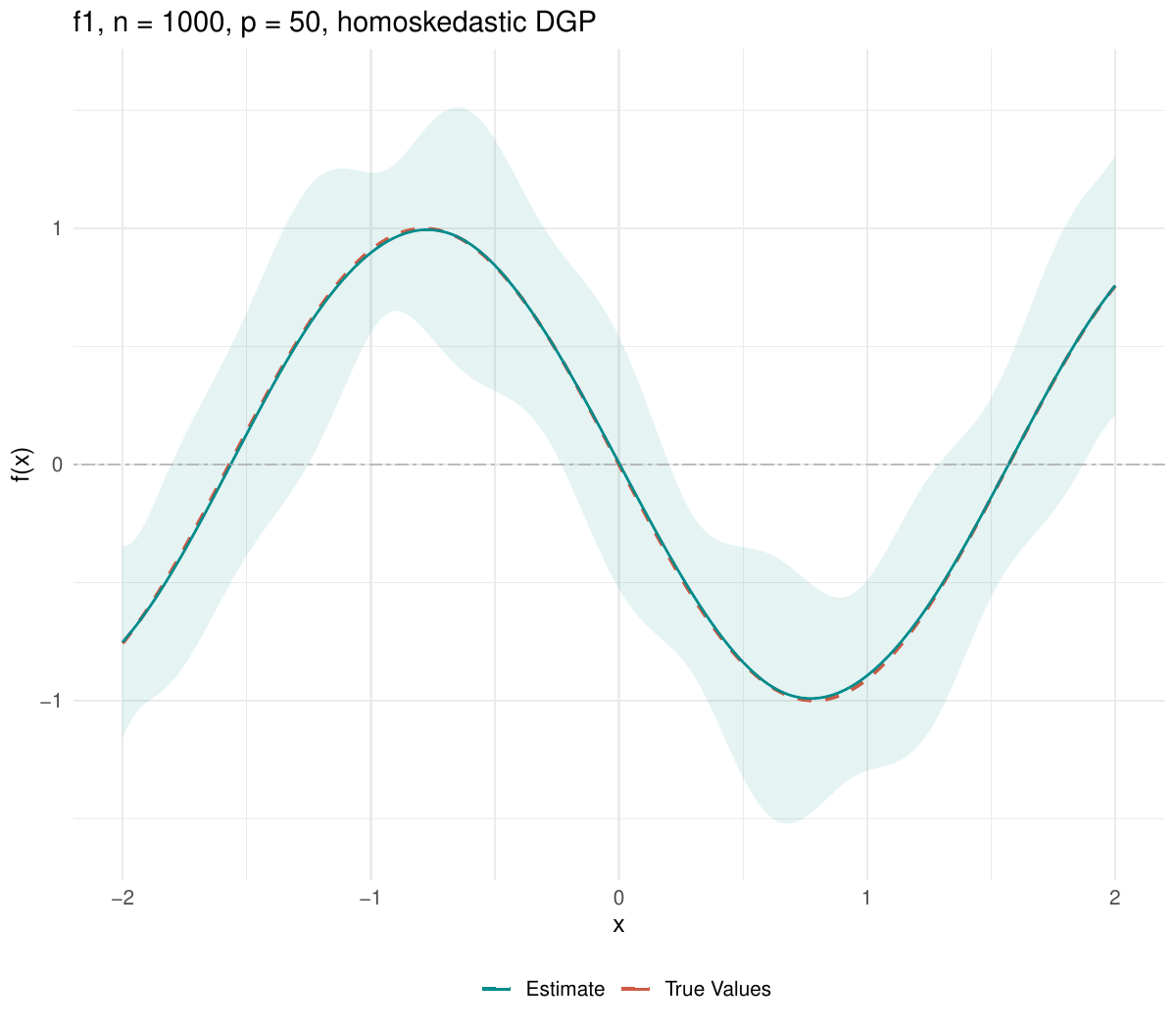} 
\caption{Illustration of the estimator and confidence bands in a simulated example.}
\label{illustrationexample}
\end{center}
Predicted component $\hat{f}_1(x_1)$ with $95\%$-confidence bands (green shaded area) obtained by our proposed estimator for a component $f_1(x_1)$ with $f_1(x_1)= - \sin(2 \cdot x)$. The dashed red curve corresponds to the true function $f_1(x_1)$. The predicted component $\hat{f}_1(x_1)$, which is obtained from our proposed estimator, is illustrated by the green curve. The underlying data are generated according to the homoskedastic DGP in the simulation study in Section \ref{Simulation} with $n=1000$ observations, $p=50$ regressors and $\rho = 0$. For more information on the DGP, we refer to the description of the DGP in Section \ref{Simulation}.
\end{figure}
\noindent
However, this approach is problematic because it generally fails to yield valid results. In other words, using the lasso estimator necessitates accounting for variable selection, leading to the non-standard problem of post-selection inference. Intuitively, the naive procedure based on a single lasso estimation step could result in selection mistakes that invalidate the resulting confidence interval for $\theta_{0,1}$. These critical selection mistakes will not occur for variables with high predictive power for the dependent variable $Y$, but rather refer to variables that are potentially highly correlated with the basis expansion term $g_1()$. This could lead to an omitted variable bias, preventing the estimator from asymptotically converging to a normal distribution. To overcome this limitation of post-selection inference, the statistical literature has developed the double machine learning and the debiasing approach, as  mentioned in the introduction.
To address the potential bias introduced by lasso estimation, \cite{belloni2014} propose including an auxiliary regression for the corresponding covariate of the target parameter. Here, we consider
\begin{align}
g_1(X_1) = \gamma_0^TZ_{-1} + \nu
\label{auxreg}
\end{align}
where $\nu$ is an error term and $Z_{-1}$ is defined as
$$Z_{-1}=(g_2(X_1),\dots,g_{d_1}(X_1),h_1(X_{-1}),\dots,h_{d_2}(X_{-1}))^T.$$
Later, we will also allow for an approximation error in this equation. \cite{belloni2014} propose including in the final regression not only the covariates selected in the first step of the naive approach but also augmenting this set of variables with Lasso-selected regressors from the auxiliary regression. This procedure is equivalent to constructing a so-called Neyman orthogonal moment function with respect to the nuisance part. This is essential for ensuring valid post-selection inference for the first component of the vector $\theta_0$. In Section \ref{formalsetting}, we will provide more details about this property. Heuristically, the additional regression step in Equation \eqref{auxreg} will lead to robustness against moderate selection mistakes. 
It can be shown formally, that this procedure implements an orthogonal moment equation
$$\E\left[\psi_1(W,\theta_{0,1},\eta_{0,1})\right]=0,$$ 
where the first component of $\theta_0$ is our target parameter and all other involved parameters are considered nuisance parameters, denoted $\eta_{0,1}$.
\cite{belloni2014} developed an approach for valid inference for one parameter. In high-dimensional additive models, the major technical challenge arises from the need to conduct inference for the potentially high-dimensional vector $\theta_0$. In other words, the number of elements in $\theta_0$ for which we would like to construct a valid confidence region is allowed to grow with the sample size. Each component of $\theta_0$, $\theta_{0,l}$ with $\l=1, \ldots, d_1$, is determined by an orthogonal moment condition and we will demonstrate how uniformly valid confidence bands can be constructed by embedding into a high-dimensional Z-estimation framework. Finally, we illustrate how the estimation of $\theta_0$ can be translated into uniformly valid confidence bands for the target function $f_1\approx \theta_0^T g(\cdot)$ using a multiplier bootstrap procedure.\\ 
\noindent
Figure \ref{illustrationexample} illustrates the use of our estimation procedure, which is described in Section \ref{Estimation}, by providing a preview of our simulation studies in Section \ref{Simulation}. Our estimation methods yield an unbiased estimate for the target component $f_1$, as well as corresponding $(1-\alpha)$ confidence interval that is valid over a compact interval of $X_1$. In the next section, we consider a more general additively separable model and introduce a general formulation of the underlying problem.

\subsection{Formal Setting} \label{formalsetting}
Consider the following nonparametric additively separable model
$$Y=f(X)+\varepsilon=f_1(X_1)+f_{-1}(X_{-1}) + \varepsilon$$
with $\E[\varepsilon|X]=0$ and $c\le \Var(\varepsilon|X)\le C$. Let the scalar response $Y$ and features $X=(X_1,\dots,X_p)$ take values in $\mathcal{Y}$ and in $\mathcal{X}=\mathcal{X}_1\times\dots\times\mathcal{X}_p$, respectively. We assume the observation of $n$ i.i.d. copies $(W^{(i)})_{i=1}^n= (Y^{(i)},X^{(i)})_{i=1}^n$ of $W=(Y,X)$, with the number of covariates $p$ allowed to grow with the sample size $n$. To ensure identifiability, we assume $\E[f_{1}(X_{1})]=0$. Our aim is to construct uniformly valid confidence regions for the first nonparametric component of the regression function. In other words, we want to find functions $\hat{l}(x)$ and $\hat{u}(x)$ fulfill
\begin{align*}
P\left(\hat{l}(x)\le f_1(x)\le \hat{u}(x), \forall x\in I\right)\to 1-\alpha.
\end{align*}
Here, $I\subseteq\mathcal{X}_1$ is a bounded interval of interest in which we want to conduct inference. We approximate $f_1$ and $f_{-1}$ using linear combinations of approximating functions $g_1,\dots,g_{d_1}$ and $h_1,\dots,h_{d_2}$, respectively. Define
$$g(x_1):=(g_1(x_1),\dots,g_{d_1}(x_1))^T$$
for $x_1\in\mathbb{R}^{}$ and
$$h(x_{-1}):=(h_1(x_{-1}),\dots,h_{d_2}(x_{-1}))^T$$
for $x_{-1}\in\mathbb{R}^{p-1}$.
It is important to note that we allow the number of approximating functions $d_1$ and $d_2$ to increase with sample size.
Assume that the approximations are given by
\begin{align}\label{linmod1}
f_1(X_1)= \theta_0^T g(X_1) + b_1(X_1),
\end{align}
where $\theta_{0,l}\in\Theta_l$ and analogously
\begin{align}\label{linmod2}
f_{-1}(X_{-1}):= \beta_0^T h(X_{-1}) + b_2(X_{-1}),
\end{align}
where $b_1$ and $b_2$ denote the error terms. Additionally, it is convenient to define the union of approximation functions
$$z(x):=(g_1(x_1),\dots,g_{d_1}(x_1),h_1(x_{-1}),\dots,h_{d_2}(x_{-1}))^T$$
for $x \in\mathbb{R}^p$, where we abbreviate $Z:=z(X)$.
As a common assumption in high-dimensional statistics, we will impose sparsity in the linear approximation of $f$, meaning that only a relatively small number of the coefficients in $(\theta_0,\beta_0)$, denoted $s_1$, are non-zero which implies that only a subset of approximation functions in $Z$ is needed to approximate $f_1$ and $f_{-1}$ sufficiently well. Of course, the identity of these relevant approximation functions is not known. Sparsity can occur because of two different reasons: Either we have many irrelevant covariables $X_2,\ldots,X_p$ in our model or we have sparse representations of the regression functions.
For each element $g_l$ of $g$, we also consider
\begin{align}\label{second_stage}
g_l(X_1)=(\gamma^{(l)}_0)^TZ_{-l}+\nu^{(l)} =(\gamma^{(l,1)}_0)^TZ_{-l}+ (\gamma^{(l,2)}_0)^TZ_{-l} +\nu^{(l)}
\end{align}
with 
$\E[\nu^{(l)}Z_{-l}] = 0$, such that $(\gamma^{(l)}_0)^TZ_{-l}$ is the linear projection of $g_l(X_1)$ onto $Z_{-l}$. Here, $\gamma^{(l,1)}_0$ denotes the sparse part of the coefficient vector $\gamma^{(l)}_0$.
The auxiliary regression (\ref{second_stage}) is used to construct an orthogonal score function for valid inference in a high-dimensional setting, as described in Section \ref{motivation}.
Estimating $$f_1(\cdot)\approx \theta_0^Tg(\cdot)$$
can be recast into a general Z-estimation problem of the form
$$\E\left[\psi_l(W,\theta_{0,l},\eta_{0,l})\right]=0,\quad l\in1,\dots, d_1$$
with target parameter $\theta_0$ where the score functions are defined by
\begin{align*}
\psi_l(W,\theta,\eta)=&\left(Y-\theta g_l(X_1)-(\eta^{(1)})^TZ_{-l}-\eta^{(3)}(X)\right)\left(g_l(X_1)-(\eta^{(2)})^TZ_{-l}\right).
\end{align*}
Here,  
$\eta=(\eta^{(1)},\eta^{(2)},\eta^{(3)})^T$
with
$\eta^{(1)}\in \mathbb{R}^{d_1+d_2-1},\eta^{(2)}\in \mathbb{R}^{d_1+d_2-1}$ and $\eta^{(3)}\in \ell^{\infty}(\mathbb{R}^p)$ are nuisance functions.
The true nuisance parameter $\eta_{0,l}$ is given by
\begin{align*}
\eta^{(1)}_{0,l}&:=\beta_0^{(l)}\\
\eta^{(2)}_{0,l}&:=\gamma_0^{(l)}\\
\eta^{(3)}_{0,l}(X)&:=b_1(X_1) + b_2(X_{-1}),
\end{align*}
where $\beta_0^{(l)}$ is defined as 
$$\beta_0^{(l)}:=(\theta_{0,1},\dots,\theta_{0,l-1},\theta_{0,l+1},\dots\theta_{0,d_1},\beta_{0,1},\dots,\beta_{0,d_2})^T.$$
Essentially, the index $l$ determines which coefficient is not contained in $\beta_0^{(l)}$. Formally, we will assume sparsity of $\beta_0^{(l)}$, i.e. $\|\beta_0^{(l)}\|_0\le s_1$ and we also assume sparsity in the auxiliary regression, i.e. $\|\gamma_0^{(l,1)}\|_0\le s_2$  for all $l=1,\ldots,d_1$. These assumptions are discussed in detail in Comment \ref{splines} and Comment \ref{commentApprox}.
The third part of the nuisance functions captures the error made by the approximation of $f_1$ and $f_{-1}$, which is independent from $l$. Therefore, we sometimes omit $l$. 
\begin{remark}
The score $\psi$ is linear in $\theta$, meaning
\begin{align*}
\psi_l(W,\theta,\eta)=\psi^{a}_l(X,\eta^{(2)})\theta+\psi^b_l(X,\eta)
\end{align*}
with
$$\psi^{a}_l(X,\eta^{(2)})=-g_l(X_1)(g_l(X_1)-(\eta^{(2)})^TZ_{-l})$$
and 
$$\psi^b_l(W,\eta)=(Y-(\eta^{(1)})^T Z_{-l}-\eta^{(3)}(X))(g_l(X_1)-(\eta^{(2)})^TZ_{-l})$$
for all $l=1,\dots,d_1$.
\end{remark}
\begin{remark}
The score function $\psi$ satisfies the \textit{moment condition}, namely
\begin{align*}
\E\left[\psi_l(W,\theta_{0,l},\eta_{0,l})\right]&=0
\end{align*}
for all $l=1,\dots,d_1$, and, given further conditions mentioned in Section \ref{main_results}, the near \textit{Neyman orthogonality} condition
\begin{align*}
D_{l,0}[\eta,\eta_{0,l}]&:=\partial_t\big\{\E[\psi_l(W,\theta_{0,l},\eta_{0,l}+t(\eta-\eta_{0,l}))]\big\}\big|_{t=0}\approx 0,
\end{align*}
where $\partial_t$ denotes the derivative with respect to $t$.
\end{remark}

\begin{remark}
Mostly for ease of notation, we consider the nonparametric additively separable model $Y=f(X)+\varepsilon=f_1(X_1)+f_{-1}(X_{-1}) + \varepsilon$. In the statistical literature the most studied special case is the fully additive separable model
$Y=f(X)+\varepsilon=f_1(X_1)+\ldots + f_{p}(X_{p}) + \varepsilon$. This model has more structure that can be exploited, but what matters from a theoretical perspective is the number of parameters $d_2$, respectively the sparsity index $s_1$, for the nuisance parts $f_{-1}(X_{-1})$ or $f_2(X_2)+\ldots + f_{p}(X_{p})$, which can grow with the sample size in our asymptotic framework.
\end{remark}

\section{Estimation}\label{Estimation}
In this section, we describe our estimation method and how the uniform valid confidence bands are constructed. First, nuisance functions are estimated using lasso regressions. Subsequently, they are plugged into the moment conditions and solved for the target parameters, which yield an estimate $\hat{f}_1$ for the first component in the additive regression model. The lower and upper curve of the confidence bands are then determined using the estimated covariance matrix and a critical value, which is obtained through a multiplier bootstrap procedure. The technical details for the estimation are given in this section.\\

\subsection{Estimation and Algorithm}

\noindent
Let
$
g(x)=\left(g_1(x),\dots,g_{d_1}(x)\right)^T\in\R^{d_1\times 1},
$
and
\begin{align*}
\psi(W,\theta,\eta)=\left(\psi_1(W,\theta_1,\eta_1),\dots,\psi_{d_1}(W,\theta_{d_1},\eta_{d_1})\right)^T\in\R^{d_1\times 1}
\end{align*}
for some vectors
$\theta=(\theta_1,\dots,\theta_{d_1})^T$
and $\eta=(\eta_1,\dots,\eta_{d_1})^T.$
For each $l=1,\dots,d_1$, let $\hat{\eta}_l=\left(\hat{\eta}_l^{(1)},\hat{\eta}_l^{(2)},\hat{\eta}_l^{(3)}\right)$ be an (lasso) estimator of the nuisance function. The estimator $\hat{\theta}_0$ of the target parameter
$\theta_0=(\theta_{0,1},\dots,\theta_{0,d_1})^T$
is defined as the solution of
\begin{align}\label{estimator}
\sup\limits_{l=1,\dots,d_1}\left\{\left|\mathbb{E}_n^{}\Big[\psi_{l}\big(W,\hat{\theta}_{l},\hat{\eta}_{l}\big)\Big]\right|-\inf_{\theta\in\Theta_{l}}\left|\mathbb{E}_n^{}\Big[\psi_{l}\big(W,\theta,\hat{\eta}_{l}\big)\Big]\right|\right\}\le\epsilon_{n},
\end{align}
where $\epsilon_{n}=O\left(\delta_n\varsigma_n^{-1/2}n^{-1/2}\right)$ is the numerical tolerance. The sequences of positive constants, $\delta_n$ and $\varsigma_n$, are defined in Assumption A.\ref{A1} and Assumption A.\ref{A2}. Since the score is linear, the explicit solution is given by
$$\hat{\theta}_{0,l}=-\frac{\E_n[\psi^b_l(W,\hat{\eta}_l)]}{\E_n[\psi^{a}_l(W,\hat{\eta}^{(2)}_l)]}.$$
Therefore, \eqref{estimator} obviously holds if we use the estimator $\hat{\theta}_{0,l}$. We consider the representation in \eqref{estimator} because we rely on more general results derived in Appendix \ref{AppendixA} to analyze the theoretical properties of our new estimation procedure.
Lastly, the target function $f_1(\cdot)$ can be estimated by
\begin{align}
\hat{f}_1(\cdot):=\hat{\theta}_0^T g(\cdot).
\end{align}
Define the Jacobian matrix
\begin{align*}
J_0:=\frac{\partial}{\partial\theta}\E[\psi(W,\theta,\eta_0)]\bigg|_{\theta=\theta_0}=\diag\left(J_{0,1},\dots,J_{0,d_1}\right)\in \mathbb{R}^{d_1\times d_1}
\end{align*}
with
\begin{align*}
J_{0,l}&=\E\left[\psi_l^{a}(W,\eta^{(2)}_{0,l})\right]=-\E[(\nu^{(l)})^2]
\end{align*}
for all $l=1,\dots,d_1$. Observe that $\Sigma_{\varepsilon\nu}:=\E\big[\psi(W,\theta_0,\eta_0)\psi(W,\theta_0,\eta_0)^T\big]$ is the covariance matrix of $\varepsilon\nu:=(\varepsilon\nu^{(1)},\dots,\varepsilon\nu^{(d_1)})$. Define the approximate covariance matrix
\begin{align*}
\Sigma_n:&=J_0^{-1}\E\big[\psi(W,\theta_0,\eta_0)\psi(W,\theta_0,\eta_0)^T\big](J_0^{-1})^T\\
&=J_0^{-1}\Sigma_{\varepsilon\nu}(J_0^{-1})^T \in \mathbb{R}^{d_1\times d_1}
\end{align*}
with
\begin{align*}
\Sigma_n:&=\begin{pmatrix}
\frac{\E[(\varepsilon\nu^{(1)})^2]}{\E[(\nu^{(1)})^2]^2} & \frac{\E\big[\varepsilon\nu^{(1)}\varepsilon\nu^{(2)}\big]}{\E[(\nu^{(1)})^2]\E[(\nu^{(2)})^2]} & \dots & \frac{\E\big[\varepsilon\nu^{(1)}\varepsilon\nu^{(d_1)}\big]}{\E[(\nu^{(1)})^2]\E[(\nu^{(d_1)})^2]} \\
\frac{\E\big[\varepsilon\nu^{(2)}\varepsilon\nu^{(1)}\big]}{\E[(\nu^{(2)})^2]\E[(\nu^{(1)})^2]} & \frac{\E[(\varepsilon\nu^{(2)})^2]}{\E[(\nu^{(2)})^2]^2}  & \dots & \frac{\E\big[\varepsilon\nu^{(2)}\varepsilon\nu^{(d_1)}\big]}{\E[(\nu^{(2)})^2]\E[(\nu^{(d_1)})^2]} \\
\vdots & \vdots & \ddots & \vdots \\
 \frac{\E\big[\varepsilon\nu^{(d_1)}\varepsilon\nu^{(1)}\big]}{\E[(\nu^{(d_1)})^2]\E[(\nu^{(1)})^2]} & \frac{\E\big[\varepsilon\nu^{(d_1)}\varepsilon\nu^{(2)}\big]}{\E[(\nu^{(d_1)})^2]\E[(\nu^{(2)})^2]} & \dots & \frac{\E[(\varepsilon\nu^{(d_1)})^2]}{\E[(\nu^{(d_1)})^2]^2}
\end{pmatrix}.
\end{align*}
The approximate covariance matrix can be estimated by replacing every expectation with its empirical analog and plugging in the estimated parameters
\begin{align*}
\hat{\Sigma}_n:&=\hat{J}^{-1}\E_n\big[\psi(W,\hat{\theta}_0,\hat{\eta})\psi(W,\hat{\theta}_0,\hat{\eta})^T\big](\hat{J}^{-1})^T
=\hat{J}^{-1}\hat{\Sigma}_{\varepsilon\nu}(\hat{J}^{-1})^T.
\end{align*}
This estimated covariance matrix can be used to construct the confidence bands
\begin{align*}
\hat{u}(x)&:= \hat{f}_1(x)+\frac{(g(x)^T\hat{\Sigma}_n g(x))^{1/2}c_\alpha}{\sqrt{n}}\\
\hat{l}(x)&:= \hat{f}_1(x)-\frac{(g(x)^T\hat{\Sigma}_n g(x))^{1/2}c_\alpha}{\sqrt{n}},
\end{align*}
where $c_{\alpha}$ is a critical value determined by the following standard multiplier bootstrap method introduced in \cite{chernozhukov2013gaussian}. Define
$$\hat{\psi}_x(\cdot):=(g(x)^T\hat{\Sigma}_n g(x))^{-1/2}g(x)^T\hat{J}^{-1}\psi(\cdot,\hat{\theta}_{0},\hat{\eta}_{0})$$
and let
\begin{align*}
\hat{\mathcal{G}}=\left(\hat{\mathcal{G}}_x\right)_{x\in I}=\left(\frac{1}{\sqrt{n}}\sum\limits_{i=1}^n \xi^{(i)}\hat{\psi}_x\left(W^{(i)}\right) \right)_{x\in I},
\end{align*}
where $(\xi^{(i)})_{i=1}^n$ are independent standard normal random variables (especially independent from the data $(W^{(i)})_{i=1}^n$). The multiplier bootstrap critical value $c_\alpha$ is given by the $(1-\alpha)$-quantile of the conditional distribution of $\sup_{x\in I}|\hat{\mathcal{G}}_x|$ given $(W^{(i)})_{i=1}^n$. This estimation procedure can be summarized in Algorithm \ref{algest}.
\begin{algorithm}[H]
	\caption{HDAM} 
	\label{algest}
	\flushleft{
	Input: $n$ training examples of the form $W^{(i)}=(Y^{(i)},X_1^{(i)},X_{-1}^{(i)})$, where $Y^{(i)}$ is the response, $X_1^{(i)}$ the covariate of interest and $X_{-1}^{(i)}$ are additional covariates. Dictionaries of the approximating functions $g_1,\dots,g_{d_1}$ for $f_1$ and $h_1,\dots,h_{d_2}$ for $f_{-1}$, a significance level $\alpha$, an interval $I$ for inference and a number of bootstrap repetitions $B$.}
	\begin{algorithmic}[1]
	\State Use the dictionary to construct the matrix $Z:=(g_1(X_1),\dots,g_{d_1}(X_1),h_1(X_{-1}),\dots,h_{d_2}(X_{-1}))$.
	\State Fit a lasso/post-lasso regression of the vector $Y$ onto $Z$ and save the estimated coefficients $(\tilde{\theta}_{1},\dots,\tilde{\theta}_{d_1},\hat{\beta}_{1},\dots,\hat{\beta}_{d_2})$ and the corresponding residuals $\hat{\varepsilon}$.
		\For {$l = 1,\dots,d_1$}
		\parState{Fit a lasso/post-lasso regression of the vector $g_l(X_1)$ onto $Z_{-l}$ and save the estimated coefficients $(\hat{\gamma}^{(l)}_{1},\dots\hat{\gamma}^{(l)}_{d_1+d_2-1})$ and the corresponding residuals $\hat{\nu}^{(l)}$.}
		\parState{Plug in the estimated coefficients as nuisance parameters into the score function $\psi_l(W,\cdot,\hat{\eta}_l) ) $ to solve \eqref{estimator}. Save the resulting estimate $\hat{\theta}_{l}$ and scores $\psi_l(W,\hat{\theta}_l,\hat{\eta}_l)$ into the corresponding vector $\hat{\theta}_0$ and matrix $\psi(W,\hat{\theta}_0,\hat{\eta}_0)$, respectively.}
		\EndFor
		\parState{Use the estimated residuals $\hat{\varepsilon}$ and $\hat{\nu}$ to construct the estimates $\hat{\Sigma}_n$ and $\hat{J}$.}	
		\For {$x\in I$}
		\State Calculate the vector $\hat{\psi}_x(W):=(g(x)^T\hat{\Sigma}_n g(x))^{-1/2}g(x)^T\hat{J}^{-1}\psi(W,\hat{\theta}_{0},\hat{\eta}_{0})$	
		\For {$b = 1,\dots, B$}
			\State{Draw $(\xi_i^{(b)})_{i=1}^n$ independent standard normal random variables.}
			\State{Calculate $\hat{\mathcal{G}}_x^{(b)}=\frac{1}{\sqrt{n}}\sum_{i=1}^n \xi_i^{(b)}\hat{\psi}_x(W^{(i)}).$}
		\EndFor
		\EndFor
		\State Calculate the critical value $c_\alpha:=$ $(1-\alpha)$-quantile of $\sup_{x\in I}|\hat{\mathcal{G}}_x^{(b)}|$ with respect to the bootstrap repetitions.
		\For {$x\in I$}
		\State Construct the confidence band as
	 $\hat{\theta}_0^T g(x) \pm \frac{(g(x)^T\hat{\Sigma}_n g(x))^{1/2}c_\alpha}{\sqrt{n}}$
		\EndFor
	\end{algorithmic} 
\end{algorithm}

\subsection{Practical Considerations: Estimation of Sparse High-Dimensional Additive Models}

In addition to the significance level $\alpha$, the number of bootstrap repetitions and an interval $I$ for inference, it is necessary to specify a dictionary of approximating functions $g_1, \ldots, g_{d_1}$ for $f_1$  and $h_1, \ldots, h_{d_2}$ for $f_{-1}$ as inputs to Algorithm \ref{algest}. In our theoretical results and our simulation experiments, we focus on B-splines. From a practical point of view, B-splines are specified by parameters for a polynomial degree and the number of knots. In our simulation study, we use cubic B-splines. The number of knots plays a role for the quality of the approximation for $f_1$ as well as the width of the confidence bands. For example, a higher number of knots leads to a higher number of basis functions and corresponding number of fitted coefficients. In our simulation study, we observed that our results are relatively robust with regard to the number of knots and only change moderately with changes of the specification. We base the choice of the number of knots on preliminary evidence in terms of the approximation quality for $f_1$. We also investigated the use of a cross-validated choice of the number of knots, which would not be backed by our theoretical framework, but could serve as a practical guidance. Our results, which are available on request, indicate that a cross-validated choice of the B-spline specification tends to lead to conservative results, i.e., wide confidence intervals with higher coverage than the nominal level.

Based on the provided input, the algorithm proceeds with various nuisance estimation steps. Due to the large number of coefficients, we base estimation on the lasso or post-lasso, i.e., a lasso estimator that is followed by an ordinary least squares estimation step \citep{belloni:2013}. The performance of regularized estimators depends on the choice of the corresponding penalty parameters. We base the choice of the lasso penalty on theoretical arguments as provided in Appendix \ref{uniformla}. In our software implementation, we use the rigorous lasso learner as provided by the R package \texttt{hdm} \citep{hdm}. We also experimented with a cross-validated choice of the lasso penalty as provided by the \texttt{glmnet} package \citet{glmnet:2010}. However, we find that the cross-validated penalty choice was more costly under computational consideration and also less stable. In our simulation experiments, we observe that the estimation performance of the lasso learner has important implications on the accuracy of our point estimator and, accordingly, on the coverage results of the confidence bands. High-dimensional additive models are characterized by a large number of parameters, which have to be estimated with a possibly small number of observations. In our simulation results in Section \ref{Simulation}, we find that the theory-based lasso learner exhibits a good performance in selecting the sparse parameters even in highly correlated settings.

Once, the lasso estimation has been performed, the corresponding residuals are plugged into the variance-covariance matrix. This, in turn, is used to construct the simultaneous confidence bands via a multiplier bootstrap procedure. The latter is based on a random perturbation of the score function, for example, by an i.i.d. standard normal distributed random variable. This procedure is very appealing from a computational point of view as it does not require resampling and reestimation of the parameters, as for example in classical bootstrap procedures. It is generally recommended to use a large number of bootstrap repetitions, $B\ge 500$.

\section{Main Results}\label{main_results}
Now, we proceed to specifying the conditions required to create uniformly valid confidence bands using Algorithm \ref{algest}. To represent $f_1$ and $f_{-1}$ by their approximations in (\ref{linmod1}) and (\ref{linmod2}), we need to choose an appropriate set of approximating functions $g=(g_1,\dots,g_{d_1})$ and $h=(h_1,\dots,h_{d_2})$, respectively. In this context, let $\bar{d}_n:=\max(d_1,d_2,n,e)$ and $C$ be a strictly positive constant independent of $n$ and $l$, where $e$ in $\bar{d}_n$ denotes the Euler's number. $(\tilde{A}_n)_{n \geq 1}$ denotes a sequence of positive constants, possibly growing to infinity with $\tilde{A}_n \geq n$ for all $n$. For more details, we refer to Appendix \ref{AppendixA}. The number of non-zero coefficients in Equation (\ref{linmod2}) and (\ref{second_stage}), is given by the sparsity index $s=\max(s_1,s_2)$. Additionally, we set $t_1:=\sup_{x\in I} \|g(x)\|_0\le d_1$. The definition of $t_1$ is helpful if the functions $g_l$, $l=1,\dots,d_1$, are local, meaning that for any point $x$ in $I$ there are at most $t_1<<d_1$ non-zero functions. Furthermore, $g(I)$ denotes the image of the approximation functions concerning the interval of interest $I$.\\ \\
The following assumptions are valid uniformly in $n\ge n_0$ and $P\in\mathcal{P}_n$:
\begin{assumptiona}\label{A1}\ \\
\begin{itemize}
\item[(i)] It holds $\frac{1}{\sqrt{t_1}}\lesssim \inf\limits_{x\in I} \|g(x)\|_2\le C<\infty$, $\quad\sup_{x\in I} \sup_{l=1,\dots,d_1}|g_l(x)|\le C$ and for all $\varepsilon>0$
\begin{align*}
\log N(\varepsilon,g(I),\|\cdot\|_{2})\le Ct_1\log\left(\frac{\tilde{A}_n}{\varepsilon}\right)
\end{align*}
where $N (\epsilon, g(I), || \cdot ||_{2})$ denotes the covering number of $g(I)$ with radius $\epsilon$ with respect to the $|| \cdot ||_{2}$-norm.
\item[(ii)]  
The approximation errors obey
\begin{align*}
\left(b_1(X_1)+b_2(X_{-1})\right)^2\le Cs_1\log(\bar{d}_n)/n\ \text{(a.s.)}.
\end{align*}
\item[(iii)] It holds $$\E\Big[\nu^{(l)}\big(b_1(X_1)+b_2(X_{-1})\big)\Big]\le C\delta_n \varsigma_n^{-1/2}n^{-1/2}$$
with $\varsigma_n$ defined in Assumption A.\ref{A2} (iv) and $\delta_n=\sqrt{\frac{n^{2/q}s^2\varsigma_n\log^2(\bar{d}_n)}{n}}=o(1)$ where $q$ is defined in Assumption A.\ref{A2} (iii) . 
\end{itemize}
\end{assumptiona}
\noindent
Assumption A.\ref{A1}$(i)$ contains regularity conditions on $g$. We assume that the supremem of the $\ell_2$-norm of $g(x)$ is bounded, but the infimum is allowed to decrease with sample size, affecting the growth conditions in A.\ref{A2}$(v)$. Assumption A.\ref{A1}$(ii)$  is a condition on the approximation error and is discussed further in Comment \ref{splines}. This assumption is mild because the number of approximating functions may increase with sample size. Lastly, Assumption A.\ref{A1}$(iii)$ ensures that any violation of the  exact Neyman Orthogonality due to approximation errors is negligible. It is worth noting that if $b_1(X_{1})$ and $b_2(X_{-1})$ are measurable with respect to $Z_{-l}$ (for example, in a linear approximate sparse setting for the conditional expectation) the exact Neyman Orthogonality holds. 
\begin{assumptiona}\label{A2}\ \\
\begin{itemize}
\item[(i)] For all $l=1,\dots,d_1$, $\Theta_l$ contains a ball of radius $$\log(\log(n))n^{-1/2}\log^{1/2}(d_1\vee e)\varsigma_n\log(n)$$ centered at $\theta_{0,l}$ with
$$\sup\limits_{l=1,\dots,d_1}\sup\limits_{\theta_l\in\Theta_l} |\theta_l|\le C.$$
\item[(ii)] It holds
\begin{align*}
\|\beta_0^{(l)}\|_0\le s_1,\quad \|\beta_0^{(l)}\|_2\le C
\end{align*}
for all $l=1,\dots,d_1$ and 
\begin{align*}
\max\limits_{l=1,\dots,d_1}\|\gamma_0^{(l)}\|_1\le C
\end{align*}
with
\begin{align*}
    \max\limits_{l=1,\dots,d_1}\|\gamma_0^{(l,1)}\|_0\le s_2
\end{align*}
and 
\begin{align*}
\max\limits_{l=1,\dots,d_1}\|\gamma_0^{(l,2)}\|_{1}^2&\le C\sqrt{\varsigma_n^{-1/2}s_2^2\log(\bar{d}_n)/n},\\ \max\limits_{l=1,\dots,d_1}\|(\gamma_0^{(l,2)})^TZ_{-l}\|_{P,2}^2&\le C\varsigma_n^{-1}s_2\log(\bar{d}_n)/n.
\end{align*}
\item[(iii)] The constructed approximation functions are almost surely bounded,
$$\|Z\|_{\infty}\le C\quad \text{(a.s.)},$$
and 
$$\|\varepsilon\|_{P,q}\le C$$
for $q\ge 4$ corresponding to the growth rates in $(v)$. 


\item[(iv)] It holds
$$c \varsigma_n^{-1} \le \inf\limits_{\|\xi\|_2=1} \E[(\xi^T Z)^2] \le \sup\limits_{\|\xi\|_2=1} \E[(\xi^T Z)^2] \le C \varsigma_n^{-1}$$
and 
$$ c \varsigma_n^{-1}\le \inf\limits_{\|\xi\|_2=1} \xi^T\Sigma_\nu\xi\le \sup\limits_{\|\xi\|_2=1} \xi^T\Sigma_\nu\xi \le C \varsigma_n^{-1}.$$
Further,  
$$c\varsigma_n^{-2}\le\E\left[(\nu^{(l)})^2 Z_{-l,j}^2\right]\le C\varsigma_n^{-2},$$
uniformly for all  $l=1,\dots,d_1$, $j = 1,\dots, (d_1 -1)$ and
\begin{align*}
\max\limits_{l=1,\dots,d_1}\max\limits_{j=1,\dots,d_1 - 1}\|\nu^{(l)} Z_{-l,j}\|_{P,3}&\le C\varsigma_n^{-1}K_{n},\\
    \max\limits_{l=1,\dots,d_1}\max\limits_{j=1,\dots,d_1 -1} \E\left[(\nu^{(l)})^4 Z_{-l,j}^4\right]&\le C\varsigma_n^{-4} L_n.
\end{align*}

\item[(v)] There exists a positive number $q\ge 0$ such that the following growth condition is fulfilled:
\begin{itemize}
\item[(a)] $n^{\frac{2}{q}}t_1^3\varsigma_n\log(\tilde{A}_n\varsigma_n^{-1/2})\left(\frac{s^2\log^2(\bar{d}_n) }{n}\vee \frac{t_1^9\varsigma_n^2\log^2(\tilde{A}_n\varsigma_n^{-1/2})}{n}\right)=o(1)$
\item[(b)] $t_1^7\varsigma_n\log(\tilde{A}_n\varsigma_n^{-1/2})\left(\frac{s\log(\bar{d}_n)\log(\tilde{A}_n)}{n}\vee\frac{t_1^{9}\varsigma_n^2\log^6(\tilde{A}_n\varsigma_n^{-1/2})}{n}\right)=o(1)$
\end{itemize}



\end{itemize}
\end{assumptiona}
\noindent
Assumptions A.\ref{A2}$(i)$ and $(ii)$ are regularity and sparsity conditions, permitting the number of nonzero regression coefficients $s_1$ and $s_2$ to grow to infinity with increasing sample size. A detailed comment on the sparsity condition is given in Comment \ref{commentApprox}. Assumption A.\ref{A2}$(iii)$ contains tail conditions on the approximating functions
(and therefore on the original variables) as well as for the error term. Assumption A.\ref{A2}$(iv)$ is an eigenvalue condition that restricts the correlation between the basis elements (and therefore between the original variables), as well as the correlation between the error term $\nu_l$ from the auxiliary regressions. The key innovation of A.\ref{A2}$(iv)$ is that we allow the eigenvalues to vanish as $n$ increases. This is essential if we use an increasing number of local approximation functions $g_l$, $l=1,\dots,d_1$, which implies that the support and hence the variance of each $g_l$ is vanishing with increasing sample size. Typically, we have $\varsigma_n\approx d_1$ as we will illustrate in Comment \ref{splines} for B-Splines. This is in line with the Assumption 5 in \cite{kozbur2015} which tackles the same problem of local estimators where $\zeta_0(d_1)=d_1^{1/2}$ is the rate to orthogonalize B-Splines. A related assumption on the restricted eigenvalue can also be found in Assumption A3 in \cite{lu2019} where the minimum eigenvalue is of order $O(d_1^{-1})$ as well.
Lastly, Assumption A.\ref{A2}$(v)$ provides the growth conditions. These are given in general terms and depend on the choice of the approximation functions. Choosing B-Splines simplifies the growth conditions significantly as we will also outline in Comment \ref{splines}.\\
\begin{theorem}\label{confinbands}
Under Assumptions A.\ref{A1} and A.\ref{A2}, it holds that
\begin{align*}
P\left(\hat{l}(x)\le f_1(x)\le \hat{u}(x), \forall x\in I\right)\to 1-\alpha
\end{align*}
uniformly over $P\in\mathcal{P}_n$ where $c_\alpha$ is a critical value determined through the multiplier bootstrap method.
\end{theorem}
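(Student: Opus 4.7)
The plan is to reduce the theorem to an application of the general uniform-inference result for high-dimensional linear functionals recorded in Appendix \ref{AppendixA} (which builds on Belloni--Chernozhukov--Kato and Chernozhukov--Chetverikov--Kato) by verifying the regularity and rate conditions imposed in A.\ref{A1} and A.\ref{A2}. The argument splits naturally into three blocks: a uniform-in-$l$ linearization of the Z-estimator $\hat\theta_l$, a Gaussian and multiplier-bootstrap approximation for the resulting linear functional indexed by $x\in I$, and negligibility of the sieve approximation bias $b_1(x)$.

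For the first block, starting from the moment condition and the near Neyman orthogonality displayed in Section \ref{Setting}, and exploiting the linearity of $\psi$ in $\theta$ noted there, I would show, uniformly over $l=1,\dots,d_1$,
$$\sqrt{n}(\hat\theta_l-\theta_{0,l}) = -J_{0,l}^{-1}\,\mathbb{G}_n\bigl[\psi_l(W,\theta_{0,l},\eta_{0,l})\bigr] + R_l,\qquad \max_{l}|R_l|=o_P\bigl((\log d_1)^{-1/2}\bigr).$$
The error sources to be controlled are the first-order bias in $\hat\eta_l-\eta_{0,l}$ (handled by A.\ref{A1}(iv) and the near-orthogonality display), quadratic cross terms of the form $(\hat\eta_l^{(1)}-\eta_{0,l}^{(1)})^T Z_{-l}\cdot(\hat\eta_l^{(2)}-\eta_{0,l}^{(2)})^T Z_{-l}$ (controlled by uniform-in-$l$ lasso $\ell_1$/$\ell_2$-rates of the sort supplied by Appendix \ref{uniformestimation}, together with A.\ref{A2}(ii)--(iv)), and the numerical tolerance $\epsilon_n$ in \eqref{estimator}. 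The invertibility of $J_{0,l}$ follows from $J_{0,l}=-\mathbb{E}[(\nu^{(l)})^2]$ being uniformly bounded away from zero, a consequence of A.\ref{A2}(iv) and the definition of $\nu^{(l)}$ as a projection residual.

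For the second block I would introduce the normalized score
$$\psi_x(\cdot):=-\sigma(x)^{-1}\,g(x)^T J_0^{-1}\psi(\cdot,\theta_0,\eta_0),\qquad \sigma(x)^2:=g(x)^T\Sigma_n g(x),$$
and combine it with the first-block expansion to obtain $\sqrt{n}(\hat\theta_0-\theta_0)^T g(x)/\sigma(x) = \mathbb{G}_n[\psi_x] + \tilde R(x)$ with $\sup_{x\in I}|\tilde R(x)|=o_P(1)$; the uniform control of $\tilde R(x)$ uses the covering-number bound in A.\ref{A1}(i) on $g(I)$, the sub-exponential tails in A.\ref{A2}(iii) and the active-direction sparsity $t_1=\sup_{x\in I}\|g(x)\|_0$. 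A Gaussian approximation $\sup_{x\in I}|\mathbb{G}_n[\psi_x]-\mathcal G_x|=o_P(1)$ to a centered Gaussian process $\mathcal G_x$ with covariance matching that of $\psi_x$ then follows from the Chernozhukov--Chetverikov--Kato machinery once the envelope and covering conditions on $\{\psi_x:x\in I\}$ are verified, which is exactly what the growth conditions A.\ref{A2}(v)(a)--(c) are calibrated for. The multiplier-bootstrap analogue $\sup_{x\in I}|\hat{\mathcal G}_x-\mathcal G_x|=o_P(1)$ conditionally on the data follows from uniform closeness of $\hat\psi_x$ to $\psi_x$, which reduces to operator-norm consistency $\|\hat J^{-1}-J_0^{-1}\|=o_P(1)$ and $\|\hat\Sigma_{\varepsilon\nu}-\Sigma_{\varepsilon\nu}\|=o_P(1)$ via the same lasso rates.

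For the third block, since $\hat f_1(x)-f_1(x) = (\hat\theta_0-\theta_0)^T g(x)-b_1(x)$, the coverage event decomposes into the centered process from the previous steps and a bias $\sqrt{n}\,b_1(x)/\sigma(x)$. The lower bounds $\sigma(x)\ge c>0$ (from A.\ref{A2}(iv) combined with $\inf_{x\in I}\|g(x)\|_2\ge c$ in A.\ref{A1}(i)) together with the approximation-error bounds in A.\ref{A1}(ii)--(iv) and the growth conditions render this bias uniformly negligible. Invoking Nazarov's anti-concentration inequality to transfer the approximations to quantile convergence shows that $c_\alpha$ consistently estimates the $(1-\alpha)$-quantile of $\sup_{x\in I}|\mathcal G_x|$, which yields the coverage statement uniformly over $P\in\mathcal P_n$. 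The main obstacle is the second block: controlling $\tilde R(x)$ uniformly in $x\in I$ sharply enough that, after the $\sqrt{\log d_1}$-type rescaling which $\sup_{x\in I}|\mathcal G_x|$ demands for anti-concentration to bite, the remainder still vanishes. It is exactly the three-way interplay between the uniform lasso rates of Appendix \ref{uniformestimation}, the entropy bound on $g(I)$, and the calibrated growth conditions A.\ref{A2}(v) that makes this argument go through.
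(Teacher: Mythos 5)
Your proposal is correct and follows essentially the same route as the paper: the paper's proof also reduces Theorem \ref{confinbands} to the master result of Appendix \ref{AppendixA} (Theorem \ref{uniformappendix}) by verifying its Assumptions B.\ref{assumption2.0}--B.\ref{assumption2.4} from A.\ref{A1}--A.\ref{A2}, with your three blocks corresponding exactly to the uniform Bahadur representation, the Gaussian/multiplier-bootstrap coupling via the entropy and growth conditions, and the negligibility of the approximation error, all resting on the uniform lasso rates of Appendix \ref{uniformestimation}. If anything, your explicit handling of the bias term $\sqrt{n}\,b_1(x)/\sigma(x)$ in the third block is spelled out more carefully than in the paper, which absorbs it implicitly into the conditions of A.\ref{A1}(ii) and (iv).
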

Theorem \ref{confinbands} provides uniform confidence bands for the whole target component $f_1$. Particularly, with probability $1-\alpha$, we have
$$\sup_x|\hat{f}_1(x)-f_1(x)|^2=O\left(\frac{sup_x (g(x)^T\hat{\Sigma}_ng(x))c_\alpha^2}{n}\right)=O\left(\frac{\varsigma_n c_\alpha^2}{n}\right)$$
by Assumption A.\ref{A2}$(iv)$. Hence, the rate of convergence/the width of the confidence band strongly depends on the parameter $\varsigma_n$ and therefore on the number of approximating functions $d_1$ where typically $\varsigma_n\approx d_1$ for local approximating functions.
\begin{remark}\label{splines}[\textbf{B-Splines}]
An appropriate and common choice in series estimation is B-Splines. B-Splines are positive and local in the sense that $g(x)\ge 0$ and $\sup_{x\in I}\|g(x)\|_0\le t_1$ for every $x$, where $t_1$ is the order of the spline. The $l_1$-norm of B-Splines is bounded by 1, meaning
\begin{align*}
\|g(x)\|_1=\sum\limits_{j=1}^{d_1}g_j(x)\le 1
\end{align*} 
for every $x$ (due to partition of unity). Hence, Assumption A.\ref{A1}$(i)$ is met with
$$\frac{1}{\sqrt{t_1}}\le\inf\limits_{x\in I} \|g(x)\|_2\le\sup\limits_{x\in I} \|g(x)\|_2\le 1 \quad\text{and}\quad \sup\limits_{x\in I} \sup\limits_{l=1,\dots,d_1}|g_l(x)|\le 1.$$
Now, we go into greater detail regarding the condition on the covering number of the image of $g$. Especially if $t_1<d_1$, the complexity of the approximating functions is reduced significantly. One obtains
\begin{align*}
g(I)\subseteq \bigcup\limits_{j=1}^{\binom{d_1}{t_1}} g^{(j)}(I),
\end{align*}
where each $g^{(j)}(I)$ is only dependent on $t_1$ nonzero components. It is straightforward to see that for each $g^{(j)}(I)$ the covering numbers satisfy
\begin{align*}
N(\varepsilon,g^{(j)}(I),\|\cdot\|_{2})\le \left(\frac{6\sup_{x\in I}\|g(x)\|_2}{\varepsilon}\right)^{t_1}
\end{align*}
(cf. \cite{vanweak}), implying
\begin{align*}
\log N(\varepsilon,g(I),\|\cdot\|_{2})&\le \log\left(\sum\limits_{j=1}^{\binom{d_1}{t_1}}N(\varepsilon,g^{(j)}(I),\|\cdot\|_{2})\right)\\
&\le \log\left(\left(\frac{e\cdot d_1}{t_1}\right)^{t_1}\left(\frac{6\sup_{x\in I}\|g(x)\|_2}{\varepsilon}\right)^{t_1}\right)\\
&\le t_1\log\left(\left(\frac{6ed_1\sup_{x\in I}\|g(x)\|_2}{t_1}\right)\frac{1}{\varepsilon}\right)\\
&\lesssim t_1\log\left(\frac{d_1}{\varepsilon}\right).
\end{align*}
It is worth noting that the second moment of a univariate B-spline (with a fixed order) is given by
\begin{align*}
    \E[g(X)^2]\approx d_1^{-1},
\end{align*}
see e.g. \cite{shen1998local} or \cite{de1978practical}. In this case it is reasonable to assume that Assumption
 A.\ref{A2}$(iv)$ holds with $\varsigma_n=d_1$. Let us now discuss Assumption A.\ref{A1}$(ii)$. Define the following class of additive, smooth functions:
$$\F_p^m(\tilde{s}):=\{f:X\mapsto\sum_{j\in \tilde{S}}f_j(X_j)\mid|\tilde{S}|\le \tilde{s}, f_j\in C^m[0,1], \E[f_j(X_j)]=0\ \forall j \in \tilde{S}\}$$
and assume $f_{-1}\in \F_p^m(\tilde{s})$. Using B-Splines we are able to approximate $f_{-1}$ sufficiently well. Let $h_1,\dots,h_{\tilde{d}}$ be B-Splines of order $\tilde{t}$ with equidistant knots and $\mathcal{B}(\tilde{d},\tilde{t})$ be the corresponding linear space. Relying on \cite{shen1998local}, for each $j\in\tilde{S}$, we can find a $f_j^*\in \mathcal{B}(\tilde{d},\tilde{t})$ such that
\begin{align*}
    \|f_j-f_j^*\|_\infty = O\left((\tilde{d}-\tilde{t})^{-\tilde{t}}\right)
\end{align*}
for all $\tilde{t}\le m$. This directly implies
\begin{align*}
    \sup_{x_{-1}} b_2(x_{-1}) \le C \tilde{s} (\tilde{d}-\tilde{t})^{-\tilde{t}}.
\end{align*}
With the same argument for $f_1\in C^m[0,1]$ and choosing $\tilde{d}=d_1$, we obtain 
\begin{align*}
    \sup_{x} (b_1(x_1) + b_2(x_{-1})) \le C (\tilde{s}+1) (\tilde{d}-\tilde{t})^{-\tilde{t}}.
\end{align*}
For $\tilde{t} = 2$ and $\tilde{s}= O(1)$, we have to ensure that
\begin{align*}
    (sup_{x} (b_1(x_1) + b_2(x_{-1})))^2 \le C \tilde{d}^{-4} \le \frac{s_1\log(\bar{d}_n)}{n}.
\end{align*}
Here, for simplicity, we assume that only a finite number of regressors in our model are relevant ($\tilde{s}=O(1)$) and hence $s_1\approx d_1$ and $\bar{d}_n\approx pd_1$. This implies
\begin{align*}
    s_1 = O(n^{1/5}\log^{-1/5}(\bar{d}_n)).
\end{align*}
which can be aligned with our growth rates in Assumption A.\ref{A2}$(v)$ which we discuss next. It is worth noting that in this case 
$$ sup_{x} (b_1(x_1) + b_2(x_{-1}))\lesssim d_1^{-2}$$
which is in line with Assumption 3 in \cite{kozbur2015} with  $K=d_1$ and $\alpha_0=2$. 
Choosing the order of B-Splines $t_1 = \log(n)$ and keeping in mind that $\varsigma_n\approx d_1$ and $\tilde{A}_n\approx d_1$, the growth rates in Assumption A.\ref{A2}$(v)$ simplify to
\begin{itemize}
\item[(a)] $n^{\frac{1}{\tilde{q}}}d_1\log(d_1)\left(\frac{s^2\log^2(\bar{d}_n) }{n}\vee \frac{d_1^2\log^2(d_1)}{n}\right)\lesssim n^{\frac{1}{\tilde{q}}}\frac{d_1^3\log^3(\bar{d}_n)}{n} =o(1)$
\item[(b)] $d_1\log(d_1)\left(\frac{d_1\log(\bar{d}_n)\log(d_1)}{n}\vee\frac{d_1^2\log^6(d_1)}{n}\right)\lesssim \frac{d_1^3\log^7(d_1)}{n}=o(1)$
\end{itemize}
for a suitable $\tilde{q}$. This is in line with $s_1\approx d_1\gtrsim n^{1/5}$ which is needed to ensure a sufficient small approximation error in Assumption A.\ref{A1}$(ii)$. We observe that the total number of regressors $p$ and hence the total number of approximation functions $\bar{d}_n\approx d_1p$ can grow at an exponential rate with sample size. This means that the set of approximating functions can be larger than the sample size. This situation is common for lasso-based estimators. Our growth condition is in line with other results in the
literature, e.g., \cite{belloni2018uniformly}, \cite{Belloni2014b} and many others. 
 Obviously, this is only one special case in which the assumptions A.\ref{A2}$(v)$ and A.\ref{A1}$(ii)$ are
satisfied. It is worth noting that the Assumption A.\ref{A2}$(v)$ can still hold true if we allow the number of relevant regressors to increase with sample size.

This discussion points precisely to the following trade-off: 
For a given model, such as an additive model with a finite number of relevant regressors, the number of approximating functions $d_1$ and $d_2$ are allowed to grow with sample size to ensure a sufficiently small approximation error in Assumption A.\ref{A1}$(ii)$. On the other hand, more parameters in the linear model increases the estimation error in the nuisance parameters, which is controlled by the growth assumptions in Assumption A.\ref{A2}$(v)$. Therefore, both assumptions need to be balanced. This trade-off can also be observed in Assumption 11.2 and 11.3 in \cite{kozbur2015}. In contrast to these growth rates, our weaker growth conditions are simultaneously feasible if the target function is twice-differentiable as outlines above. This can be explained by the fact that we are using improvements in series estimators in \cite{belloni2018uniformly} to relax the assumptions in \cite{kozbur2015} who uses results in \cite{NEWEY1997147}. Particularly, \cite{kozbur2015} assumes $n^{-1}d_1^4=o(1)$ in Assumption 4, while we roughly assume $n^{-1}d_1^3=o(1)$ above. Hence, choosing $d_1\approx n^{1/5}$, $\bar{d}_n=d_2$ can be as large as $\exp(o(n^{2/15}))$ which can be larger than $n$ to satisfy the growth condition in a). It is worth noting that the growth condition on $d_2$ in Assumption 11.2 in \cite{Kozbur2021} is much more restrictive than in our paper.

 The second growth condition in b) guarantees the validity of multiplier bootstrap and allows us to construct uniformly valid confidence regions. It is in line with \cite{chernozhukov2013gaussian}, but we allow vanishing eigenvalues.
\end{remark}
\begin{remark}\label{commentApprox}
The sparsity condition in A.\ref{A2}$(ii)$ restricts the number of nonzero regression coefficients $s_1$ and $s_2$ in the Equations (\ref{linmod1}), (\ref{linmod2}) and (\ref{second_stage}). Through this, we especially assume that the regression function $f$ can be adequately approximated using only $s_1$ relevant basis functions. Note that we do not directly control the number of relevant covariables, but rather the total number of approximating functions. This sparsity condition differs from that used by \cite{gregory2016} and \cite{lu2019} who restrict the number of relevant additive components in the model (\ref{GAM}). Our model also includes the approximate sparse setting through the error terms $b_1$ and $b_2$ in (\ref{linmod1}) and (\ref{linmod2}), which offers greater flexibility and is more realistic for many applications.\\
The sparsity condition in A.\ref{A2}$(ii)$ which restricts the number of nonzero regression coefficients $s_1$ in the Equations (\ref{linmod1}) and (\ref{linmod2}) is standard in high-dimensional regression models. The condition on the number of nonzero regression coefficients $s_2$ in the auxiliary regression is also standard but \eqref{second_stage} needs further discussion. 
It assumes that each approximating function itself may be approximated sufficiently
well by $s_2$ or fewer additive components which is comparable to Assumption (B4) in \cite{gregory2016} or Assumption 6 in \cite{Kozbur2021}.
Lemma 1 in \cite{gregory2016} shows that (B4) holds under a mixing type condition on the covariates if one uses piecewise polynomials for approximation. Analog to the sparsity assumption in the main regression, sparsity in \eqref{second_stage} ensures fast estimation rates of the nuisance parameters. The auxiliary regression in \eqref{second_stage} is basically used to orthogonalize our estimation procedure. Heuristically, for any M-estimator, the orthogonality property essentially requires the inverse of the Hessian matrix of the population loss function to have sparse columns (cf. Assumption 6 in \cite{lu2019}). Therefore such a sparsity assumption or related assumption is unavoidable. 
\end{remark}


\section{Simulation Results}\label{Simulation}

\begin{table}[t]
\begin{center}
\begin{small}
\begin{tabular}{c l}
 \hline  \hline \\[-1.8ex]
 Component & Function  \\[0.9ex] \hline &  \\[-1.8ex]  
1 &$f_1(x_1)= - \sin(2 \cdot x)$ \\
 2 & $f_2(x_2) = x^2 - \frac{25}{12}$\\ 
 3 & $f_3(x_3) = x$ \\
4 & $f_4(x_4) = \exp(-x) - \frac{2}{5} \cdot \sinh(\frac{5}{2})$\\
$5, \ldots, p$ & $f_j(x_j) = 0$.  \\ \hline \hline \\[-1.8ex]
\end{tabular}
\end{small} 
\caption{Function definitions in the data generating processes, simulation study.} 
\label{dgps}
\end{center}
Data generating processes are based on settings in \cite{gregory2016} and \cite{meier2009}.
\end{table}

We assess the empirical performance of our estimator through a series of simulation experiments based on the settings outlined in \citet{gregory2016}, which in turn build on the work of \citet{meier2009}. In our first set of simulations, we evaluate the empirical coverage of our simultaneous confidence bands replicating the data generating process from \citet{gregory2016} and \citet{meier2009}, which incorporates a homoskedastic error term. Additionally, because our estimation framework is also compatible with heteroskedastic error terms, we extend the simulation study to include a heteroskedastic version of the aforementioned design. 

Furthermore, we investigate the performance of our estimator and confidence bands in scenarios involving interactions among the nuisance functions $f_{-j}(x_{-j})$. These scenarios are challenging in two regards: First, modeling interactions requires a large number of interaction terms, quickly leading to very high-dimensional settings. Second, specifying interactions among various functions in empirical applications is challenging because of the risk that the chosen model will not precisely match the true underlying data generating process. By exploiting a sparse structure, our theoretical framework can tolerate a moderate degree of misspecification in the nuisance functions. We introduce two interacted designs -- one correctly specified and one misspecified--providing a comprehensive assessment of our methodology.
We consider the finite sample performance of our estimator in a high-dimensional additive model of the form
\begin{align*}
y_i = \sum_{j=1}^{p} f_j(x_{i,j}) + \epsilon_{i,j}, 
\end{align*}
with $i = 1, \ldots, n$ and $j = 1, \ldots, p$. The functions $f_j(x_j)$ are nonlinear for $j=1, 2, 4$ and linear for $j=3$, as can be recognized from their definition in Table \ref{dgps}. The functions $f_5(x_5), \ldots, f_p(x_p)$ are null components, thus implementing a sparse setting. In all of our considered settings, the regressors $X$ are marginally uniformly distributed on an interval, $I = [-2.5, 2.5]$ with correlation matrix $\Sigma$ with $\Sigma_{k,l}=\rho^{|k-l|}$, $1 \le k,l \le p$. We consider the values $\rho = 0$ and $\rho = 0.5$ because they implement the weakest and strongest correlation structures from the simulations in \citet{gregory2016}. We generate data sets for scenarios with  $n \in \{100, 1000\}$ observations and $p \in \{50, 150\}$ explanatory variables $X_1, \ldots, X_p$.

First, we consider a homoskedastic setting with a standard normally distributed error term $\epsilon \sim N\left(0, 1\right)$, which is identical to the design in \citet{gregory2016}. Second, we define a comparable heteroskedastic setting by specifying an error term $\epsilon_j\sim N\left(0, \sigma_j(x_j)\right)$ with $\sigma_j(x_j) = \underline{\sigma} \cdot (1 + |x_j|)$ and $\underline{\sigma} = \sqrt{\frac{12}{67}}$.  This value of $\underline{\sigma}$ ensures a signal-to-noise ratio that is comparable to the homoskedastic setting. Third, we implement two different interacted scenarios, in which we focus on the estimation of the function $f_1(x_1)$. In the correctly specified interacted scenario, which we will later refer to as Scenario $I$, the specification of the estimated sparse additive model is identical to that of the underlying data generating process. In this model, the function $f_2(x_2)$ is interacted with all other components except for $f_1(x_1)$, making the scenario suitable for our estimation approach. Including the interactions of all $p-1$ nuisance functions leads to a high-dimensional setting. For example, if one specifies cubic B-splines with 3 knots in a setting with $p=150$ covariates, the overall dimension of the estimated model is $d_1+d_2 = 6,228$. In the data generating process of the misspecified interacted Scenario $II$, all non-sparse functions are interacted with each other, except for $f_1(x_1)$. In contrast, the model that is estimated empirically only specifies a basic additive model without any interaction terms. Hence, the interacted Scenario $II$ implements a situation with considerable model misspecification.


\noindent
In the simulation, we use the estimator and the multiplier bootstrap procedure we proposed in Section \ref{Estimation} to generate predictions $\hat{f}_j(x_j)$ for the function $f_j(x_j)$ and construct simultaneous confidence bands that are defined in terms of  $\hat{l}_j(x_j)$ and $\hat{u}_j(x_j)$. We approximate the functions $f_j(x_j)$ in the additive model by cubic B-splines. Estimation is performed using post-lasso with a theory-based choice of the penalty level as implemented in the R package \texttt{hdm} \citep{hdm}. Further details related to the implementation in the simulation study can be found in Appendix \ref{compdetails}.\\
Table \ref{basicsettings} presents the empirical coverage achieved by the estimated simultaneous $95\%$-confidence bands in the homoskedastic and heteroskedastic scenario. The results are obtained in $R = 500$ repetitions and the confidence bands are constructed over an interval of values of $x_j$, $I = [-2,2]$. A confidence band is considered to cover the function $f_j(x_j)$ if it entirely contains the true function, or, stated more formally, if for all values of $x_j \in I$ it holds that $\hat{l}_j(x_j)\le f_j(x_j) \le \hat{u}_j(x_j)$.\\

The results can be interpreted as evidence supporting the validity of our inference method in high-dimensional additive models in both homoskedastic and heteroskedastic settings. The coverage of the confidence bands is close to the nominal level across all settings, in particular when the sample size is large ($n=1000$) or if the correlation structure among the covariates is weak ($\rho=0$). We find that a precise approximation of the target function $f_j(x_j)$ through an appropriate specification of the employed B-Splines is essential for the coverage results of the confidence bands.
\begin{table}[t]
\centering
\begin{tabular}{l l l c c c c c}
 \hline  \hline \\[-1.8ex]
$n$    & $p$   & $\rho$ & $f_1$   & $f_2$   & $f_3$   & $f_4$   & $f_5$ \\[0.9ex] \hline &  \\[-1.8ex]  
\multicolumn{8}{c}{\textit{Gregory et al. (2021), homoskedastic}} \\ \\
$100$  & $50$  & $0.0$  & $91.4$  & $93.0$  & $91.4$  & $94.0$  & $93.2$ \\
       &       & $0.5$  & $90.2$  & $91.6$  & $84.0$  & $85.6$  & $86.2$ \\ \\
$100$  & $150$ & $0.0$  & $89.6$  & $91.8$  & $88.2$  & $94.0$  & $94.0$ \\
       &       & $0.5$  & $92.8$  & $92.8$  & $86.8$  & $87.2$  & $87.6$ \\ \\
$1000$ & $50$  & $0.0$  & $95.2$  & $94.8$  & $93.8$  & $95.0$  & $94.4$ \\
       &       & $0.5$  & $92.8$  & $90.8$  & $84.4$  & $94.2$  & $88.8$ \\ \\
$1000$ & $150$ & $0.0$  & $94.0$  & $94.6$  & $93.8$  & $94.6$  & $94.4$ \\
       &       & $0.5$  & $92.0$  & $87.8$  & $80.4$  & $89.4$  & $89.0$ \\[0.9ex]  \\[-1.8ex]  
\multicolumn{8}{c}{\textit{Gregory et al. (2021), heteroskedastic}} \\ \\
$100$  & $50$  & $0.0$  & $92.0$  & $92.6$  & $90.0$  & $93.6$  & $91.6$ \\
       &       & $0.5$  & $88.8$  & $91.8$  & $84.8$  & $89.2$  & $86.6$ \\ \\
$100$  & $150$ & $0.0$  & $90.4$  & $91.6$  & $93.2$  & $94.2$  & $94.8$ \\
       &       & $0.5$  & $92.4$  & $89.4$  & $87.2$  & $88.8$  & $87.0$ \\ \\
$1000$ & $50$  & $0.0$  & $94.8$  & $95.0$  & $93.0$  & $95.4$  & $95.6$ \\
       &       & $0.5$  & $94.4$  & $90.8$  & $87.2$  & $94.0$  & $90.2$ \\ \\
$1000$ & $150$ & $0.0$  & $95.0$  & $95.6$  & $94.2$  & $94.6$  & $95.2$ \\
       &       & $0.5$  & $93.2$  & $86.4$  & $80.2$  & $88.0$  & $90.0$ \\  \hline  \hline \\[-1.8ex]
\end{tabular}
\caption{Coverage results, homoskedastic and heteroskedastic simulation settings.}
Coverage achieved by simultaneous 95\%-confidence bands in $R=500$ repetitions as generated over a range of values of $x_j$, $I=[-2, 2]$.
\label{basicsettings}
\end{table}
Figures \ref{homoskf1} to \ref{homoskf5} illustrate the average results for the estimators of $f_1(x_1), \ldots, f_5(x_5)$, along with the corresponding joint confidence bands. The green dashed line and the green shaded areas represent the results for the settings with $\rho = 0$, and the blue curve and shaded areas refer to the setting with $\rho = 0.5$. The red solid line shows the true value of the function $f_j(x_j)$. The corresponding results for the heteroskedastic settings are very similar and, hence, made available in Appendix \ref{addsimresults}. The plots depicting the average results provide additional insights into the quality of estimation across different settings. First, they illustrate that the estimation benefits substantially from larger samples. In samples with $n=100$ the confidence bands exhibit noticeable variability and width, whereas in larger samples they become considerably narrower. It is important to note that this observation is expected. For example, in the high-dimensional settings with $n=100$ and $p=150$, cubic B-splines with 3 knots would lead to an overall dimension of $d_1 + d_2 = 900$, which is challenging in terms of estimation. Second, comparing the  results across different values of $\rho$ helps elucidate the challenge in terms of approximation quality. For example, Figure \ref{homoskf4} shows the results for $f_4(x_4)$ in the homoskedastic design. Whereas the approximation appears to be very accurate for $\rho = 0$ in the setting with $n=100$, it is considerably worse for $\rho=0.5$. However, in larger samples, the quality of the approximation improves, which is in line with the results of \citet{gregory2016}. Similar observations can be made for $f_3(x_3)$ and $f_5(x_5)$. As in the previously considered settings, the strength of the correlation of the regressors $X$ plays an important role among the performance of the estimator $\hat{f}_1(x_1)$.
\begin{table}[t]
\centering
\begin{tabular}{l l l c }
 \hline  \hline \\[-1.8ex]
   $n$    & $p$   & $\rho$    & $f_1$ \\[0.9ex] \hline &  \\[-1.8ex]  

$100$  & $50$  & $0.0$  & $92.2$   \\
       &       & $0.5$  & $90.4$  \\ \\
$100$  & $150$ & $0.0$  & $93.0$  \\
       &       & $0.5$  & $95.0$  \\ \\
$1000$ & $50$  & $0.0$  & $94.2$   \\
       &       & $0.5$  & $92.8$ \\ \\
$1000$ & $150$ & $0.0$  & $94.0$  \\
       &       & $0.5$  & $92.4$  \\  \hline  \hline \\[-1.8ex]
\end{tabular}
\caption{Coverage results for $f_1$ in the interacted Scenarios $I$.}
Coverage achieved by simultaneous $95\%$ confidence bands in $R=500$ repetitions. Scenario $I$: Correctly specified interacted setting.
\label{interactedsettingsI}
\end{table}
\begin{figure}[t]
\begin{center}
\includegraphics[scale=0.25]{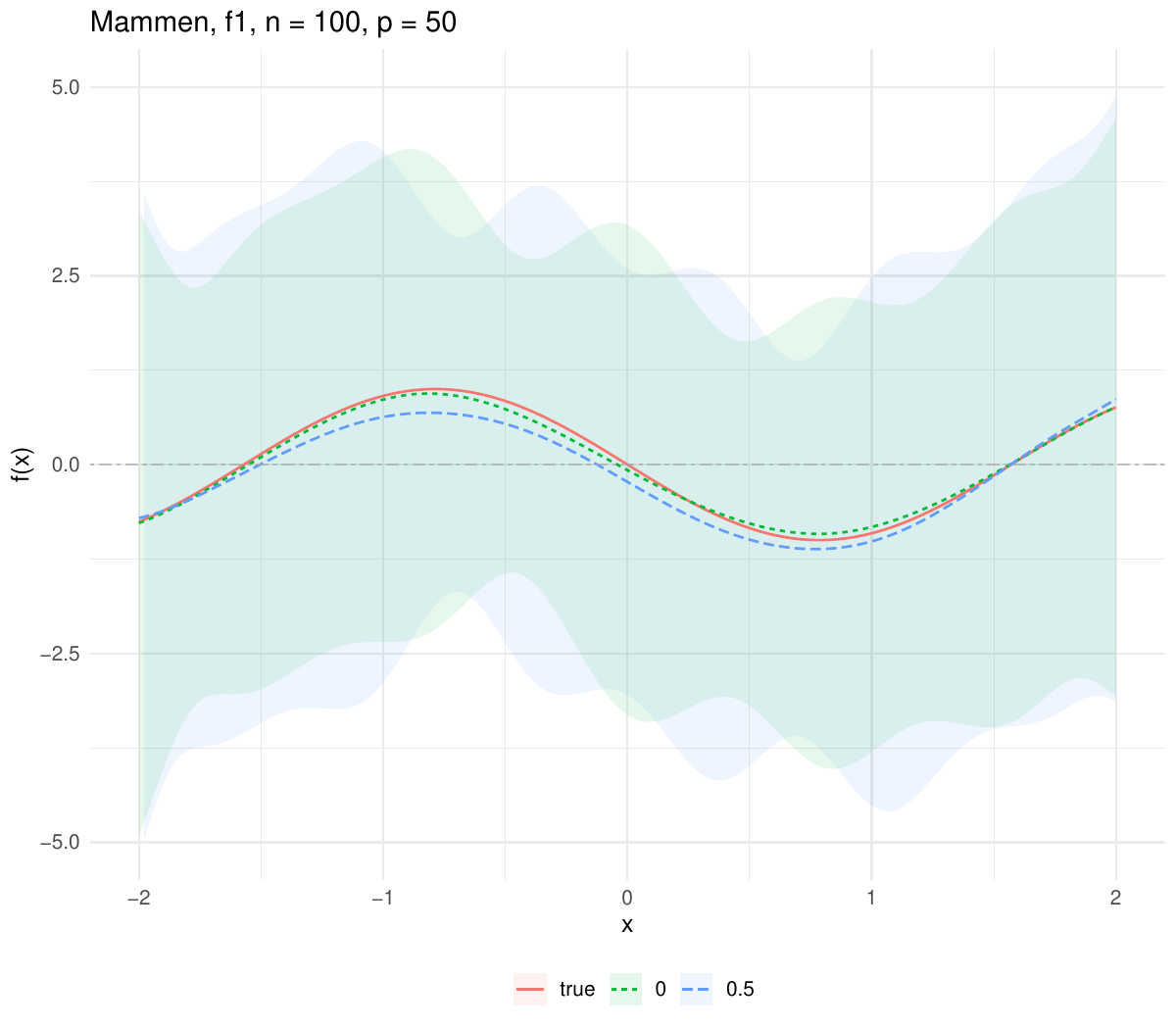} 
\includegraphics[scale=0.25]{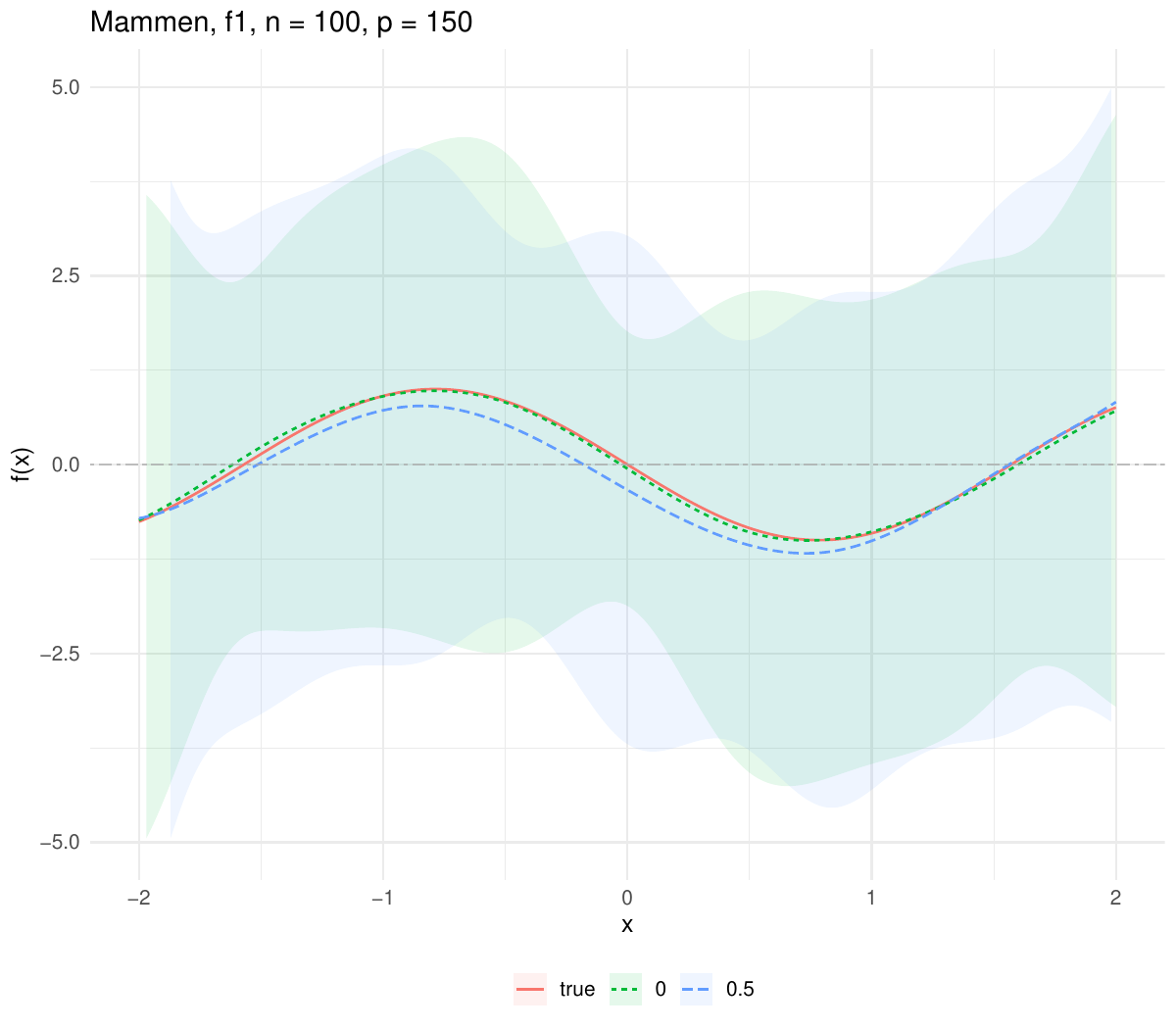} \\
\includegraphics[scale=0.25]{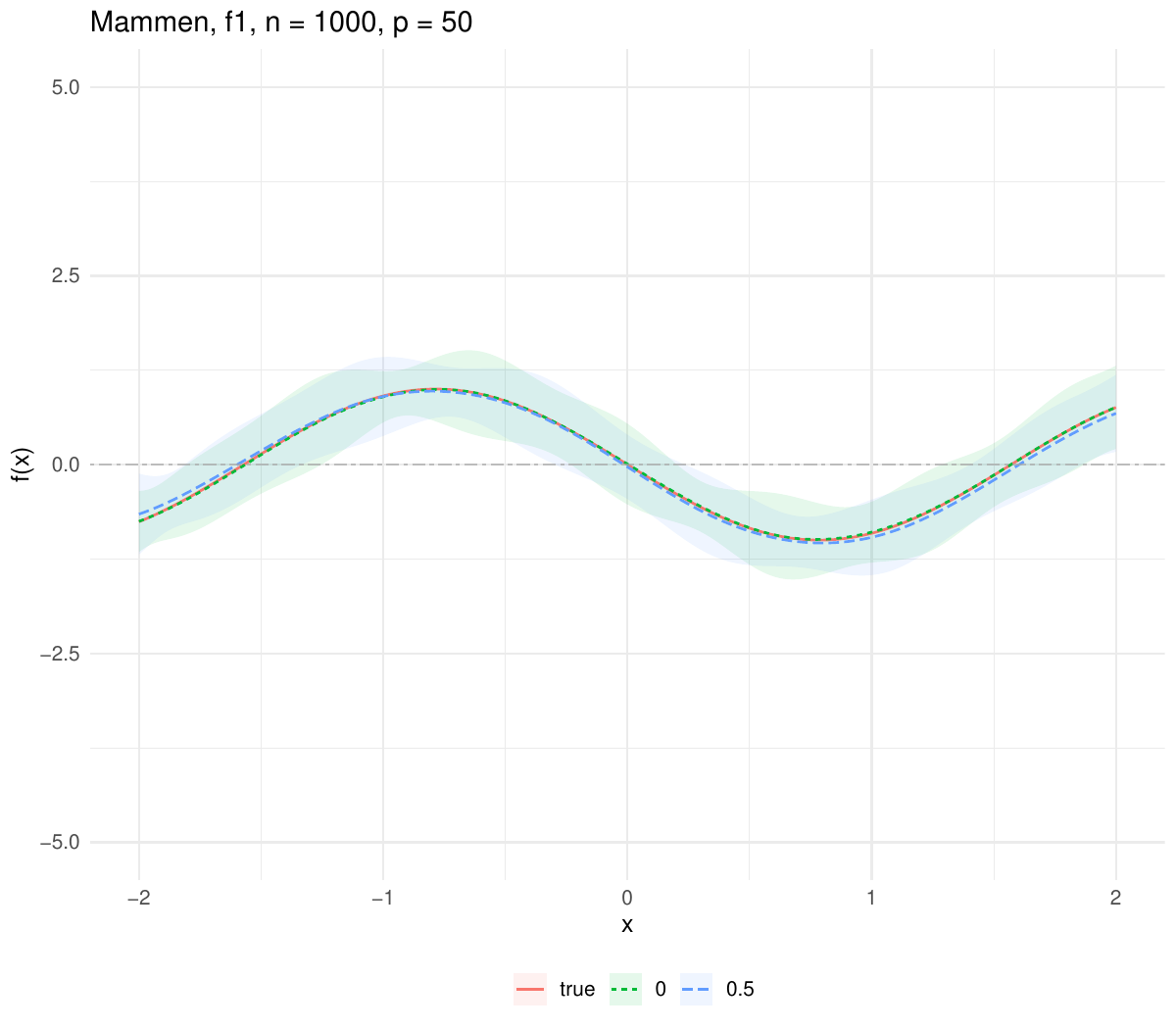}  
\includegraphics[scale=0.25]{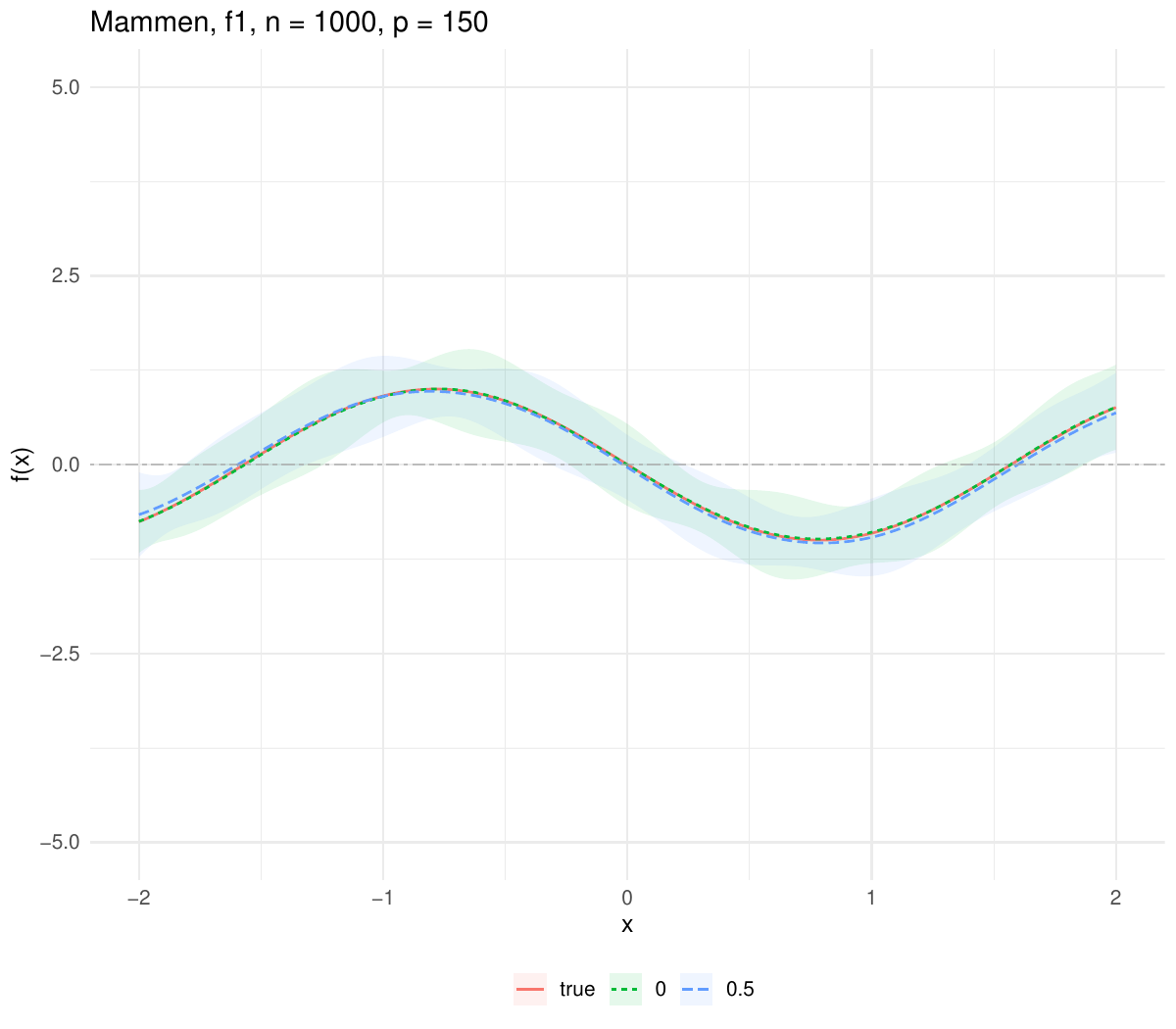} 
\caption{Average confidence bands, $f_1(x_1)$, homoskedastic setting.}
\label{homoskf1}
\end{center}
The green dashed curve illustrates averaged estimated functions $\hat{f}_1(x_1)$  as obtained in $R=500$ repetitions in the scenario with $\rho = 0$. The corresponding averaged $95\%$-confidence bands are shaded green. The blue long-dashed line illustrates the results for the setting with $\rho = 0.5$ together with corresponding averaged confidence bands (shaded blue). The true function $f_1(x_1)$ is illustrated by the red solid curve.
\end{figure}
\begin{table}[t]
\centering
\begin{tabular}{l l l c }
 \hline  \hline \\[-1.8ex]
   $n$    & $p$   & $\rho$    & $f_1$  \\[0.9ex] \hline &  \\[-1.8ex] 
$100$  & $50$  & $0.0$   & $91.0$   \\
       &       & $0.5$   & $93.4$ \\ \\
$100$  & $150$ & $0.0$   & $92.8$  \\
       &       & $0.5$  & $90.8$  \\ \\
$1000$ & $50$  & $0.0$  & $92.2$   \\
       &       & $0.5$  & $93.8$ \\ \\
$1000$ & $150$ & $0.0$  & $92.6$  \\
       &       & $0.5$  & $92.6$   \\  \hline  \hline \\[-1.8ex]
\end{tabular}
\caption{Coverage results for $f_1$ in the interacted Scenario $II$.}
Coverage achieved by simultaneous $95\%$ confidence bands in $R=500$ repetitions. Scenario $II$: Misspecified interacted setting.
\label{interactedsettingsII}
\end{table}
\noindent
\begin{figure}[t]
\begin{center}
\includegraphics[scale=0.25]{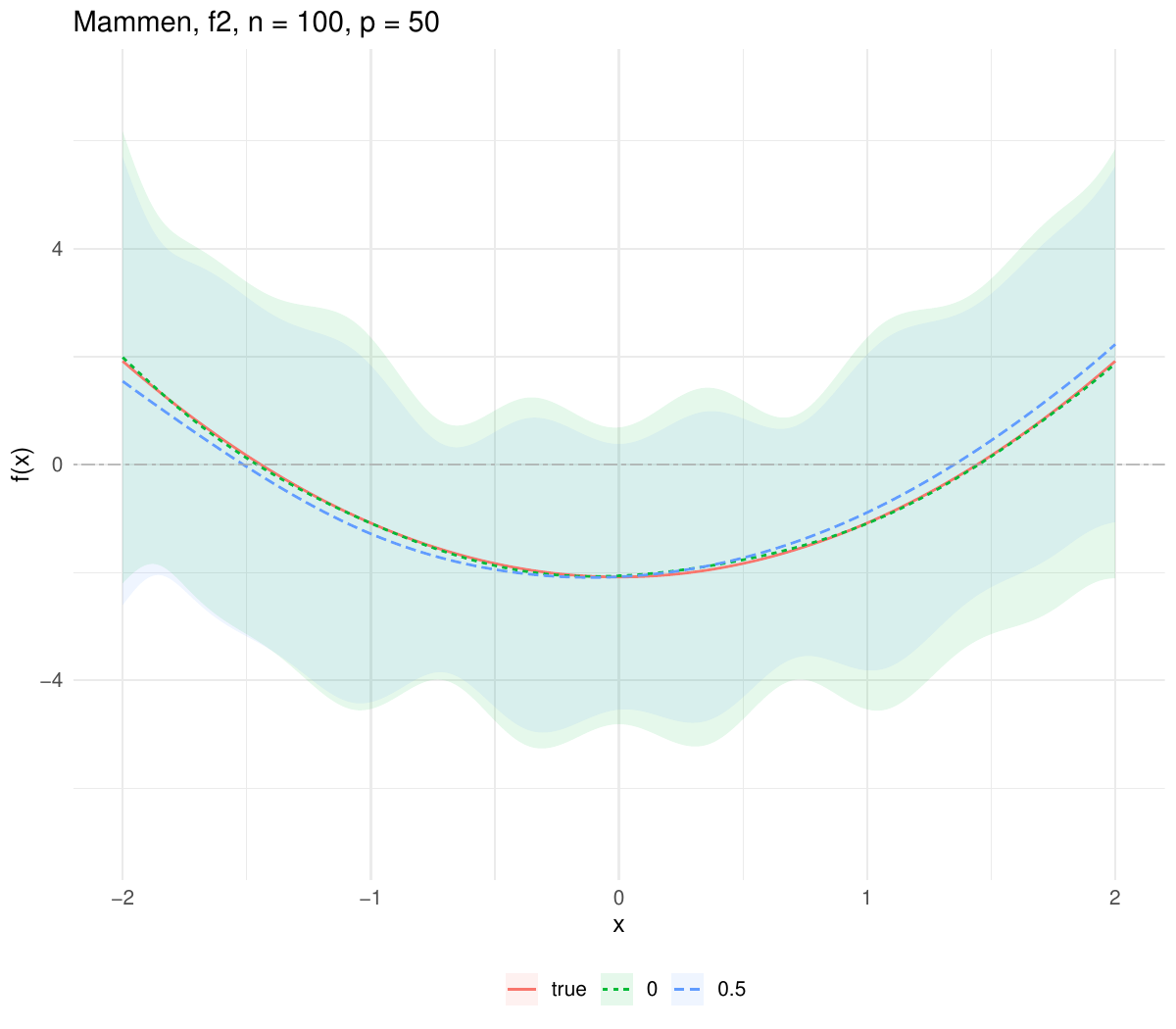} 
\includegraphics[scale=0.25]{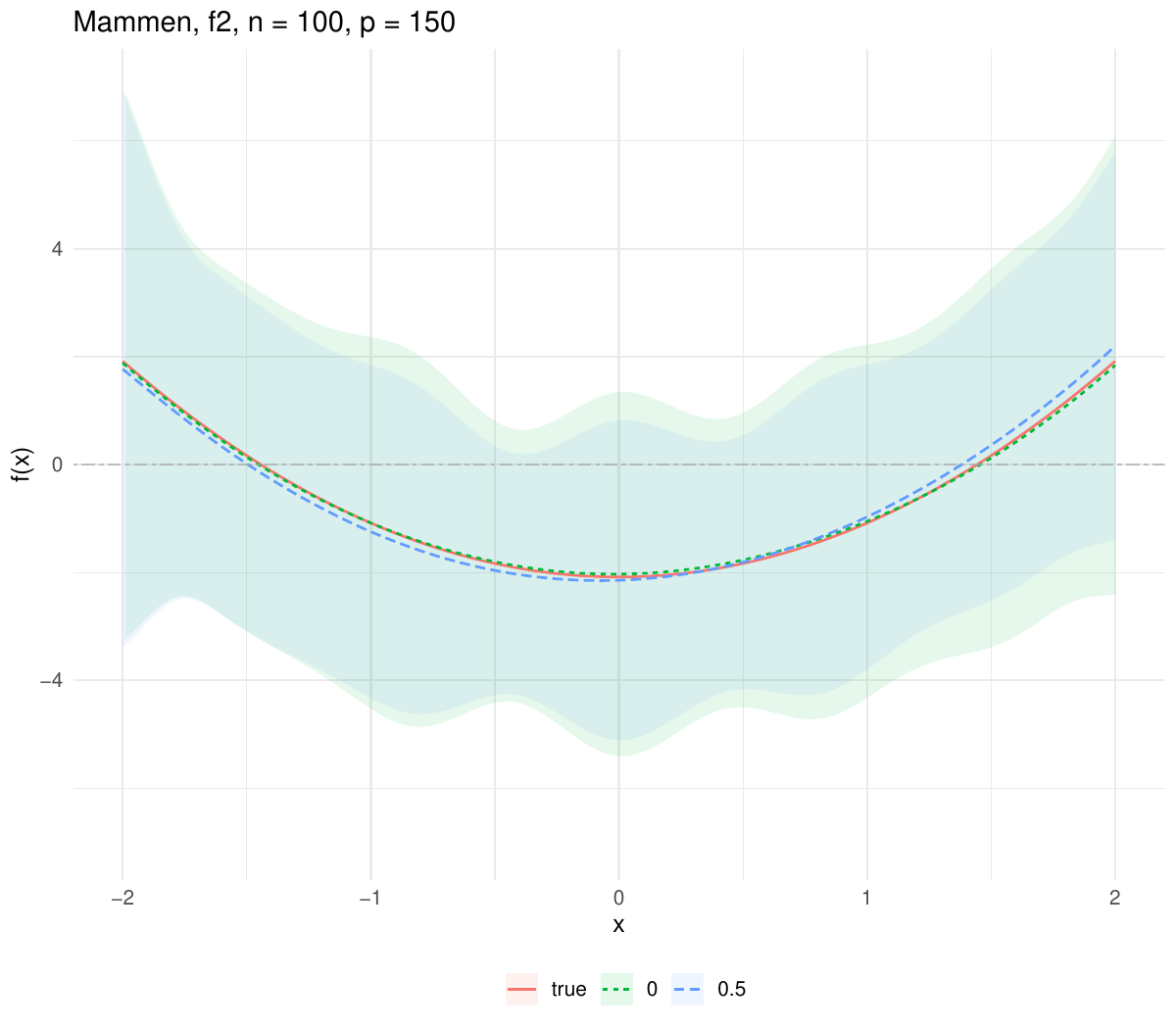} \\
\includegraphics[scale=0.25]{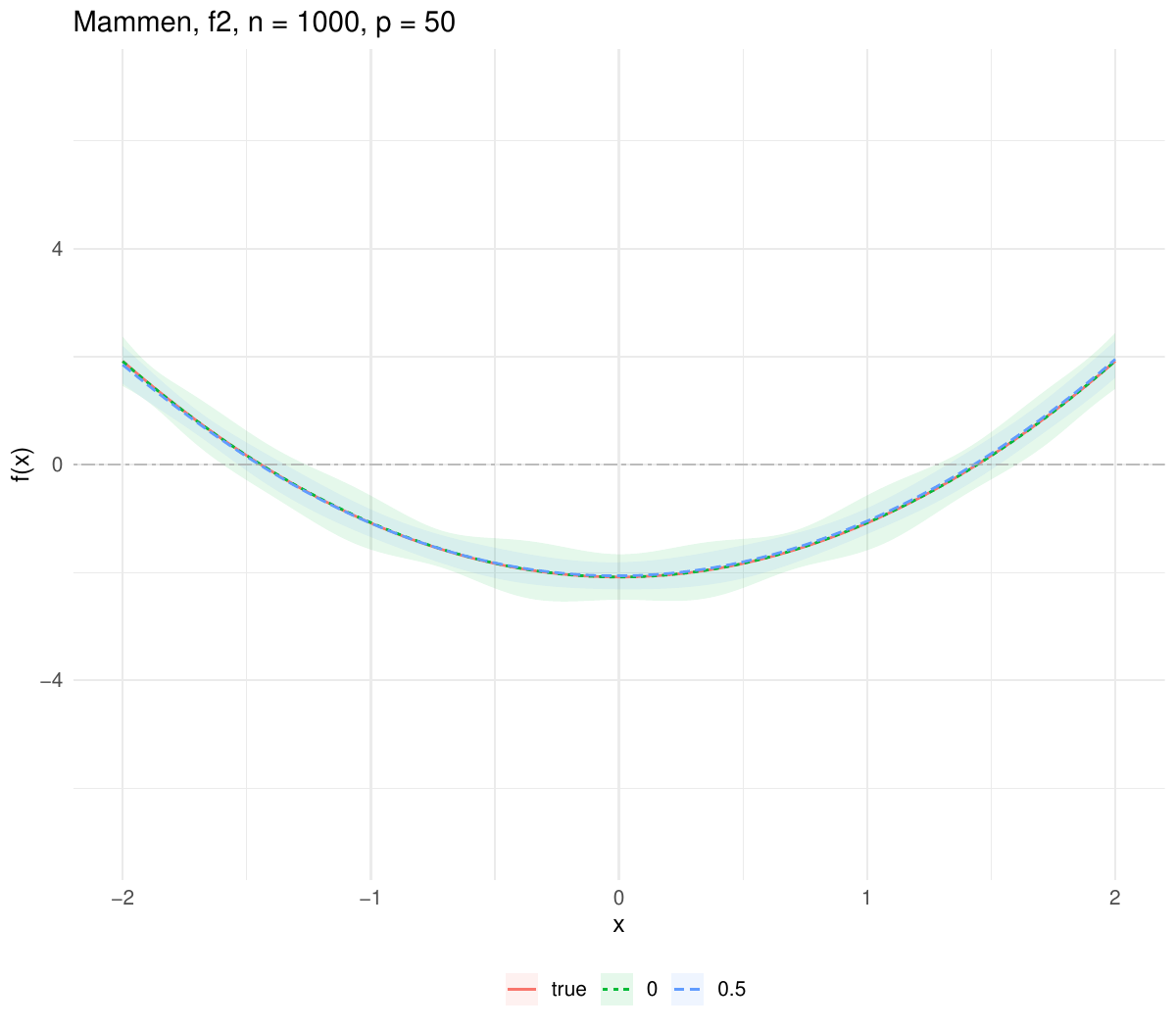}  
\includegraphics[scale=0.25]{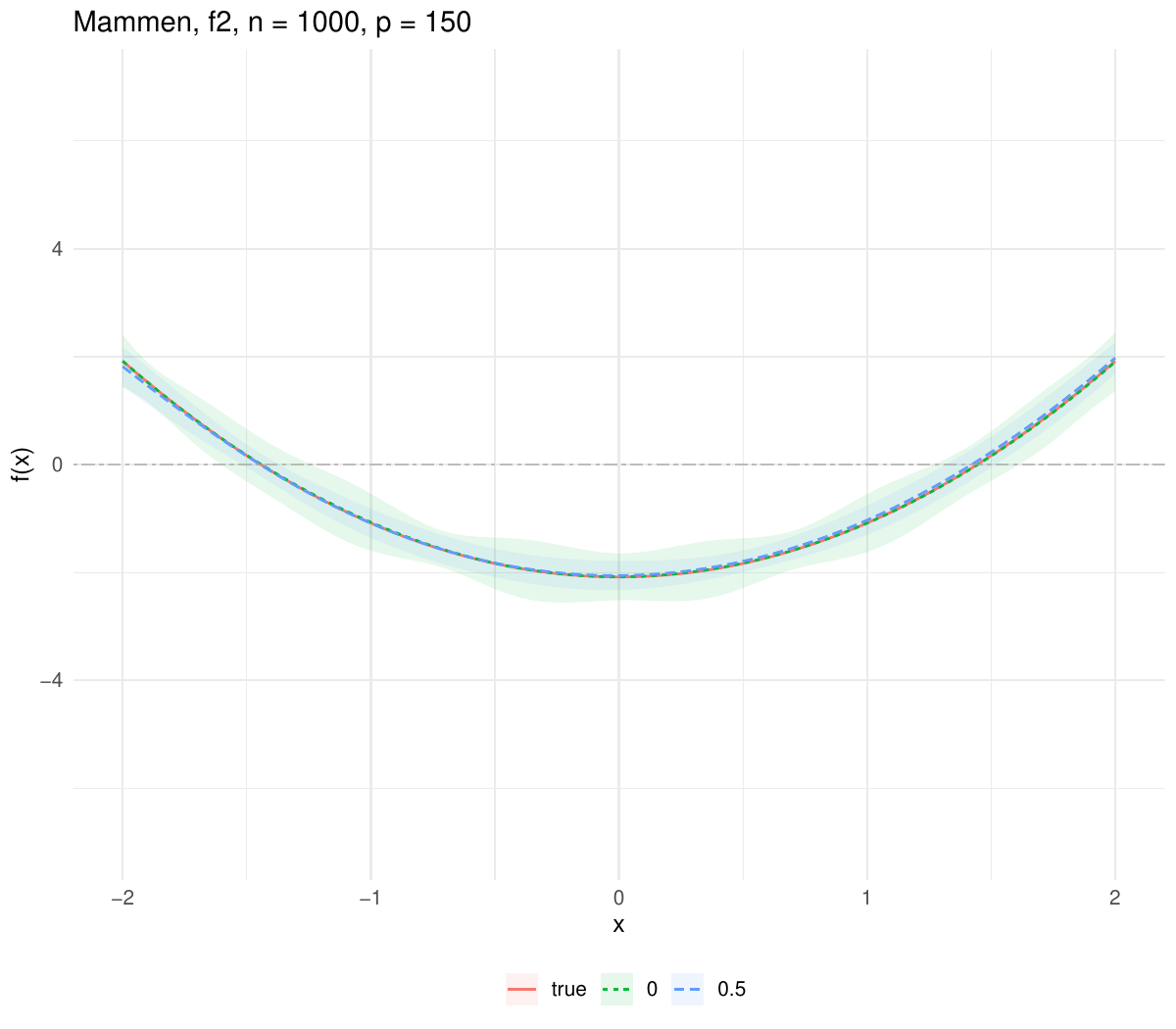} 
\caption{Average confidence bands, $f_2(x_2)$, homoskedastic setting.}
\label{homoskf2}
\end{center}
The green dashed curve illustrates averaged estimated functions $\hat{f}_2(x_2)$  as obtained in $R=500$ repetitions in the scenario with $\rho = 0$. The corresponding averaged $95\%$ confidence bands are shaded green. The blue long-dashed line illustrates the results for the setting with $\rho = 0.5$ together with corresponding averaged confidence bands (shaded blue). The true function $f_2(x_2)$ is illustrated by the red solid curve.
\end{figure}
\begin{figure}[t]
\begin{center}
\includegraphics[scale=0.25]{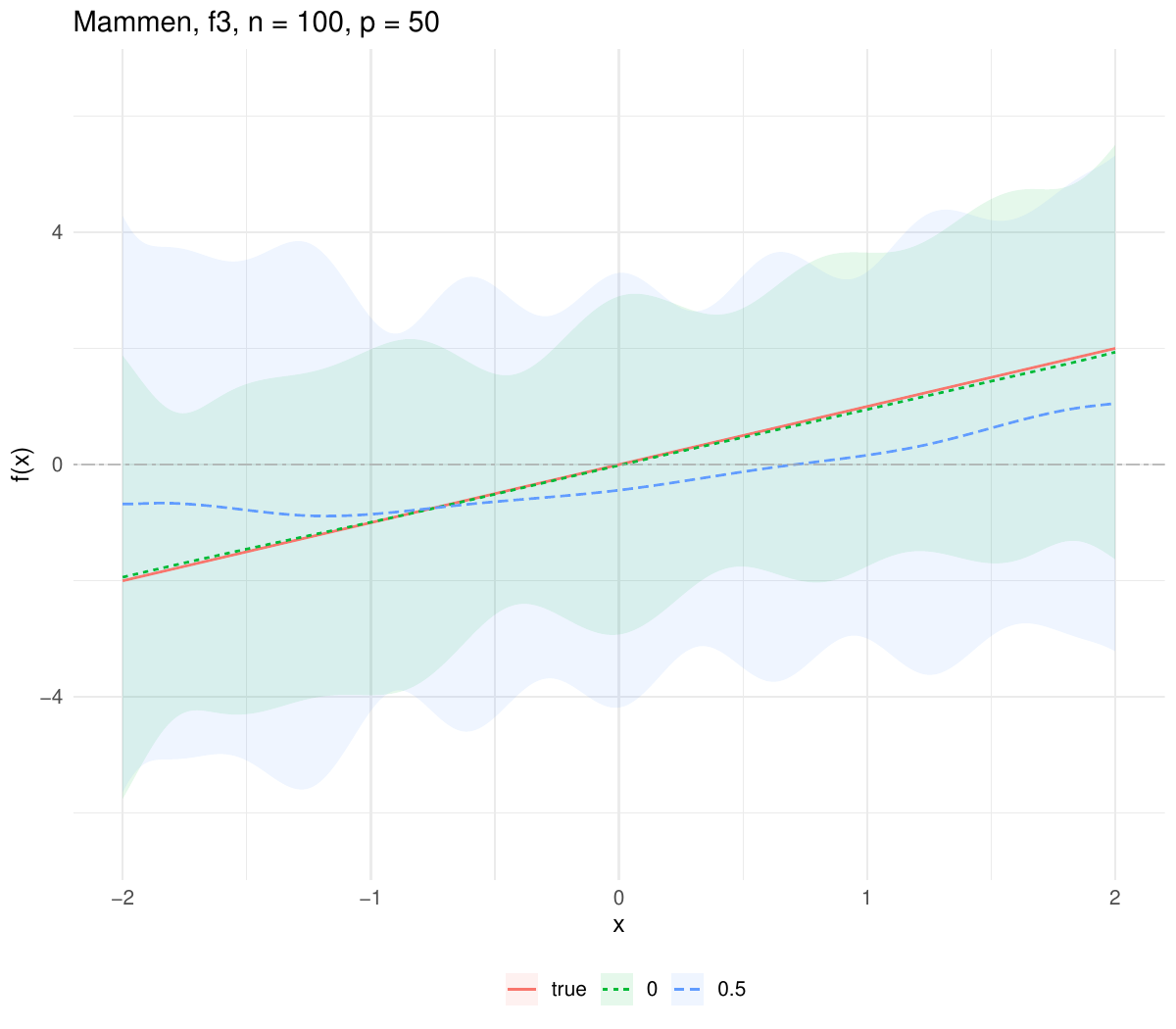} 
\includegraphics[scale=0.25]{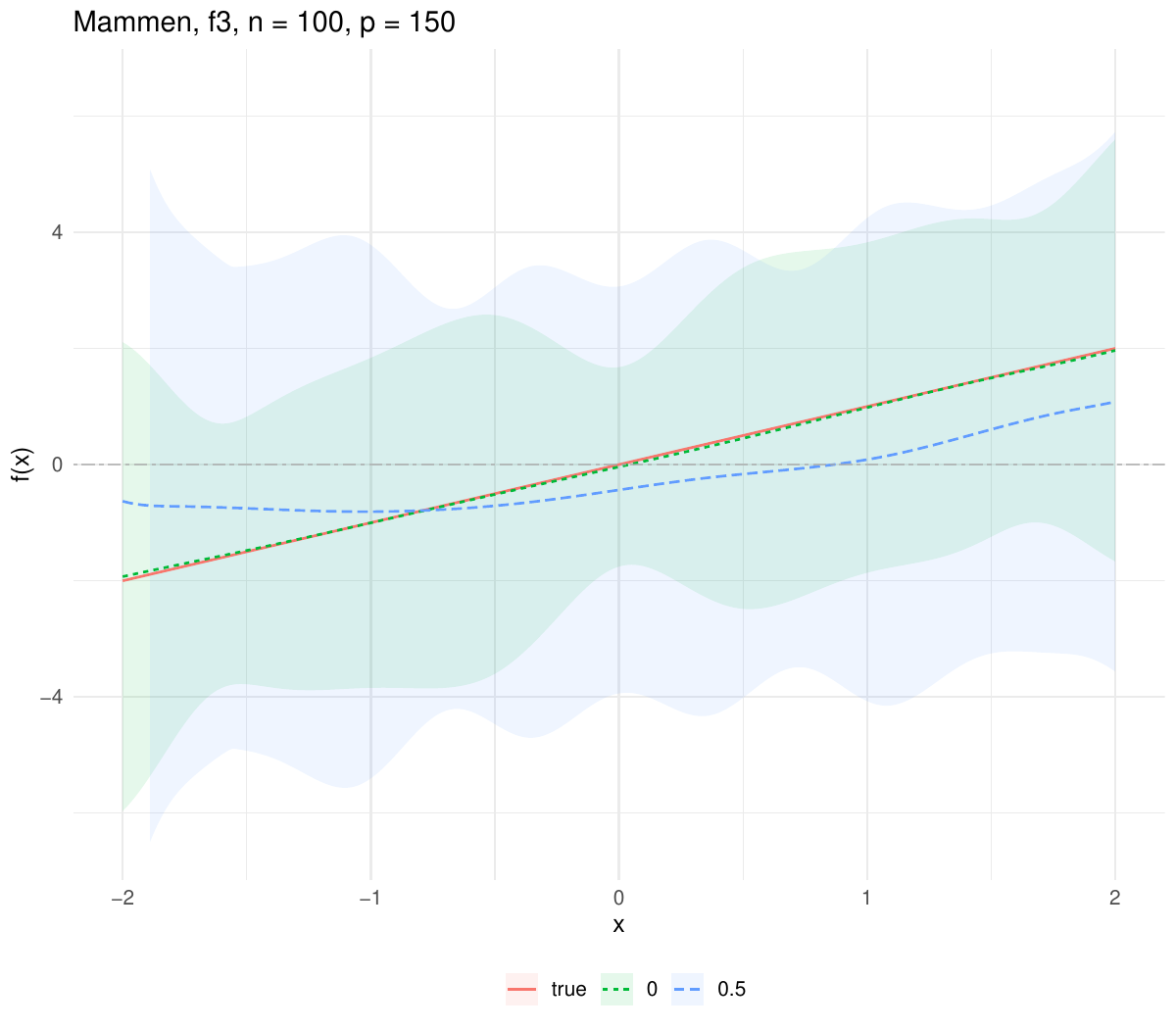} \\
\includegraphics[scale=0.25]{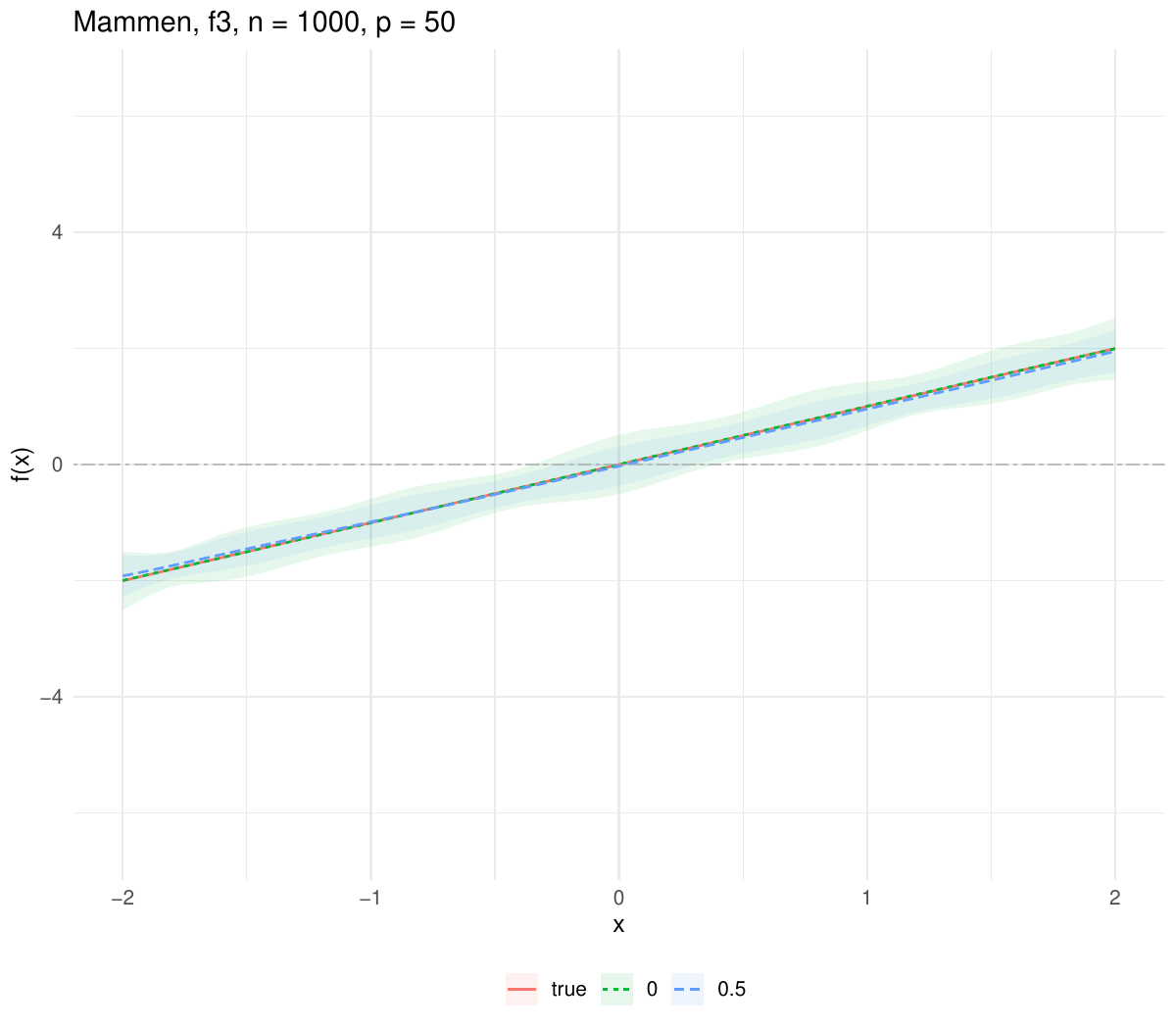}  
\includegraphics[scale=0.25]{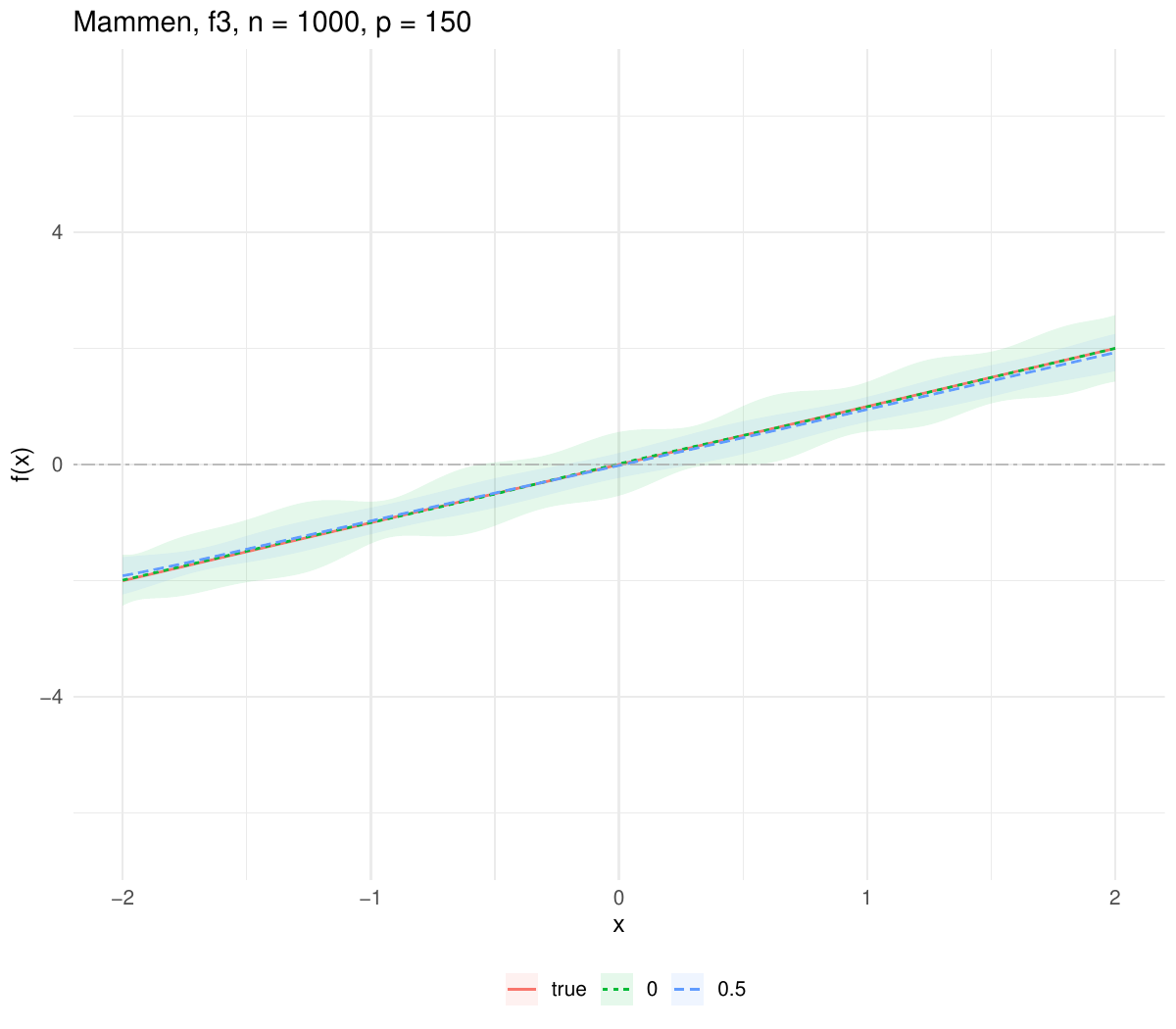} 
\caption{Average confidence bands, $f_3(x_3)$, homoskedastic setting.}
\label{homoskf3}
\end{center}
The green dashed curve illustrates averaged estimated functions $\hat{f}_3(x_3)$  as obtained in $R=500$ repetitions in the scenario with $\rho = 0$. The corresponding averaged $95\%$ confidence bands are shaded green. The blue long-dashed line illustrates the results for the setting with $\rho = 0.5$ together with corresponding averaged confidence bands (shaded blue). The true function $f_3(x_3)$ is illustrated by the red solid curve.
\end{figure}
\begin{figure}[t]
\begin{center}
\includegraphics[scale=0.25]{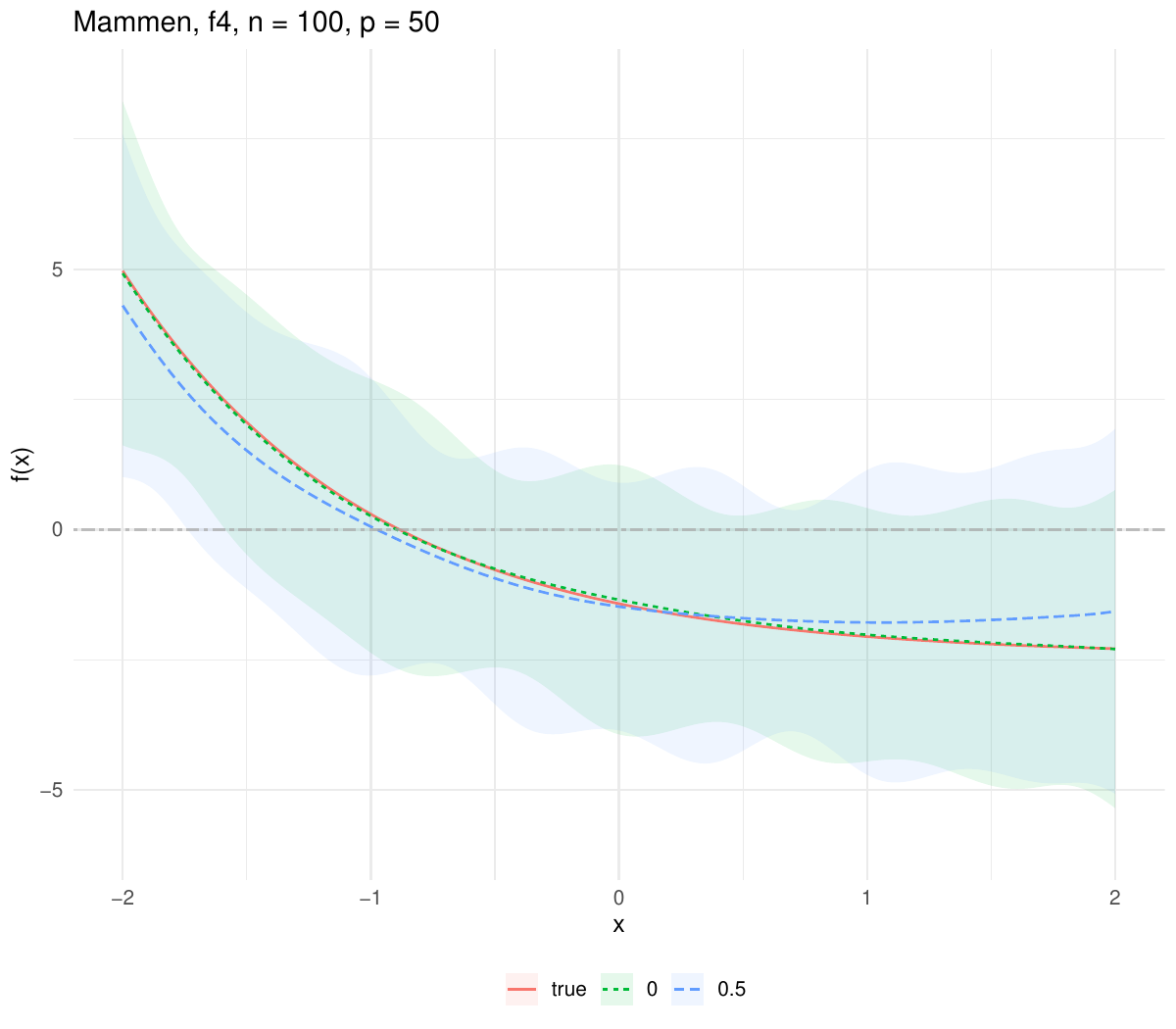} 
\includegraphics[scale=0.25]{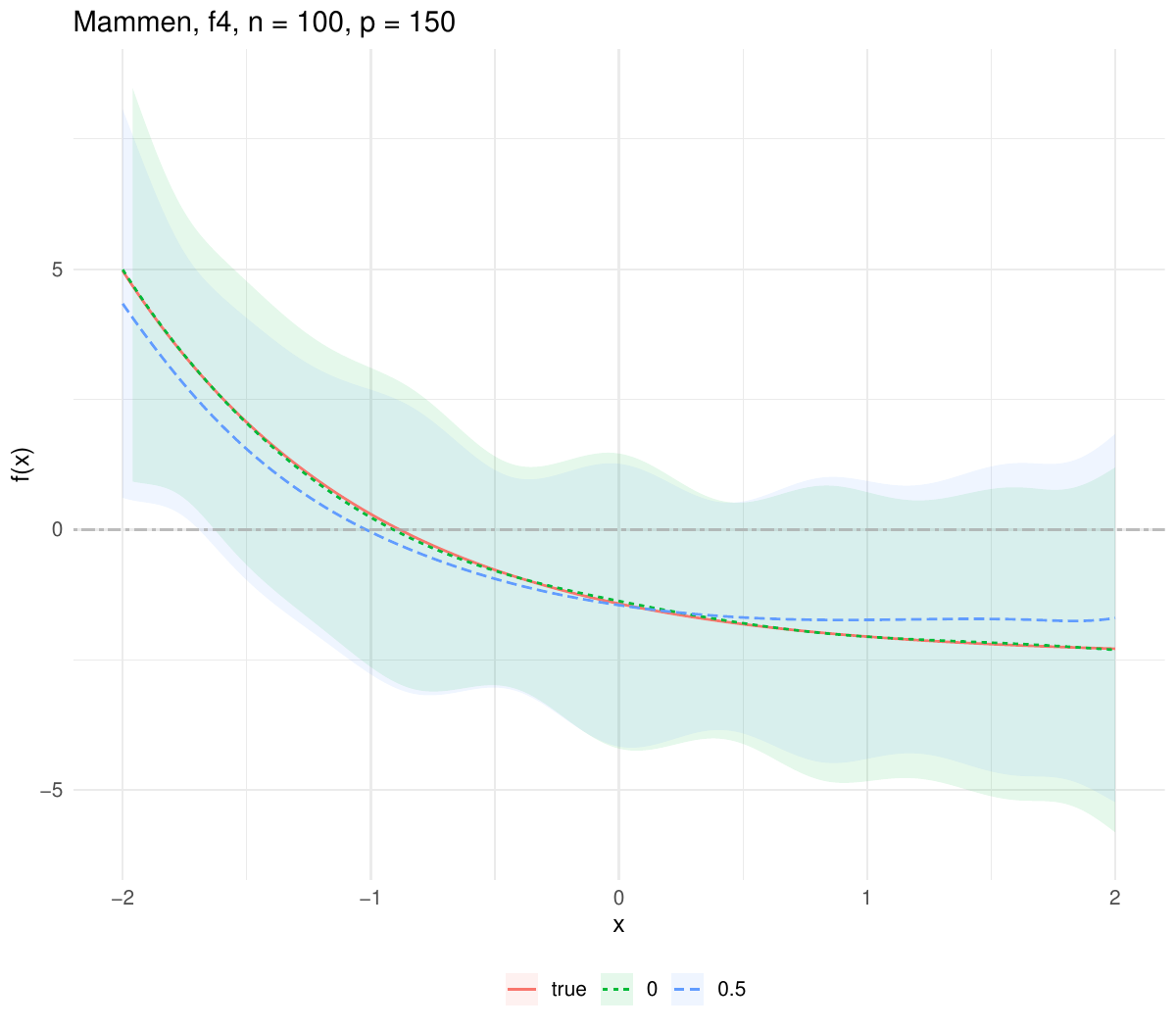} \\
\includegraphics[scale=0.25]{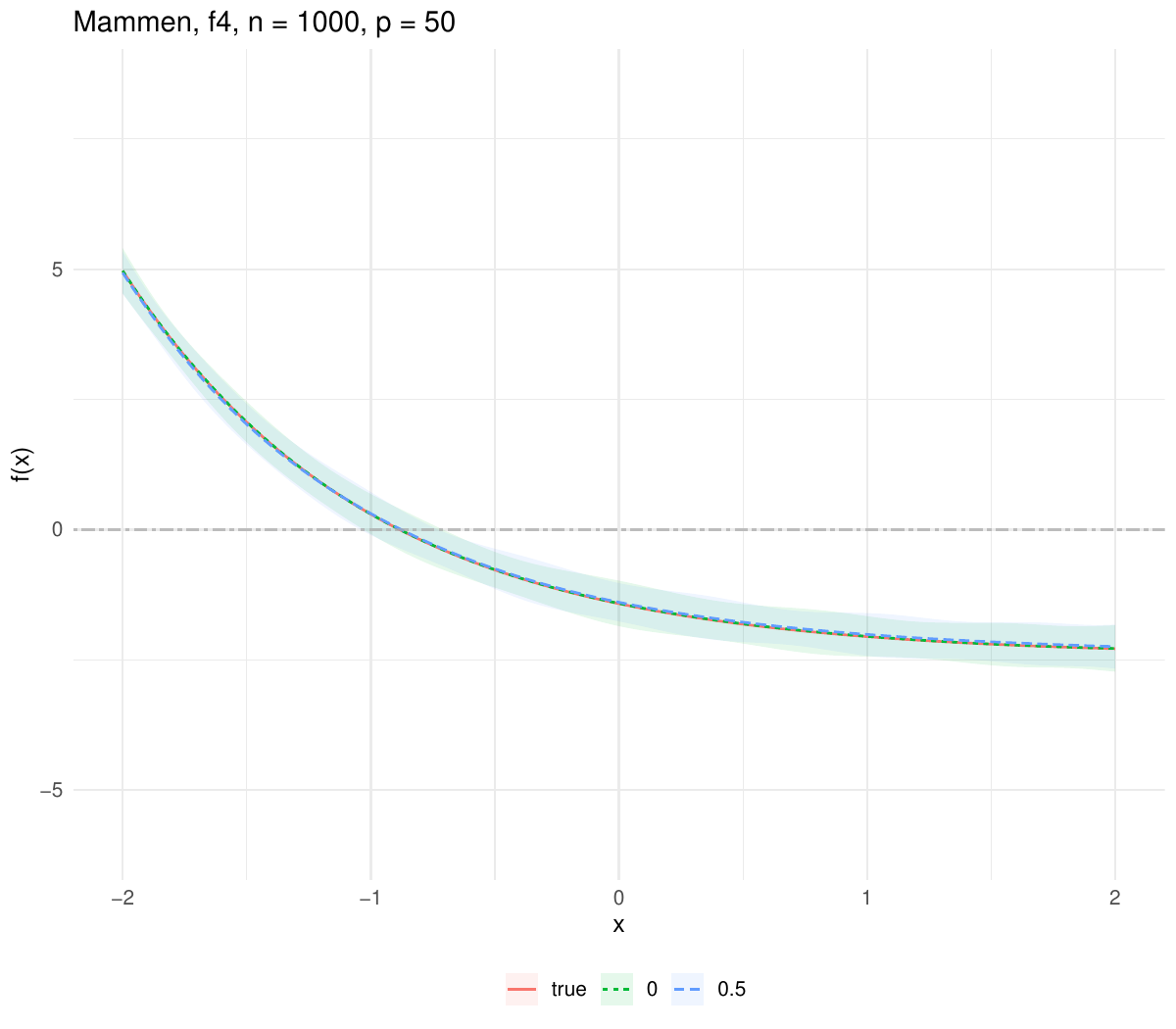}  
\includegraphics[scale=0.25]{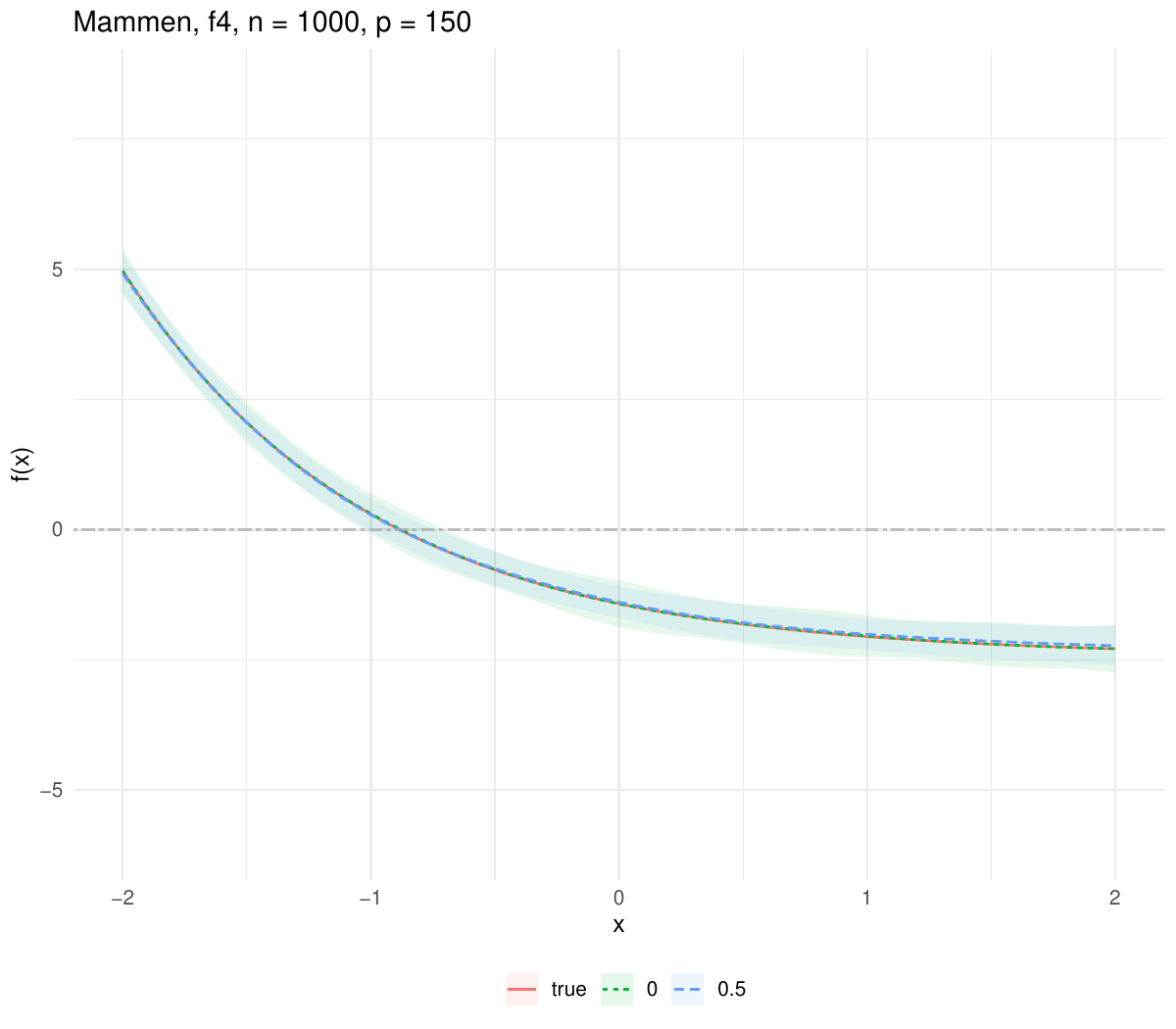} 
\caption{Average confidence bands, $f_4(x_4)$, homoskedastic setting.}
\label{homoskf4}
\end{center}
The green dashed curve illustrates averaged estimated functions $\hat{f}_4(x_4)$  as obtained in $R=500$ repetitions in the scenario with $\rho = 0$. The corresponding averaged $95\%$ confidence bands are shaded green. The blue long-dashed line illustrates the results for the setting with $\rho = 0.5$ together with corresponding averaged confidence bands (shaded blue). The true function $f_4(x_4)$ is illustrated by the red solid curve.
\end{figure}
\begin{figure}[t]
\begin{center}
\includegraphics[scale=0.25]{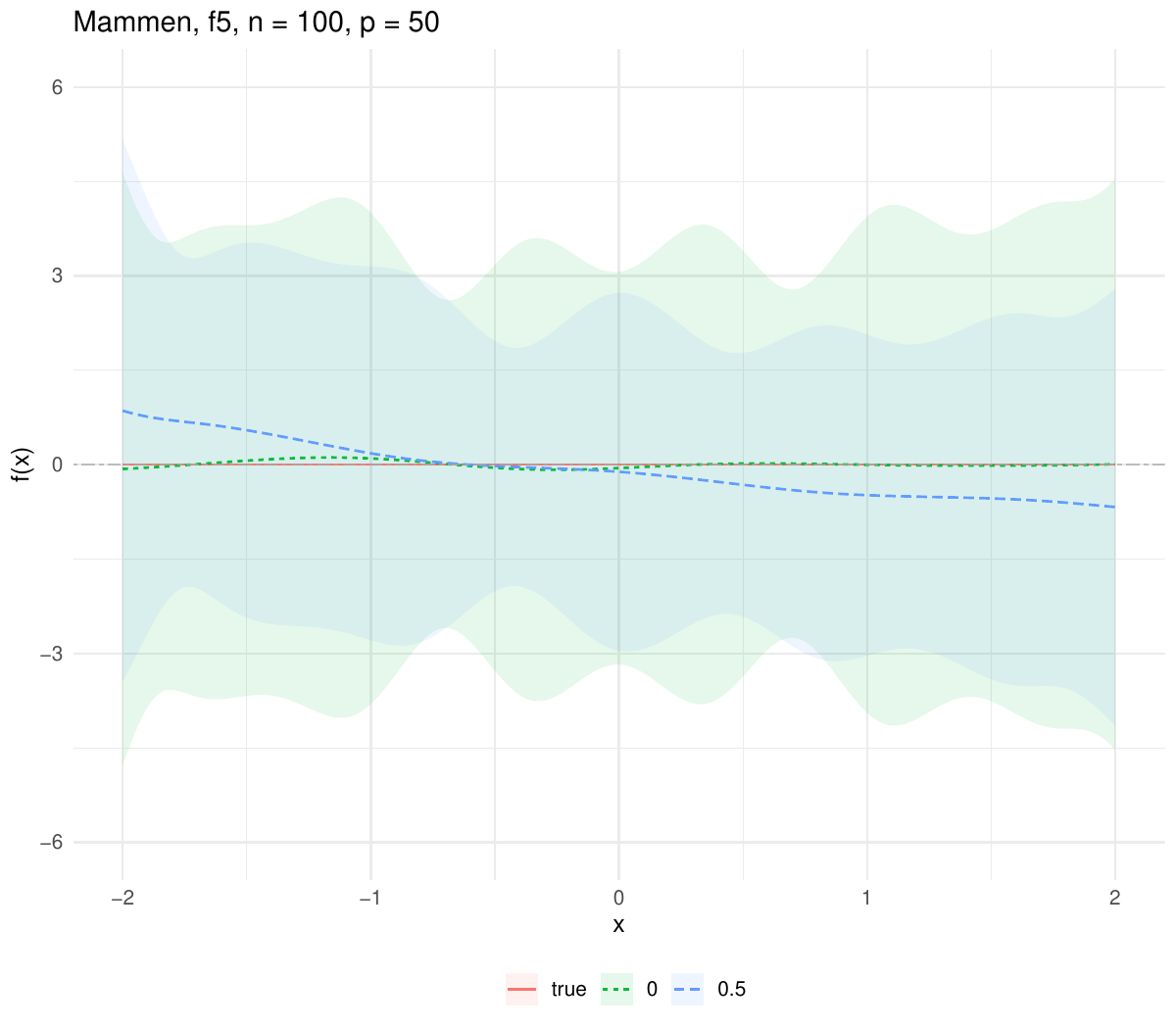} 
\includegraphics[scale=0.25]{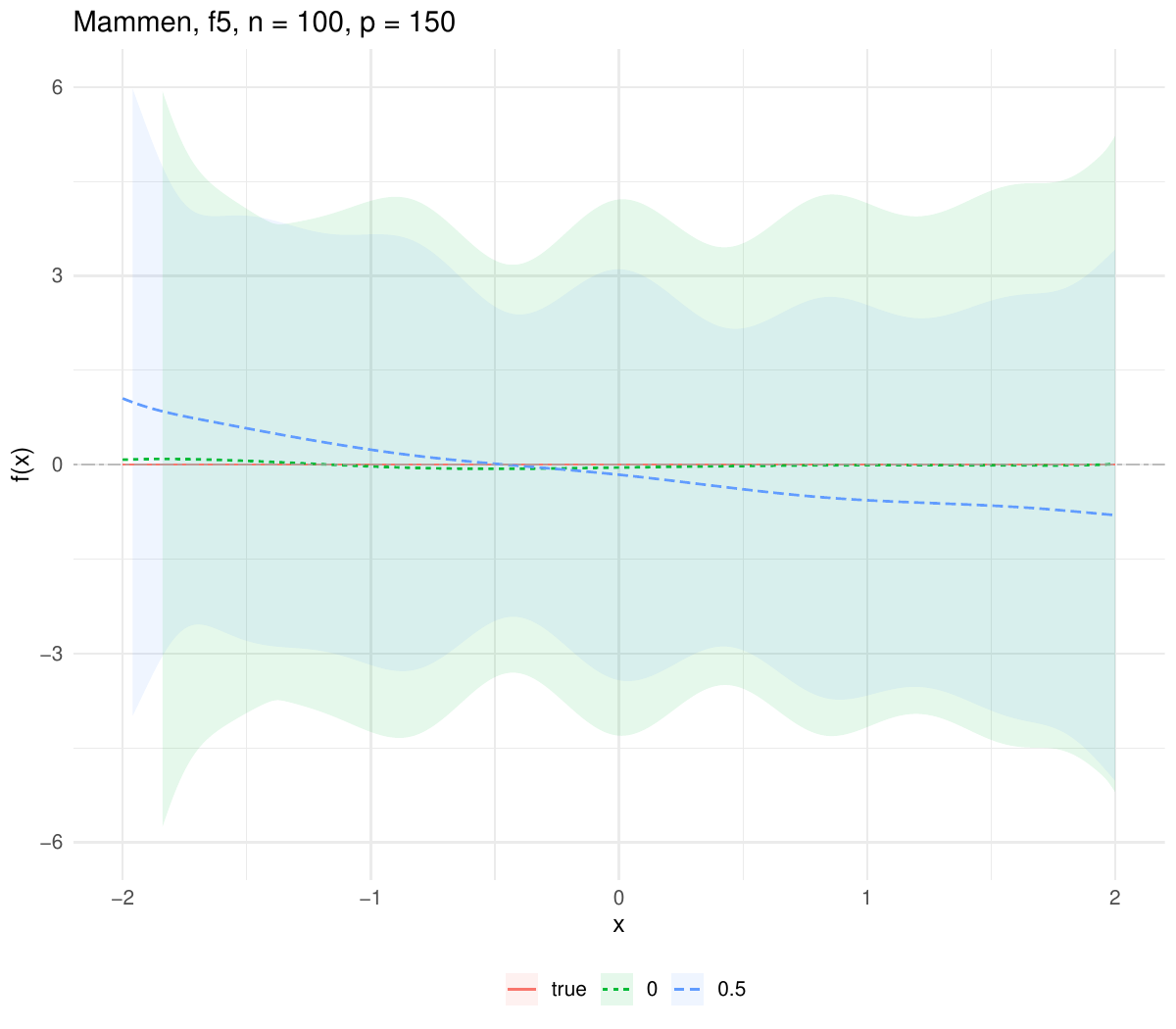} \\
\includegraphics[scale=0.25]{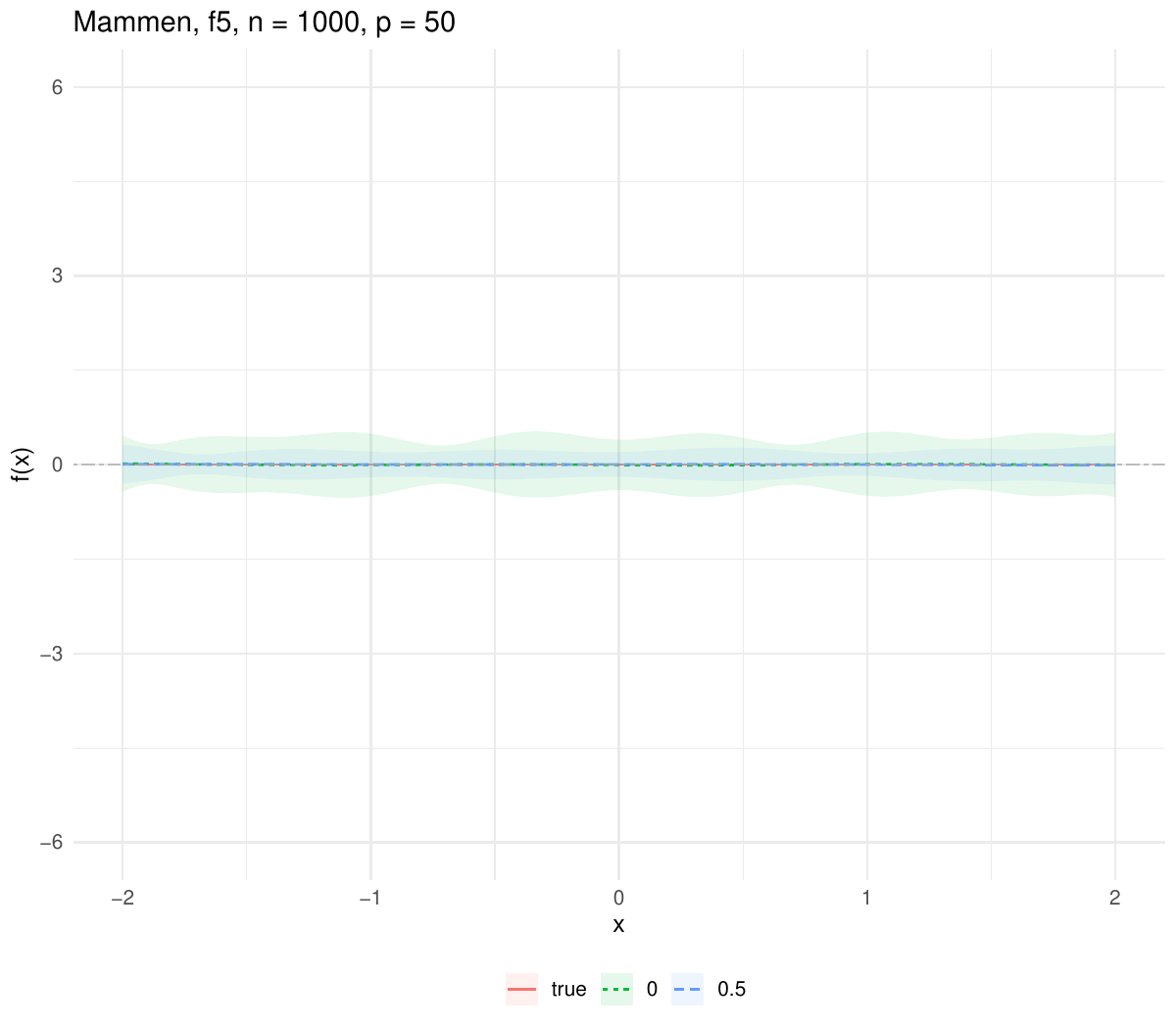}  
\includegraphics[scale=0.25]{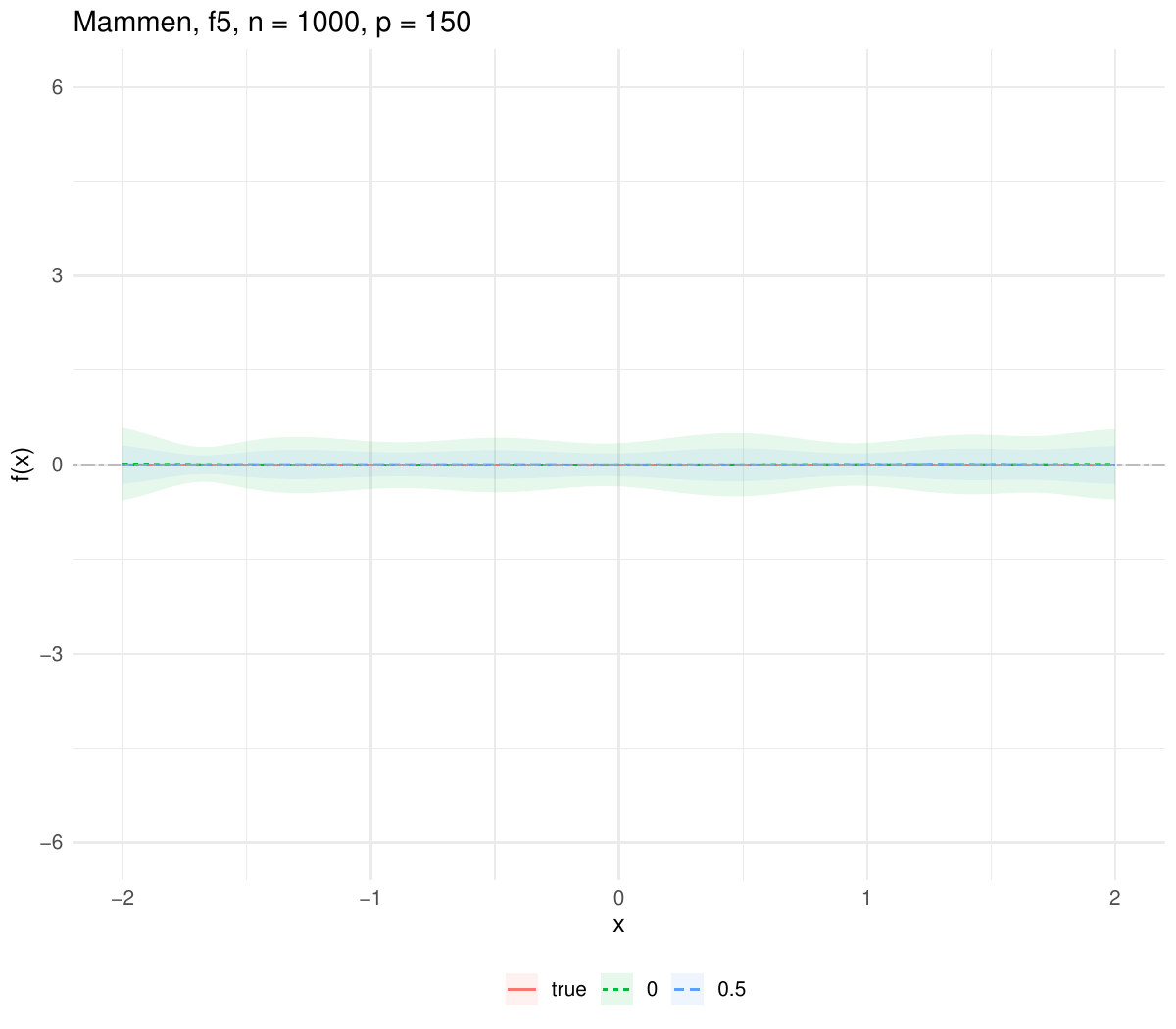} 
\caption{Average confidence bands, $f_5(x_5)$, homoskedastic setting.}
\label{homoskf5}
\end{center}
The green dashed curve illustrates averaged estimated functions $\hat{f}_5(x_5)$  as obtained in $R=500$ repetitions in the scenario with $\rho = 0$. The corresponding averaged $95\%$ confidence bands are shaded green. The blue long-dashed line illustrates the results for the setting with $\rho = 0.5$ together with corresponding averaged confidence bands (shaded blue). The true function $f_5(x_5)$ is illustrated by the red solid curve.
\end{figure}
\noindent
In addition to the homoskedastic and heteroskedastic designs, we investigate the performance of our estimator and confidence bands in interacted simulation settings. The results in Tables \ref{interactedsettingsI} and \ref{interactedsettingsII} and Figures \ref{interactedI} and \ref{interactedII} provide simulation evidence in very high-dimensional settings -- both with a correctly (Scenario $I$) and an incorrectly (Scenario $II$) specified additive model. As a consequence of the higher dimensionality and the increased noise in the estimation problem, the confidence intervals are wider than in the basic settings considered previously. In all cases and irrespective of the model misspecification in Scenario $II$, the empirical coverage is close to the nominal level. As in the settings considered before, the strength of the correlation among the regressors $X$ makes estimation of $f_1(x_1)$ more challenging (see for example the top panels in Figure \ref{interactedI}).

\noindent
In general, the performance of the estimator and the confidence bands depends on the specification of the cubic B-splines used to approximate the target functions (in terms of the number of knots). In our simulations, we observe that the quality of estimation and width of the confidence bands change only moderately when varying the number of knots. Overall, we conclude that our results are relatively robust to the specifications of the B-splines. We find that the performance of the point estimator for the target component $f_j(x_j)$ relies on variable selection through the lasso estimator. In line with our expectations, the lasso estimator is better at selecting the sparse components of the B-splines in larger samples and settings with weaker correlation structure, i.e., $\rho = 0$. In our simulations, we chose the specification of the B-splines estimator based on preliminary evidence. We also investigated the performance of a cross-validated choice (results available upon request). While the latter lacks theoretical justification, we found it to be relatively conservative, resulting in wide confidence bands. During our simulation experiments, we also investigated the performance of an alternative lasso learner that is based on a cross-validated choice of the penalty term. We found that the cross-validated penalty choice was computationally more expensive and inferior in terms of the estimation performance, as indicated by selecting very few of the sparse components.
Overall, we interpret the simulation results as supportive evidence for our estimation approach. The approximation and coverage results for our estimator and confidence bands across different simulation settings are comparable to those provided in \citet{gregory2016}. As in \citet{gregory2016}, the approximation quality of our estimator depends on the underlying correlation structure of the covariates $X$, the parametrization of the B-splines, and the variable selection performance of lasso. In their simulation study, \citet{gregory2016} report pointwise coverage results, whereas our results refer to the coverage of the true function $f_j(x_j)$ over all points of $x_j$ in an interval $I$ by a simultaneous confidence band.
\begin{figure}[t]\begin{center}
\includegraphics[scale=0.25]{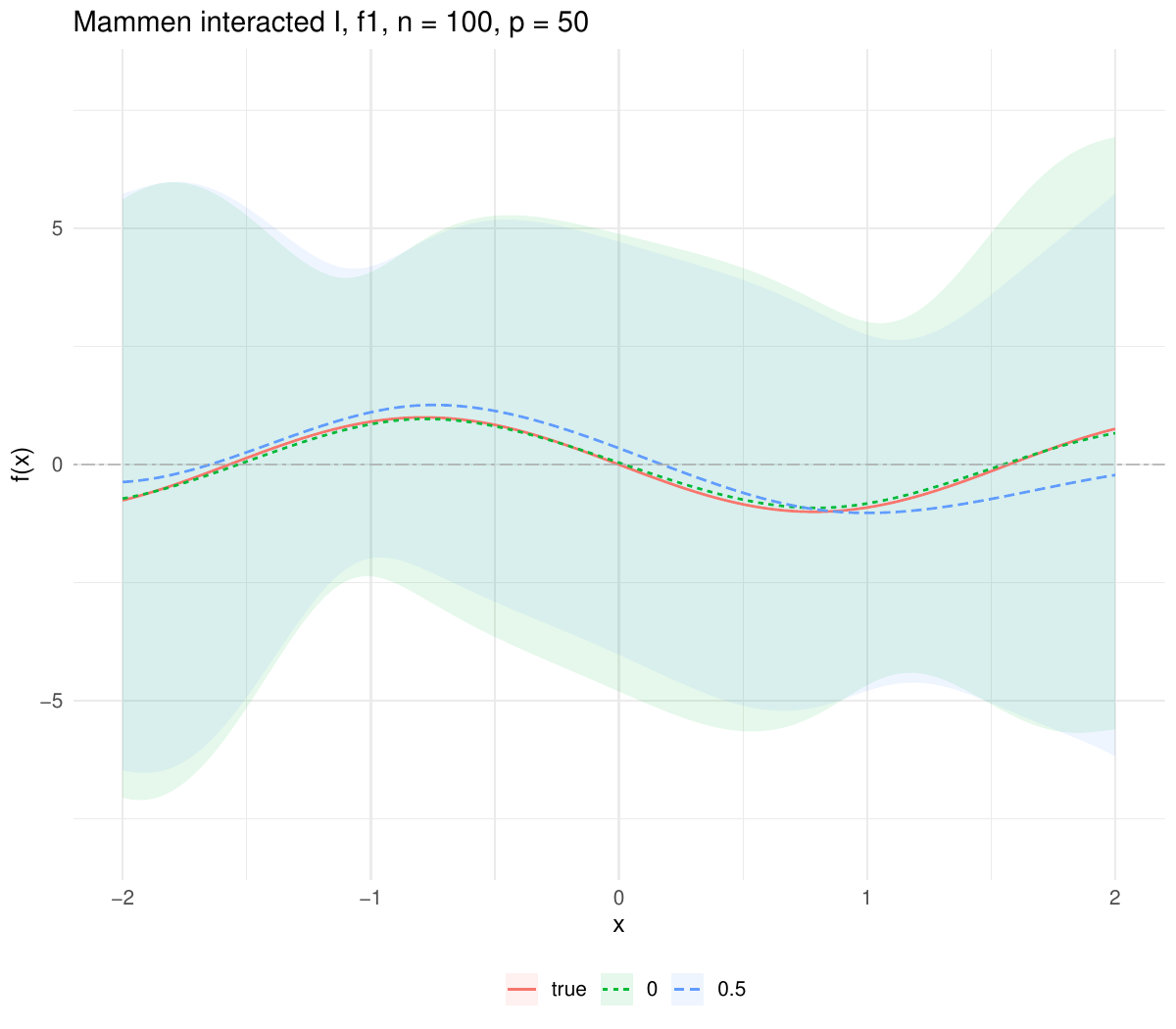} 
\includegraphics[scale=0.25]{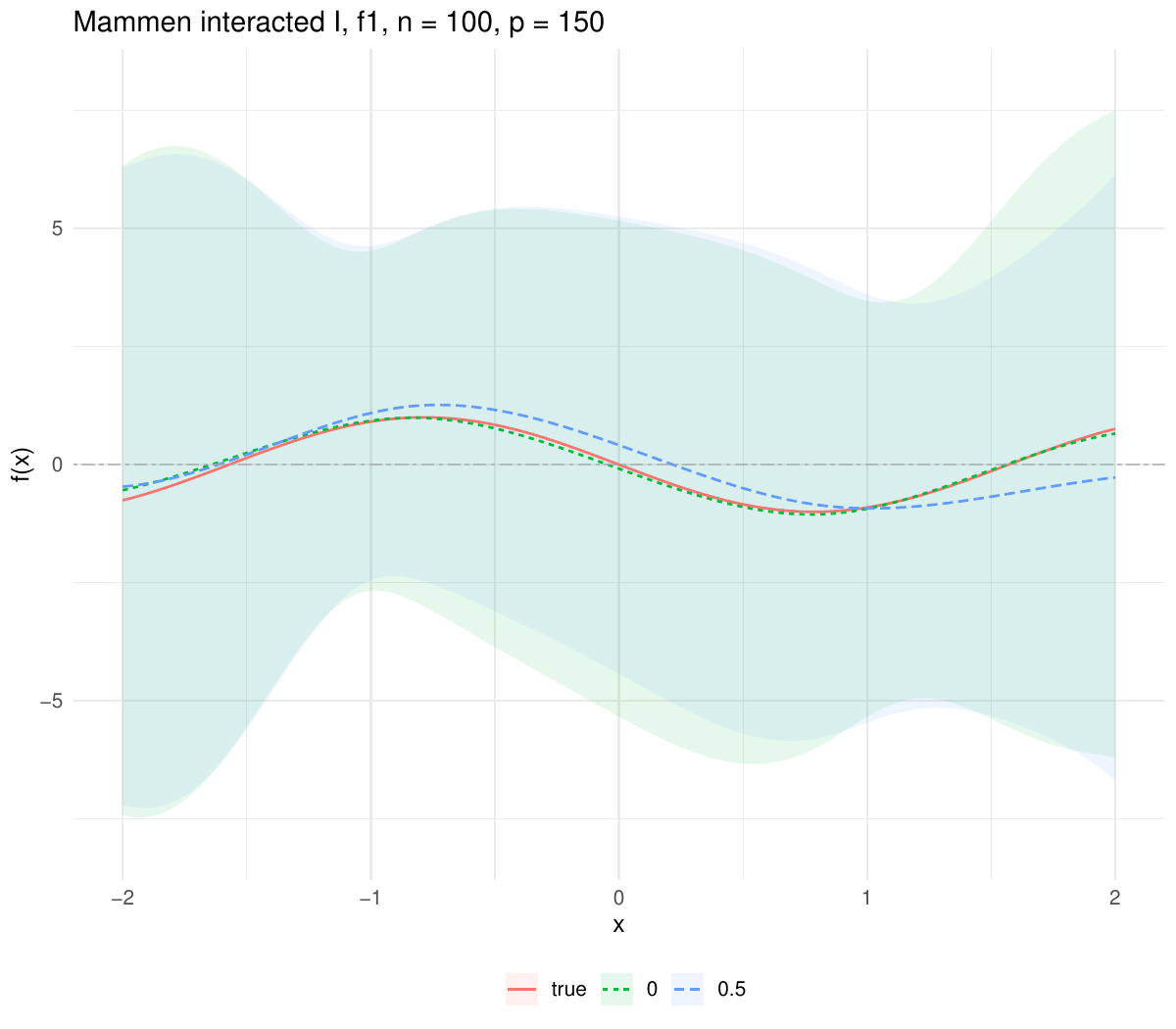} \\
\includegraphics[scale=0.25]{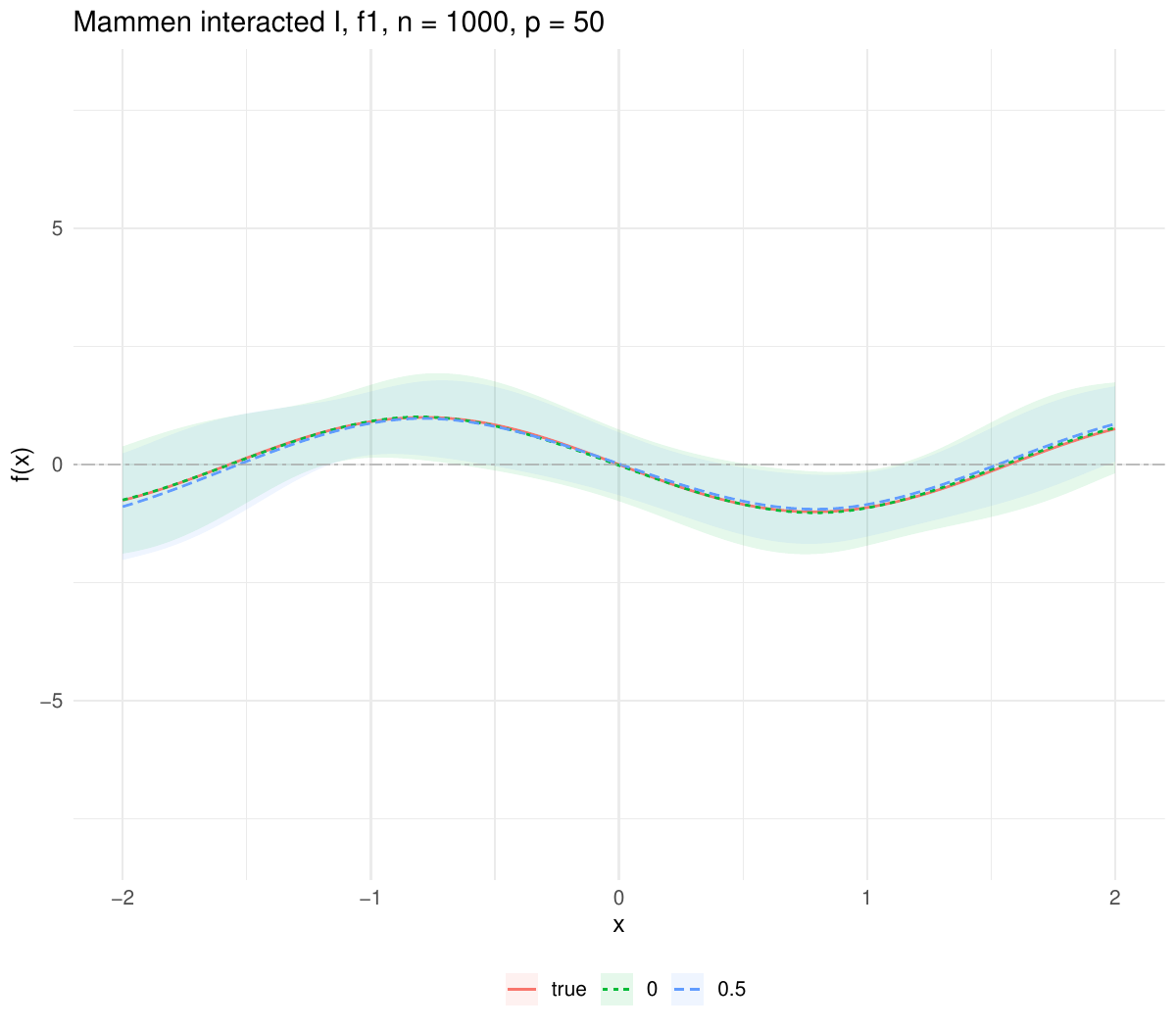}  
\includegraphics[scale=0.25]{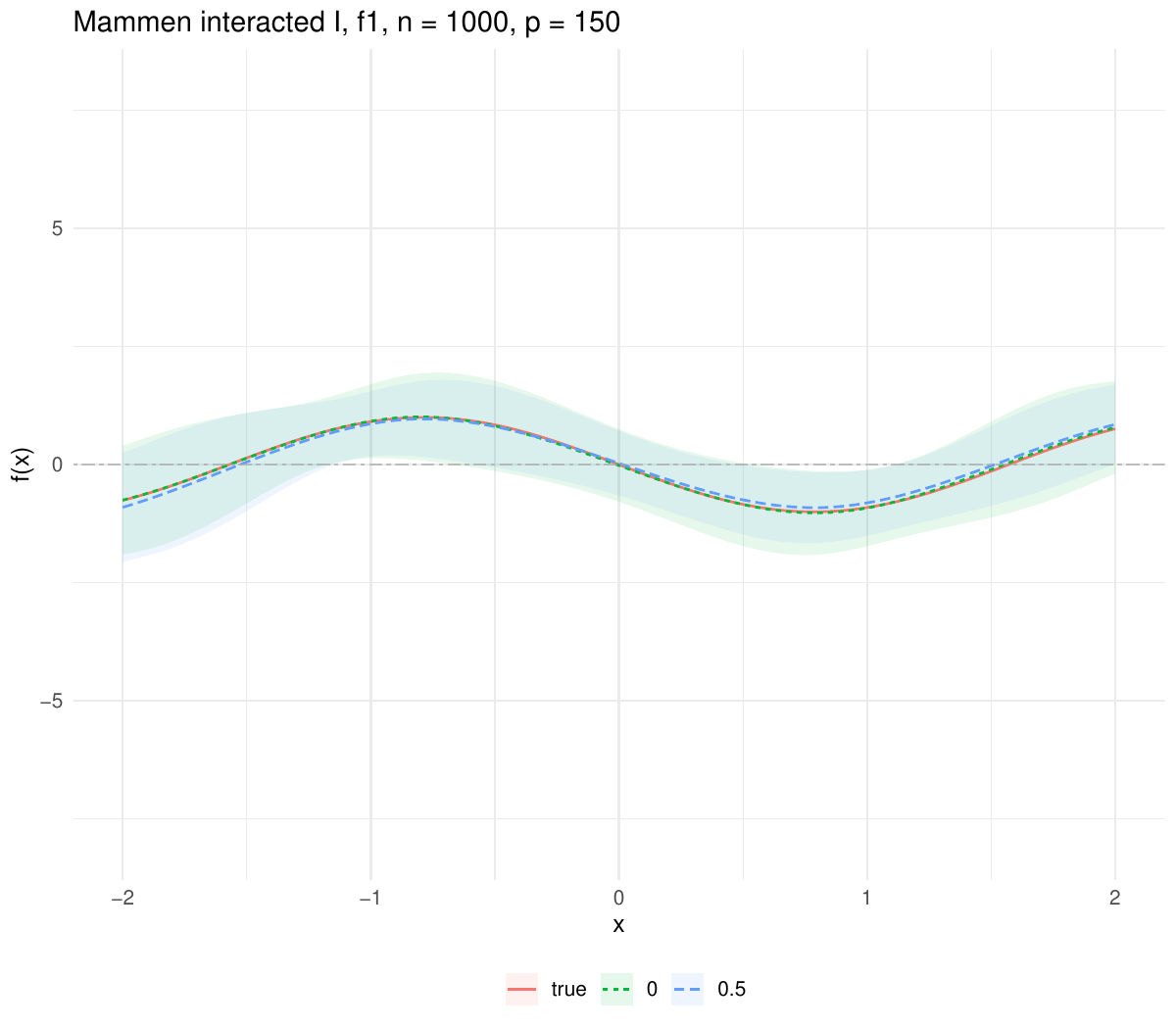} 
\caption{Average confidence bands, $f_1(x_1)$, interacted Scenario $I$ (correctly specified).}
\label{interactedI}
\end{center}
The green dashed curve illustrates averaged estimated functions $\hat{f}_1(x_1)$  as obtained in $R=500$ repetitions in the scenario with $\rho = 0$. The corresponding averaged $95\%$-confidence bands are shaded green. The blue long-dashed line illustrates the results for the setting with $\rho = 0.5$ together with corresponding averaged confidence bands (shaded blue). The true function $f_1(x_1)$ is illustrated by the red solid curve.
\end{figure}

\begin{figure}[t]
\begin{center}
\includegraphics[scale=0.25]{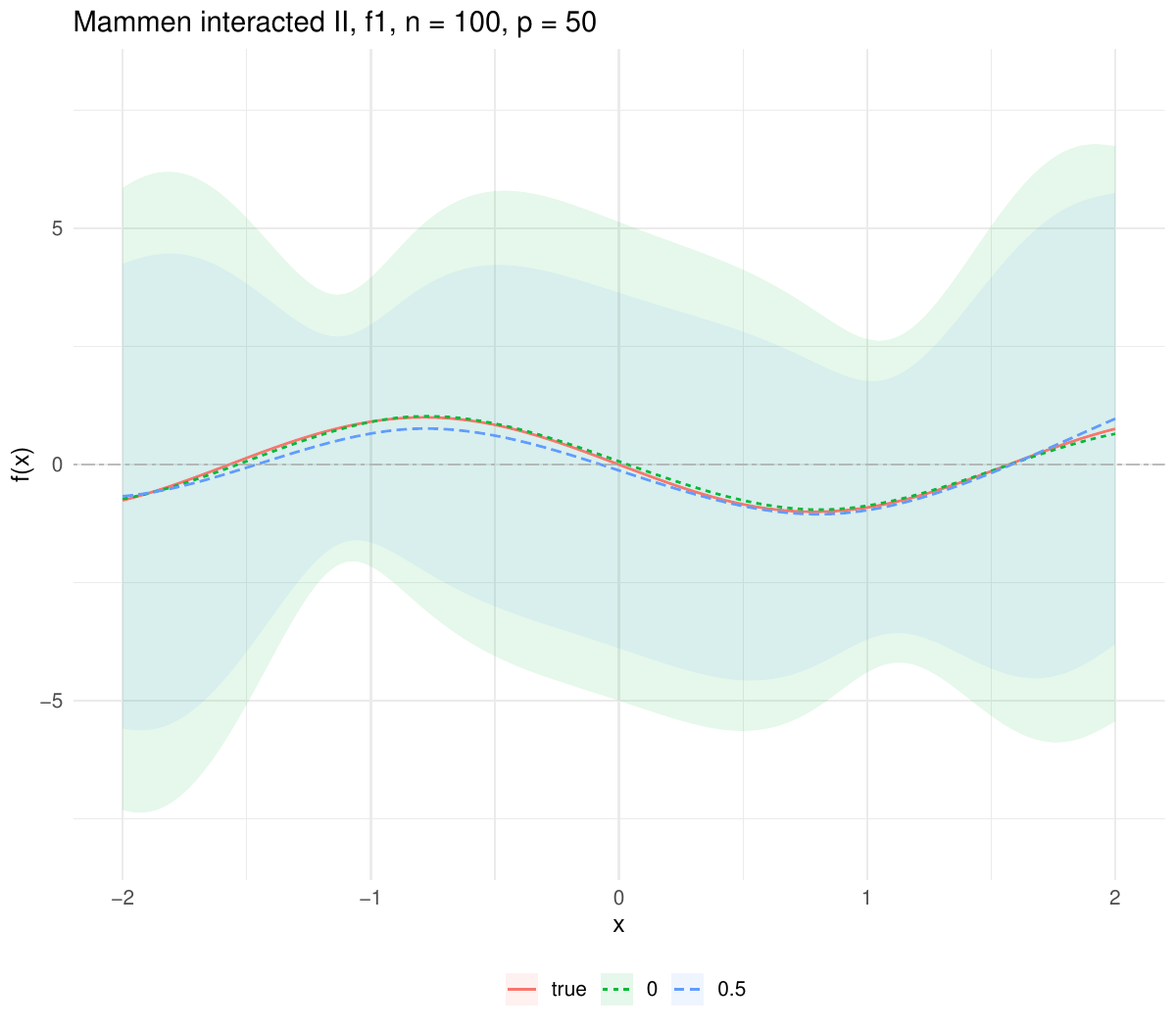} 
\includegraphics[scale=0.25]{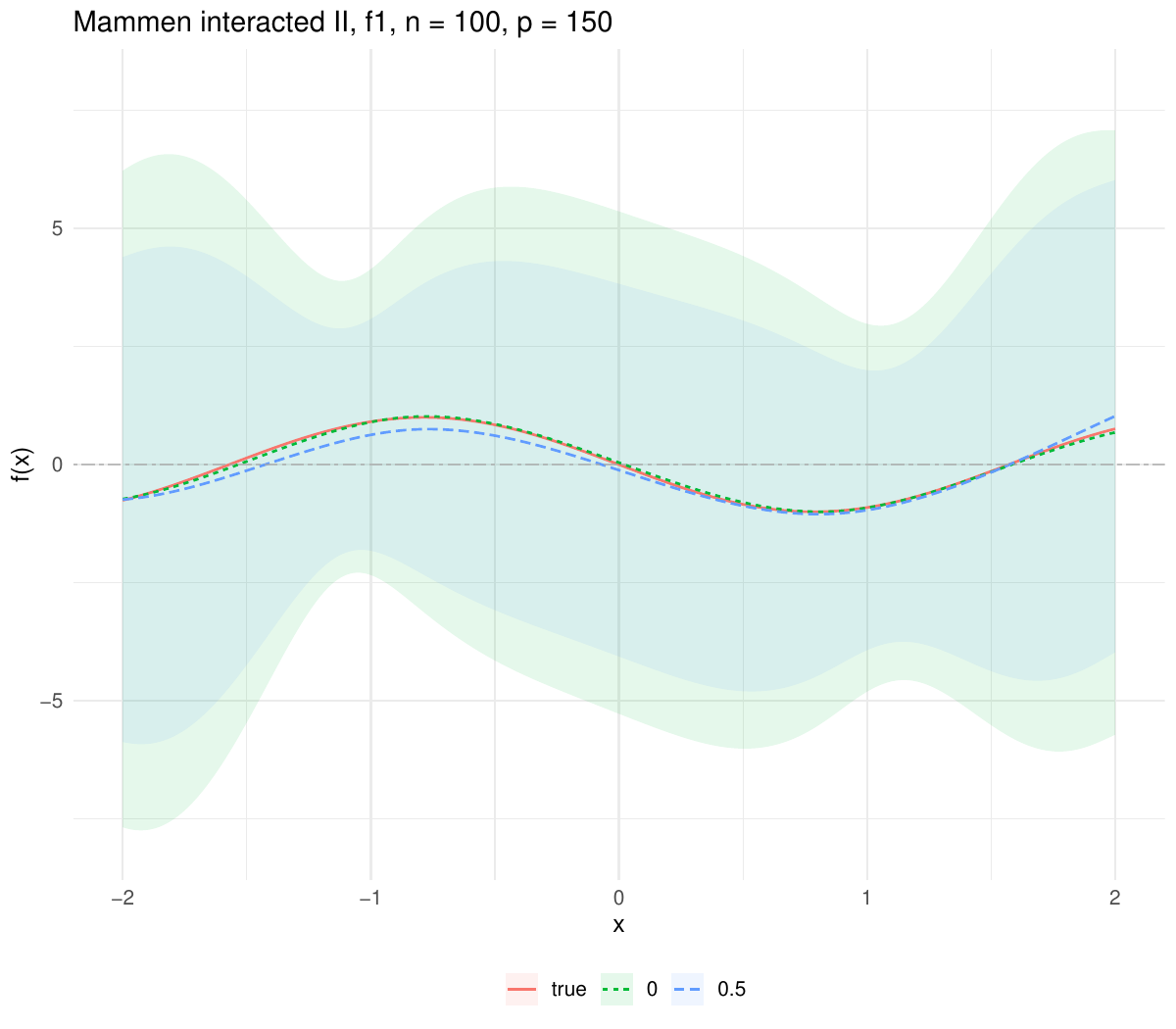} \\
\includegraphics[scale=0.25]{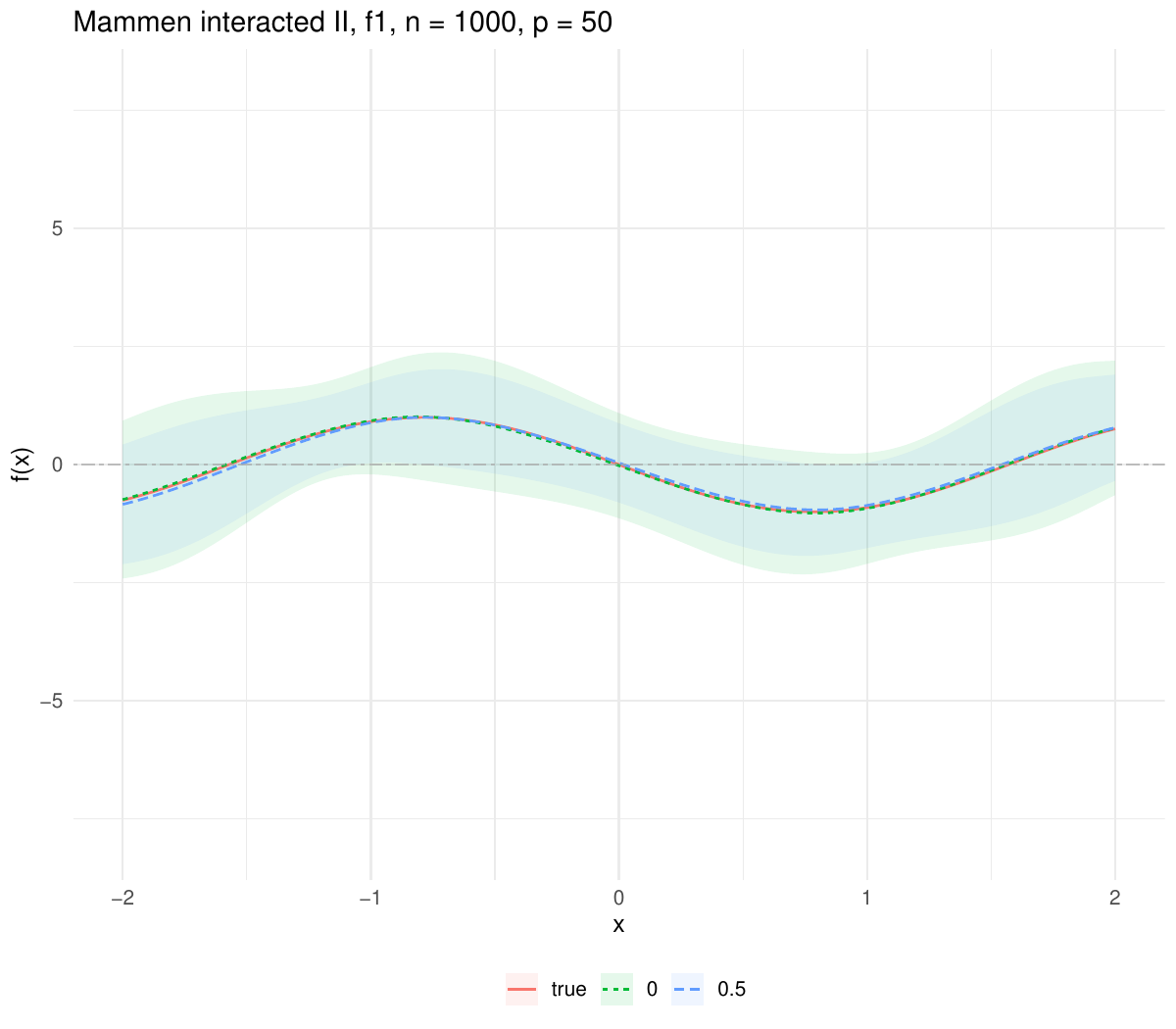}  
\includegraphics[scale=0.25]{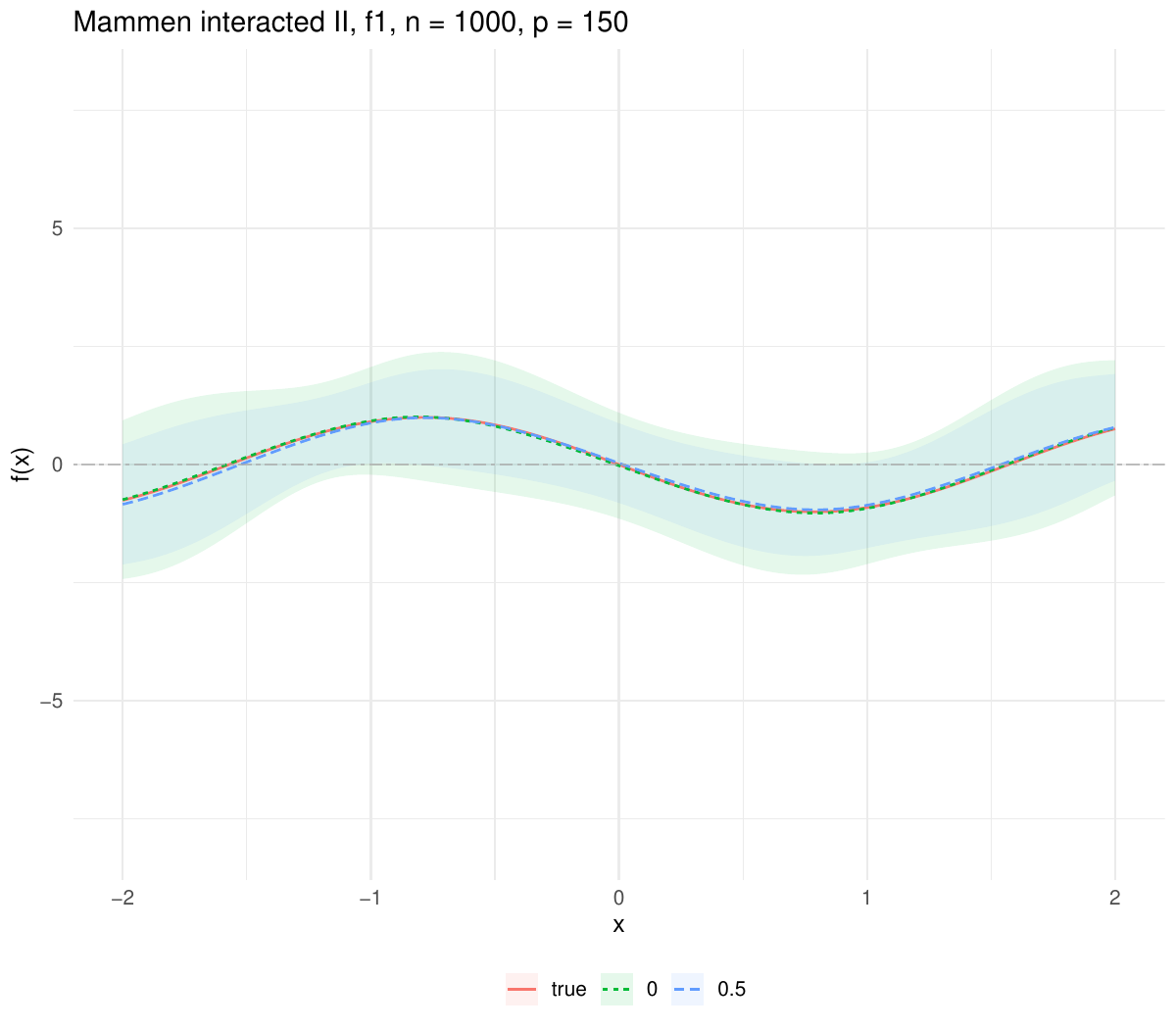} 
\caption{Average confidence bands, $f_1(x_1)$, interacted Scenario $II$ (misspecified).}
\label{interactedII}
\end{center}
The green dashed curve illustrates averaged estimated functions $\hat{f}_1(x_1)$  as obtained in $R=500$ repetitions in the scenario with $\rho = 0$. The corresponding averaged $95\%$-confidence bands are shaded green. The blue long-dashed line illustrates the results for the setting with $\rho = 0.5$ together with corresponding averaged confidence bands (shaded blue). The true function $f_1(x_1)$ is illustrated by the red solid curve.
\end{figure}

\newpage

\clearpage
\section{Proofs}\label{Proofs}
\begin{proof}[Proof of Theorem \ref{confinbands}]\ \\
We will prove that the Assumptions  A.\ref{A1} and  A.\ref{A2} imply the Assumptions B.\ref{assumption2.0}-B.\ref{assumption2.4} stated in Appendix \ref{AppendixA} and then the claim follows by applying Theorem \ref{uniformappendix}. Without loss of generality, we assume $\min(d_1,n)\ge e$ to simplify notation. \\ \\
\textbf{Assumption B.\ref{assumption2.0}}\\
Conditions $(i)$ is directly assumed in A.\ref{A1}$(i)$.
Since the eigenvalues of $\Sigma_{\nu}$ are of order $O(\varsigma_n^{-1})$ due to A.\ref{A2}$(iv)$, it holds
\begin{align*}
    c \varsigma_n^{-1}\le | J_{0,l}| \le C \varsigma_n^{-1}
\end{align*}
uniformly over $l=1,\dots,d_1.$
Further, since $c\le \Var(\varepsilon|X)\le C$ a.s., the eigenvalues of $\Sigma_{\varepsilon\nu}$ are also of order $O(\varsigma_n^{-1})$. Hence,
\begin{align*}
\Sigma_n=J_0^{-1}\Sigma_{\varepsilon\nu}(J_0^{-1})^T
\end{align*}
directly implies B.\ref{assumption2.0}$(ii)$.\\
\textbf{Assumption B.\ref{assumption2.1}}\\
For each $l=1,\dots,d_1$, the moment condition holds
\begin{align*}
\E\left[\psi_l(W,\theta_{0,l},\eta_{0,l})\right]&=\E\big[\varepsilon\nu^{(l)}\big]\\
&=\E\big[\nu^{(l)}\underbrace{\E\left[\varepsilon|X\right]}_{=0}\big]\\
&=0.
\end{align*}
For all $l=1,\dots,d_1$, define the convex set
\begin{align*}
T_l:=\Big\{\eta=(\eta^{(1)},\eta^{(2)},\eta^{(3)})^T:&\eta^{(1)},\eta^{(2)} \in \R^{d_1+d_2-1},\\
&\eta^{(3)}\in\ell^{\infty}(\R^p)\Big\}
\end{align*}
and endow $T_l$ with the norm 
\begin{align*}
\|\eta\|_e:=\max\left\{\varsigma_n^{-1/2}\|\eta^{(1)}\|_2,\|\eta^{(2)}\|_2,\|\eta^{(3)}(X)\|_{P,2})^TZ_{-l}\|_{P,2}\right\}.
\end{align*}
Further, let $\tau_n:=\sqrt{\frac{s\log(\bar{d}_n)}{n}}$ with $s=\max(s_1,s_2)$ and define the corresponding nuisance realization set
\begin{align*}
\mathcal{T}_l:=\bigg\{\eta\in T_l:&\eta^{(3)}\equiv 0, \|\eta^{(1)}\|_0\vee\|\eta^{(2)}\|_0\le Cs, \\
&\varsigma_n^{-1/2}\|\eta^{(1)}-\beta_0^{(l)}\|_2\vee\|\eta^{(2)}-\gamma_0^{(l,1)}\|_2\le C\tau_n,\\
&\varsigma_n^{-1/2}\|\eta^{(1)}-\beta_0^{(l)}\|_1\vee\|\eta^{(2)}-\gamma_0^{(l,1)}\|_1\le C\sqrt{s}\tau_n\bigg\}\cup \{\eta_{0,l}\}
\end{align*}
for a sufficiently large constant $C$. 
Due to A.\ref{A2}$(ii)$ and $(iii)$, it holds
\begin{align*}
    |\nu^{(l)}| &= \left|g_l(X_1) - (\gamma_0^{(l)})^T Z_{-l}\right|\\ 
    &\le |g_l(X_1)| + \|\gamma_0^{(l)}\|_1 \|Z_{-l}\|_{\infty}\\
    &\le C
\end{align*}
almost surely.
For $\mathcal{F}:=\{\varepsilon\nu^{(l)}:l=1,\dots,d_1\}$, it holds
\begin{align*}
\mathcal{S}_n:&=\E\left[\sup\limits_{l=1,\dots,d_1}\left|\sqrt{n}\E_n\left[\psi_l(W,\theta_{0,l},\eta_{0,l})\right]\right|\right]\\
&=\E\left[\sup\limits_{f\in\mathcal{F}}\mathbb{G}_n(f)\right]
\end{align*}
and the envelope $\sup_{f\in\mathcal{F}}|f|$ satisfies
\begin{align*}
\|\max_{l=1,\dots,d_1}\varepsilon\nu^{(l)}\|_{P,q}&\le C\|\varepsilon\|_{P,q}\le C
\end{align*}
for $q$ defined in Assumption A.\ref{A2} $(iii)$.
We can apply Lemma P.2 from \cite{belloni2018uniformly} with $|\mathcal{F}|=d_1$ to obtain
\begin{align*}
\mathcal{S}_n\le C\log^{\frac{1}{2}}(d_1)+ C\log^{\frac{1}{2}}(d_1)\left(n^{\frac{2}{q}}\frac{\log^{}(d_1)}{n}\right)^{1/2}\lesssim \log^{\frac{1}{2}}(d_1),
\end{align*}
due to A.\ref{A2}$(v)$. Finally, Assumption A.\ref{A2}$(i)$ implies B.\ref{assumption2.1}$(i)$. Assumption B.\ref{assumption2.1}$(ii)$ holds since for all $l=1,\dots,d_1$, the map $(\theta_l,\eta_l)\mapsto\psi_l(X,\theta_l,\eta_l)$ is twice continuously Gateaux-differentiable on $\Theta_l\times \mathcal{T}_l$, which directly implies the differentiability  of the map  $(\theta_l,\eta_l)\mapsto\E[\psi_l(X,\theta_l,\eta_l)]$. Additionally, for every $\eta \in\mathcal{T}_l\setminus\{\eta_{0,l}\}$, we have
{\allowdisplaybreaks
\begin{align*}
D_{l,0}[\eta,\eta_{0,l}]&:=\partial_t\big\{\E[\psi_l(W,\theta_{0,l},\eta_{0,l}+t(\eta-\eta_{0,l}))]\big\}\big|_{t=0}\\
&=\E\big[\partial_t\big\{\psi_l(W,\theta_{0,l},\eta_{0,l}+t(\eta-\eta_{0,l}))\big\}\big]\big|_{t=0}\\
&=\E\bigg[\partial_t\bigg\{\Big(Y-\theta_{0,l} g_l(X_1)-\big(\eta^{(1)}_{0,l}+t(\eta^{(1)}-\eta^{(1)}_{0,l})\big)^T Z_{-l}\\
&\quad -\big(\eta^{(3)}_{0,l}(X)+t(\eta^{(3)}(X)-\eta^{(3)}_{0,l}(X))\big)\Big)\\
&\quad\Big(g_l(X_1)-\big(\eta^{(2)}_{0,l}+t(\eta^{(2)}-\eta^{(2)}_{0,l})\big)^T Z_{-l}\Big)\bigg\}\bigg]\bigg|_{t=0}\\
&=\E\left[\varepsilon(\eta^{(2)}_{0,l}-\eta^{(2)})^T Z_{-l}\right]+\E\left[\nu^{(l)}(\eta^{(1)}_{0,l}-\eta^{(1)})^T Z_{-l}\right]\\
&\quad +\E\left[\nu^{(l)}\left(\eta^{(3)}_{0,l}(X)-\eta^{(3)}(X)\right)\right] 
\end{align*}
}
with
\begin{align*}
\E\left[\varepsilon(\eta^{(2)}_{0,l}-\eta^{(2)})^T Z_{-l}\right]=\E\left[(\eta^{(2)}_{0,l}-\eta^{(2)})^T Z_{-l}\E[\varepsilon|X]\right]=0,
\end{align*}
\begin{align*}
\E\left[\nu^{(l)}(\eta^{(1)}_{0,l}-\eta^{(1)})^T Z_{-l}\right]&=(\eta^{(1)}_{0,l}-\eta^{(1)})^T\E\left[ Z_{-l}\nu^{(l)}\right]=0,
\end{align*}

and
\begin{align*}
\E\left[\nu^{(l)}\left(\eta^{(3)}_{0,l}(X)-\eta^{(3)}(X)\right)\right]=\ &\E\left[\nu^{(l)}\big(b_1(X_1)+b_2(X_{-1})\big)\right]\le C\delta_n\varsigma_n^{-1/2} n^{-1/2}
\end{align*}
due to Assumption A.\ref{A1}$(iii)$ with $\delta_n=\sqrt{\frac{n^{2/q}s^2\varsigma_n\log^2(\bar{d}_n)}{n}}$. Due to the linearity of the score and the moment condition, it holds
\begin{align*}
\E[\psi_l(W,\theta_l,\eta_{0,l})]=J_{0,l}(\theta_l-\theta_{0,l})
\end{align*}
and by Assumption A.\ref{A2}$(iv)$
$$c\varsigma_n^{-1}\le |J_{0,l}|=\E\left[(\nu^{(l)})^2\right] \le C\varsigma_n^{-1}$$
Assumption B.\ref{assumption2.1}$(iv)$ is satisfied.\\
For all $t\in[0,1)$, $l=1,\dots,d_1$, $\theta_l\in\Theta_l$ and $\eta_l\in\mathcal{T}_l\setminus\{\eta_{0,l}\}$, we have
\begin{align*}
&\E\left[\left(\psi_l(W,\theta_l,\eta_l)-\psi_l(W,\theta_{0,l},\eta_{0,l})\right)^2\right]\\
=\ &\E\left[\left(\psi_l(W,\theta_l,\eta_l)-\psi_l(W,\theta_{0,l},\eta_l)+\psi_l(W,\theta_{0,l},\eta_l)-\psi_l(W,\theta_{0,l},\eta_{0,l})\right)^2\right]\\
\le\ &C\bigg(\E\left[\left(\psi_l(W,\theta_l,\eta_l)-\psi_l(W,\theta_{0,l},\eta_l)\right)^2\right]\\
&\quad\vee \E\left[\left(\psi_l(W,\theta_{0,l},\eta_l)-\psi_l(W,\theta_{0,l},\eta_{0,l})\right)^2\right]\bigg)
\end{align*}
with 
\begin{align*}
&\E\left[\left(\psi_l(W,\theta_l,\eta_l)-\psi_l(W,\theta_{0,l},\eta_l)\right)^2\right]\\
=\ &|\theta_l-\theta_{0,l}|^2\E\left[\left(g_l(X_1)(g_l(X_1)-(\eta_l^{(2)})^T Z_{-l})\right)^2\right]\\
\le\ &C |\theta_l-\theta_{0,l}|^2 \E\left[\left(\nu^{(l)}+(\eta_{0,l}^{(2)}-\eta_l^{(2)})^T Z_{-l} \right)^2\right]\\
\le\ &C |\theta_l-\theta_{0,l}|^2\left(\|\nu^{(l)}\|^2_{P,2}\vee\|(\eta_{0,l}^{(2)}-\eta_l^{(2)})^T Z_{-l}\|^2_{P,2}\right)\\
\le\ &C |\theta_l-\theta_{0,l}|^2\left(\varsigma_n^{-1}\vee\|\eta_{0,l}^{(2)}-\eta_l^{(2)}\|^2_2\varsigma_n^{-1}\right)\\
\le\ &C \varsigma_n^{-1}|\theta_l-\theta_{0,l}|^2
\end{align*}
due to Assumption A.\ref{A2}$(ii)$, $(iv)$ and the definition of $\mathcal{T}_l$. With similar arguments, we obtain
\begin{align*}
&\E\left[\left(\psi_l(W,\theta_{0,l},\eta_l)-\psi_l(W,\theta_{0,l},\eta_{0,l})\right)^2\right]\\
=\ &\E\Bigg[\bigg(\Big(Y-\theta_{0,l}\g_l(X_1)-(\eta_l^{(1)})^T Z_{-l}-\eta_l^{(3)}(X)\Big)\Big(g_l(X_1)-(\eta_l^{(2)})^T Z_{-l}\Big)\\
&\quad - \Big(Y-\theta_{0,l}\g_l(X_1)-(\eta_{0,l}^{(1)})^T Z_{-l}-\eta_{0,l}^{(3)}(X)\Big)\Big(g_l(X_1)-(\eta_{0,l}^{(2)})^T Z_{-l}\Big)\bigg)^2\Bigg]\\
=\ &\E\Bigg[\bigg(\Big(Y-\theta_{0,l}\g_l(X_1)-(\eta_l^{(1)})^T Z_{-l}-\eta_l^{(3)}(X)\Big)\\
&\quad\cdot\Big((\eta_{0,l}^{(2)}-\eta_l^{(2)})^T Z_{-l}\Big)\\
&\quad + \Big(g_l(X_1)-(\eta_{0,l}^{(2)})^T Z_{-l}\Big)\\
&\quad \cdot\Big((\eta_{0,l}^{(1)}-\eta_l^{(1)})^T Z_{-l}+\eta_{0,l}^{(3)}(X)-\eta_l^{(3)}(X)\Big)\bigg)^2\Bigg]\\
\le\ &C \bigg(\varsigma_n^{-1/2}\|\eta_{0,l}^{(2)}-\eta_l^{(2)}\|_2\vee \varsigma_n^{-1/2}\|\eta_{0,l}^{(1)}-\eta_l^{(1)}\|_2\vee\|\eta^{(3)}_{0,l}(X)\|_{P,2}\bigg)^2\\
\le\ &C\|\eta_{0,l}-\eta_l\|_e^2,
\end{align*}
where we used the definition of $\mathcal{T}_l$, A.\ref{A2}$(ii)$ and
\begin{align*}
\sup_{\|\xi\|_2=1}\E[(\xi^T Z)^2]\le C\varsigma_n^{-1}.
\end{align*}
Therefore, Assumption B.\ref{assumption2.1}$(v)(a)$ holds since it is straightforward to show Assumption B.\ref{assumption2.1}$(v)(a)$ for $\eta_l = \eta_{0,l}$. It holds
{\allowdisplaybreaks
\begin{align*}
&\bigg|\partial_t\E\Big[\psi_l(W,\theta_l,\eta_{0,l}+t(\eta_l-\eta_{0,l}))\Big]\bigg|\\
=\ &\bigg|\E\bigg[\partial_t\bigg\{\Big(Y-\theta_{0,l} g_l(X_1)-\big(\eta^{(1)}_{0,l}+t(\eta_l^{(1)}-\eta^{(1)}_{0,l})\big)^T Z_{-l}\\
&\quad -\big(\eta^{(3)}_{0,l}(X)+t(\eta_l^{(3)}(X)-\eta^{(3)}_{0,l}(X))\big)\Big)\\
&\quad\cdot\Big(g_l(X_1)-\big(\eta^{(2)}_{0,l}+t(\eta_l^{(2)}-\eta^{(2)}_{0,l})\big)^T Z_{-l}\Big)\bigg\}\bigg]\bigg|\\
=\ &\bigg|\E\bigg[\Big(Y-\theta_{0,l} g_l(X_1)-(\eta^{(1)}_{0,l}+t(\eta_l^{(1)}-\eta^{(1)}_{0,l}))^T Z_{-l}\\
&\quad -(\eta^{(3)}_{0,l}(X)+t(\eta_l^{(3)}(X)-\eta^{(3)}_{0,l}(X)))\Big)\\
&\quad \cdot\Big((\eta^{(2)}_{0,l}-\eta_l^{(2)})^T Z_{-l} \Big)\\
&\quad +\Big(g_l(X_1)-(\eta^{(2)}_{0,l}+t(\eta_l^{(2)}-\eta^{(2)}_{0,l}))^T Z_{-l}\Big)\\
&\quad\cdot\Big((\eta^{(1)}_{0,l}-\eta_l^{(1)})^T Z_{-l}+\eta^{(3)}_{0,l}(X)-\eta_l^{(3)}(X)\Big)\bigg]\bigg|\\
=\ &| I_{1,1} + I_{1,2} + I_{1,3}+ I_{1,4}|
\end{align*}
}
with
\begin{align*}
I_{1,1}&=\E\bigg[\Big(Y-\theta_{0,l} g_l(X_1)-(\eta^{(1)}_{0,l}+t(\eta_l^{(1)}-\eta^{(1)}_{0,l}))^T Z_{-l}\\
&\quad -(\eta^{(3)}_{0,l}(X)+t(\eta_l^{(3)}(X)-\eta^{(3)}_{0,l}(X)))\Big)\Big((\eta^{(2)}_{0,l}-\eta_l^{(2)}))^T Z_{-l}\Big)\bigg]\\
&\le C \varsigma_n^{-1}\|\eta^{(2)}_{0,l}-\eta_l^{(2)}\|_2,\\
I_{1,2}&=\E\bigg[\Big(Y-\theta_{0,l} g_l(X_1)-(\eta^{(1)}_{0,l}+t(\eta_l^{(1)}-\eta^{(1)}_{0,l}))^T Z_{-l}\\
&\quad -(\eta^{(3)}_{0,l}(X)+t(\eta_l^{(3)}(X)-\eta^{(3)}_{0,l}(X)))\Big)\bigg]\\
&=-t\E\bigg[\Big((\eta_l^{(1)}-\eta^{(1)}_{0,l})^T Z_{-l}+(\eta_l^{(3)}(X)-\eta^{(3)}_{0,l}(X))\Big)\bigg]\\
&\le C\varsigma_n^{-1/2},\\
I_{1,3}&=\E\bigg[\Big(g_l(X_1)-(\eta^{(2)}_{0,l}+t(\eta_l^{(2)}-\eta^{(2)}_{0,l}))^T Z_{-l}\Big)\Big((\eta^{(1)}_{0,l}-\eta_l^{(1)})^T Z_{-l}\Big)\bigg]\\
&\le C \varsigma_n^{-1}\|\eta^{(1)}_{0,l}-\eta_l^{(1)}\|_2
\end{align*}
since $\E[\varepsilon|X]=0$ and $\E[\nu^{(l)}Z_{-l}]=0$. Further, it holds
\begin{align*}
I_{1,4}&=\E\bigg[\Big(g_l(X_1)-(\eta^{(2)}_{0,l}+t(\eta_l^{(2)}-\eta^{(2)}_{0,l}))^T Z_{-l}\Big)\Big(\eta^{(3)}_{0,l}(X)\Big)\bigg]\\
&\le C\varsigma_n^{-1/2}\|\eta^{(3)}_{0,l}(X)\|_{P,2}
\end{align*}
since
\begin{align*}
    &\quad\E\bigg[\Big(g_l(X_1)-(\eta^{(2)}_{0,l}+t(\eta_l^{(2)}-\eta^{(2)}_{0,l}))^T Z_{-l}\Big)^2\bigg]^{1/2}\\
    &\le \E[(\nu^{(l)})^2]^{1/2}+\E\left[\left((\eta_l^{(2)}-\eta^{(2)}_{0,l})^T Z_{-l}\right)^2\right]^{1/2}\\
    &\le C\varsigma_n^{-1/2}\left(1+\|\eta_l^{(2)}-\eta^{(2)}_{0,l}\|_2^2\right)^{1/2}.
\end{align*}
This implies Assumption B.\ref{assumption2.1}$(v)(b)$. Finally, to obtain Assumption B.\ref{assumption2.1}$(v)(c)$, we note that
{\allowdisplaybreaks
\begin{align*}
&\partial_t^2\E\left[\psi_l(W,\theta_{0,l}+t(\theta_l-\theta_{0,l}),\eta_{0,l}+t(\eta_l-\eta_{0,l}))\right]\\
=\ &\partial_t\E\bigg[\Big(Y-\big(\theta_{0,l}+t(\theta_l-\theta_{0,l}) \big)g_l(X_1)-\big(\eta^{(1)}_{0,l}+t(\eta_l^{(1)}-\eta^{(1)}_{0,l})\big)^T Z_{-l}\\
&\quad -\big(\eta^{(3)}_{0,l}(X)+t(\eta_l^{(3)}(X)-\eta^{(3)}_{0,l}(X))\big)\Big)\\
&\quad \cdot\Big((\eta^{(2)}_{0,l}-\eta_l^{(2)}))^T Z_{-l}\Big)\\
&\quad +\Big(g_l(X_1)-\big(\eta^{(2)}_{0,l}+t(\eta_l^{(2)}-\eta^{(2)}_{0,l})\big)^T Z_{-l}\Big)\\
&\quad\cdot\Big((\theta_{0,l}-\theta_{l}) )g_l(X_1)+(\eta^{(1)}_{0,l}-\eta_l^{(1)})^T Z_{-l}+\eta^{(3)}_{0,l}(X)\Big)\bigg]\\
=\ &2\E\bigg[\Big((\theta_{0,l}-\theta_{l}) g_l(X_1)+(\eta_{0,l}^{(1)}-\eta^{(1)}_{l})^T Z_{-l} + \eta_{0,l}^{(3)}(X)\Big)\\
&\quad\cdot\Big((\eta^{(2)}_{0,l}-\eta_l^{(2)}))^T Z_{-l}\Big)\bigg]\\
\le &2\E\bigg[\Big((\theta_{0,l}-\theta_{l}) g_l(X_1)+(\eta_{0,l}^{(1)}-\eta^{(1)}_{l})^T Z_{-l} + \eta_{0,l}^{(3)}(X)\Big)^2\bigg]\\
&\quad\vee 2\E\bigg[\Big((\eta^{(2)}_{0,l}-\eta_l^{(2)}))^T Z_{-l}\Big)^2\bigg]\\
\le &C \left(|\theta_{0,l}-\theta_l|^2\varsigma_n^{-1}\vee \|\eta_{0,l}-\eta_l\|_e^2\right)
\end{align*}
}
using the same arguments as above.\\ \\
\textbf{Assumption B.\ref{assumption2.2}}\\
Note that the Assumptions B.\ref{assumption2.2}$(ii)$ and $(iii)$ both hold by the construction of $\mathcal{T}_l$ and the Assumptions A.\ref{A1}$(ii)$ and A.\ref{A2}$(ii)$. The main part to verify Assumption B.\ref{assumption2.2} is to show that the estimates of the nuisance function are contained in the nuisance realization set with high probability. We will rely on uniform lasso estimation results stated in Appendix \ref{uniformestimation}. Consider a scaled version of the auxiliary regression
\begin{align*}
    \tilde{Z}_l = (\gamma_0^{(l)})^T \tilde{Z}_{-l} + \tilde{\nu}^{(l)}
\end{align*}
with $\tilde{Z}_l:= \varsigma_n^{1/2} Z_l$ and $ \tilde{\nu}^{(l)}:=  \varsigma_n^{1/2} \nu^{(l)}$. First, we want to proof that $\hat{\eta}^{(2)}_{l}\in\mathcal{T}_l$ for all $l=1,\dots,d_1$. We have to check the Assumptions C.\ref{condC}$(i)$ to $(iv)$. Due to Assumption A.\ref{A2}$(iii)$, the first part of Assumption C.\ref{condC}$(i)$ is satisfied with $M_n = \varsigma_n^{1/2},$ since we have already shown $|\nu^{(l)}|\le C$ almost surely uniformly over $l = 1,\dots, d_1$. Further, 
$$c\le\E\left[(\tilde{\nu}^{(l)})^2 \tilde{Z}_{-l,j}^2\right]\le C,$$
uniformly for all  $l=1,\dots,d_1$, $j = 1,\dots, d_1 -1$ and
\begin{align*}
\max\limits_{l=1,\dots,d_1}\max\limits_{j=1,\dots,d_1 - 1}\|\tilde{\nu}^{(l)} \tilde{Z}_{-l,j}\|_{P,3}&\le CK_{n},\\
    \max\limits_{j=1,\dots,d_1}\max\limits_{j=1,\dots,d_1 -1} \E\left[(\tilde{\nu}^{(l)})^4 \tilde{Z}_{-l,j}^4\right]&\le C L_n
\end{align*}
due to Assumption A.\ref{A2}$(iv)$. Assumption A.\ref{A2}$(iv)$ also directly implies Assumption C.\ref{condC}$(ii)$. Combining Assumption C.\ref{condC}$(ii)$ with Assumption A.\ref{A2}$(ii)$ implies Assumption C.\ref{condC}$(iii)$, since
\begin{align*}
\max\limits_{l=1,\dots,d_1}\|\gamma_0^{(l,2)}\|_{2}^2\le C\varsigma_n \max\limits_{l=1,\dots,d_1}\|(\gamma_0^{(l,2)})^T Z_{-l}\|_{P,2}^2 \le C s_2\log(\bar{d}_n)/n.
\end{align*}
The growth conditions C.\ref{condC}$(iv)$ are included in Assumption A.\ref{A2}$(v)$.
Therefore,
$$\hat{\eta}^{(2)}_{l}\in \mathcal{T}_l\quad\text{for all } l=1,\dots,d_1$$ with probability $1-o(1)$. To estimate $\eta^{(1)}_{0,l}$, we run a lasso regression of $Y$ on $\tilde{Z}$:
\begin{align*}
    Y = (\varsigma_n^{-1/2}\theta_0)^T (\varsigma_n^{1/2} g(X_1)) + (\varsigma_n^{-1/2}\beta_0)^T (\varsigma_n^{1/2} h(X_{-1})) + b_1(X_1) + b_{-1}(X_{-1}) + \varepsilon
\end{align*}
Define the corresponding nuisance parameter
$$\tilde{\beta}_0^{(l)}:=\varsigma_n^{-1/2}\beta_0^{(l)} = \varsigma_n^{-1/2}(\theta_{0,1},\dots,\theta_{0,l-1},\theta_{0,l+1},\dots\theta_{0,d_1},\beta_{0,1},\dots,\beta_{0,d_2})^T.$$
With analogous arguments as in the proof of Theorem \ref{uniformlasso}, it holds
\begin{align*}
\|\tilde{\beta}_0^{(l)}-\hat{\tilde{\beta}}^{(l)}\|_0&\le Cs_1,\\
\|\tilde{\beta}_0^{(l)}-\hat{\tilde{\beta}}^{(l)}\|_2&\le C\sqrt{\frac{s_1\log(\bar{d}_n)}{n}},\\
\|\tilde{\beta}_0^{(l)}-\hat{\tilde{\beta}}^{(l)}\|_1&\le C\sqrt{\frac{s_1^2\log(\bar{d}_n)}{n}}
\end{align*}
with probability $1-o(1)$ using Assumptions A.\ref{A1}$(ii)$, A.\ref{A2}$(ii)$-$(v)$ and
\begin{align*}
c\le\E\big[\varepsilon^2 \tilde{Z}_{l}^2\big]&= \E\big[\tilde{Z}_{l}^2\underbrace{\E[\varepsilon^2|X]}_{=\Var(\varepsilon|X)} \big]\le C.
\end{align*}
This directly implies that with probability $1-o(1)$ the nuisance realization set $\mathcal{T}_l$ contains $\hat{\eta}^{(1)}_{l}$ for all $l=1,\dots,d_1$.\\
Combining the results above with $\hat{\eta}^{(3)}\equiv 0$, we obtain Assumption B.\ref{assumption2.2}$(i)$. Define
\begin{align*}
\mathcal{F}_1:=\big\{\psi_l(\cdot,\theta_l,\eta_l):l=1,\dots,d_1,\theta_l\in\Theta_l,\eta_l\in\mathcal{T}_l\big\}.
\end{align*}
To bound the complexity of $\mathcal{F}_1$, we exclude the true nuisance function (the true nuisance function is the only element of $\mathcal{T}_l$ with a nonzero approximation error):
\begin{align*}
\mathcal{F}_{1,1}:=\big\{\psi_l(\cdot,\theta_l,\eta_l):l=1,\dots,d_1,\theta_l\in\Theta_l,\eta_l\in\mathcal{T}_l\setminus\{\eta_0^{(l)}\}\big\}\subseteq \mathcal{F}_{1,1}^{(1)}\mathcal{F}_{1,1}^{(2)}
\end{align*}
with 
\begin{align*}
\mathcal{F}_{1,1}^{(1)}&:=\big\{W\mapsto Y-\theta_l g_l(X_1)-(\eta^{(1)}_l)^T Z_{-l}:l=1,\dots,d_1,\theta_l\in\Theta_l,\eta_l\in\mathcal{T}_l\setminus\{\eta_0^{(l)}\}\big\}\\
\mathcal{F}_{1,1}^{(2)}&:=\big\{W\mapsto g_l(X_1)-(\eta^{(2)}_l)^T Z_{-l}:l=1,\dots,d_1,\theta_l\in\Theta_l,\eta_l\in\mathcal{T}_l\setminus\{\eta_0^{(l)}\}\big\}.
\end{align*}
Note that the envelope $F_{1,1}^{(1)}$ of $\mathcal{F}_{1,1}^{(1)}$ satisfies
\begin{align*}
\|F_{1,1}^{(1)}\|_{P,2q}&\le \bigg\|\sup_{l=1,\dots,d_1}\sup_{\theta_l\in\Theta_l,\|\eta_{0,l}^{(1)}-\eta_l^{(l)}\|_1\le C\sqrt{s}\tau_n}\Big(|\varepsilon| + |\eta_{0}^{(3)}(X)|\\
&\quad + |(\theta_{0,l}-\theta_l)g_l(X_1)|+|(\eta_{0,l}^{(1)}-\eta_l^{(1)})^T Z_{-l}|\Big)\bigg\|_{P,2q}\\
&\lesssim \|\varepsilon\|_{P,2q}+\| \eta_{0}^{(3)}(X)\|_{P,2q}+ \|\sup_{l=1,\dots,d_1}g_l(X_1)\|_{P,2q}\\
&\quad + \varsigma_n^{-1/2}\sqrt{s_1}\tau_n\|\sup_{j=1,\dots,d_1+d_2} Z_j\|_{P,2q}\\
&\lesssim C + \varsigma_n^{-1/2}\sqrt{s_1}\tau_n
\end{align*}
due to  A.\ref{A1}$(ii)$, A.\ref{A2}$(iii)$ and analogously 
\begin{align*}
\|F_{1,1}^{(2)}\|_{P,2q}\lesssim  C + \sqrt{s_2}\tau_n.
\end{align*}
Next, note that due to Lemma 2.6.15 from \cite{vanweak} the set
\begin{align*}
\mathcal{G}_{1,1}:=\big\{Z\mapsto \xi^T Z: \xi\in \R^{d_1+d_2+1},\|\xi\|_0\le Cs,\|\xi\|_2\le C\big\}
\end{align*}
is a union over $\binom{d_1+d_2+1}{Cs}$ VC-subgraph classes $\mathcal{G}_{1,1,k}$ with VC indices less or equal to $Cs+2$. Therefore, $\mathcal{F}_{1,1}^{(1)}$ and $\mathcal{F}_{1,1}^{(2)}$ are unions over $\binom{d_1+d_2+1}{Cs}$ respectively  $\binom{d_1+d_2}{Cs}$ VC-subgraph classes, which combined with Theorem 2.6.7 from \cite{vanweak} implies
\begin{align*}
\sup_Q\log N(\varepsilon\|F_{1,1}^{(1)}\|_{Q,2},\mathcal{F}_{1,1}^{(1)},\|\cdot\|_{Q,2})\lesssim s_1\log\left(\frac{d_1+d_2}{\varepsilon}\right)
\end{align*}
and 
\begin{align*}
\sup_Q\log N(\varepsilon\|F_{1,1}^{(2)}\|_{Q,2},\mathcal{F}_{1,1}^{(2)},\|\cdot\|_{Q,2})\lesssim s_2\log\left(\frac{d_1+d_2}{\varepsilon}\right).
\end{align*}
Using basic calculations, we obtain
\begin{align*}
\sup_Q\log N(\varepsilon\|F_{1,1}\|_{Q,2},\mathcal{F}_{1,1},\|\cdot\|_{Q,2})\lesssim s\log\left(\frac{d_1+d_2}{\varepsilon}\right),
\end{align*}
where $F_{1,1}:=F_{1,1}^{(1)}F_{1,1}^{(2)}$ is an envelope for $\mathcal{F}_{1,1}$ with 
\begin{align*}
\|F_{1,1}\|_{P,q}\le \|F_{1,1}^{(1)}\|_{P,2q}\|F_{1,1}^{(2)}\|_{P,2q}\lesssim (C+\varsigma_n^{-1/2}\sqrt{s_1\tau_n})(C+\sqrt{s_2\tau_n})\lesssim C.
\end{align*}
Define
$$\mathcal{F}_{1,2}:=\big\{\psi_l(\cdot,\theta_l,\eta_{0,l}):l=1,\dots,d_1,\theta_l\in\Theta_l\big\}$$
and, with an analogous argument, we obtain
\begin{align*}
\sup_Q\log N(\varepsilon\|F_{1,2}\|_{Q,2},\mathcal{F}_{1,2},\|\cdot\|_{Q,2})\lesssim \log\left(\frac{d_1}{\varepsilon}\right),
\end{align*}
where the envelope $F_{1,2}$ of $\mathcal{F}_{1,2}$ obeys
\begin{align*}
\|F_{1,2}\|_{P,q}\lesssim  C.
\end{align*}
Combining the results above, we obtain
\begin{align*}
\sup_Q\log N(\varepsilon\|F_{1}\|_{Q,2},\mathcal{F}_{1},\|\cdot\|_{Q,2})\lesssim s\log\left(\frac{d_1+d_2}{\varepsilon}\right),
\end{align*}
where the envelope $F_1:=F_{1,1}^{(1)}F_{1,1}^{(2)}\vee F_{1,2}$ of $\mathcal{F}_1$ satisfies
\begin{align*}
\|F_{1}\|_{P,q}\lesssim  C.
\end{align*}
Therefore, Assumption B.\ref{assumption2.2}$(iv)$ holds with $\upsilon_n\lesssim s$ and $a_n = d_1\vee d_2$.\\
At first remark that for each $l=1,\dots,d_1$, we have
\begin{align*}
    c&\le \Var(\varepsilon|X)\\
    &\le \E\left[\big(Y-\theta_lg_l(X_1)-(\eta_l^{(1)})^T Z_{-l} -\eta^{(3)}(X)\big)^2|X\right]\\
    &= \E\left[\big(\varepsilon+ (\theta_{0,l}-\theta_l)g_l(X_1)+(\eta_{0,l}^{(1)} -\eta_l^{(1)})^T Z_{-l} +\eta_{0}^{(3)}(X)\big)^2|X\right]\\
    &= \underbrace{\E\left[\varepsilon^2|X\right]}_{=\Var(\varepsilon|X)\le C}+\E\bigg[\big(\underbrace{(\theta_{0,l}-\theta_l)g_l(X_1)}_{\le C}+\underbrace{(\eta_{0,l}^{(1)} -\eta_l^{(1)})^T Z_{-l}}_{\le \|\eta_{0,l}^{(1)} -\eta_l^{(1)}\|_1 C\le C} + \underbrace{\eta_{0}^{(3)}(X)}_{\le C}\big)^2|X\bigg]\\
    &\le C,
\end{align*}
due to $c\le Var(\epsilon|X)\le C$ and the growth rates.
For all $f\in\mathcal{F}_1$, we have
\begin{align*}
&\E\left[f^2\right]^{\frac{1}{2}}\\
=&\ \E\left[\big(Y-\theta_lg_l(X_1)-(\eta^{(1)})^T Z_{-l} -\eta^{(3)}(X)\big)^2\big(g_l(X_1)-(\eta^{(2)})^T Z_{-l}\big)^2\right]^{\frac{1}{2}}\\
=&\ \E\Big[\big(g_l(X_1)-(\eta^{(2)})^T Z_{-l}\big)^2\\
&\quad\cdot\E\left[\big(Y-\theta_lg_l(X_1)-(\eta^{(1)})^T Z_{-l} -\eta^{(3)}(X)\big)^2|X\right]\Big]^{\frac{1}{2}}\\
\end{align*}
such that
\begin{align*}
    c\varsigma_n^{-\frac{1}{2}}\le \E[f^2]^{\frac{1}{2}}\le C\varsigma_n^{-\frac{1}{2}},
\end{align*}
due to Assumption A.\ref{A2}$(iv)$. This corresponds to Assumption B.\ref{assumption2.2}$(v)$. Assumption B.\ref{assumption2.2}$(vi)(a)$ holds by the definition of $\tau_n$ and $\upsilon_n\lesssim s$. To verify the next growth condition, we note
\begin{align*}
&(\tau_n+\mathcal{S}_n\log(n)/\sqrt{n})(\upsilon_n\log(\bar{d}_n))^{1/2}+n^{-1/2+1/q}\upsilon_n \log(\bar{d}_n)\\
\lesssim\ &(\tau_n+\log^\frac{1}{2}(d_1)\log(n)/\sqrt{n})(s\log(\bar{d}_n))^{1/2}+n^{-1/2+1/q}s\log(\bar{d}_n)\\
\lesssim\ &\left(n^{\frac{2}{q}}\frac{s^2\log^{2}(\bar{d}_n)}{n}\right)^{\frac{1}{2}}\\
\lesssim\ &\delta_n\varsigma_n^{-1/2}
\end{align*}
for a given $q\ge 4$ with $$\delta_n =\sqrt{\frac{n^{2/q}s^2\varsigma_n\log^2(\bar{d}_n)}{n}}=o(1)$$ due to Assumption A.\ref{A2}$(v)$ and analogously
\begin{align*}n^{1/2}\tau_n^2=\frac{s\log(\bar{d}_n)}{\sqrt{n}}\lesssim\delta_n\varsigma_n^{-1/2}.
\end{align*}
\\ \\
\textbf{Assumption B.\ref{assumption2.3}$(i)-(ii)$ }\\
Define
\begin{align*}
\mathcal{F}_0:=\{\psi_x(\cdot):x\in I\},
\end{align*}
where $\psi_x(\cdot):=(g(x)^T\Sigma_n g(x))^{-1/2}g(x)^TJ_0^{-1}\psi(\cdot,\theta_{0},\eta_{0})$. It is easy to verify that B.\ref{assumption2.3}$(ii)$ holds with
$$L_n= t_1^{9/2}\varsigma_n^{3/2}.$$  
Using the same argument, we can conclude that the envelope $F_0$ of $\F_0$ satisfies
\begin{align*}
\|F_0\|_{P,q}&=\E\left[\sup_{x\in I}\left|(g(x)^T\Sigma_n g(x))^{-1/2}g(x)^TJ_0^{-1}\psi(W,\theta_{0},\eta_{0})\right|^q\right]^{\frac{1}{q}}\\
&\lesssim t_1^{1/2}\varsigma_n^{-1/2} \E\left[\sup_{x\in I}\left|g(x)^TJ_0^{-1}\psi(W,\theta_{0},\eta_{0})\right|^q\right]^{\frac{1}{q}}\\
&= t_1^{1/2}\varsigma_n^{-1/2}\E\left[\sup_{x\in I}\left|\sum_{l=1}^{d_1} g_l(x)J_{0,l}^{-1}\psi_l(W,\theta_{0,l},\eta_{0,l})\right|^q\right]^{\frac{1}{q}}\\
&\lesssim t_1^{3/2}\varsigma_n^{1/2} \E\left[\sup_{l=1,\dots,d_1}\left|\varepsilon \nu^{(l)}\right|^q\right]^{\frac{1}{q}}\\
&\lesssim t_1^{3/2}\varsigma_n^{1/2}
\end{align*}
with $t_1^{3/2}\varsigma_n^{1/2}
\lesssim t_1^{9/2}\varsigma_n^{3/2}=L_n$.
It is worth noting that this can be relaxed to $\|F_0\|_{P,q}\lesssim t_1^{1/2}\varsigma_n^{1/2}$ if B-Splines are used for approximation.
To bound the entropy of $\F_0$, we note that
\begin{align*}
&\quad\big\|\psi_x(W)-\psi_{\tilde{x}}(W)\big\|_{P,2}\\
&=\Big\|(g(x)^T\Sigma_n g(x))^{-1/2}\sum_{l=1}^{d_1} g_l(x) \E[(\nu^{(l)})^2]^{-1}\psi_l(W,\theta_{0,l},\eta_{0,l})\\
&\quad -(g(\tilde{x})^T\Sigma_n g(\tilde{x}))^{-1/2}\sum_{l=1}^{d_1} g_l(\tilde{x}) \E[(\nu^{(l)})^2]^{-1}\psi_l(W,\theta_{0,l},\eta_{0,l})\Big\|_{P,2}\\
&\le | (g(x)^T\Sigma_n g(x))^{-1/2}-(g(\tilde{x})^T\Sigma_n g(\tilde{x}))^{-1/2}|\Big\| g(x)^T J_0^{-1}\psi(W,\theta_{0,l},\eta_{0,l})\Big\|_{P,2}\\
&\quad +(g(\tilde{x})^T\Sigma_n g(\tilde{x}))^{-1/2}\Big\|\big(g(x) -g(\tilde{x})\big)^TJ_0^{-1}\psi(W,\theta_{0,l},\eta_{0,l})\Big\|_{P,2}\\
&\lesssim | (g(x)^T\Sigma_n g(x))^{-1/2}-(g(\tilde{x})^T\Sigma_n g(\tilde{x}))^{-1/2}|\sup_{x\in I}\|g(x)\|_2\varsigma_n^{1/2}\\
&\quad +t_1^{1/2}\|g(x)-g(\tilde{x})\|_2
\end{align*}
due to the eigenvalues of $\Sigma_n$.
Additionally, it holds
\begin{align*}
&|(g(x)^T\Sigma_n g(x))^{-1/2}-(g(\tilde{x})^T\Sigma_n g(\tilde{x}))^{-1/2}|\\
\lesssim\ &\left|\left(\frac{g(\tilde{x})^T\Sigma_n g(\tilde{x})}{g(x)^T\Sigma_n g(x)}\right)^{1/2}-1\right|\varsigma_n^{-1/2} t_1^{1/2}\\
\lesssim\ &|g(\tilde{x})^T\Sigma_n g(\tilde{x})-g(x)^T\Sigma_n g(x)|\varsigma_n^{-3/2} t_1^{3/2}\\
=\  &|(g(x)-g(\tilde{x}))^T\Sigma_n (g(x)+g(\tilde{x}))|\varsigma_n^{-3/2} t_1^{3/2}\\
\lesssim\ &\|g(x)-g(\tilde{x})\|_2\sup_x\|g(x)\|_2\varsigma_n^{-1/2} t_1^{3/2}
\end{align*}
which implies
\begin{align*}
\big\|\psi_x(W)-\psi_{\tilde{x}}(W)\big\|_{P,2}\lesssim  \|g(x)-g(\tilde{x})\|_2 t_1^{3/2}.
\end{align*}
Using the same argument as in Theorem 2.7.11 from \cite{vanweak}, we obtain
\begin{align*}
&\quad\sup_Q\log N(\varepsilon\|F_{0}\|_{Q,2},\F_0,\|\cdot\|_{Q,2})\\
&\lesssim \sup_Q\log N\left(\left(\frac{\varepsilon  t_1^{3/2}\varsigma_n^{1/2}}{t_1^{3/2}}\right) t_1^{3/2},\F_0,\|\cdot\|_{Q,2}\right)\\
&\le \log N\left(\left(\varepsilon \varsigma_n^{1/2}\right),g(I),\|\cdot\|_{2}\right)\\
&\lesssim t_1\log\left(\frac{\tilde{A}_n \varsigma_n^{-1/2}}{\varepsilon}\right)
\end{align*}
by Assumption A.\ref{A1}$(i)$. Therefore, Assumption B.\ref{assumption2.3}$(i)$ is satisfied with $\varrho_n=t_1$ and $A_n = \tilde{A}_n\varsigma_n^{-1/2}.$\\ \\
\textbf{Assumption B.\ref{assumption2.4}}\\
Next, we want to prove that with probability $1-o(1)$ it holds
\begin{align}\label{Jconvergence}
\sup_{l=1,\dots,d_1}|\hat{J}_l-J_{0,l}|=\varsigma_n^{-1}\tilde{\epsilon}_n
\end{align}
with $\tilde{\epsilon}_n=o(1)$ where $\hat{J}_l=\E_n[-g_l(X_1)(g_l(X_1)-(\hat{\eta}_l^{(2)})^TZ_{-l})]$. It holds
\begin{align*}
|\hat{J}_l-J_{0,l}|&\le |\hat{J}_l-\E[-g_l(X_1)(g_l(X_1)-(\hat{\eta}_l^{(2)})^TZ_{-l})]|\\
&\quad+|\E[-g_l(X_1)(g_l(X_1)-(\hat{\eta}_l^{(2)})^TZ_{-l})]+J_{0,l}|
\end{align*}
with 
\begin{align*}
&|\E[-g_l(X_1)(g_l(X_1)-(\hat{\eta}_l^{(2)})^TZ_{-l})]+J_{0,l}|\\
\le&|\E[g_l(X_1)(\hat{\eta}_l^{(2)}-\eta_{0,l}^{(2)})^T Z_{-l})]|\\
\lesssim &\ \tau_n\varsigma_n^{-1}.
\end{align*}
Let 
\begin{align*}
\tilde{\mathcal{G}}_1:=\bigg\{&X\mapsto -g_l(X_1)(g_l(X_1)-(\eta_l^{(2)})^TZ_{-l}):l=1,\dots,d_1,\|\eta_l^{(2)}\|_0\le Cs_2,\\
&\|\eta^{(2)}_l-\eta^{(2)}_{0,l}\|_2\le C\tau_n,\|\eta^{(2)}-\eta^{(2)}_{0,l}\|_1\le C\sqrt{s_2}\tau_n\bigg\}.
\end{align*}
For any $q\ge 2$, the envelope $\tilde{G_1}$ of $\tilde{\mathcal{G}}_1$ satisfies
\begin{align*}
\E[\tilde{G}_1^q]^{\frac{1}{q}}&\le\E \left[\sup_{l=1,\dots,d_1}\sup_{\eta^{(2)}:\|\eta^{(2)}_l-\eta^{(2)}_{0,l}\|_1\le C\sqrt{s}\tau_n} |g_l(X_1)|^q|(g_l(X_1)-(\eta_l^{(2)})^TZ_{-l})|^q\right]^{\frac{1}{q}}\\
&\lesssim \|\sup_{l=1,\dots,d_1}\nu^{(l)}\|_{P,q}\vee
\E\bigg[\sup_{l=1,\dots,d_1}\sup_{\eta^{(2)}:\|\eta^{(2)}_l-\eta^{(2)}_{0,l}\|_1\le C\sqrt{s}\tau_n}(\eta_{0,l}^{(2)}-\eta_l^{(2)})^TZ_{-l})^{q}\bigg]^{\frac{1}{q}}\\
&\lesssim C \vee \sqrt{s_2}\tau_n\\
&\lesssim C
\end{align*}
and, with similar arguments as above, we obtain
\begin{align*}
\sup_{g\in \tilde{\mathcal{G}}_1}\E[g^2]^{\frac{1}{2}}
&\lesssim \varsigma_n^{-1/2} \vee\tau_n\varsigma_n^{-1/2}
\lesssim \varsigma_n^{-1/2}.
\end{align*}
Further, we have
\begin{align*}
\sup_Q\log N(\varepsilon\|\tilde{G}_1\|_{Q,2},\tilde{\mathcal{G}}_1,\|\cdot\|_{Q,2})\lesssim s_2\log\left(\frac{d_1+d_2}{\varepsilon}\right).
\end{align*}
Therefore, by using Lemma P.2 from \cite{belloni2018uniformly} with $\sigma=\varsigma_n^{-1/2}$, it holds
\begin{align*}
\sup_{l=1,\dots,d_1}|\hat{J}_l-J_{0,l}|&\lesssim \sup_{f\in\tilde{\mathcal{G}}_1}|\E_n[f(X)]-\E[f(X)]|+\tau_n\varsigma_n^{-1}\\
&\lesssim K\left(\varsigma_n^{-1/2}\sqrt{\frac{s_2\log(\bar{d}_n\varsigma_n^{1/2})}{n}}+n^{\frac{1}{q}}\frac{s_2\log^{}(\bar{d}_n\varsigma_n^{1/2})}{n}\right) +\tau_n\varsigma_n^{-1}\\
&\lesssim\varsigma_n^{-1}\tilde{\epsilon}_n
\end{align*}
with probability $1-o(1)$ and
$$\tau_n\lesssim\tilde{\epsilon}_n\lesssim\left(\sqrt{\frac{s_2\varsigma_n\log(\bar{d}_n\varsigma_n^{1/2})}{n}}+n^{\frac{1}{q}}\frac{s_2\varsigma_n\log^{}(\bar{d}_n\varsigma_n^{1/2})}{n}\right)=o(1).$$
Next, we want to bound the restricted eigenvalues of $\hat{\Sigma}_{\varepsilon\nu}$ with high probability by showing
\begin{align}\label{sparse_ev2}
\sup\limits_{\|v\|_2=1,\|v\|_0\le t_1}|v^T\big(\hat{\Sigma}_{\varepsilon\nu}-\Sigma_{\varepsilon\nu}\big)v|\lesssim u_n
\end{align}
with 
\begin{align*}
u_n&\lesssim \left(n^{\frac{3}{q}}\frac{t_1\log(d_1)}{\varsigma_nn}\right)^{1/2}\vee \left(\frac{t_1^2s\log(\bar{d}_n)}{\varsigma_nn}\right)^{1/2}\vee \frac{t_1^2s\log(\bar{d}_n)}{n}\\
& \lesssim  \left(n^{\frac{3}{q}}\frac{t_1\log(d_1)}{\varsigma_nn}\right)^{1/2}\vee \left(\frac{t_1^2s\log(\bar{d}_n)}{\varsigma_nn}\right)^{1/2}= o(1)
\end{align*}
for a suitable $q\ge 4$ defined in Assumption A\ref{A2}.. Define $\xi_i:=\varepsilon_i\nu_i$, $\hat{\xi}_i:=\hat{\varepsilon}_i\hat{\nu}_i$ and observe that
\begin{align*}
&\hat{\Sigma}_{\varepsilon\nu}-\Sigma_{\varepsilon\nu}\\
=\ &\frac{1}{n}\sum\limits_{i=1}^n \hat{\xi}_i\hat{\xi}_i^T -\E[\xi_i\xi_i^T]\\
=\ &\frac{1}{n}\sum\limits_{i=1}^n \xi_i\xi_i^T -\E[\xi_i\xi_i^T]\\
&\ +\frac{1}{n}\sum\limits_{i=1}^n \xi_i\big(\hat{\xi}_i-\xi_i\big)^T+\frac{1}{n}\sum\limits_{i=1}^n \big(\hat{\xi}_i-\xi_i\big)\xi_i^T+\frac{1}{n}\sum\limits_{i=1}^n \big(\hat{\xi}_i-\xi_i\big)\big(\hat{\xi}_i-\xi_i\big)^T.
\end{align*}
Using the Lemma Q.1 from \cite{belloni2018uniformly}, we can bound the first part.\\
Due to the tail conditions on $\varepsilon$ and boundedness of $\nu$, we obtain 
\begin{align*}
\left(\E\left[\max_{1\le i\le n}\|\varepsilon_i\nu_i\|_\infty^2\right]\right)^{1/2}&\lesssim \E\left[\max_{1\le i\le n}|\varepsilon_i|^2\right]^{1/2}
\\
&\lesssim n^{1/q}
\end{align*}
for $q$ defined in Assumption A.\ref{A2}(iii). Then, Lemma Q.1 implies
\begin{align*}
&\ \E\left[\sup\limits_{\|v\|_2=1,\|v\|_0\le t_1}\Big|v^T\Big(\frac{1}{n}\sum\limits_{i=1}^n\xi_i\xi_i^T-\E[\xi_i\xi_i^T]\Big)v\Big|\right]\\
=&\ \E\left[\sup\limits_{\|v\|_2=1,\|v\|_0\le t_1}\Big|\E_n\Big[\big(v^T\xi_i\big)^2-\E\big[\big(v^T\xi_i\big)^2\big]\Big]\Big|\right]\\
\lesssim&\ \tilde{\delta}_n^2+\tilde{\delta}_n \varsigma_n^{-1/2}\lesssim \left(n^{\frac{3}{q}}\frac{t_1\log(d_1)}{\varsigma_nn}\right)^{1/2}
\end{align*}
where we used
\begin{align*}
\tilde{\delta}_n &\lesssim \left(n^{\frac{2}{q}-1}t_1\log^2(t_1)\log(d_1)\log(n)\right)^{\frac{1}{2}}\\
&\lesssim \left(n^{\frac{3}{q}}\frac{t_1\log^{}(d_1)}{n}\right)^{\frac{1}{2}}\lesssim \varsigma_n^{-1/2}
\end{align*}
by the growth conditions in A.\ref{A2}$(v)$.
Using Markov's inequality, we directly obtain
\begin{align*}
\sup\limits_{\|v\|_2=1,\|v\|_0\le t_1}\Big|v^T\Big(\frac{1}{n}\sum\limits_{i=1}^n\xi_i\xi_i^T-\E[\xi_i\xi_i^T]\Big)v\Big|\lesssim u_n
\end{align*}
with probability $1-o(1)$. Note that by applying the results on covariance estimation from \cite{chen2012masked} instead would lead to comparable growth rates.\\
With probability $1-o(1)$, it holds
\begin{align*}
\sup_{l=1,\dots,d_1} |\hat{\theta}_{l}-\theta_{0,l}|\lesssim \varsigma_n^{1/2}\tau_n
\end{align*}
analog to step 1 in the proof of Theorem 2.1 in \cite{belloni2018uniformly}. Define
\begin{align*}
\tilde{\mathcal{G}}^2_2:=\big\{(\psi_l(\cdot,\theta_l,\eta_l)-\psi_l(\cdot,\theta_{0,l},\eta_{0,l}))^2:\ & l=1,\dots,d_1,|\theta_l-\theta_{0,l}|\le C\varsigma_n^{1/2}\tau_n,\\
& \eta_l\in\mathcal{T}_l\setminus\{\eta_{0,l}\}\big\},
\end{align*}
with
\begin{align*}
\sup_Q\log N(\varepsilon\|\tilde{G}_2^2\|_{Q,2},\tilde{\mathcal{G}}_2^2,\|\cdot\|_{Q,2})\lesssim s\log\left(\frac{d_1+d_2}{\varepsilon}\right).
\end{align*}
Here, $\tilde{G}_2^2$ is a measurable envelope of $\tilde{\mathcal{G}}_2^2$ with 
\begin{align*}
\tilde{G}_2^2=\sup_{l=1,\dots,d_1}\sup_{\theta_l:|\theta_l-\theta_{0,l}|\le C\varsigma_n^{1/2}\tau_n,\eta_l\in\mathcal{T}_l}\big(\psi_l(W,\theta_l,\eta_l)-\psi_l(W,\theta_{0,l},\eta_{0,l})\big)^2
\end{align*}
and 
\begin{align*}
 &\|\tilde{G}_2^2\|_{P,\tilde{q}}\\
\lesssim&\  \Big\|\sup_{l,\theta_l,\eta_l^{(2)}}\Big((\theta_{0,l}-\theta_l)g_l(X_1)\big(g_l(X_1)-(\eta_l^{(2)})^TZ_{-l}\big)\Big)^2\Big\|_{P,\tilde{q}}\\
&\ +\Big\|\sup_{l,\eta_l}\Big(\big(Y-\theta_{0,l}g_l(X_1)-(\eta_l^{(1)})^TZ_{-l}-\eta_l^{(3)}(X)\big)\\
&\quad\quad\big((\eta_{0,l}^{(2)}-\eta_l^{(2)})^TZ_{-l}\big)\Big)^2\Big\|_{P,\tilde{q}}\\
&\ +\Big\|\sup_{l,\eta_l^{(1)},\eta_l^{(3)}}\Big(\big(g_l(X_1)-(\eta_{0,l}^{(2)})^TZ_{-l}\big)\\
&\quad\quad\big((\eta_{0,l}^{(1)}-\eta_l^{(1)})^TZ_{-l}+\eta_{0,l}^{(3)}(X)-\eta_{l}^{(3)}(X)\big)\Big)^2\Big\|_{P,\tilde{q}}\\
&=:T_1+T_2+T_3.
\end{align*}
For any $\tilde{q}\le q/2$ in Assumption A.\ref{A2}, it holds
\begin{align*}
T_1&\lesssim \tau_n^2\varsigma_n^{}\Big\|\sup_{l,\eta_l^{(2)}}\Big(g_l(X_1)\big(g_l(X_1)-(\eta_l^{(2)})^TZ_{-l}\big)\Big)^2\Big\|_{P,\tilde{q}}\\
&\lesssim \tau_n^2\varsigma_n^{}\Big\|\sup_{l,\eta_l^{(2)}}\Big(g_l(X_1)-(\eta_l^{(2)})^TZ_{-l}\Big)^2\Big\|_{P,\tilde{q}}\\
&\lesssim \tau_n^2\varsigma_n^{},
\end{align*}
and
\begin{align*}
T_2&\lesssim s_2\tau_n^2
\end{align*}
where we used that $\|\varepsilon\|_{P,q}\le C$, $\|Z\|_\infty\le C$ (a.s.) and  that $\eta^{(2)}_l\in\mathcal{T}_l$. Further, we have
\begin{align*}
T_3&\lesssim \Big\|\sup_{l,\eta_l^{(1)},\eta_l^{(3)}}\Big((\eta_{0,l}^{(1)}-\eta_l^{(1)})^TZ_{-l}+\eta_{0,l}^{(3)}(X)\Big)^2\Big\|_{P,q}\\
&\lesssim s_1\tau_n^2 + \tau_n^2
\end{align*}
due to Assumption A.\ref{A1}$(ii)$. By using an analogous argument as above, we obtain
\begin{align*}
\tilde{\sigma}:&=\sup\limits_{f\in\tilde{\mathcal{G}}_2^2}\E\left[f(X)^2\right]^\frac{1}{2}\\
&=\sup\limits_{l=1,\dots,d_1}\sup\limits_{\theta_l:|\theta_l-\theta_{0,l}|\le C\varsigma^{1/2}\tau_n,\eta_l\in\mathcal{T}_l}\E\left[\left(\psi_l(W,\theta_l,\eta_l)-\psi_l(W,\theta_{0,l},\eta_{0,l})\right)^4\right]^{\frac{1}{2}}\\
&\lesssim \tau_n^2(s\vee\varsigma_n).
\end{align*}
Again, we can apply Lemma P.2 from \cite{belloni2018uniformly} to obtain
\begin{align*}
\sup\limits_{f\in\tilde{\mathcal{G}}_2^2}|\E_n[f(X)]-\E[f(X)]|\le K&\bigg(\tilde{\sigma}\sqrt{\frac{s\log(\bar{d}_n)}{n}}+n^{\frac{1}{q}}\|\tilde{G}_2^2\|_{P,\tilde{q}}\frac{s\log(\bar{d}_n)}{n}\bigg)\\
&\lesssim (s\vee \varsigma_n)\tau_n^3\vee n^{\frac{1}{\tilde{q}}}(s\vee \varsigma_n)\tau_n^4
\end{align*}
with probability $1-o(1)$. Note that we have already shown Assumption B.\ref{assumption2.1}$(v)(a)$ which implies
\begin{align*}
\sup\limits_{f\in\tilde{\mathcal{G}}_2^2} \E[f(X)]&\le C\left(|\theta_l-\theta_{0,l}|^2\varsigma_n^{-1}\vee \|\eta_{0,l}-\eta_l\|_e^2\right)\\
&\lesssim \tau_n^2.
\end{align*}
Combined, this implies
\begin{align}\label{growthstrong}
\sup_{l=1,\dots,d_1}\E_n\left[\left(\hat{\varepsilon}_i\hat{\nu}^{(l)}_i-\varepsilon_i\nu_i^{(l)}\right)^2\right]&\le \sup\limits_{f\in\tilde{\mathcal{G}}_2^2} \E_n[f(X)] \nonumber\\
&\lesssim \tau_n^2+\left((s\vee \varsigma_n)\tau_n^3\vee n^{\frac{1}{\tilde{q}}}(s\vee \varsigma_n)\tau_n^4\right)\lesssim \tau_n^2
\end{align}
by growth the growths assumptions in A.\ref{A2} (v) and, with an analogous argument, we obtain 
\begin{align*}
\sup_{l=1,\dots,d_1}\E_n\left[\left(\varepsilon_i\nu_i^{(l)}\right)^2\right]\lesssim \varsigma_n^{-1}.
\end{align*}
Therefore, it holds
\begin{align*}
& \sup\limits_{\|v\|_2=1,\|v\|_0\le t_1}|v^T\frac{1}{n}\sum\limits_{i=1}^n \xi_i\big(\hat{\xi}_i-\xi_i\big)^Tv|\\
=\  &\sup\limits_{\|v\|_2=1,\|v\|_0\le t_1}|\E_n\left[v^T \xi_i\big(\hat{\xi}_i-\xi_i\big)^Tv\right]|\\
\le\ & \sup\limits_{\|v\|_2=1,\|v\|_0\le t_1}\left|\left(\E_n\left[\left(v^T \xi_i\right)^2\right]\E_n\left[\left(v^T\big(\hat{\xi}_i-\xi_i\big)\right)^2\right]\right)^{\frac{1}{2}}\right|\\
\lesssim\ &\varsigma_n^{-1/2} \sup\limits_{\|v\|_2=1,\|v\|_0\le t_1}\left|\left(\E_n\left[\left(v^T\big(\hat{\xi}_i-\xi_i\big)\right)^2\right]\right)^{\frac{1}{2}}\right|\\
=\ & \varsigma_n^{-1/2}\sup\limits_{\|v\|_2=1,\|v\|_0\le t_1}\left(\sum_{k=1}^{d_1}\sum_{l=1}^{d_1} v_kv_l\E_n\left[(\hat{\varepsilon}_i\hat{\nu}_i^{(k)}-\varepsilon_i\nu_i^{(k)})(\hat{\varepsilon}_i\hat{\nu}_i^{(l)}-\varepsilon_i\nu_i^{(l)})\right]\right)^{\frac{1}{2}}\\
\lesssim\ &\varsigma_n^{-1/2}t_1\sup_{l=1,\dots,d_1}\E_n\left[(\hat{\varepsilon}_i\hat{\nu}_i^{(l)}-\varepsilon_i\nu_i^{(l)})^2\right]^{\frac{1}{2}}\\
\lesssim\ & t_1\varsigma_n^{-1/2}\tau_n
\end{align*}
and 
\begin{align*}
\sup\limits_{\|v\|_2=1,\|v\|_0\le t_1}|v^T\frac{1}{n}\sum\limits_{i=1}^n \big(\hat{\xi}_i-\xi_i\big)\big(\hat{\xi}_i-\xi_i\big)^Tv|\lesssim t_1^2\tau_n^2
\end{align*}
with probability $1-o(1)$. Combining the steps above, implies (\ref{sparse_ev2}) if $u_n=o(1)$ which is ensured by the growth conditions. Next, note that for every sparse vector $w\in \R^{d_1}$ ($\|w\|_0\le t_1$) there exists a corresponding matrix $M_w$ 
$$M_w\in \R^{d_1\times d_1}: (M_w)_{k,l}=\begin{cases}1 \text{ if } w_k\neq 0 \wedge w_l\neq 0\\
0 \text{ else,} \end{cases}$$
such that 
\begin{align*}
w^T(\hat{\Sigma}_{\varepsilon\nu}-\Sigma_{\varepsilon\nu})w=w^T\left(M_w \odot(\Sigma_{\varepsilon\nu}-\hat{\Sigma}_{\varepsilon\nu})\right)w.
\end{align*}
Due to (\ref{sparse_ev2}), it holds
\begin{align*}
\sup_{\|w\|_0\le t_1}\sup_{\|v\|_2=1}\left|v^T\left(M_w \odot(\hat{\Sigma}_{\varepsilon\nu}-\Sigma_{\varepsilon\nu})\right)v\right|&\le \sup_{\|v\|_2=1,\|v\|_0\le t_1}\left|v^T(\hat{\Sigma}_{\varepsilon\nu}-\Sigma_{\varepsilon\nu})v\right|\lesssim u_n,
\end{align*}
which implies 
\begin{align*}
\sup_{\|w\|_0\le t_1}\|M_w \odot(\hat{\Sigma}_{\varepsilon\nu}-\Sigma_{\varepsilon\nu})\|_2\lesssim u_n
\end{align*}
and 
\begin{align*}
\sup_{\|w\|_0\le t_1}\|M_w \odot\hat{\Sigma}_{\varepsilon\nu}\|_2\lesssim \varsigma_n^{-1}
\end{align*}
due to Assumption A.\ref{A2}$(iv)$. This can be used to show that
\begin{align}\label{sparse_ev}
\sup\limits_{\|v\|_2=1,\|v\|_0\le t_1}|v^T\big(\hat{\Sigma}_n-\Sigma_n\big)v|\lesssim \varsigma_n\left(\tilde{u}_n\vee\tilde{\epsilon}_n\right)
\end{align}
where $\tilde{u}_n:=\varsigma_n u_n=o(1)$
with probability $1-o(1)$ which can be interpreted as an upper bound for the sparse eigenvalues of $\hat{\Sigma}_n-\Sigma_n$. It holds
\begin{align*}
\hat{\Sigma}_n-\Sigma_n&= \hat{J}^{-1}\hat{\Sigma}_{\varepsilon\nu}(\hat{J}^{-1})^T-J_0^{-1}\Sigma_{\varepsilon\nu}(J_0^{-1})^T\\
&=(\hat{J}^{-2}-J_0^{-2})\hat{\Sigma}_{\varepsilon\nu}+J_0^{-2}\big(\hat{\Sigma}_{\varepsilon\nu}-\Sigma_{\varepsilon\nu}\big).
\end{align*}
Note that
\begin{align*}
&\sup\limits_{\|v\|_2=1,\|v\|_0\le t_1}|v^T\big(\hat{J}^{-2}-J_0^{-2})\hat{\Sigma}_{\varepsilon\nu}v|\\
=\ &\sup\limits_{\|v\|_2=1,\|v\|_0\le t_1}|v^T\big(\hat{J}^{-2}-J_0^{-2})\left(M_v\odot\hat{\Sigma}_{\varepsilon\nu}\right)v|\\
\le\ &\sup\limits_{\|w\|_0\le t_1}\left\|\left(M_w\odot\hat{\Sigma}_{\varepsilon\nu}\right)\right\|_2\left\|\big(\hat{J}^{-2}-J_0^{-2})\right\|_2\\
\lesssim\ & \varsigma_n^{-1}\left\|\big(\hat{J}^{-2}-J_0^{-2})\right\|_2\\
\lesssim & \varsigma_n\tilde{\epsilon}_n
\end{align*}
due to the sub-multiplicative spectral norm, $|J_{0,l}^{-1}|=O(\varsigma_n)$ and \eqref{Jconvergence} which implies
\begin{align}\label{J2Rate}
\left\|\hat{J}^{-2}-J_0^{-2}\right\|&\le \left\|\hat{J}^{-1}(\hat{J}^{-1}-J_0^{-1})\right\|+\left\|J_0^{-1}(\hat{J}^{-1}-J_0^{-1})\right\|\nonumber \\
&\lesssim\varsigma_n \left\|\hat{J}^{-1}-J_0^{-1}\right\|\lesssim\varsigma_n^3 \left\|\hat{J}-J_0\right\|=\varsigma_n^2\tilde{\epsilon}_n
\end{align}
with $\left\|\hat{J}-J_0\right\|=\sup_{l=1,\dots,d_1}|\hat{J}_l-J_{0,l}|=\varsigma_n^{-1}\tilde{\epsilon}_n$. The second term can be bounded by
\begin{align*}
&\sup\limits_{\|v\|_2=1,\|v\|_0\le t_1}|v^TJ_0^{-2}\big(\hat{\Sigma}_{\varepsilon\nu}-\Sigma_{\varepsilon\nu})v|\lesssim \varsigma_n^2 u_n=\varsigma_n\tilde{u}_n.
\end{align*}
This implies (\ref{sparse_ev}). We finally obtain
\begin{align*}
\sup_{x\in I}\left|\frac{(g(x)^T\hat{\Sigma}_n g(x))^{1/2}}{(g(x)^T\Sigma_n g(x))^{1/2}}-1\right|&\le \sup_{x\in I}\left|\frac{(g(x)^T\hat{\Sigma}_n g(x))}{(g(x)^T\Sigma_n g(x))}-1\right|\\
&\lesssim \varsigma_n^{-1}t_1\sup_{x\in I}\left|g(x)^T\big(\hat{\Sigma}_n-\Sigma_n\big) g(x)\right|\\
&\le  \varsigma_n^{-1}t_1 \sup\limits_{\|v\|_2=1,\|v\|_0\le t_1}|v^T\big(\hat{\Sigma}_n-\Sigma_n\big)v|\\
&=t_1(\tilde{u}_n\vee\tilde{\epsilon}_n)\lesssim t_1\tilde{u}_n
\end{align*}
with probability $1-o(1)$ which is the first part of Assumption B.\ref{assumption2.4} with $\epsilon_n\lesssim t_1 \tilde{u}_n$ and $\epsilon_n t_1 \log(\tilde{A}_n\varsigma_n^{-1/2})=o(1)$. \\ \\
\textbf{Assumption B.\ref{assumption2.3}$(iii)-(iv)$ }\\
Define
\begin{align*}
\sigma_x:&=(g(x)^T\Sigma_n g(x))^{1/2},\\
\hat{\sigma}_x:&=(g(x)^T\hat{\Sigma}_n g(x))^{1/2}
\end{align*}
and
$$\hat{\F}_0:=\{\psi_x(\cdot)-\hat{\psi}_x(\cdot):x\in I\}$$
with  $\hat{\psi}_x(\cdot):=\hat{\sigma}_x^{-1}g(x)^T\hat{J}_0^{-1}\psi(\cdot,\hat{\theta}_{},\hat{\eta}_{})$. For every $x$ and $\tilde{x}$, it holds
\begin{align*}
&\|\psi_x(W)-\hat{\psi}_x(W)-(\psi_{\tilde{x}}(W)-\hat{\psi}_{\tilde{x}}(W))\|_{\Pb_n,2}\\
 = &\Big\|\sigma_x^{-1}g(x)^TJ_0^{-1}\psi(W,\theta_{0},\eta_{0})-\sigma_{\tilde{x}}^{-1}g({\tilde{x}})^TJ_0^{-1}\psi(W,\theta_{0},\eta_{0}) \\
 & -\big(\hat{\sigma}_x^{-1}g(x)^T\hat{J}^{-1}\psi(W,\hat{\theta}_{},\hat{\eta}_{})-\hat{\sigma}_{\tilde{x}}^{-1}g({\tilde{x}})^T\hat{J}^{-1}\psi(W,\hat{\theta}_{},\hat{\eta}_{}) \big)\Big\|_{\Pb_n,2}\\
  = &\Big\|\sum_{l=1}^{d_1} (\sigma_x^{-1} g_l(x)-\sigma_{\tilde{x}}^{-1}g_l(\tilde{x}))J_{0,l}^{-1}\psi_l(W,\theta_{0,l},\eta_{0,l}) \\
 & -\sum_{l=1}^{d_1} (\hat{\sigma}_x^{-1} g_l(x)-\hat{\sigma}_{\tilde{x}}^{-1}g_l(\tilde{x}))\hat{J}_l^{-1}\psi_l(W,\hat{\theta}_{l},\hat{\eta}_{l})\Big\|_{\Pb_n,2}\\
   \le &\Big\|\sum_{l=1}^{d_1} (\sigma_x^{-1} g_l(x)-\sigma_{\tilde{x}}^{-1}g_l(\tilde{x}))\Big(J_{0,l}^{-1}-\hat{J}_{l}^{-1}\Big)\psi_l(W,\theta_{0,l},\eta_{0,l})\Big\|_{\Pb_n,2} \\
     &+\Big\|\sum_{l=1}^{d_1} (\sigma_x^{-1} g_l(x)-\sigma_{\tilde{x}}^{-1}g_l(\tilde{x}))\hat{J}_{l}^{-1}\Big(\psi_l(W,\theta_{0,l},\eta_{0,l})-\psi_l(W,\hat{\theta}_{l},\hat{\eta}_{l})\Big)\Big\|_{\Pb_n,2}\\
 & +\Big\|\sum_{l=1}^{d_1}\Big( (\sigma_x^{-1} g_l(x)-\sigma_{\tilde{x}}^{-1}g_l(\tilde{x}))- (\hat{\sigma}_x^{-1} g_l(x)-\hat{\sigma}_{\tilde{x}}^{-1}g_l(\tilde{x}))\Big)\hat{J}_{l}^{-1}\psi_l(W,\hat{\theta}_{l},\hat{\eta}_{l})\Big\|_{\Pb_n,2}\\
 =:& I_{4,1}+I_{4,2}+I_{4,3}.
\end{align*} 
We obtain
\begin{align*}
I_{4,1}&=\Big\|\sum_{l=1}^{d_1} (\sigma_x^{-1} g_l(x)-\sigma_{\tilde{x}}^{-1}g_l(\tilde{x}))\Big(J_{0,l}^{-1}-\hat{J}_{l}^{-1}\Big)\psi_l(W,\theta_{0,l},\eta_{0,l})\Big\|_{\Pb_n,2} \\
     &\le\sigma_x^{-1}\Big\|(g(x)-g(\tilde{x}))^T\Big(J_{0}^{-1}-\hat{J}_{}^{-1}\Big)\psi(W,\theta_{0},\eta_{0})\Big\|_{\Pb_n,2}\\
     &\quad+|\sigma_x^{-1}-\sigma_{\tilde{x}}^{-1}|\Big\|g(\tilde{x})^T\Big(J_{0}^{-1}-\hat{J}_{}^{-1}\Big)\psi(W,\theta_{0},\eta_{0})\Big\|_{\Pb_n,2}\\
     &\lesssim \varsigma_n^{-1/2}\sqrt{t_1} \|g(x)-g(\tilde{x})\|_2\sup\limits_{\|v\|_2=1,\|v\|_0\le 2t_1}\Big\|v^T\Big(J_{0}^{-1}-\hat{J}_{}^{-1}\Big)\psi(W,\theta_{0},\eta_{0})\Big\|_{\Pb_n,2}\\
     &\quad + \varsigma_n^{-1/2}t_1^{3/2}\|g(x)-g(\tilde{x})\|_2\sup\limits_{x\in I}\|g(x)\|_2^{2}\sup\limits_{\|v\|_2=1,\|v\|_0\le t_1}\Big\|v^T\Big(J_{0}^{-1}-\hat{J}_{}^{-1}\Big)\psi(W,\theta_{0},\eta_{0})\Big\|_{\Pb_n,2}\\
     &\lesssim  \varsigma_n^{-1/2}t_1^{3/2}\|g(x)-g(\tilde{x})\|_2\varsigma_n^{1/2}\tilde{\epsilon}_n\lesssim t_1^{3/2}\tilde{\epsilon}_n\|g(x)-g(\tilde{x})\|_2,
\end{align*}
with $t_1^{3/2}\tilde{\epsilon}_n=o(1)$ where we used that
\begin{align*}
&\ \sup\limits_{\|v\|_2=1,\|v\|_0\le t_1}\Big\|v^T\Big(J_{0}^{-1}-\hat{J}_{}^{-1}\Big)\psi(W,\theta_{0},\eta_{0})\Big\|^2_{\Pb_n,2}\\
=&\ \sup\limits_{\|v\|_2=1,\|v\|_0\le t_1}\Big|v^T\Big(J_{0}^{-1}-\hat{J}_{}^{-1}\Big)\frac{1}{n}\sum\limits_{i=1}^n \xi_i\xi_i^T \Big(J_{0}^{-1}-\hat{J}_{}^{-1}\Big)^Tv\Big|\\
\le&\ \left\|J_{0}^{-1}-\hat{J}_{}^{-1}\right\|^2_2\sup\limits_{\|v\|_0\le t_1}\Big\|M_v\odot\left(\frac{1}{n}\sum\limits_{i=1}^n \xi_i\xi_i^T\right)\Big\|_2\\
\lesssim& (\varsigma_n\tilde{\epsilon}_n)^2\varsigma_n^{-1}\lesssim \varsigma_n\tilde{\epsilon}_n^2
\end{align*}
by \eqref{J2Rate}. Analogously, we obtain 
\begin{align*}
I_{4,2}&=\Big\|\sum_{l=1}^{d_1} (\sigma_x^{-1} g_l(x)-\sigma_{\tilde{x}}^{-1}g_l(\tilde{x}))\hat{J}_{l}^{-1}\Big(\psi_l(W,\theta_{0,l},\eta_{0,l})-\psi_l(W,\hat{\theta}_{l},\hat{\eta}_{l})\Big)\Big\|_{\Pb_n,2}\\
     &\le\sigma_x^{-1}\Big\|(g(x)-g(\tilde{x}))^T\hat{J}_{}^{-1}\Big(\psi(W,\theta_{0},\eta_{0})-\psi(W,\hat{\theta}_{},\hat{\eta}_{})\Big)\Big\|_{\Pb_n,2}\\
     &\quad+|\sigma_x^{-1}-\sigma_{\tilde{x}}^{-1}|\Big\|g(\tilde{x})^T\hat{J}_{}^{-1}\Big(\psi(W,\theta_{0},\eta_{0})-\psi(W,\hat{\theta}_{},\hat{\eta}_{})\Big)\Big\|_{\Pb_n,2}\\
     &\lesssim \varsigma_n^{-1/2}\sqrt{t_1}\|g(x)-g(\tilde{x})\|_2\sup\limits_{\|v\|_2=1,\|v\|_0\le 2t_1}\Big\|v^T\hat{J}_{}^{-1}\Big(\psi(W,\theta_{0},\eta_{0})-\psi(W,\hat{\theta}_{},\hat{\eta}_{})\Big)\Big\|_{\Pb_n,2}\\
     &\quad + \varsigma_n^{-1/2}t_1^{3/2}\|g(x)-g(\tilde{x})\|_2\sup\limits_{x\in I}\|g(x)\|_2^2\sup\limits_{\|v\|_2=1,\|v\|_0\le t_1}\Big\|v^T\hat{J}_{}^{-1}\Big(\psi(W,\theta_{0},\eta_{0})-\psi(W,\hat{\theta}_{},\hat{\eta}_{})\Big)\Big\|_{\Pb_n,2}\\
     &\lesssim \varsigma_n^{-1/2}t_1^{3/2}\|g(x)-g(\tilde{x})\|_2\sup\limits_{\|v\|_2=1,\|v\|_0\le t_1}\Big\|v^T\hat{J}_{}^{-1}\Big(\psi(W,\theta_{0},\eta_{0})-\psi(W,\hat{\theta}_{},\hat{\eta}_{})\Big)\Big\|_{\Pb_n,2}\\
     &\lesssim \varsigma_n^{-1/2}t_1^{3/2} \|g(x)-g(\tilde{x})\|_2 \varsigma_n t_1\tau_n\\
     &\lesssim \sqrt{\frac{t_1^5\varsigma_n s\log(\bar{d}_n)}{n}}\|g(x)-g(\tilde{x})\|_2=o(\|g(x)-g(\tilde{x})\|_2),
\end{align*}
by growth conditions where we used that 
$$\sup\limits_{\|v\|_2=1,\|v\|_0\le t_1}\Big\|v^T\hat{J}_{}^{-1}\Big(\psi(W,\theta_{0},\eta_{0})-\psi(W,\hat{\theta}_{},\hat{\eta}_{})\Big)\Big\|_{\Pb_n,2}\lesssim \varsigma_n t_1\tau_n$$
as shown in \eqref{growthstrong}. Further, it holds
\begin{align*}
I_{4,3}&= \Big\|\sum_{l=1}^{d_1}\Big( (\sigma_x^{-1} g_l(x)-\sigma_{\tilde{x}}^{-1}g_l(\tilde{x}))- (\hat{\sigma}_x^{-1} g_l(x)-\hat{\sigma}_{\tilde{x}}^{-1}g_l(\tilde{x}))\Big)\hat{J}_{l}^{-1}\psi_l(W,\hat{\theta}_{l},\hat{\eta}_{l})\Big\|_{\Pb_n,2}\\
&\le \big| \sigma_x^{-1}-\hat{\sigma}_x^{-1} \big|\Big\|(g(x)-g(\tilde{x}))^T\hat{J}_{}^{-1}\psi(W,\hat{\theta}_{},\hat{\eta}_{})\Big\|_{\Pb_n,2}\\
&\quad +\big| (\sigma_x^{-1}-\hat{\sigma}_x^{-1})-(\sigma_{\tilde{x}}^{-1}-\hat{\sigma}_{\tilde{x}}^{-1})\big|\Big\|g(\tilde{x})^T\hat{J}_{}^{-1}\psi(W,\hat{\theta}_{},\hat{\eta}_{})\Big\|_{\Pb_n,2}.
\end{align*}
Note that
\begin{align*}
&\big| (\sigma_x^{-1}-\hat{\sigma}_x^{-1})-(\sigma_{\tilde{x}}^{-1}-\hat{\sigma}_{\tilde{x}}^{-1})\big|\\
=\ &\Big| \frac{1}{\sigma_x \sigma_{\tilde{x}}}(\sigma_{\tilde{x}}-\sigma_x)-\frac{1}{\hat{\sigma}_x \hat{\sigma}_{\tilde{x}}}(\hat{\sigma}_{\tilde{x}}-\hat{\sigma}_{x})\Big|\\
= \ &\frac{1}{\hat{\sigma}_x \hat{\sigma}_{\tilde{x}}}\Big| \frac{\hat{\sigma}_x \hat{\sigma}_{\tilde{x}}}{\sigma_x \sigma_{\tilde{x}}}(\sigma_{\tilde{x}}-\sigma_x)-(\hat{\sigma}_{\tilde{x}}-\hat{\sigma}_{x})\Big|\\
\lesssim\ & t_1\varsigma_n^{-1}\left(\big| (\sigma_{\tilde{x}}-\sigma_x)-(\hat{\sigma}_{\tilde{x}}-\hat{\sigma}_{x})\big|+\Big|\frac{\hat{\sigma}_x \hat{\sigma}_{\tilde{x}}}{\sigma_x \sigma_{\tilde{x}}}-1\Big|\big|\sigma_{\tilde{x}}-\sigma_x\big|\right)
\end{align*}
with 
\begin{align*}
\Big|\frac{\hat{\sigma}_x \hat{\sigma}_{\tilde{x}}}{\sigma_x \sigma_{\tilde{x}}}-1\Big|\big|\sigma_{\tilde{x}}-\sigma_x\big|
&\le\Big(\Big|\frac{\hat{\sigma}_x }{\sigma_x }-1\Big|\frac{ \hat{\sigma}_{\tilde{x}}}{ \sigma_{\tilde{x}}}+\Big|\frac{\hat{\sigma}_{\tilde{x} }}{\sigma_{\tilde{x}} }-1\Big|\Big)\big|\sigma_{\tilde{x}}-\sigma_x\big|\\
&\lesssim\epsilon_n\frac{1}{\sigma_x}\big|\sigma^2_{\tilde{x}}-\sigma^2_x\big|\\
&\lesssim\epsilon_n\sqrt{t_1}\varsigma_n^{1/2}\|g(x)-g(\tilde{x})\|_2\sup_x\|g(x)\|_2
\end{align*}
uniformly over $x\in I$ with probability $1-o(1)$ and
\begin{align*}
&\big| (\sigma_{\tilde{x}}-\sigma_x)-(\hat{\sigma}_{\tilde{x}}-\hat{\sigma}_{x})\big|\\
\le\ &\frac{1}{(\hat{\sigma}_{\tilde{x}}+\hat{\sigma}_{x})} \big| (\sigma_{\tilde{x}}^2-\sigma_x^2)-(\hat{\sigma}_{\tilde{x}}^2-\hat{\sigma}_{x}^2)\big|+\Big|\left(\frac{1}{(\sigma_{\tilde{x}}+\sigma_{x})}-\frac{1}{(\hat{\sigma}_{\tilde{x}}+\hat{\sigma}_{x})}\right)(\sigma_{\tilde{x}}^2-\sigma_x^2)\Big|\\
\lesssim\ & \sqrt{t_1}\varsigma_n^{-1/2}\left(\big|(\sigma_{\tilde{x}}^2-\sigma_x^2)-(\hat{\sigma}_{\tilde{x}}^2-\hat{\sigma}_{x}^2)\big|+\Big|\frac{(\hat{\sigma}_{\tilde{x}}+\hat{\sigma}_{x})}{(\sigma_{\tilde{x}}+\sigma_x)}-1\Big| \big|\sigma^2_{\tilde{x}}-\sigma^2_x\big|\right).
\end{align*}
Using an analogous argument as in the verification of Assumption B.\ref{assumption2.4}, we obtain
\begin{align*}
|(\sigma_x^{2}-\hat{\sigma}_x^{2})-(\sigma_{\tilde{x}}^{2}-\hat{\sigma}_{\tilde{x}}^{2})|&=|(g(x)-g(\tilde{x}))^T(\Sigma_n-\hat{\Sigma}_n) (g(x)+g(\tilde{x}))|\\
&\le \|(\Sigma_n-\hat{\Sigma}_n)(g(x)-g(\tilde{x}))\|_2\sup_{x\in I}\|g(x)\|_2\\
&\lesssim \varsigma_n\tilde{u}_n\|g(x)-g(\tilde{x})\|_2 
\end{align*}
with probability $1-o(1)$ where the last inequality holds due the order of the sparse eigenvalues in (\ref{sparse_ev}). Additionally,
\begin{align*}
\Big|\frac{(\hat{\sigma}_{\tilde{x}}+\hat{\sigma}_{x})}{(\sigma_{\tilde{x}}+\sigma_x)}-1\Big| \big|\sigma^2_{\tilde{x}}-\sigma^2_x\big|&\le \sup_{x\in I} \Big|\frac{\hat{\sigma}_x }{\sigma_x }-1\Big| \big|\sigma^2_{\tilde{x}}-\sigma^2_x\big|\\
&\lesssim\epsilon_n\varsigma_n\|g(x)-g(\tilde{x})\|_2
\end{align*}
with probability $1-o(1)$. Therefore, we obtain
\begin{align*}
I_{4,3}&\lesssim \sqrt{t_1}\varsigma_n^{-1/2}\epsilon_n\|g(x)-g(\tilde{x})\|_2\sup\limits_{\|v\|_2=1,\|v\|_0\le t_1}\Big\|v^T\hat{J}_{}^{-1}\psi(W,\hat{\theta}_{},\hat{\eta}_{})\Big\|_{\Pb_n,2}\\
&\quad +t_1^{3/2}\varsigma_n^{-1}\left(\epsilon_n\varsigma_n^{1/2}\vee \varsigma_n^{1/2}\tilde{u}_n\right)\|g(x)-g(\tilde{x})\|_2\sup\limits_{\|v\|_2=1,\|v\|_0\le t_1}\Big\|v^T\hat{J}_{}^{-1}\psi(W,\hat{\theta}_{},\hat{\eta}_{})\Big\|_{\Pb_n,2}\\
&\lesssim t_1^{3/2}\left(\epsilon_n\vee \tilde{u}_n\right) \|g(x)-g(\tilde{x})\|_2.
\end{align*}
Combining the steps above, we obtain
\begin{align*}
\|\psi_x(W)-\hat{\psi}_x(W)-(\psi_{\tilde{x}}(W)-\hat{\psi}_{\tilde{x}}(W))\|_{\Pb_n,2}\le \|g(x)-g(\tilde{x})\|_2\|\hat{F}_0\|_{\Pb_n,2}
\end{align*}
with 
\begin{align*}
\|\hat{F}_0\|_{\Pb_n,2}=o(1)
\end{align*}
due to the growth condition in Assumption A.\ref{A2}$(v)(b)$. Using the same argument as Theorem 2.7.11 from \cite{vanweak}, we obtain with probability $1-o(1)$
\begin{align*}
\log N(\varepsilon,\hat{\F}_0,\|\cdot\|_{\Pb_n,2})&\le\log N(\varepsilon\|\hat{F}_0\|_{\Pb_n,2},\hat{\F}_0,\|\cdot\|_{\Pb_n,2})\\
&\le \log N(\varepsilon,g(I),\|\cdot\|_{2})\\
&\le\bar{\varrho}_n\log\left(\frac{\bar{A}_n}{\varepsilon}\right)
\end{align*}
with $\bar{\varrho}_n=t_1$ and $\bar{A}_n\lesssim \tilde{A}_n$ by Assumption A.\ref{A1}$(i)$. Additionally, it holds
\begin{align*}
&\|\psi_x(W)-\hat{\psi}_x(W)\|_{\Pb_n,2}\\
=\ &\Big\|\sigma_x^{-1}g(x)^TJ_0^{-1}\psi(W,\theta_{0},\eta_{0})-\hat{\sigma}_x^{-1}g(x)^T\hat{J}^{-1}\psi(W,\hat{\theta}_{},\hat{\eta}_{})\Big\|_{\Pb_n,2}\\
\le\ &\sigma_x^{-1}\Big\|g(x)^T\Big(J_0^{-1}-\hat{J}^{-1}\Big)\psi(W,\theta_{0},\eta_{0})\Big\|_{\Pb_n,2}\\
&+\sigma_x^{-1}\Big\|g(x)^T\hat{J}^{-1}\Big(\psi(W,\theta_{0},\eta_{0})-\psi(W,\hat{\theta}_{},\hat{\eta}_{})\Big)\Big\|_{\Pb_n,2}\\
&+|\sigma_x^{-1}-\hat{\sigma}_x^{-1}|\Big\|g(x)^T\hat{J}^{-1}\psi(W,\hat{\theta}_{},\hat{\eta}_{})\Big\|_{\Pb_n,2}\\
\lesssim\ &\sqrt{t_1}\tilde{\epsilon}_n+\sqrt{\frac{t_1^3\varsigma_ns\log(\bar{d}_n)}{n}}+\sqrt{t_1}\epsilon_n\\
\lesssim\ &\Bigg(\frac{t_1s_2\varsigma_n\log(\bar{d}_n\varsigma_n^{1/2})}{n}+\left(n^{\frac{1}{q}}\frac{t_1s_2\varsigma_n\log(\bar{d}_n\varsigma_n^{1/2})}{n}\right)^2+\frac{t_1^3\varsigma_ns\log(\bar{d}_n)}{n}\\
&+n^{\frac{3}{q}}\frac{t_1^4\varsigma_n\log(d_1)}{n}+\frac{t_1^5s\varsigma_n\log(\bar{d}_n)}{n}\Bigg)^{1/2}\\
\lesssim & \ \sqrt{\frac{t_1^5s\varsigma_n\log(\bar{d}_n)}{n}}
\end{align*}
for q large enough by \eqref{J2Rate}, \eqref{Jconvergence} and using analogous arguments as above. Therefore, B.\ref{assumption2.3}$(iii)$ holds with
$$\bar{\delta}_n\lesssim\sqrt{\frac{t_1^5s\varsigma_n\log(\bar{d}_n)}{n}}$$

To complete the proof, we verify all growth conditions from Assumptions B.\ref{assumption2.3} and B.\ref{assumption2.4}. As shown in the verification of B.\ref{assumption2.2}$(vi)$, it holds
\begin{align*}
t_1^2\delta_n^2\varrho_n\log(A_n)=t_1^3n^{2/q-1}s^2\varsigma_n\log^2(\bar{d}_n) \log(\tilde{A}_n\varsigma_n^{-1/2})=o(1).
\end{align*}
Additionally,
\begin{align*}
n^{-\frac{1}{7}}L_n^{\frac{2}{7}}\varrho_n\log(A_n)=\left(\frac{t_1^{16}\varsigma_n^3\log^7(\tilde{A}_n\varsigma_n^{-1/2})}{n}\right)^{1/7}=o(1)
\end{align*}
and 
\begin{align*}
n^{\frac{2}{3q}-\frac{1}{3}}L_n^{\frac{2}{3}}\varrho_n\log(A_n)=\left(n^{\frac{2}{q}}\frac{t_1^{12}\varsigma_n^3\log^3(\tilde{A}_n\varsigma_n^{-1/2})}{n}\right)^{1/3}=o(1)
\end{align*}
for $q$ large enough due to growth condition in Assumption A.\ref{A2}$(v)(c)$. Note that 
\begin{align*}
\varepsilon_n\varrho_n\log(A_n)&=\varepsilon_n t_1\log(\tilde{A}_n\varsigma_n^{-1/2})=t_1^2\tilde{u}_n\log(\tilde{A}_n\varsigma_n^{-1/2})\\
&=t_1^2\sqrt{\frac{t_1^2s\varsigma_n\log(\bar{d}_n)}{n}}\log(\tilde{A}_n\varsigma_n^{-1/2})\\
&=\sqrt{\frac{t_1^6s\varsigma_n\log(\bar{d}_n)\log^2(\tilde{A}_n\varsigma_n^{-1/2})}{n}}=o(1). 
\end{align*}
Hence, to conclude we need to show that
\begin{align*}
\bar{\delta}^2_n\bar{\varrho}_n\varrho_n\log(\bar{A}_n)\log(A_n)=\bar{\delta}_n^2 t_1^2\log(\tilde{A}_n)\log(\tilde{A}_n\varsigma_n^{-1/2})=o(1).
\end{align*}
which holds true since 
\begin{align*}
\bar{\delta}_n^2 t_1^2\log(\tilde{A}_n\varsigma_n^{-1/2})\log(\tilde{A}_n)
\lesssim \frac{t_1^7s\varsigma_n\log(\bar{d}_n)\log(\tilde{A}_n\varsigma_n^{-1/2})\log(\tilde{A}_n)}{n}
=o(1)
\end{align*}
due to Assumption A.\ref{A2}$(v)(b)$.

\end{proof}
\begin{appendices}
\section{Uniformly valid confidence bands}\label{AppendixA}

As in \cite{belloni2018uniformly}, we consider the problem of estimating the set of parameters $\theta_{0,l}$ for $l=1,\dots,d_1$ in the moment condition model,
\begin{align}\label{momentcondition}
\E[\psi_{l}(W,\theta_{0,l},\eta_{0,l})]=0, \qquad l=1,\dots,d_1,
\end{align} 
where $W$ is a random variable, $\psi_l$ a known score function, $\theta_{0,l}\in \Theta_l$ a scalar of interest, and $\eta_{0,l}\in T_l$ a high-dimensional nuisance parameter where $T_l$ is a convex set in a normed space equipped with a norm $\|\cdot\|_e$. Let $\mathcal{T}_l$ be some subset of $T_l$, which contains the nuisance estimate $\hat{\eta_l}$ with high probability. Belloni et al. (2018) provide an appropriate estimator $\hat{\theta}_{l}$ and are able to construct simultaneous confidence bands for $(\theta_{0,l})_{l=1,\dots,d_1}$ where $d_1$ may increase with sample size $n$. In this section, we are particularly interested in the linear functional
\begin{align*}
G(x)=\sum\limits_{l=1}^{d_1}\theta_{0,l}g_l(x),
\end{align*}
where $(g_l)_{l=1,\dots,d_1}$ is a given set of functions with
\begin{align*}
g_l: I\subseteq\R\rightarrow \R, \qquad l=1,\dots,d_1. 
\end{align*}
We assume that the score functions $\psi_l$ are constructed to satisfy the near-orthogonality condition, namely
\begin{align}\label{nearorthogonality}
D_{l,0}[\eta,\eta_{0,l}]&:=\partial_t\big\{\E[\psi_l(W,\theta_{0,l},\eta_{0,l}+t(\eta-\eta_{0,l}))]\big\}\big|_{t=0}=O\left(\delta_n \varsigma_n^{-1/2} n^{-1/2}\right),
\end{align}
where $\partial_t$ denotes the derivative with respect to $t$. The sequences of positive constants $\varsigma_n^{-1}$  and $(\delta_n)_{n\ge 1}$ converge to zero and are defined in Assumption $B$. We aim to construct uniform valid confidence bands for the target function $G(x)$, namely 
\begin{align*}
P(\hat{l}(x)\le G(x)\le\hat{u}(x), \forall x\in I)\rightarrow 1-\alpha.
\end{align*}
Let $\hat{\eta}_l=\left(\hat{\eta}_l^{(1)},\hat{\eta}_l^{(2)}\right)$ be an estimator of the nuisance function. The estimator $\hat{\theta}_0$ of the target parameter
$$\theta_0=(\theta_{0,1},\dots,\theta_{0,d_1})^T$$
is defined as the solution of
\begin{align}\label{estimator_app}
\sup\limits_{l=1,\dots,d_1}\left\{\left|\mathbb{E}_n^{}\Big[\psi_{l}\big(W,\hat{\theta}_{l},\hat{\eta}_{l}\big)\Big]\right|-\inf_{\theta\in\Theta_{l}}\left|\mathbb{E}_n^{}\Big[\psi_{l}\big(W,\theta,\hat{\eta}_{l}\big)\Big]\right|\right\}\le\epsilon_{n},
\end{align}
where $\epsilon_{n}=O\left(\delta_n\varsigma_n^{-1/2}n^{-1/2}\right)$ is the numerical tolerance. Let
\begin{align*}
g(x)=\left(g_1(x),\dots,g_{d_1}(x)\right)^T\in\R^{d_1\times 1}
\end{align*}
and
\begin{align*}
\psi(W,\theta,\eta)=\left(\psi_1(W,\theta,\eta),\dots,\psi_{d_1}(W,\theta,\eta)\right)^T\in\R^{d_1\times 1}.
\end{align*}
Define the Jacobian matrix
\begin{align*}
J_0:=\frac{\partial}{\partial\theta}\E[\psi(W,\theta,\eta_0)]\bigg|_{\theta=\theta_0}=\diag\left(J_{0,1},\dots,J_{0,d_1}\right)\in \mathbb{R}^{d_1\times d_1}
\end{align*}
and the approximate covariance matrix
\begin{align*}
\Sigma_n:&=J_0^{-1}\E\big[\psi(W,\theta_0,\eta_0)\psi(W,\theta_0,\eta_0)^T\big](J_0^{-1})^T\in \mathbb{R}^{d_1\times d_1}.
\end{align*}
Additionally, define
\begin{align*}
\mathcal{S}_n:=\E\left[\sup\limits_{l=1,\dots,d_1}\left|\sqrt{n}\E_n\left[\psi_l(W,\theta_{0,l},\eta_{0,l})\right]\right|\right]
\end{align*}
and
$$t_1:= \sup\limits_{x\in I} \|g(x)\|_0.$$
The definition of $t_1$ is helpful if the functions $g_l$, $l=1,\dots,d_1$ are local in the sense that for any point $x$ in $I$ there are at most $t_1\ll d_1$ non-zero functions. We state the conditions needed for the uniformly valid confidence bands.
\begin{assumptionb}\label{assumption2.0}
It holds
\begin{itemize}
\item[(i)]  $\frac{1}{\sqrt{t_1}}\lesssim \inf\limits_{x\in I} \|g(x)\|_2\le C<\infty\quad$ and  $\quad\sup_{x\in I} \sup_{l=1,\dots,d_1}|g_l(x)|\le C$.
\item[(ii)] Further,
$$ c \varsigma_n\le \inf\limits_{\|\xi\|_2=1} \xi^T\Sigma_n\xi\le \sup\limits_{\|\xi\|_2=1} \xi^T\Sigma_n\xi \le C \varsigma_n.$$
\end{itemize}
\end{assumptionb}
\noindent
It is worth noting that Assumption B.\ref{assumption2.0} is more general as in \cite{belloni2018uniformly} because we allow the eigenvalue to diverge at a given rate $\varsigma_n$. This is needed since we allow the number of function $g_l$, $l=1,\dots,d_1$, to grow with sample since. Typically, it holds $\varsigma_n\approx d_1$. As the proof of our main result in this section relies on the techniques in  \cite{belloni2018uniformly}, we try formulate the following conditions as similar as possible to make the use of their methodology transparent. 
\begin{assumptionb}\label{assumption2.1}
For all $n\ge n_0, P\in \mathcal{P}_n$ and $l\in\{1,\dots,d_1\}$, the following conditions hold:
\begin{itemize}
\item[(i)] The true parameter value $\theta_{0,l}$ obeys (\ref{momentcondition}), and $\Theta_{l}$ contains a ball of radius $C_0n^{-1/2}\mathcal{S}_n\varsigma_n\log(n)$ centered at $\theta_{0,l}$.
\item[(ii)] The map $(\theta_l,\eta_l)\mapsto \E[\psi_l(W,\theta_l,\eta_l)]$ is twice continuously Gateaux-differentiable on $\Theta_l\times\mathcal{T}_l$.
\item[(iii)] The score function $\psi_l$ obeys the near orthogonality condition (\ref{nearorthogonality}) for the set $\mathcal{T}_l\subset T_l$.
\item[(iv)] For all $\theta_l\in\Theta_l$, $|\E[\psi_l(W,\theta_l,\eta_{0,l})]|\ge 2^{-1} |J_{0,l}(\theta_l-\theta_{0,l})|\wedge c_0$, where $J_{0,l}$ satisfies $c_0\varsigma_n^{-1}\le |J_{0,l}|\le C_0\varsigma_n^{-1}$.
\item[(v)] For all $r\in [0,1)$, $\theta_l\in\Theta_l$ and $\eta_l\in \mathcal{T}_l$
\begin{itemize}
\item[(a)] $\E[(\psi_l(W,\theta_l,\eta_l)-\psi_l(W,\theta_{0,l},\eta_{0,l}))^2]\le C_0(|\theta_l-\theta_{0,l}|\varsigma_n^{-1/2}\vee\|\eta_l-\eta_{0,l}\|_e)^2$
\item[(b)]$|\partial_r\E[\psi_l(W,\theta_l,\eta_{0,l}+r(\eta_l-\eta_{0,l}))]|\le \varsigma_n^{-1/2}\|\eta_l-\eta_{0,l}\|_e$
\item[(c)] $|\partial^2_r\E[\psi_l(W,\theta_{0,l}+r(\theta_l-\theta_{0,l}),\eta_{0,l}+r(\eta_l-\eta_{0,l}))]|\le C(|\theta_l-\theta_{0,l}|^2\varsigma_n^{-1}\vee\|\eta_l-\eta_{0,l}\|_e^2).$
\end{itemize}
\end{itemize}
\end{assumptionb}
\noindent
Note that the notation $\E$ abbreviates $\E_P$. For a detailed discussion about the ideas and intuitions of these and the following assumptions, see \cite{belloni2018uniformly}.\\
Let $(\Delta_n)_{n\ge 1}$ and $(\tau_n)_{n\ge 1}$ be some sequences of positive constants converging to zero. Also, let $(a_n)_{n\ge 1}$ and $(\upsilon_n)_{n\ge 1}$ be some sequences of positive constants, possibly growing to infinity where $a_n\ge n$ and $\upsilon \ge 1$ for all $ n \ge 1$. Finally, let $q\ge 2$ be some constant.
\begin{assumptionb}\label{assumption2.2}
For all $n\ge n_0$ and $P\in \mathcal{P}_n$, the following conditions hold:
\begin{itemize}
\item[(i)] With probability at least $1-\Delta_n$, we have $\hat{\eta}_l\in\mathcal{T}_l$ for all $l=1,\dots,d_1$.
\item[(ii)] For all $l=1,\dots,d_1$ and $\eta_l\in \mathcal{T}_l$, it holds $\|\eta_l-\eta_{0,l}\|_e\le \tau_n$.
\item[(iii)] For all $l=1,\dots,d_1$, we have $\eta_{0,l}\in \mathcal{T}_l$.
\item[(iv)] The function class $\mathcal{F}_1=\{\psi_l(\cdot,\theta_l,\eta_l):\l=1,\dots,d_1,\theta_l\in\Theta_l,\eta_l\in\mathcal{T}_l\}$ is suitably measurable and its uniform entropy numbers obey
\begin{align*}
\sup\limits_Q\log N(\epsilon\|F_1\|_{Q,2},\mathcal{F}_1,\|\cdot\|_{Q,2})\le \upsilon_n\log(a_n/\epsilon),\quad \text{for all } 0<\epsilon\le 1,
\end{align*}
where $F_1$ is a measurable envelope for $\mathcal{F}_1$ that satisfies $\|F_1\|_{P,q}\le C$.
\item[(v)] For all $f\in \mathcal{F}_1$, we have $c_0\varsigma_n^{-1/2}\le \|f\|_{P,2}\le C_0\varsigma_n^{-1/2}$.
\item[(vi)] The complexity characteristics $a_n$ and $\upsilon_n$ satisfy
\begin{itemize}
\item[(a)] $(\upsilon_n \log(a_n)/n)^{1/2}\le C_0\tau_n,$
\item[(b)] $(\tau_n+\mathcal{S}_n\log(n)/\sqrt{n})(\upsilon_n\log(a_n))^{1/2}+n^{-1/2+1/q}\upsilon_n \log(a_n)\le \delta_n \varsigma_n^{-1/2},$
\item[(c)] $n^{1/2}\tau_n^2\le \delta_n \varsigma_n^{-1/2}.$
\end{itemize}
\end{itemize}
\end{assumptionb}
\noindent
Whereas the Assumptions B.\ref{assumption2.1} and B.\ref{assumption2.2} are closely related to the Assumptions 2.1 and 2.2 from \cite{belloni2018uniformly}, the analogs to their Assumptions 2.3 and 2.4 need more modifications to fit our setting in order to construct a uniformly valid confidence band for the linear functional $G(x)$. In this context, define 
$$\psi_x(\cdot):=(g(x)^T\Sigma_n g(x))^{-1/2}g(x)^TJ_0^{-1}\psi(\cdot,\theta_{0},\eta_{0})$$
and the corresponding plug-in estimator
$$\hat{\psi}_x(\cdot):=(g(x)^T\hat{\Sigma}_n g(x))^{-1/2}g(x)^T\hat{J}_0^{-1}\psi(\cdot,\hat{\theta}_{0},\hat{\eta}_{0}).$$
Let $(\bar{\delta}_n)_{n\ge 1}$ be a sequence of positive constants converging to zero. Also, let $(\varrho_n)_{n\ge 1}$, $(\bar{\varrho}_n)_{n\ge 1}$, $(A_n)_{n\ge 1}$, $(\bar{A}_n)_{n\ge 1}$, and $(L_n)_{n\ge 1}$ be some sequences of positive constants, possibly growing to infinity where $\varrho \ge 1$, $A_n \ge n$, and $\bar{A}_n\ge n$ for all $n\ge 1$. In addition, assume that $q > 4$.
\begin{assumptionb}\label{assumption2.3}
For all $n\ge n_0$ and $P\in \mathcal{P}_n$, the following conditions hold:
\begin{itemize}
\item[(i)] The function class $\mathcal{F}_0=\{\psi_x(\cdot):x\in I\}$ is suitably measurable and its uniform entropy numbers obey
\begin{align*}
\sup\limits_Q\log N(\varepsilon\|F_0\|_{Q,2},\mathcal{F}_0,\|\cdot\|_{Q,2})\le \varrho_n\log(A_n/\varepsilon),\quad \text{for all } 0<\epsilon\le 1,
\end{align*}
where $F_0$ is a measurable envelope for $\mathcal{F}_0$ that satisfies $\|F_0\|_{P,q}\le L_n$.
\item[(ii)] For all $f\in \mathcal{F}_0$ and $k=3,4$, we have $\E[|f(W)|^k]\le C_0 L_n^{k-2}$.
\item[(iii)] The function class $\hat{\mathcal{F}}_0=\{\psi_x(\cdot)-\hat{\psi}_x(\cdot): x\in I\}$ satisfies with probability $1-\Delta_n$: 
\begin{align*}
\log N(\varepsilon,\hat{\mathcal{F}}_0,\|\cdot\|_{\mathbb{P}_n,2})\le \bar{\varrho}_n\log(\bar{A}_n/\varepsilon),\quad \text{for all } 0<\epsilon\le 1,
\end{align*}
and $\|f\|_{\mathbb{P}_n,2}\le \bar{\delta}_n$ for all $f\in\hat{\mathcal{F}}_0$.
\item[(iv)] $t_1^2\delta_n^2\varrho_n\log(A_n)=o(1)$, $L_n^{2/7}\varrho_n\log(A_n) =o(n^{1/7})$ and $L_n^{2/3}\varrho_n\log(A_n)=o(n^{1/3-2/(3q)})$.
\end{itemize}
\end{assumptionb}
\noindent
Additionally, we need to be able to estimate the variance of the linear functional sufficiently well. Let $\hat{\Sigma}_n$ be an estimator of $\Sigma_n$.
\begin{assumptionb}\label{assumption2.4}
 For all $n\ge n_0$ and $ P\in\mathcal{P}_n$, it holds
\begin{align*}
P\left(\sup\limits_{x\in I} \left|\frac{(g(x)^T\hat{\Sigma}_n g(x))^{1/2}}{(g(x)^T\Sigma_n g(x))^{1/2}}-1\right|>\varepsilon_n\right)\le \Delta_n,
\end{align*}
where $\varepsilon_n\varrho_n\log(A_n)=o(1)$ and $\bar{\delta}_n^2\bar{\varrho}_n\varrho_n\log(\bar{A}_n)\log(A_n)=o(1)$.
\end{assumptionb}
\noindent
As in \cite{chernozhukov2013gaussian}, we employ the Gaussian multiplier bootstrap method to estimate the relevant quantiles. Let
\begin{align*}
\hat{\mathcal{G}}=\left(\hat{\mathcal{G}}_x\right)_{x\in I}=\left(\frac{1}{\sqrt{n}}\sum\limits_{i=1}^n \xi_i\hat{\psi}_x(W_i) \right)_{x\in I},
\end{align*}
where $(\xi_i)_{i=1}^n$ are independent standard normal random variables (especially independent from $(W_i)_{i=1}^n$). Define the multiplier bootstrap critical value $c_\alpha$ as the $(1-\alpha)$ quantile of the conditional distribution of $\sup_{x\in I}|\hat{\mathcal{G}}_x|$ given $(W_i)_{i=1}^n$.
\begin{theorem}\label{uniformappendix}
Define 
\begin{align*}
\hat{u}(x)&:= \hat{G}(x)+\frac{(g(x)^T\hat{\Sigma}_n g(x))^{1/2}c_\alpha}{\sqrt{n}}\\
\hat{l}(x)&:= \hat{G}(x)-\frac{(g(x)^T\hat{\Sigma}_n g(x))^{1/2}c_\alpha}{\sqrt{n}}
\end{align*}
with $\hat{G}(x)=g(x)^T\hat{\theta}_0$. Under the Assumptions B.\ref{assumption2.0} - B.\ref{assumption2.4}, it holds
\begin{align*}
P\left(\hat{l}(x)\le G(x)\le \hat{u}(x), \forall x\in I\right)\to 1-\alpha
\end{align*}
uniformly over $P\in\mathcal{P}_n$.
\end{theorem}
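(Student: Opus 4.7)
The plan is to prove coverage by reducing to a sup-norm Gaussian approximation of the normalized process $x\mapsto \sqrt n(\hat G(x)-G(x))/(g(x)^T\Sigma_n g(x))^{1/2}$ and matching it to the multiplier bootstrap analogue. The four-step structure mirrors the main coupling argument used in \cite{belloni2018uniformly}, but promoted from the pointwise scalar setting to a functional one via the $t_1$-sparsity of $g$.

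First, under Assumptions B.\ref{assumption2.1} and B.\ref{assumption2.2} the coordinate-wise Bahadur representation of Belloni et al.\ (2018) applies and yields, uniformly in $l=1,\dots,d_1$,
\begin{align*}
\sqrt n(\hat\theta_l-\theta_{0,l}) = -J_{0,l}^{-1}\,\mathbb G_n\psi_l(\cdot,\theta_{0,l},\eta_{0,l}) + R_{n,l},
\end{align*}
with $\max_l |R_{n,l}|$ small (of order $\delta_n$ up to logarithmic factors). Multiplying by $g_l(x)$, summing, and dividing by $\sigma_x:=(g(x)^T\Sigma_n g(x))^{1/2}$, one obtains
\begin{align*}
\frac{\sqrt n(\hat G(x)-G(x))}{\sigma_x} = -\mathbb G_n\psi_x + r_n(x),
\end{align*}
where $\sup_{x\in I}|r_n(x)|$ is controlled using $\|g(x)\|_0\le t_1$, $\sup_x\|g(x)\|_\infty\le C$ and the lower eigenvalue bound on $\Sigma_n$ (Assumption B.\ref{assumption2.0}(iii)); the growth conditions in Assumption B.\ref{assumption2.3}(iv) (in particular $t_1^2\delta_n^2\varrho_n\log(A_n)=o(1)$) make this remainder negligible compared with the anti-concentration modulus of $\sup_x|\mathbb G_n\psi_x|$.

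Second, I apply the high-dimensional Gaussian coupling of Chernozhukov--Chetverikov--Kato to the empirical process $\{\mathbb G_n\psi_x:x\in I\}$. The uniform entropy and envelope moment bounds for $\mathcal F_0$ in Assumption B.\ref{assumption2.3}(i)--(ii) together with the growth rates in B.\ref{assumption2.3}(iv) place us in the regime where there is a centered Gaussian process $\tilde G$ with covariance $\mathrm{Cov}(\psi_x(W),\psi_{x'}(W))$ such that $\sup_x|\mathbb G_n\psi_x|$ matches $\sup_x|\tilde G_x|$ in Kolmogorov distance, with a remainder small against the Gaussian anti-concentration modulus $(\varrho_n\log(A_n))^{1/2}$. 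The same coupling, conditional on the data, applied to the multiplier process $\hat{\mathcal G}_x$ yields a Gaussian approximation to $\sup_x|\hat{\mathcal G}_x|$. Assumption B.\ref{assumption2.3}(iii), controlling the entropy of $\hat{\mathcal F}_0$ by $\bar\varrho_n$ and the $L^2(\mathbb P_n)$-fluctuation by $\bar\delta_n$, and Assumption B.\ref{assumption2.4} allow one to replace $\hat\psi_x$ by $\psi_x$ throughout the bootstrap, with combined error governed by $\bar\delta_n^2\bar\varrho_n\varrho_n\log(\bar A_n)\log(A_n)=o(1)$. Hence the conditional quantile $c_\alpha$ converges to the quantile of $\sup_x|\tilde G_x|$.

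Finally, the coverage event is equivalent to $\sup_{x\in I}\sqrt n|\hat G(x)-G(x)|/(g(x)^T\hat\Sigma_n g(x))^{1/2}\le c_\alpha$. Using Assumption B.\ref{assumption2.4} to replace $\hat\sigma_x$ by $\sigma_x$ uniformly (the multiplicative error $\varepsilon_n$ satisfies $\varepsilon_n\varrho_n\log(A_n)=o(1)$), substituting the linearization from the first step, and invoking the couplings from the second step together with Gaussian anti-concentration, gives coverage $1-\alpha$ uniformly in $P\in\mathcal P_n$. I expect the main obstacle to lie in the bootstrap coupling: one must propagate the perturbation $\hat\psi_x-\psi_x$ through the conditional Gaussian approximation and ensure that all approximation errors (Bahadur remainder, empirical-to-Gaussian coupling, bootstrap-to-Gaussian coupling, variance replacement) are simultaneously smaller than the anti-concentration scale $(\varrho_n\log(A_n))^{-1/2}$. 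The growth conditions in Assumptions B.\ref{assumption2.3}(iv) and B.\ref{assumption2.4}, together with the $t_1$-sparsity entering the linearization step, are precisely calibrated to balance these four competing error terms.
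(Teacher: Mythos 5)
Your proposal is correct and follows essentially the same route as the paper: a uniform Bahadur representation with remainder $O_P(t_1\delta_n)$ obtained by summing the coordinate-wise linearization of \cite{belloni2018uniformly} against the $t_1$-sparse, bounded vector $g(x)$ and normalizing by $\sigma_x$ (bounded below via Assumption B.\ref{assumption2.0}), followed by the Gaussian and multiplier-bootstrap couplings of Corollaries 2.1--2.2 of \cite{belloni2018uniformly}, with Assumptions B.\ref{assumption2.3} and B.\ref{assumption2.4} supplying exactly the entropy, envelope, variance-replacement and growth conditions you invoke. The paper simply cites those corollaries as a black box where you unpack their internal coupling and anti-concentration steps; the substance is identical.
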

\begin{proof} Since Theorem 2.1 in \cite{belloni2018uniformly} is not directly applicable to our problem, we have to modify the proof to obtain a uniform Bahadur representation. We want to prove that 
\begin{align}\label{bahadur}
\sup\limits_{x\in I}\Big|\sqrt{n}(g(x)^T\Sigma_n g(x))^{-1/2} g(x)^T\big(\hat{\theta}-\theta_0\big)\Big|=\sup\limits_{x\in I}\Big|\mathbb{G}_n(\psi_x)\Big| + O_P(t_1\delta_n),
\end{align}
where $\delta_n$ is a sequence of positive constants converging to zero. It is worth noting that for some kind of approximations functions, e.g. B-Splines, we can get rid of $t_1$ on the right hand side of the equation (due to partition of unity). Assumptions B.\ref{assumption2.1} and B.\ref{assumption2.2} contain Assumptions 2.1 and 2.2 from \cite{belloni2018uniformly} which enables us to adapt their proof to show that
\begin{align}\label{bahadur_priliminary}
\sup\limits_{l=1,\dots,d_1}\left|J_{0,l}^{-1}\sqrt{n}\E_n\left[\psi_l(W,\theta_{0,l},\eta_{0,l})\right]+\sqrt{n}\left(\hat{\theta}_l-\theta_{0,l}\right)\right|=O_P(\delta_n\varsigma^{1/2}).
\end{align} 
\textbf{Step 1:} 
We first show that $\sup_{l=1,\dots,d_1}|\hat{\theta}_l-\theta_{0,l}|\le \varsigma_n^{1/2}\tau_n$ with probability $1-o(1)$. By the definition of $\hat{\theta}$, we have
\begin{align*}
    \left|\E_n\left[\psi_l(W,\hat{\theta}_l,\hat{\eta_{l}})\right]\right|\le \inf_{\theta\in\Theta_l} \left|\E_n\left[\psi_l(W,\theta,\hat{\eta_{l}})\right]\right|+\epsilon_n,
\end{align*}
which implies
\begin{align*}
    \left|\E\left[\psi_l(W,\theta,\eta_{l})\right]\big|_{\theta=\hat{\theta}_l}\right|\le\epsilon_n+2I_1+2I_2\le \varsigma_n^{-1/2}\tau_n,
\end{align*}
where
\begin{align*}
  I_1:=\sup_{l=1,\dots,d_1,\theta\in\Theta_l}\left| \E_n\left[\psi_l(W,\theta,\hat{\eta_{l}})\right]-\E_n\left[\psi_l(W,\theta,\eta_{l})\right]\right|\le \varsigma_n^{-1/2}\tau_n
\end{align*}
and
\begin{align*}
       I_2:=\sup_{l=1,\dots,d_1,\theta\in\Theta_l}\left| \E_n\left[\psi_l(W,\theta,\eta_{l})\right]-\E\left[\psi_l(W,\theta,\eta_{l})\right]\right|\le \varsigma_n^{-1/2}\tau_n.
\end{align*}
The bounds on $I_1$ and $I_2$ are derived in Step 2. Note that by Assumption B.\ref{assumption2.2} (iv), we conclude that
\begin{align*}
    \sup_{l=1,\dots,d_1}|\hat{\theta}_l-\theta_{0,l}|\le  \left(\inf_{l=1,\dots,d_1}|J_l|\right)^{-1}\varsigma_n^{-1/2}\tau_n\le \varsigma_n^{1/2}\tau_n
\end{align*}
with probability $1-o(1)$.\\ \\
\textbf{Step 2:} It holds $I_1\le 2I_{1a}+I_{1b}$ and $I_2\le I_{1a}$ with
\begin{align*}
    I_{1a}:= \sup_{l=1,\dots,d_1,\theta\in\Theta_l,\eta\in\mathcal{T}_l}\left| \E_n\left[\psi_l(W,\theta,\eta)\right]-\E\left[\psi_l(W,\theta,\eta)\right]\right|
\end{align*}
and\begin{align*}
    I_{1b}:= \sup_{l=1,\dots,d_1,\theta\in\Theta_l,\eta\in\mathcal{T}_l}\left| \E\left[\psi_l(W,\theta,\eta)\right]-\E\left[\psi_l(W,\theta,\eta_{0,l})\right]\right|.
\end{align*}
To bound $I_{1b}$, we employ Taylor expansion:
\begin{align*}
    I_{1b}&\le \sup_{l=1,\dots,d_1,\theta\in\Theta_l,\eta\in\mathcal{T}_l,r\in [0,1)} \partial_r\E\left[\psi_l(W,\theta,\eta_{0,l}+r(\eta-\eta_{0,l}))\right]\\
    &\le \varsigma^{-1/2} \sup_{l=1,\dots,d_1,\eta\in\mathcal{T}_l}\|\eta-\eta_{0,l}\|\le \varsigma^{-1/2}\tau_n.
\end{align*}
To bound $I_{1a}$, we apply Lemma P.2 in \cite{belloni2018uniformly} to the class $\mathcal{F}_1$ defined in Assumption B.\ref{assumption2.3}:
\begin{align*}
    I_{1a}\le n^{-1/2}\left(\sqrt{\varsigma^{-1}\upsilon_n\log(a_n)}+n^{-1/2+1/q}\upsilon_n\log(a_n)\right)\le \varsigma^{-1/2}\tau_n
\end{align*}
with probability $1-o(1)$ by the growth conditions in Assumption B.\ref{assumption2.3}. Combining presented bounds gives the claim of this step.\\ \\
\textbf{Step 3:} In this step, we prove the claim in \eqref{bahadur_priliminary}. By the definition of $\hat{\theta}$, we have
\begin{align}\label{eqA4}
    \sqrt{n}\left|\E_n\left[\psi_l(W,\hat{\theta}_l,\hat{\eta_{l}})\right]\right|\le \inf_{\theta\in\Theta_l} \sqrt{n}\left|\E_n\left[\psi_l(W,\theta,\hat{\eta_{l}})\right]\right|+\sqrt{n}\epsilon_n
\end{align}
for fixed $l=1,\dots,d_1$. For any $\theta\in\Theta_l$ and $\eta\in\mathcal{T}_l$, we have
\begin{align}\label{eqA5}
    \sqrt{n}\left|\E_n\left[\psi_l(W,\theta,\eta)\right]\right|&=\sqrt{n}\E_n\left[\psi_l(W,\theta_{0,l},\eta_{0,l})\right]-\mathbb{G}_n\left[\psi_l(W,\theta_{0,l},\eta_{0,l})\right]\\
    &\quad - \sqrt{n}\left(\E\left[\psi_l(W,\theta_{0,l},\eta_{0,l})\right]-\E\left[\psi_l(W,\theta,\eta)\right]\right)+\mathbb{G}_n\left[\psi_l(W,\theta,\eta)\right]\nonumber
\end{align}
and by Taylor expansion
\begin{align*}
   &\quad \E\left[\psi_l(W,\theta,\eta)\right]-\E\left[\psi_l(W,\theta_{0,l},\eta_{0,l})\right]\\
   & =J_{0,l}(\theta-\theta_{0,l})+D_{l,0}[\eta,\eta_{0,l}]\\
   &\quad + \frac{1}{2}\partial_r^2\E\left[\psi_l(W,\theta_{0,l}+r(\theta-\theta_{0,l}),\eta_{0,l}+r(\eta-\eta_{0,l}))\right]\big|_{r=\bar{r}}
\end{align*}
for some $\bar{r}\in (0,1)$. Substituting this equality into \eqref{eqA5} for $\theta=\hat{\theta}$ and $\eta=\hat{\eta}$ using \eqref{eqA4} gives
\begin{align}
    &\quad \sqrt{n}\left|\E_n\left[\psi_l(W,\theta_{0,l},\eta_{0,l})\right]+ J_{0,l}(\hat{\theta}-\theta_{0,l})+D_{l,0}[\hat{\eta},\eta_{0,l}]\right|\le \epsilon_n\sqrt{n}\\
    &+ \inf_{\theta\in\Theta_l}\sqrt{n}\left|\E_n\left[\psi_l(W,\theta,\hat{\eta})\right]\right|+\left|II_1\right|+\left|II_2\right|,\nonumber
\end{align}
where
\begin{align*}
    II_1(l)=\sqrt{n}\sup_{r\in [0,1)}\left|\partial_r^2\E\left[\psi_l(W,\theta_{0,l}+r(\theta-\theta_{0,l}),\eta_{0,l}+r(\eta-\eta_{0,l}))\right]\right|
\end{align*}
and
\begin{align*}
    II_2(l)= \mathbb{G}_n\left[\psi_l(W,\theta,\eta)-\psi_l(W,\theta_{0,l},\eta_{0,l})\right]
\end{align*}
with $\theta=\hat{\theta}$ and $\eta=\hat{\eta}$. It will be shown in Step 4 that
\begin{align}\label{step4II}
    \sup_{l=1,\dots,d_1}|II_1(l)| + |II_2(l)|=O_p\left(\delta_n\varsigma_n^{-1/2}\right)
\end{align}
and in Step 5 that
\begin{align}\label{step5}
   \sup_{l=1,\dots,d_1}\inf_{\theta\in\Theta_l}\sqrt{n}\left|\E_n\left[\psi_l(W,\theta,\hat{\eta})\right]\right|=O_p\left(\delta_n\varsigma_n^{-1/2}\right).
\end{align}
Moreover, it holds $\epsilon_n\sqrt{n}=O\left(\delta_n\varsigma_n^{-1/2}\right)$ by the construction of the estimator. Also $\sup_{l=1,\dots,d_1}D_{l,0}[\hat{\eta},\eta_{0,l}]=O\left(n^{-1/2}\varsigma_n^{-1/2}\delta_n\right)$ by near Neyman orthogonality with probability $1-o(1)$. Therefore, Assumption B.$\ref{assumption2.1}$ gives
\begin{align}
   \sup_{l=1,\dots,d_1}\left|J_{0,l}^{-1} \sqrt{n}\E_n\left[\psi_l(W,\theta_{0,l},\eta_{0,l})\right]+ \sqrt{n}(\hat{\theta}-\theta_{0,l})\right|=O_p(\delta_n\varsigma_n^{1/2}).
\end{align}
\textbf{Step 4:}
Here, we prove that \eqref{step4II} holds. First, with probability $1-o(1)$, we have
\begin{align*}
    \sup_{l=1,\dots,d_1}|II_1(l)|=\sqrt{n}\sup_{l=1,\dots,d_1}\left(|\hat{\theta}-\theta_{0,l}|^2\varsigma_n^{-1}\vee \|\hat{\eta}_l-\eta_{0,l}\|_e^2\right)\lesssim\sqrt{n}\tau_n^2\le \delta_n\varsigma_n^{-1/2}
\end{align*}
using Assumption B.\ref{assumption2.1}, Assumption B.\ref{assumption2.2} (vi) and Step 1. Second, we have $$\sup_{l=1,\dots,d_1}|II_2(l)|\lesssim\sup_{f\in\mathcal{F}_2}\left|\mathbb{G}_n[f]\right|$$ with
\begin{align*}
\mathcal{F}_2:=\{\psi_l(W,\theta,\eta)-\psi_l(W,\theta_{0,l},\eta_{0,l}),l=1,\dots,d_1,\eta\in\mathcal{T}_l,|\theta-\theta_{0,l}|\le C\varsigma^{1/2}\tau_n\}
\end{align*}
for a sufficiently large constant $C$. To bound $\sup_{f\in\mathcal{F}_2}\left|\mathbb{G}_n[f]\right|$, we apply Lemma P.2 in \cite{belloni2018uniformly}. It holds
\begin{align*}
    \sup_{f\in\mathcal{F}_2}\|f\|_{P,2}^2&\le \sup_{l=1,\dots,d_1,|\theta-\theta_{0,l}|\le C\varsigma^{1/2}\tau_n,\eta\in\mathcal{T}_l}\E\left[(\psi_l(W,\theta,\eta)-\psi_l(W,\theta_{0,l},\eta_{0,l}))^2\right]\\
    &\le \sup_{l=1,\dots,d_1,|\theta-\theta_{0,l}|\le C\varsigma^{1/2}\tau_n,\eta\in\mathcal{T}_l} C\left(|\theta-\theta_{0,l}|\varsigma^{-1/2}\vee \|\eta-\eta_{0,l}\|_e\right)^2\\
    &\lesssim\tau_n^2
\end{align*}
due to Assumption B.\ref{assumption2.1}. Therefore, with probability $1-o(1)$, an application of Lemma P.2 in \cite{belloni2018uniformly} with an envelope $F_2=2F_1$ and $\sigma=C\tau_n$ for sufficiently large $C$ gives
\begin{align*}
    \sup_{f\in\mathcal{F}_2}\left|\mathbb{G}_n[f]\right|\lesssim \tau_n\sqrt{\upsilon_n\log (a_n)}+n^{-1/2+1/q}\upsilon_n\log (a_n)\lesssim \delta_n\varsigma^{-1/2}.
\end{align*}
Here, we used $\|F_2\|_{P,q}\le C$ and
\begin{align*}
\sup\limits_Q\log N(\epsilon\|F_2\|_{Q,2},\mathcal{F}_2,\|\cdot\|_{Q,2})\lesssim \upsilon_n\log(a_n/\epsilon),\quad \text{for all } 0<\epsilon\le 1
\end{align*}
by Assumption B.\ref{assumption2.2}.\\ \\
\textbf{Step 5:} Here, we prove \eqref{step5}. For all $l=1,\dots,d_1$, define $\bar{\theta}_l=\theta_l-J_{0,l}^{-1}\E_n[\psi_l(W,\theta_{0,l},\eta_{0,l})]$. Then, we have $\sup_{l=1,\dots,d_1}|\bar{\theta}_l-\theta_l|=O_p(\mathcal{S}_n\varsigma_nn^{-1/2})$ and thus $\bar{\theta}_l\in\Theta_l$ with probability $1-o(1)$ by Assumption B.\ref{assumption2.1} (i) and (iv). Hence,
\begin{align*}
   \inf_{\theta\in\Theta_l}\sqrt{n}\left|\E_n\left[\psi_l(W,\theta,\hat{\eta})\right]\right|\le \sqrt{n}\left|\E_n\left[\psi_l(W,\bar{\theta}_l,\hat{\eta})\right]\right|
\end{align*}
 for all $l=1,\dots,d_1$ with probability $1-o(1)$ and it suffices to show that 
 \begin{align*}
  \sqrt{n}\left|\E_n\left[\psi_l(W,\bar{\theta}_l,\hat{\eta})\right]\right|=O_p\left(\delta_n\varsigma_n^{-1/2}\right).
\end{align*}
We substitute $\theta=\bar{\theta}_l$ and $\eta=\hat{\eta}_l$ in \eqref{eqA5} and use our Taylor expansion which gives
\begin{align*}
    \sqrt{n}\left|\E_n\left[\psi_l(W,\bar{\theta}_l,\hat{\eta})\right]\right|&\le \left|\tilde{II}_1\right|+\left|\tilde{II}_2\right|+ \sqrt{n}\left|\E_n\left[\psi_l(W,\theta_{0,l},\eta_{0,l})\right]\right|\\
    &\quad +\sqrt{n}J_{0,l}(\bar{\theta}_l-\theta_{0,l})+\sqrt{n}D_{l,0}[\hat{\eta}_l,\eta_{0,l}].
\end{align*}
The terms $\tilde{II}_1$ and $\tilde{II}_2$ are essentially the same as $II_1$ and $II_2$ but $\hat{\theta}_l$ replaced by $\bar{\theta}_l$. Then, given $\sup_{l=1,\dots,d_1}|\bar{\theta}_l-\theta_{0,l}|\lesssim \mathcal{S}_n\varsigma_n\log(n)/\sqrt{n}$ with probability $1-o(1)$, the argument in Step 4 shows that
\begin{align*}
    \sup_{l=1,\dots,d_1}|\tilde{II}_1(l)| + |\tilde{II}_2(l)|=O_p\left(\delta_n\varsigma_n^{-1/2}\right).
\end{align*}
In addition $\E_n\left[\psi_l(W,\theta_{0,l},\eta_{0,l})\right]+J_{0,l}(\bar{\theta}_l-\theta_{0,l})=0$ by the definition of $\bar{\theta}_l$ and $D_{l,0}[\hat{\eta}_l,\eta_{0,l}]=O_p\left(\delta_n\varsigma_n^{-1/2}n^{-1/2}\right)$ by near Neyman orthogonality. This completes the proof of \eqref{step5} and therefore of \eqref{bahadur_priliminary}.\\ \\ \textbf{Step 6:} Next, we conclude our main results. Using Assumption B.\ref{assumption2.0}, this implies
\begin{align*}
&\ \sup\limits_{x\in I}\left|\sqrt{n}\E_n\left[g(x)^TJ_0^{-1}\psi(W,\theta_0,\eta_0)\right]+\sqrt{n}g(x)^T\big(\hat{\theta}-\theta_0\big)\right|\\
=&\ \sup\limits_{x\in I}\left|\sum\limits_{l=1}^{d_1}g_l(x)\left(J_{0,l}^{-1}\sqrt{n}\E_n\left[\psi_l(W,\theta_{0,l},\eta_{0,l})\right]+\sqrt{n}(\hat{\theta_l}-\theta_{0,l})\right)\right|\\
\le &\ t_1\underbrace{\sup\limits_{x\in I}\sup\limits_{l=1,\dots,d_1}|g_l(x)|}_{\le C}\sup\limits_{l=1,\dots,d_1}\left|J_{0,l}^{-1}\sqrt{n}\E_n\left[\psi_l(W,\theta_{0,l},\eta_{0,l})\right]+\sqrt{n}\left(\hat{\theta}_l-\theta_{0,l}\right)\right|\\
=&\ O_p\left(t_1\delta_n\varsigma^{1/2}\right).
\end{align*}
Due to Assumption B.\ref{assumption2.0}, it holds $\left(g(x)^T\Sigma_n g(x)\right)^{-1/2}=O_p(\varsigma^{-1/2})$ which implies (\ref{bahadur}).
By Assumption B.\ref{assumption2.4}, we have
$$P\left(\sup\limits_{x\in I} \left|\frac{(g(x)^T\hat{\Sigma}_n g(x))^{1/2}}{(g(x)^T\Sigma_n g(x))^{1/2}}-1\right|>\varepsilon_n\right)\le \Delta_n,$$
with $\Delta_n=o(1)$, which is an analogous version of the Assumption 2.4 from \cite{belloni2018uniformly}. Therefore, given the Assumptions  B.\ref{assumption2.1} - B.\ref{assumption2.4}, the proofs of Corollary 2.1 and 2.2 from \cite{belloni2018uniformly} can be applied implying the stated theorem.
\end{proof}

\section{Uniform nuisance function estimation}\label{uniformestimation}
\noindent 
To establish uniform estimation properties of the nuisance function, we consider the following linear regression model
$$ Y_r=\sum\limits_{j=1}^p \beta_{r,j} X_{r,j}+\varepsilon_{r}=\beta_r ^TX_r+\varepsilon_r,$$
where the true parameter obeys
$$\beta_r\in \arg\min_{\beta}\E[(Y_r-\beta^T X_r)^2],$$ with $\E[\varepsilon_rX_r]=0$ for each $r =1,\dots,d$.
We show that the lasso and post-lasso lasso estimators have sufficiently fast uniform estimation rates if the vector $\beta_r$ is sparse for all $r=1,\dots,d$. In this setting, $d=d_n$ is explicitly allowed to grow with $n$.
\subsection{Uniform lasso estimation}\label{uniformla}
Define the weighted lasso estimator
\begin{align*}
 \hat{\beta}_r\in \arg\min\limits_{\beta}\left(\frac{1}{2}\E_n\left[\left(Y_r-\beta^T X_r\right)^2\right]+\frac{\lambda}{n}\|\hat{\Psi}_{r,m}\beta\|_1\right)
\end{align*}
with the penalty level
\begin{align*}
\lambda = c_\lambda\sqrt{n}\Phi^{-1}\left(1-\frac{\gamma}{2pd}\right)
\end{align*}
for a suitable $c_\lambda >1$, $\gamma\in[1/n,1/\log(n)]$ and a fix $m\ge 0$. Define the post-regularized weighted least squares estimator as
\begin{align*}
\tilde{\beta}_r\in \arg\min\limits_{\beta}\left(\frac{1}{2}\E_n\left[\left(Y_r-\beta^T X_r\right)^2\right]\right):\quad \text{supp}(\beta)\subseteq \text{supp}(\hat{\beta}_r).
\end{align*}
The penalty loadings $\hat{\Psi}_{r,m}=\diag(\{\hat{l}_{r,j,m},j=1,\dots,p\})$ are defined by
$$\hat{l}_{r,j,0}=\max\limits_{1\le i\le n}||X_r^{(i)}||_\infty$$ 
for $m=0$ and for all $m\ge 1$ by the following algorithm:
\begin{algorithm}
	\caption{Penalty loadings} \label{penalg}
	\begin{enumerate}
	\item Set $\bar{m}=0$. Compute $\hat{\beta}_r$ based on $\hat{\Psi}_{r,\bar{m}}$.
	\item Set $\hat{l}_{r,j,\bar{m}+1}=\E_n\left[\left(\left(Y_r-\hat{\beta}_r^T X_r\right) X_{r,j}\right)^2\right]^{1/2}$.
	\item If $\bar{m}=m$ stop and report the current value of $\hat{\Psi}_{r,m}$, otherwise set $\bar{m}=\bar{m}+1$.
	\end{enumerate}
\end{algorithm}
\ \\
\noindent Let $a_n:=\max(p,n,d,e)$. In order to establish uniform convergence rates, the following assumptions are required to hold uniformly in $n\ge n_0$ and $P\in\mathcal{P}_n$:\\
\begin{assumptionc}\label{condC}
\item[(i)]
We have (almost surely) $$\max\limits_{r=1,\dots,d}\max\limits_{j=1,\dots,p}|X_{r,j}| \le CM_n \text{ and }
\max_{r=1,\dots,d} |\varepsilon_r|\le C M_n$$
with $M_n\ge 1$. Further 
$$c\le\E\left[\varepsilon_r^2 X_{r,j}^2\right]\le C,$$
uniformly for all  $r=1,\dots,d_n$ and
\begin{align*}
\max\limits_{r=1,\dots,d}\max\limits_{j=1,\dots,p}\|\varepsilon_r X_{r,j}\|_{P,3}&\le CK_{n},\\
    \max\limits_{r=1,\dots,d}\max\limits_{j=1,\dots,p} \E\left[\varepsilon_r^4X_{r,j}^4\right]&\le C L_n
\end{align*}
\item[(ii)] For all $r=1,\dots,d_n$, it holds
\begin{align*}
\inf\limits_{\|\xi\|_2=1} \E\left[(\xi X_r)^2\right]\ge c \text{,} \sup\limits_{\|\xi\|_2=1} \E\left[(\xi X_r)^2\right]\le C.
\end{align*}
\item[(iii)] 
The coefficients obey an approximate sparsity condition
$$\beta_r= \beta_r^{(1)} + \beta_r^{(2)}$$
with
$$ \max\limits_{r=1,\dots,d}\|\beta_r^{(1)}\|_0\le s$$
and
$$\max\limits_{r=1,\dots,d}\|\beta_r^{(2)}\|_1^2\le C\sqrt{\frac{s^2\log(a_n)}{nM_n}},\quad \max\limits_{r=1,\dots,d}\|\beta_r^{(2)}\|_2^2\le C\frac{s\log(a_n)}{n},$$
\item[(iv)] There exists a positive number $\tilde{q}>0$ such that the following growth condition is fulfilled:
\begin{align*}
n^{\frac{1}{\tilde{q}}}\frac{s M_n^4\log^{}(a_n)}{n}\vee
\frac{L_n\log^{}(a_n)}{n^{}}\vee
\frac{K_{n}^{6}\log^{3}(a_n)}{n^{}} =o(1).
\end{align*}

\end{assumptionc}
In contrast to Assumption C.\ref{condC}(i), \cite{klaassen2018uniform} assume an uniformly bounded Orlicz norm of the regressors and errors. This assumption excludes growing (with $n$) moments of the regressors, which is, e.g., necessary for some series estimators such as standardized B-splines. The Assumption C.\ref{condC} is similar to Assumption 6.1 in  \cite{belloni2017program}, who also consider bounded support of the regressors, but they rely on bounded empirical sparse eigenvalues. Further, they bound the third moments of $\varepsilon_r X_{r,j}$, which is not possible for standardized B-splines.
It is worth noting that $K_n \le C\M_n$.

\begin{theorem}\label{uniformlasso}
Under the Assumption C.\ref{condC}, the lasso estimator $\hat{\beta}_r$ obeys uniformly over all $P\in \mathcal{P}_n$ with probability $1-o(1)$
\begin{align}
\max\limits_{r=1,\dots,d}\|\hat{\beta}_r-\beta_r\|_2&\le C\sqrt{\frac{s\log(a_n)}{n}},\\ \max\limits_{r=1,\dots,d}\|\hat{\beta}_r-\beta_r\|_1&\le C\sqrt{\frac{s^2\log(a_n)}{n}}
\end{align}
and
\begin{align}
\max\limits_{r=1,\dots,d}\|\hat{\beta}_r\|_0\le C s.
\end{align}
Additionally, the post-lasso estimator $\tilde{\beta}_r$ obeys uniformly over all $P\in \mathcal{P}_n$ with probability $1-o(1)$
\begin{align}
\max\limits_{r=1,\dots,d}\|\tilde{\beta}_r-\beta_r\|_2&\le C\sqrt{\frac{s\log(a_n)}{n}},\\ \max\limits_{r=1,\dots,d}\|\tilde{\beta}_r-\beta_r\|_1&\le C\sqrt{\frac{s^2\log(a_n)}{n}}.
\end{align}
\end{theorem}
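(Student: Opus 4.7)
The plan is to reduce the statement to a uniform version of the standard lasso oracle inequality, essentially tracking through the steps of the single-regression analysis (e.g.\ Bickel--Ritov--Tsybakov, Belloni--Chernozhukov) while replacing every scalar event by its maximum over $r=1,\dots,d$. Concretely, I will (a) show that the penalty $\lambda/n$ dominates the ``noise'' uniformly in $r$, (b) show the iterated penalty loadings $\hat\Psi_{r,m}$ converge uniformly to ideal loadings, (c) verify a uniform restricted/sparse eigenvalue condition on the empirical Gram matrices, and (d) combine these to get an oracle bound whose right-hand side is $\lesssim\sqrt{s\log(a_n)/n}$ uniformly in $r$, absorbing the approximation error $a_r(X_r)$ via C.\ref{condC}$(v)$. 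Throughout I would invoke the uniform lasso machinery of \cite{klaassen2018uniform}, whose assumptions match C.\ref{condC} almost term-for-term.

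The first key step is the \emph{score bound}. Using the sub-exponential tails in C.\ref{condC}$(i)$ and the moment bound in C.\ref{condC}$(ii)$, the empirical process $\E_n[\varepsilon_r X_{r,j}]$ can be controlled by a maximal inequality (Lemma P.2 in \cite{belloni2018uniformly}) applied to the $pd$ products $\varepsilon_r X_{r,j}$: with probability $1-o(1)$,
\begin{equation*}
\max_{r,j}\frac{|\sqrt{n}\,\E_n[\varepsilon_r X_{r,j}]|}{\sqrt{\E[\varepsilon_r^2 X_{r,j}^2]}}\lesssim \Phi^{-1}\!\left(1-\tfrac{\gamma}{2pd}\right),
\end{equation*}
which, together with the lower bound $\min_{r,j}\E[\varepsilon_r^2 X_{r,j}^2]\ge c$, makes the prescribed $\lambda$ dominate $n\|\hat\Psi_{r,m}^{-1}\E_n[\varepsilon_r X_r]\|_\infty$ uniformly in $r$, provided the loadings $\hat\Psi_{r,m}$ are shown to be uniformly close to the ideal loadings $l_{r,j}=\sqrt{\E[\varepsilon_r^2 X_{r,j}^2]}$. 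The latter is done by induction on the iteration index $m$ in Algorithm~\ref{penalg}: at $m=0$, $\hat l_{r,j,0}=\max_i\|X_r^{(i)}\|_\infty$ is a crude but valid upper bound (via the Orlicz-norm union bound over $np$ variables); for $m\ge 1$, one uses the preliminary estimate $\hat\beta_r$ together with another maximal inequality to show $\max_{r,j}|\hat l_{r,j,m}^2-l_{r,j}^2|=o(1)$.

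Next I would establish a \emph{uniform sparse eigenvalue} bound for the empirical Gram matrices. The population eigenvalue condition C.\ref{condC}$(ii)$ plus sub-exponential tails C.\ref{condC}$(i)$ give, via a Rudelson-type bound or again Lemma P.2 in \cite{belloni2018uniformly} applied to rank-one products,
\begin{equation*}
\max_{r=1,\dots,d}\sup_{\|\xi\|_2=1,\ \|\xi\|_0\le C s}|\xi^T(\E_n[X_r X_r^T]-\E[X_r X_r^T])\xi|=o(1)
\end{equation*}
under the growth condition C.\ref{condC}$(iv)$. This transfers the population restricted eigenvalue condition to each empirical problem simultaneously. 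Armed with (i) the score event, (ii) the loadings convergence, and (iii) the uniform RE property, the usual lasso oracle inequality applied in each regression gives
\begin{equation*}
\|\hat\beta_r-\beta_r\|_2\lesssim \sqrt{s\log(a_n)/n},\qquad \|\hat\beta_r-\beta_r\|_1\lesssim \sqrt{s^2\log(a_n)/n},
\end{equation*}
and a sparsity bound $\|\hat\beta_r\|_0\lesssim s$ (via the standard argument comparing $\lambda\|\hat\beta_r\|_0$ to $\sqrt{\hat s}\cdot\sqrt{n}\cdot\|\hat\beta_r-\beta_r\|_2$); the approximation error $a_r(X_r)$ enters only through the $\|\cdot\|_{P,2}$ and $\E_n[\cdot]$ bounds of C.\ref{condC}$(v)$, which are already of the correct order $\sqrt{s\log(a_n)/n}$. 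All constants are independent of $P\in\mathcal{P}_n$ since every high-probability event above is controlled by $P$-free constants.

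For the post-lasso estimator $\tilde\beta_r$, I would follow Belloni--Chernozhukov: on the event that $\hat\beta_r$ is $Cs$-sparse and the sparse eigenvalues are bounded away from zero uniformly in $r$, the OLS restricted to $\mathrm{supp}(\hat\beta_r)$ has error bounded by the sum of (a) the OLS oracle rate on a set of size $Cs$ (which is $\sqrt{s\log(a_n)/n}$ after taking a maximum over all $\binom{p}{Cs}$ possible supports --- this costs only a $\log(a_n)$ factor absorbed in the rate) and (b) a bias term controlled by $\|\hat\beta_r-\beta_r\|_2$ via the sparse eigenvalue bound. The main obstacle, and where uniformity really bites, is step (c) above: the uniform sparse eigenvalue bound requires a careful maximal inequality over $d$ regressions simultaneously, and this is exactly the step where the growth condition C.\ref{condC}$(iv)$ with exponent $1+4/\rho$ becomes tight. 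Once this is in place, the rest is bookkeeping of constants.
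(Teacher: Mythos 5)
Your proposal is correct and follows essentially the same route as the paper: the paper likewise establishes uniform score domination via the maximal inequality (Lemma P.2) and the self-normalized penalty choice, controls the iterated loadings $\hat\Psi_{r,m}$ by induction on $m$ starting from the crude $m=0$ bound, obtains uniform sparse/restricted eigenvalue bounds (via Lemma Q.1 and the Bickel--Ritov--Tsybakov argument), and then reads off the lasso, sparsity, and post-lasso rates from the generic machinery of \cite{belloni2018uniformly} (Lemmas M.1--M.4), absorbing the approximation error through C.\ref{condC}$(v)$ exactly as you describe.
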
\ \\
\begin{proof}[Proof of Theorem \ref{uniformlasso}]\ \\ \\
In the following, we use $C$ for a strictly positive constant, independent of $n$, which may have a different value in each appearance. The notation $a_n\lesssim b_n$ stands for $a_n\le Cb_n$ for all $n$ for some fixed $C$. Additionally, $a_n=o(1)$ stands for uniform convergence towards zero meaning that there exists a sequence $(b_n)_{n\ge 1}$ with $|a_n|\le b_n$, where $b_n$ is independent of $P\in\mathcal{P}_n$ for all $n$ and $b_n\to 0$. Finally, the notation $a_n\lesssim_{P}b_n$ means that for any $\epsilon>0$, there exists $C$ such that uniformly over all $n$ we have $P_P(a_n>Cb_n)\le \epsilon$.\\ \\
Due to Assumption C.\ref{condC}$(i)$, it immediately holds
\begin{align*}
 \|\max_{r=1,\dots,d} \max_{j=1,\dots,p} |X_{r,j}|\|_{P,q}&\lesssim M_n \\
  \|\max_{r=1,\dots,d} |\varepsilon_r|\|_{P,q}&\lesssim M_n
\end{align*}

Now, we essentially modify the proof from Theorem $4.2$ from  \cite{belloni2018uniformly} to fit our setting and keep the notation as similar as possible. Let $\mathcal{U}=\{1,\dots,d\}$ and
$$\beta_r^{(1)}\in\arg\min\limits_{\beta\in\mathbb{R}^p} \E\Big[ \underbrace{\frac{1}{2} \left(Y_r-\beta^TX_r - (\beta^{(2)})^TX_r\right)^2}_{:=M_r(Y_r,X_r,\beta, a_r)}\Big]$$
with $a_r= (\beta^{(2)})^TX_r$ for all $r=1,\dots,d$. Since the coefficient $\beta^{(2)}$ captures the approximately sparse part, $a_r$ is estimated with $\hat{a}_r\equiv 0$. Further, define
$$M_r(Y_r,X_r,\beta):=M_r(Y_r,X_r,\beta, \hat{a}_r)= \frac{1}{2}\left(Y_r-\beta^TX_r\right)^2$$
Then, we have
$$\hat{\beta}_r\in \arg\min\limits_{\beta\in\mathbb{R}^p}\left(\E_n\left[M_r(Y_r,X_r,\beta)\right]+\frac{\lambda}{n}\|\hat{\Psi}_r\beta\|_1\right)$$
and 
$$\tilde{\beta}_r\in \arg\min\limits_{\beta\in\mathbb{R}^p}\left(\E_n\left[M_r(Y_r,X_r,\beta)\right]\right):\quad \text{supp}(\beta)\subseteq \text{supp}(\hat{\beta}_r).$$
First, we verify the Condition WL from \cite{belloni2018uniformly}. Since $N_n=d$, we have $N(\varepsilon,\mathcal{U},d_{\mathcal{U}})\le N_n$ for all $\varepsilon\in(0,1)$ with 
$$d_{\mathcal{U}}(i,j)=\begin{cases}0 \quad \text{for }i=j\\1\quad \text{for }i\neq j.\end{cases}$$
To prove WL(i), we note that
\begin{align*}
S_r=\partial_\beta M_r(Y_r,X_r,\beta,a_r)|_{\beta=\beta^{(1)}_r}=-\varepsilon_r X_r.
\end{align*}
Since $\Phi^{-1}(1-t)\lesssim\sqrt{\log(1/t)}$, uniformly over $t\in (0,1/2)$ , it holds
\begin{align*}
\|S_{r,j}\|_{P,3}\Phi^{-1}(1-\gamma/2pd)&=\|\varepsilon_r X_{r,j}\|_{P,3}\Phi^{-1}(1-\gamma/2pd)\\
&\le CK_n\log^{\frac{1}{2}}(a_n)\lesssim \varphi_n n^{\frac{1}{6}}
\end{align*}
with 
$$\varphi_n=O\left(\frac{K_n\log^{\frac{1}{2}}(a_n)}{n^{\frac{1}{6}}}\right)=o(1)$$
uniformly over all $j=1,\dots,p$ and $r=1,\dots,d$ by Assumption C.\ref{condC}$(i)$ and C.\ref{condC}$(iv)$.
Further, it holds
\begin{align*}
c\le \E\left[S_{r,j}^2\right]\le C
\end{align*}
for all $j=1,\dots,p$ and $r=1,\dots,d$ by Assumption C.\ref{condC}$(i)$,
which implies Condition $WL(ii)$.
Note that Condition $WL(iii)$ simplifies to 
\begin{align*}
\max\limits_{r=1,\dots,d}\max\limits_{j=1,\dots,p}|(\E_n-\E)[S_{r,j}^2]|\le \varphi_n
\end{align*}
with probability $1-\Delta_n$. Now, we use a Maximal Inequality, see Lemma $P.2$ in \cite{belloni2018uniformly}. Let $\mathcal{W}=(\mathcal{Y},\mathcal{X})$ with $Y=(Y_1,\dots,Y_d)\in\mathcal{Y}$ and $X=(X_1,\dots,X_d)\in\mathcal{X}$.
Define
\begin{align*}
\mathcal{F}:=\{f_{r,j}^2|r=1,\dots,d, j=1,\dots,p\}
\end{align*}
with
\begin{align*}
f_{r,j}:&\mathcal{W}=(\mathcal{Y},\mathcal{X})\to \mathbb{R}\\
& W=(Y,X) \mapsto -\left(Y_r-\beta_r^TX_r\right)X_{r,j}=-\varepsilon_rX_{r,j}=S_{r,j}.
\end{align*}
Note that
\begin{align*}
\|\sup\limits_{f\in\mathcal{F}}|f|\|_{P,q}&=\|\max\limits_{r=1,\dots,d}\max\limits_{j=1,\dots,p}|f_{r,j}^2|\|_{P,q}\\
&=\E\left[\max\limits_{r=1,\dots,d}\max\limits_{j=1,\dots,p}\varepsilon_r^{2q} X_{r,j}^{2q}\right]^{1/q}\\
&\le\E\left[\max\limits_{r=1,\dots,d}\varepsilon_r^{2q} \max\limits_{r=1,\dots,d}\max\limits_{j=1,\dots,p}X_{r,j}^{2q}\right]^{1/q}\\
&\le CM_n^4.
\end{align*}
Since
\begin{align*}
\sup\limits_{f\in\mathcal{F}}\|f\|_{P,2}^2&=\max\limits_{r=1,\dots,d}\max\limits_{j=1,\dots,p} \E\left[\varepsilon_r^4X_{r,j}^4\right],
\end{align*}
it holds
$$\sup\limits_{f\in\mathcal{F}}\|f\|_{P,2}^2\le L_n\le \|\sup\limits_{f\in\mathcal{F}}|f|\|^2_{P,2}.$$
Additionally, it holds $|\mathcal{F}|=dp$ which implies
$$\log\sup\limits_{Q}N(\epsilon\|F\|_{Q,2},\mathcal{F},\|\cdot\|_{Q,2})\le  \log(dp)\lesssim \log(a_n/\epsilon), \quad 0<\epsilon\le 1.$$
Using Lemma $P.2$ from \cite{belloni2018uniformly}, we obtain with probability not less than $1-o(1)$
\begin{align*}
&\max\limits_{r=1,\dots,d}\max\limits_{j=1,\dots,p}|(\E_n-\E)[S_{r,j}^2]|\\
=\ &n^{-1/2}\sup\limits_{f\in\mathcal{F}}|\mathbb{G}_n(f)|\\
\le\ &n^{-1/2}C\left(\sqrt{L_n\log\left(a_n\right)}+n^{-1/2+1/q}M_n^4\log^{}(a_n)\right)\\
=\ &C\left(\sqrt{\frac{L_n\log\left(a_n\right)}{n}}+n^{\frac{1}{q}}\frac{M_n^4\log^{}(a_n)}{n^{}}\right)\\
\le\ &\varphi_n=o(1)
\end{align*}
by the growth condition in Assumption C.\ref{condC}$(iv)$. We proceed by verifying Assumption $M.1$ in \cite{belloni2018uniformly}. The function $\beta\mapsto M_r\left(Y_r,X_r,\beta\right)$ is convex, which is the first requirement of Assumption $M.1$. We now proceed with a simplified version of proof of $K.1$ from \cite{belloni2018uniformly}.
To show Assumption $M.1$ (a), note that for all $\delta \in \R^p$
\begin{align*}
    &\left|\E_n\left[\partial_\beta M_r(Y_r,X_r,\beta_r^{(1)})-\partial_\beta M_r(Y_r,X_r,\beta_r^{(1)},a_r)\right]^T\delta\right|\\
    =&\left|\E_n\left[X_r\left((\beta_r^{(2)})^TX_r\right)\right]^T\delta\right|\\
    \le& \|(\beta_r^{(2)})^TX_r\|_{\mathbb{P}_n,2}\|\delta^TX_r\|_{\mathbb{P}_n,2}
\end{align*}
due to Assumption C.\ref{condC}$(ii)$ and $(iii)$. Further, define
$$\mathcal{G}:=\{g_r:X\mapsto \left(\beta_r^{(2)})^TX_r\right)^2|r=1,\dots,d\}$$
with envelope
$$G:= \max_{r=1,\dots,d}\|X_r\|^2_\infty\max_{r=1,\dots,d}\|\beta_r^{(2)}\|_1^2\lesssim M_n^2\max_{r=1,\dots,d}\|\beta_r^{(2)}\|_1^2,$$
implying
\begin{align*}
    \|G\|_{P,q}\lesssim M_n^2\max_{r=1,\dots,d}\|\beta_r^{(2)}\|_1^2.
\end{align*}
Note that for all $0< \epsilon\le1$, we have
\begin{align*}
    N(\epsilon\|G\|_{P,2},\mathcal{G},\|\cdot\|_{P,2})\le d\le d/\epsilon
\end{align*}
and
\begin{align*}
    \sup_{g\in\mathcal{G}}\|g\|_{P,2}^2\lesssim M_n^4\max_{r=1,\dots,d}\|\beta_r^{(2)}\|_1^4.
\end{align*}
Using Lemma $P.2$ from \cite{belloni2018uniformly}, we obtain
\begin{align*}
    &\max_{r=1,\dots,d} \left|\E_n\left[\left((\beta_r^{(2)})^TX_r\right)^2\right]-\E\left[\left((\beta_r^{(2)})^TX_r\right)^2\right]\right|\\
    =& n^{-1/2}\sup_{g\in \mathcal{G}} |\mathbb{G}_n(g)|\\
    \lesssim & M_n^2\max_{r=1,\dots,d}\|\beta_r^{(2)}\|_1^2\sqrt{\frac{\log(a_n)}{n}} + n^{-1 + 1/q}M_n^2 \max_{r=1,\dots,d}\|\beta_r^{(2)}\|_1^2\log(a_n)\\
    \lesssim & \frac{s\log(a_n)}{n}
\end{align*}
with probability $1-o(1).$ due to Assumption  C.\ref{condC}$(iii)$. Using $\|beta_r^{(2)}\|_2^2\lesssim \frac{s\log(a_n)}{n}$, we obtain
\begin{align*}
    &\max_{r=1,\dots,d} \|(\beta_r^{(2)})^TX_r\|^2_{\mathbb{P}_n,2}\\ \le\ &\max_{r=1,\dots,d}\big|\|(\beta_r^{(2)})^TX_r\|_{\mathbb{P}_n,2}-\|(\beta_r^{(2)})^TX_r\|_{\mathbb{P},2}\big| + \max_{r=1,\dots,d}\|(\beta_r^{(2)})^TX_r\|^2_{\mathbb{P},2}\\
    \lesssim_P\ & \frac{s\log(a_n)}{n},
\end{align*}
implying Assumption M.1. (a) with $C_n\lesssim \sqrt{\frac{s\log(a_n)}{n}}.$
Further, we have
\begin{align*}
&\quad\E_n\left[\frac{1}{2}\left(Y_r-(\beta_r+\delta)^TX_r\right)^2\right]-\E_n\left[\frac{1}{2}\left(Y_r-\beta_r^TX_r\right)^2\right]\\
&=-\E_n\left[\left(Y_r-\beta_r^TX_r\right)\delta^T X_r\right]+\frac{1}{2}\E_n\left[(\delta^T X_r)^2\right],
\end{align*}
where
$$-\E_n\left[\left(Y_r-\beta_r^TX_r\right)\delta^T X_r\right]=\E_n\left[\partial_\beta M_r(Y_r,X_r,\beta_r)\right]^T\delta$$
and
$$\frac{1}{2}\E_n\left[(\delta^T X_r)^2\right]=||\sqrt{w_r}\delta^T X_r||_{\bP_n,2}^2$$
with $\sqrt{w_r}=1/4$. This gives us Assumption $M.1$ (c) with $\Delta_n=0$ and $\bar{q}_{A_r}=\infty$. Since Condition $WL(ii)$ and $WL(iii)$ hold, we have with probability $1-o(1)$ 
\begin{align*}
1\lesssim l_{r,j}=\left(\E_n[S_{r,j}^2]\right)^{1/2}\lesssim 1
\end{align*}
uniformly over all $r=1,\dots,d$ and $j=1,\dots,p$ which directly implies
\begin{align*}
1\lesssim\|\hat{\Psi}^{(0)}_r\|_{\infty}:=\max\limits_{j=1,\dots,p}|l_{r,j}|\lesssim 1
\end{align*}
and additionally 
\begin{align*}
1\lesssim\|(\hat{\Psi}^{(0)}_r)^{-1}\|_{\infty}:=\max\limits_{j=1,\dots,p}|l_{r,j}^{-1}|\lesssim 1.
\end{align*}
For now, we suppose that $m=0$ in Algorithm \ref{penalg}. Uniformly over $r=1,\dots,d$ and $j=1,\dots,p$, we have 
\begin{align*}
\hat{l}_{r,j,0}&=\left(\E_n\left[\max\limits_{1\le i \le n} \|X_r^{(i)}\|^2_{\infty}\right]\right)^{1/2}\ge \left(\E_n[ \|X_r\|^2_{\infty}]\right)^{1/2}\gtrsim_{P} c,
\end{align*}
where the last inequality holds due to Assumption C.\ref{condC}$(ii)$ and an application of the Maximal Inequality. Also uniformly over $r=1,\dots,d$ and $j=1,\dots,p$, it holds
\begin{align*}
\hat{l}_{r,j,0}&=\max\limits_{1\le i \le n} \|X_r^{(i)}\|_{\infty}  \lesssim_P M_n.
\end{align*}

Therefore, Assumption $M.1(b)$ holds for some $\Delta_n=o(1)$, $L\lesssim M_n$ and $l\gtrsim 1$.
Hence, we can find a $c_l$ with $l>1/c_l$. Setting $c_\lambda>c_l$ and $\gamma=\gamma_n\in[1/n,1/\log(n)]$ in the choice of $\lambda$, we obtain
\begin{align*}
P\left(\frac{\lambda}{n}\ge c_l\max\limits_{r=1,\dots,d}\|(\hat{\Psi}_r^{(0)})^{-1}\E_n[S_r]\|_{\infty}\right)\ge 1-\gamma-o(\gamma)-\Delta_n=1-o(1)
\end{align*}
due to Lemma $M.4$ in \cite{belloni2018uniformly}. Now, we uniformly bound the sparse eigenvalues. Set 
$$l_n=n^{2/\bar{q}}M_n^2$$
for a $\bar{q} > 3\tilde{q}$ with $\tilde{q}$ in  C.\ref{condC}$(iv)$. 
We apply Lemma $Q.1$ in \cite{belloni2018uniformly} with $K\lesssim M_n$ and

\begin{align*}
\delta_n&\lesssim K\sqrt{sl_n}n^{-1/2}\log(sl_n)\log^{\frac{1}{2}}(a_n)\log^{\frac{1}{2}}(n)\\
&\lesssim \sqrt{n^{\frac{2}{\bar{q}}} M_n^4\log(n)\log^2(sl_n)\frac{s\log(a_n)}{n}}\\
&\lesssim \sqrt{n^{\frac{3}{\bar{q}}}\frac{s M_n^4\log(a_n)}{n}}
\end{align*}
for $n$ large enough. Hence, by the growth condition in Assumption C.\ref{condC}$(iv)$, it holds 
$$\delta_n=o(1)$$
which implies
\begin{align*}
1\lesssim \min\limits_{\|\delta\|_0\le l_n s}\frac{\|\delta^T X_r\|^2_{\bP_n,2}}{\|\delta\|_2^2}\le \max\limits_{\|\delta\|_0\le l_n s}\frac{\|\delta^T X_r\|^2_{\bP_n,2}}{\|\delta\|_2^2}\lesssim 1
\end{align*}
with probability $1-o(1)$ uniformly over $r=1,\dots,d$.\\
Define $T_r:=\text{supp}(\beta^{(1)}_r)$ and
$$\tilde{c}:=\frac{Lc_l+1}{lc_l-1}\max\limits_{r=1,\dots,d}\|\hat{\Psi}^{(0)}_r\|_{\infty}\|(\hat{\Psi}^{(0)}_r)^{-1}\|_{\infty}\lesssim L.$$
Let the restricted eigenvalues be defined as 
\begin{align*}
\bar{\kappa}_{2\tilde{c}}:=\min\limits_{r=1,\dots,d}\inf\limits_{\delta\in\Delta_{2\tilde{c},r}}\frac{\|\delta^T X_r\|_{\bP_n,2}}{\|\delta_{T_r}\|_2},
\end{align*}
where $\Delta_{2\tilde{c},r}:=\{\delta:\|\delta^c_{T_r}\|_1\le 2\tilde{c}\|\delta_{T_r}\|_1\}$.
By the argument given in \cite{BickelRitovTsybakov2009}, it holds
\begin{align*}
\bar{\kappa}_{2\tilde{c}}&\ge \left(\min\limits_{\|\delta\|_0\le l_n s}\frac{\|\delta^T X_r\|^2_{\bP_n,2}}{\|\delta\|_2^2}\right)^{1/2}-2\tilde{c}\left(\max\limits_{\|\delta\|_0\le l_n s}\frac{\|\delta^T X_r\|^2_{\bP_n,2}}{\|\delta\|_2^2}\right)^{1/2}\left(\frac{s}{sl_n}\right)^{1/2}\\
&\gtrsim \left(\min\limits_{\|\delta\|_0\le l_n s}\frac{\|\delta^T X_r\|^2_{\bP_n,2}}{\|\delta\|_2^2}\right)^{1/2}-2n^{-\frac{1}{\bar{q}}}\left(\max\limits_{\|\delta\|_0\le l_n s}\frac{\|\delta^T X_r\|^2_{\bP_n,2}}{\|\delta\|_2^2}\right)^{1/2}\\
&\gtrsim 1
\end{align*}
with probability $1-o(1)$.
Since 
\begin{align*}
\frac{\lambda}{n}\lesssim n^{-1/2}\Phi^{-1}\left(1-\gamma/(2dp)\right)\lesssim n^{-1/2}\sqrt{\log(2dp/\gamma)}\lesssim n^{-1/2}\log^{\frac{1}{2}}(a_n)
\end{align*} and the uniformly bounded penalty loading from above and away from zero, we obtain
\begin{align*}
\max\limits_{r=1,\dots,d}\|(\hat{\beta}_r-\beta_r^{(1)})^TX_r\|_{\bP_n,2}\lesssim_P L\sqrt{\frac{s\log(a_n)}{n}}
\end{align*}
by Lemma $M.1$ from  \cite{belloni2018uniformly}. To show Assumption $M.1(b)$ for $m\ge 1$, we proceed by induction. Assume that the assumption holds for $\hat{\Psi}_{r,m-1}$ with some $\Delta_n=o(1)$, $l\gtrsim 1$ and $L\lesssim M_n$.
We have shown that the estimator based on $\hat{\Psi}_{r,m-1}$ obeys
\begin{align*}
\max\limits_{r=1,\dots,d}\|(\hat{\beta}_r-\beta_r^{(1)})^TX_r\|_{\bP_n,2}\lesssim L\sqrt{\frac{s\log(a_n)}{n}}
\end{align*}
with probability $1-o(1)$. Further, we have shown
\begin{align*}
    \max_{r=1,\dots,d} \|(\beta_r^{(2)})^TX_r\|_{\mathbb{P}_n,2}\lesssim_P \sqrt{\frac{s\log(a_n)}{n}}
\end{align*}
implying
\begin{align*}
\max\limits_{r=1,\dots,d}\|(\hat{\beta}_r-\beta_r)^TX_r\|_{\bP_n,2}\lesssim L\sqrt{\frac{s\log(a_n)}{n}}
\end{align*}
with probability $1-o(1)$.
This implies 
\begin{align*}
|\hat{l}_{r,j,m}-l_{r,j}|&=\left|\E_n\left[\left(\left(Y_r-\hat{\beta}_r^T X_r\right) X_{r,j}\right)^2\right]^{1/2}-\E_n\left[\left(\left(Y_r-\beta_r^T X_r\right) X_{r,j}\right)^2\right]^{1/2}\right|\\
&\le \left|\E_n\left[\left(\left((\hat{\beta}_r-\beta_r)^T X_r\right) X_{r,j}\right)^2\right]^{1/2}\right|\\
&\lesssim \|(\hat{\beta}_r-\beta_r)^TX_r\|_{\bP_n,2} \max\limits_{1\le i \le n} \max\limits_{r=1,\dots,d}\|X_r^{(i)}\|_{\infty}\\
&\lesssim_P L\sqrt{\frac{s\log(a_n)}{n}} M_n\\
&\lesssim \sqrt{\frac{s M_n^4\log^{}(a_n)}{n}}=o(1)
\end{align*}
uniformly over $r=1,\dots,d$ and $j=1,\dots,p$. Therefore, Assumption $M.1(b)$ holds for $\hat{\Psi}_{r,m}$ for some $\Delta_n=o(1)$, $l\gtrsim 1$ and $L\lesssim 1$. Consequently, we obtain
\begin{align*}
\max\limits_{r=1,\dots,d}\|(\hat{\beta}_r-\beta_r^{(1)})^TX_r\|_{\bP_n,2}\lesssim \sqrt{\frac{s\log(a_n)}{n}}.
\end{align*}
and
\begin{align*}
\max\limits_{r=1,\dots,d}\|\hat{\beta}_r-\beta_r^{(1)}\|_{1}\lesssim \sqrt{\frac{s^2\log(a_n)}{n}}
\end{align*}
with probability $1-o(1)$ due to Lemma $M.1$ in \cite{belloni2018uniformly}. Uniformly over all $r=1,\dots,d$, it holds
\begin{align*}
&\left|\left(\E_n\left[\partial_{\beta}M_r(Y_r,X_r,\hat{\beta}_r)-\partial_{\beta}M_r(Y_r,X_r,\beta_r)\right]\right)^T\delta\right|\\
=&\left|\left(\E_n\left[(\hat{\beta}_r-\beta_r)^TX_r X_r^T\right]\right)^T\delta\right|\\
\le&\|(\hat{\beta}_r-\beta_r)^TX_r\|_{\bP_n,2}\|\delta^T X_r\|_{\bP_n,2}\le L_n\|\delta^T X_r\|_{\bP_n,2}
\end{align*}
with probability $1-o(1)$ where $L_n\lesssim (s\log(a_n)/n)^{1/2}$. Since the maximal sparse eigenvalues
$$ \phi_{max}(l_ns,r):=\max\limits_{\|\delta\|_0\le l_n s}\frac{\|\delta^T X_r\|^2_{\bP_n,2}}{\|\delta\|_2^2}$$
are uniformly bounded from above, Lemma $M.2$ from  \cite{belloni2018uniformly} implies
\begin{align*}
\max\limits_{r=1,\dots,d}\|\hat{\beta}_r\|_0 \lesssim s
\end{align*}
with probability $1-o(1)$. Combining this result with the uniform restrictions on the sparse eigenvalues from above, we obtain
\begin{align*}
\max\limits_{r=1,\dots,d}\|\hat{\beta}_r-\beta_r^{(1)}\|_2 \lesssim \max\limits_{r=1,\dots,d}\|(\hat{\beta}_r-\beta_r^{(1)})^TX_r\|_{\bP_n,2}\lesssim \sqrt{\frac{s\log(a_n)}{n}} 
\end{align*}
with probability $1-o(1)$. We now proceed by using Lemma $M.3$ in \cite{belloni2018uniformly}. We obtain uniformly over all $r=1,\dots,d$
\begin{align*}
\E_n[M_r(Y_r,X_r,\tilde{\beta}_r)]-\E_n[M_r(Y_r,X_r,\beta_r^{(1)})]&\le \frac{\lambda L}{n}\|\hat{\beta}_r-\beta_r^{(1)}\|_1\max\limits_{r=1,\dots,d}\|\hat{\Psi}^{(0)}_r\|_{\infty}\\
&\lesssim \frac{\lambda}{n}\|\hat{\beta}_r-\beta_r^{(1)}\|_1\\
&\lesssim \frac{s\log(a_n)}{n}
\end{align*}
with probability $1-o(1)$, where we used $L\lesssim 1$ and $\max\limits_{r=1,\dots,d}\|\hat{\Psi}^{(0)}_r\|_{\infty}\lesssim 1$. Since
\begin{align*}
\max\limits_{r=1,\dots,d}\|\E_n[S_r]\|_{\infty}&\le \max\limits_{r=1,\dots,d}\|\hat{\Psi}^{(0)}_r\|_{\infty}\|\big(\hat{\Psi}^{(0)}_r\big)^{-1}\E_n[S_r]\|_{\infty}\lesssim \frac{\lambda}{n}\lesssim n^{-1/2}\log^{\frac{1}{2}}(a_n) 
\end{align*}
with probability $1-o(1)$, we obtain
\begin{align*}
\max\limits_{r=1,\dots,d}\|(\tilde{\beta}_r-\beta_r^{(1)})^TX_r\|_{\bP_n,2}\lesssim \sqrt{\frac{s\log(a_n)}{n}}
\end{align*} 
with probability $1-o(1)$, where we used 
$$\max\limits_{r=1,\dots,d}\|\hat{\beta}_r\|_0 \lesssim s ,\ C_n\lesssim (s\log(a_n)/n)^{1/2}$$
and that the minimum sparse eigenvalues are uniformly bounded away from zero. With the same argument as above, we obtain 
\begin{align*}
\max\limits_{r=1,\dots,d}\|\tilde{\beta}_r-\beta_r^{(1)}\|_2 \lesssim \max\limits_{r=1,\dots,d}\|(\tilde{\beta}_r-\beta_r^{(1)})^TX_r\|_{\bP_n,2}\lesssim \sqrt{\frac{s\log(a_n)}{n}}.
\end{align*}
Combined with the Assumption C.\ref{condC}$(iii)$ on the approximate sparsity this completes the proof.
\end{proof}


\section{Computational Details} \label{compdetails}


\subsection{Computation and Infrastructure}

The simulation study has been run on a x86\_64\_redhat\_linux-gnu (64-bit) (CentOS Linux 7 (Core)) cluster using R version \texttt{3.6.1 (2019-07-05)}. All lasso estimations are performed using the \texttt{rlasso} learner as provided by the R package \texttt{hdm}, version \texttt{0.3.1} by \cite{hdm} which can be downloaded from CRAN. Construction of B-splines is based on the R package \texttt{splines}. The R code for the implementation and simulation study will be made available in the future.

\section{Additional Simulation Results} \label{addsimresults}

\begin{figure}[h]
\begin{center}
\includegraphics[scale=0.25]{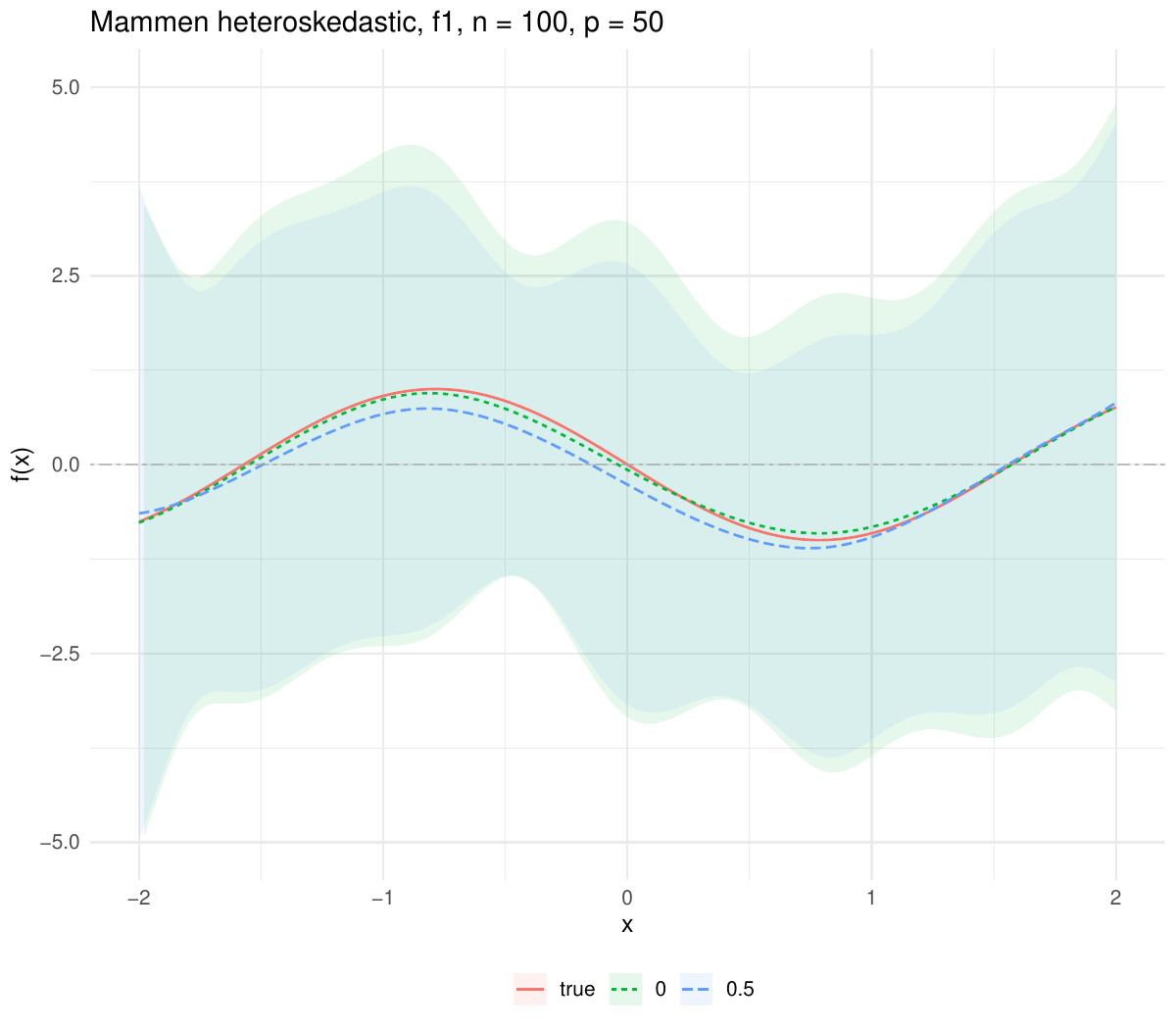} 
\includegraphics[scale=0.25]{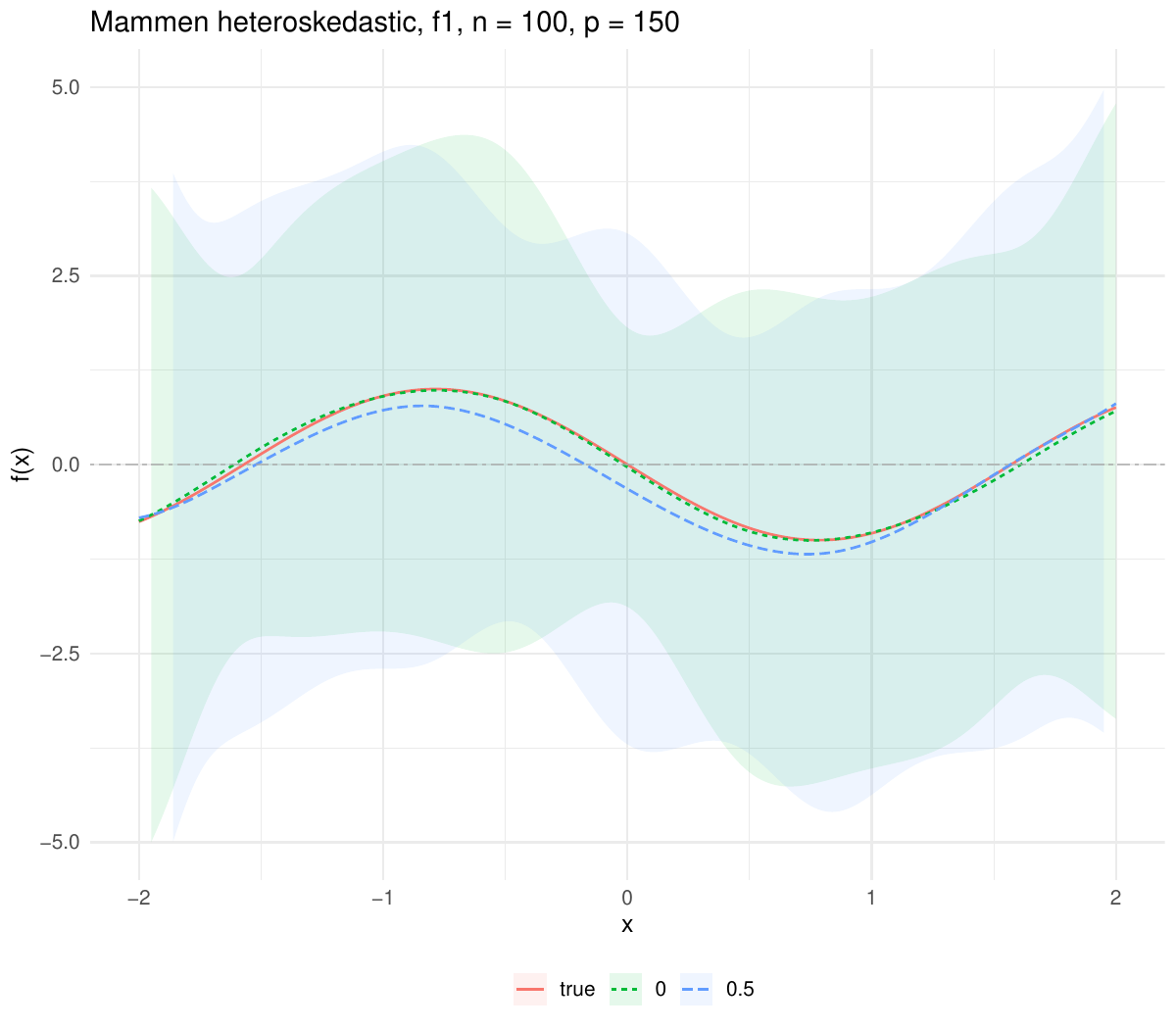} \\
\includegraphics[scale=0.25]{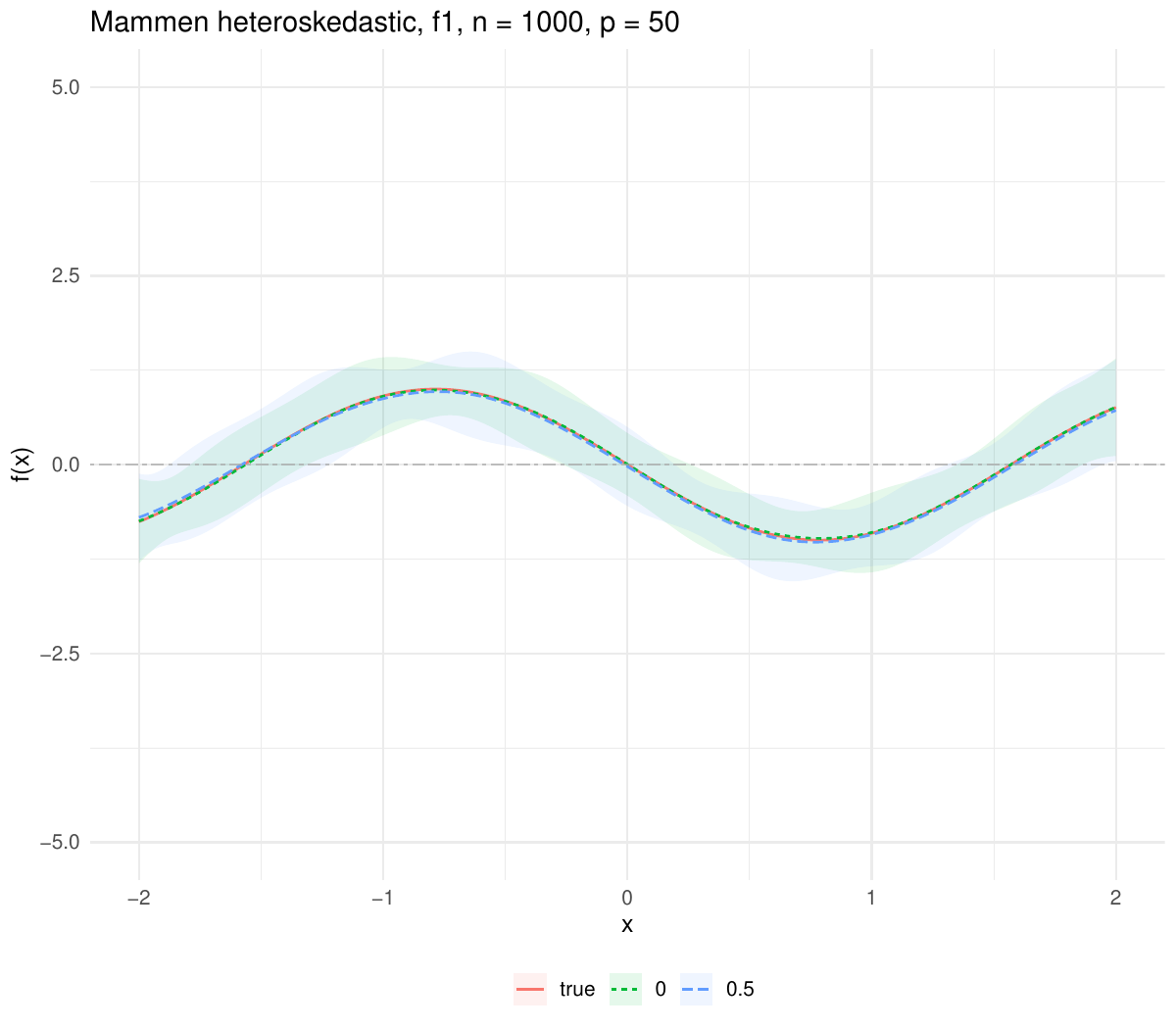}  
\includegraphics[scale=0.25]{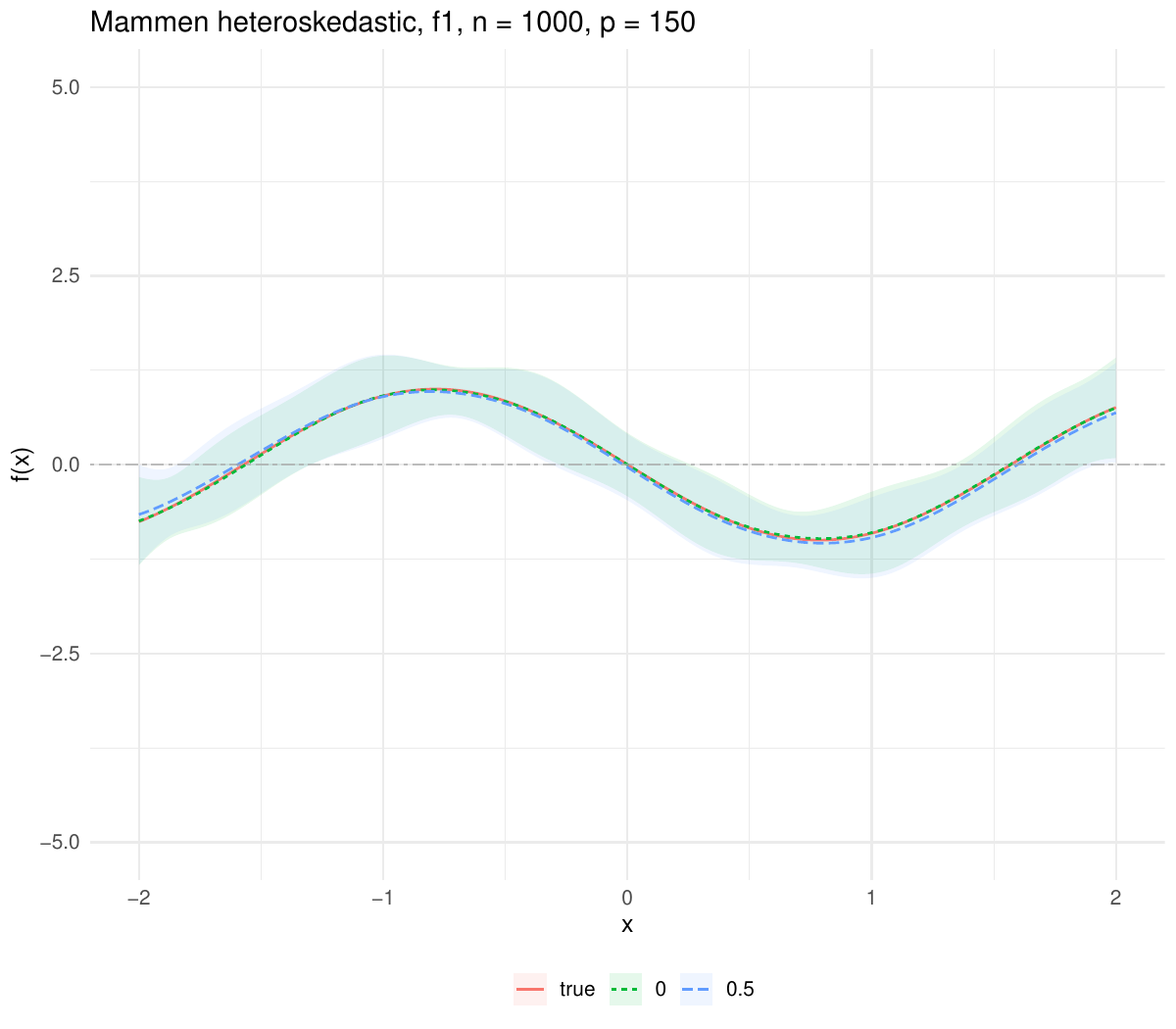} 
\caption{Average confidence bands, $f_1(x_1)$, heteroskedastic setting.}
\end{center}
The green dashed curve illustrates averaged estimated functions $\hat{f}_j(x_j)$  as obtained in $R=500$ repetitions in the scenario with $\rho = 0$. The corresponding averaged $95\%$-confidence bands are shaded green. The blue long-dashed line illustrates the results for the setting with $\rho = 0.5$ together with corresponding averaged confidence bands (shaded blue). The true function $f_j(x_j)$ is illustrated by the red solid curve.
\label{heteroskf1}
\end{figure}

\begin{figure}[t]
\begin{center}
\includegraphics[scale=0.25]{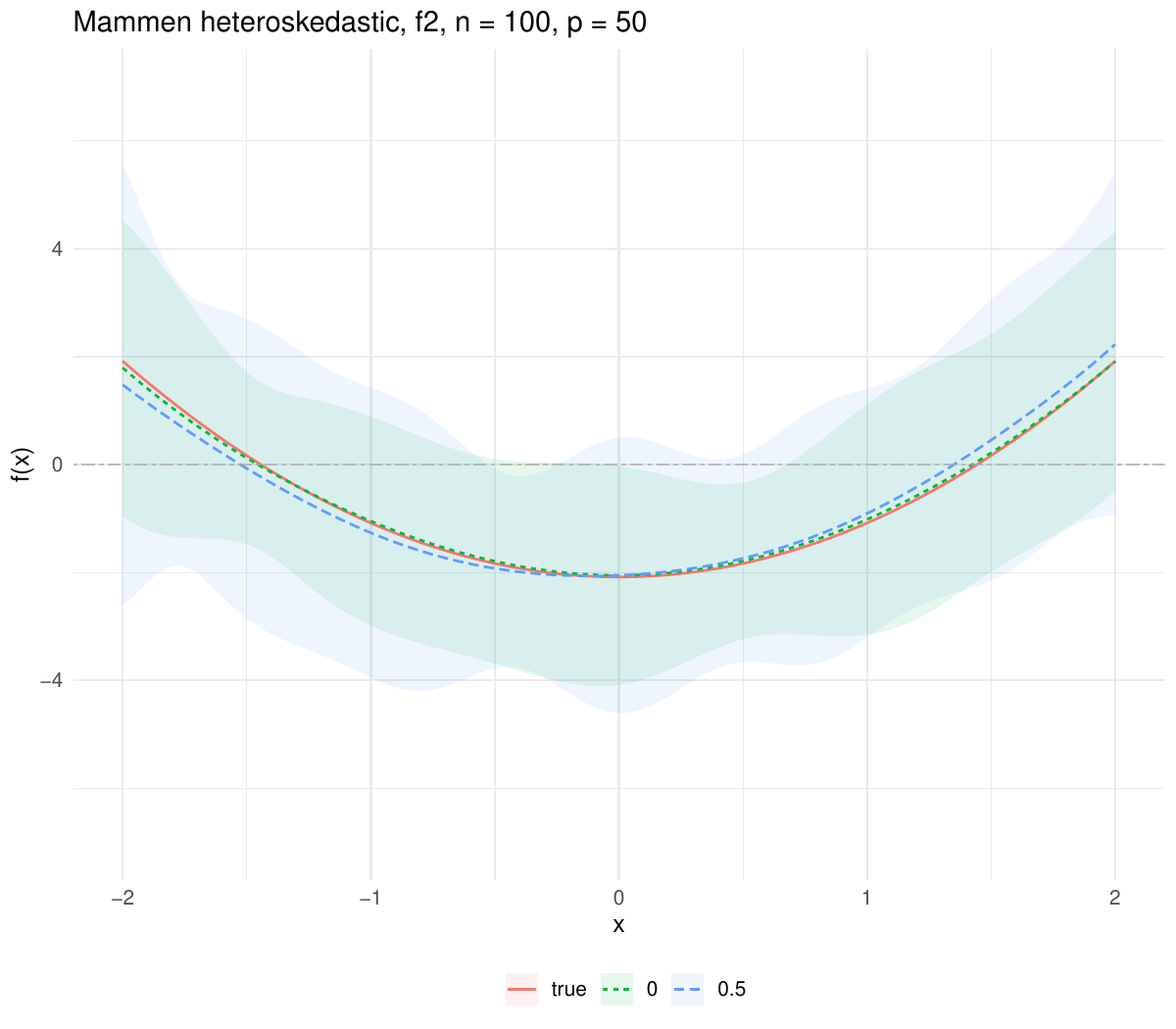} 
\includegraphics[scale=0.25]{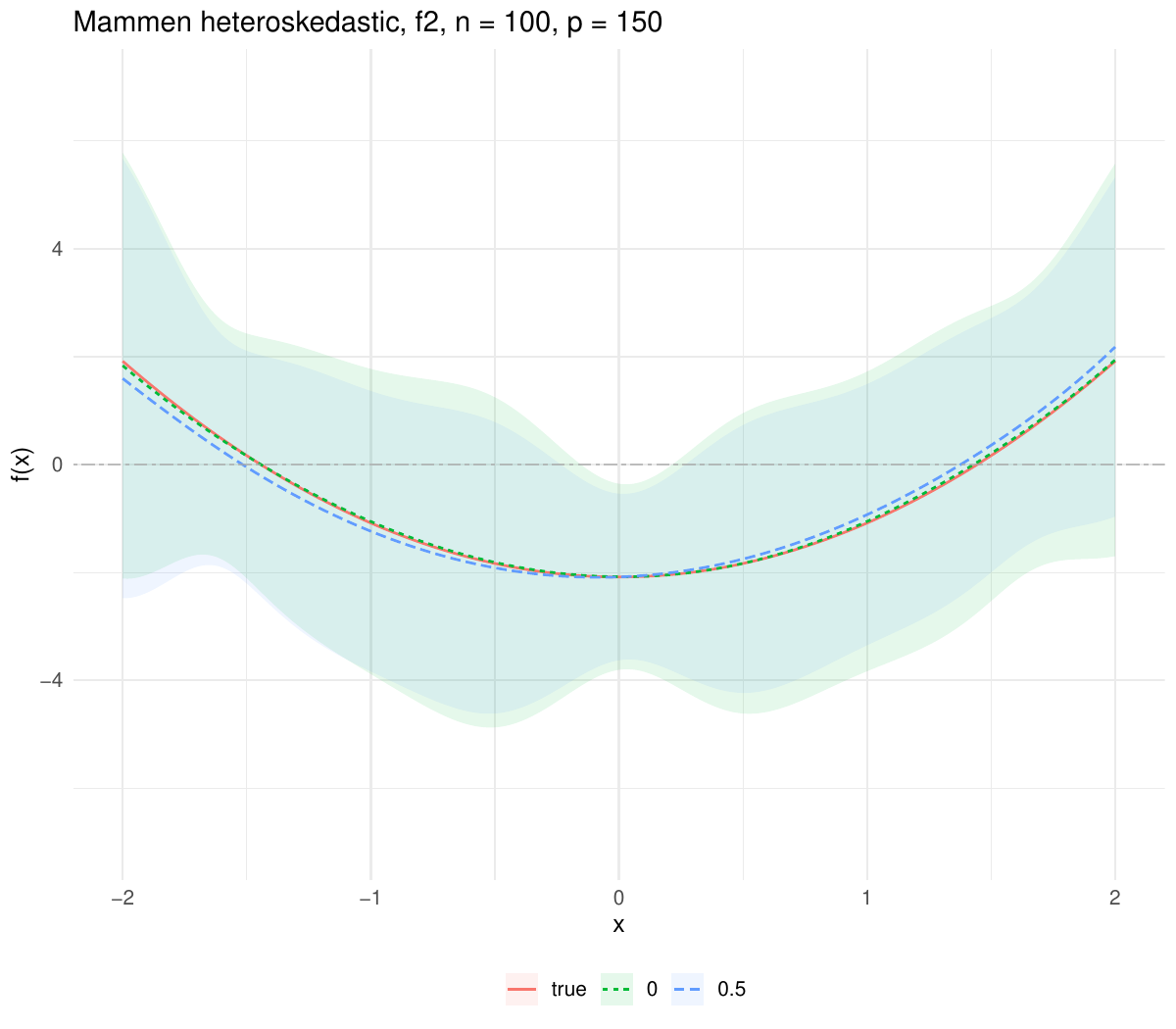} \\
\includegraphics[scale=0.25]{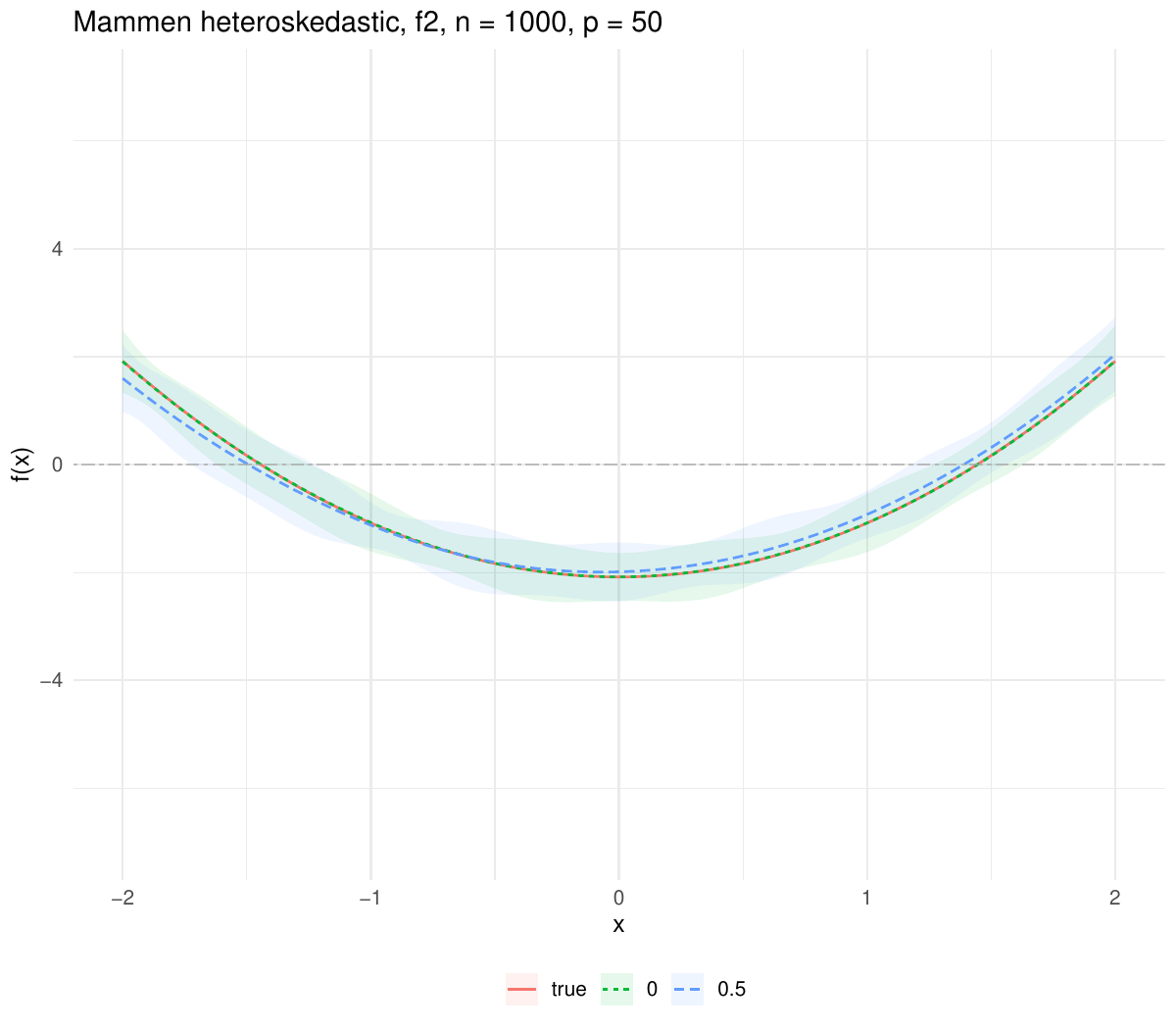}  
\includegraphics[scale=0.25]{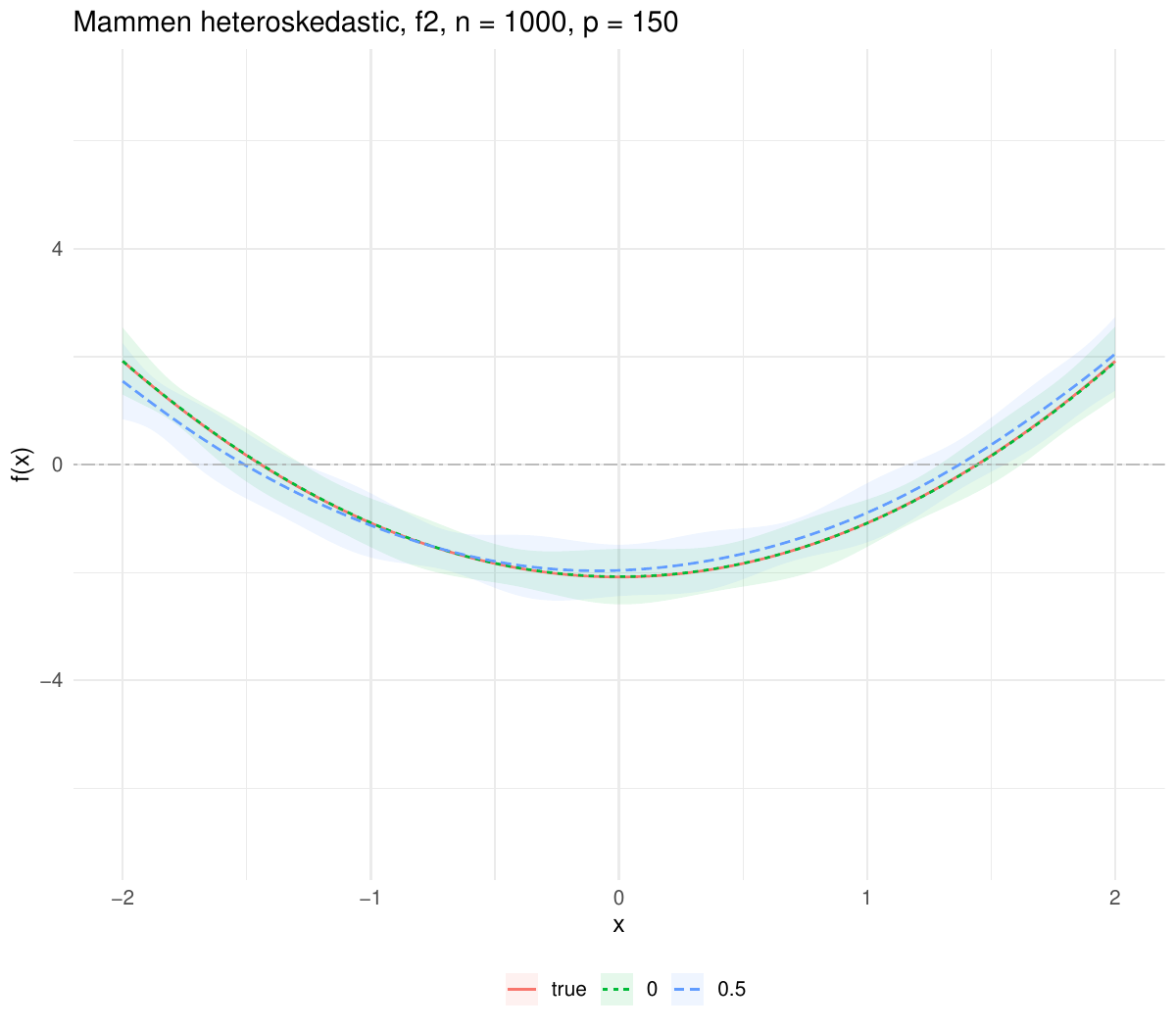} 
\caption{Average confidence bands, $f_2(x_2)$, heteroskedastic setting.}
\end{center}
The green dashed curve illustrates averaged estimated functions $\hat{f}_j(x_j)$  as obtained in $R=500$ repetitions in the scenario with $\rho = 0$. The corresponding averaged $95\%$-confidence bands are shaded green. The blue long-dashed line illustrates the results for the setting with $\rho = 0.5$ together with corresponding averaged confidence bands (shaded blue). The true function $f_j(x_j)$ is illustrated by the red solid curve.
\label{heteroskf2}
\end{figure}

\begin{figure}[t]
\begin{center}
\includegraphics[scale=0.25]{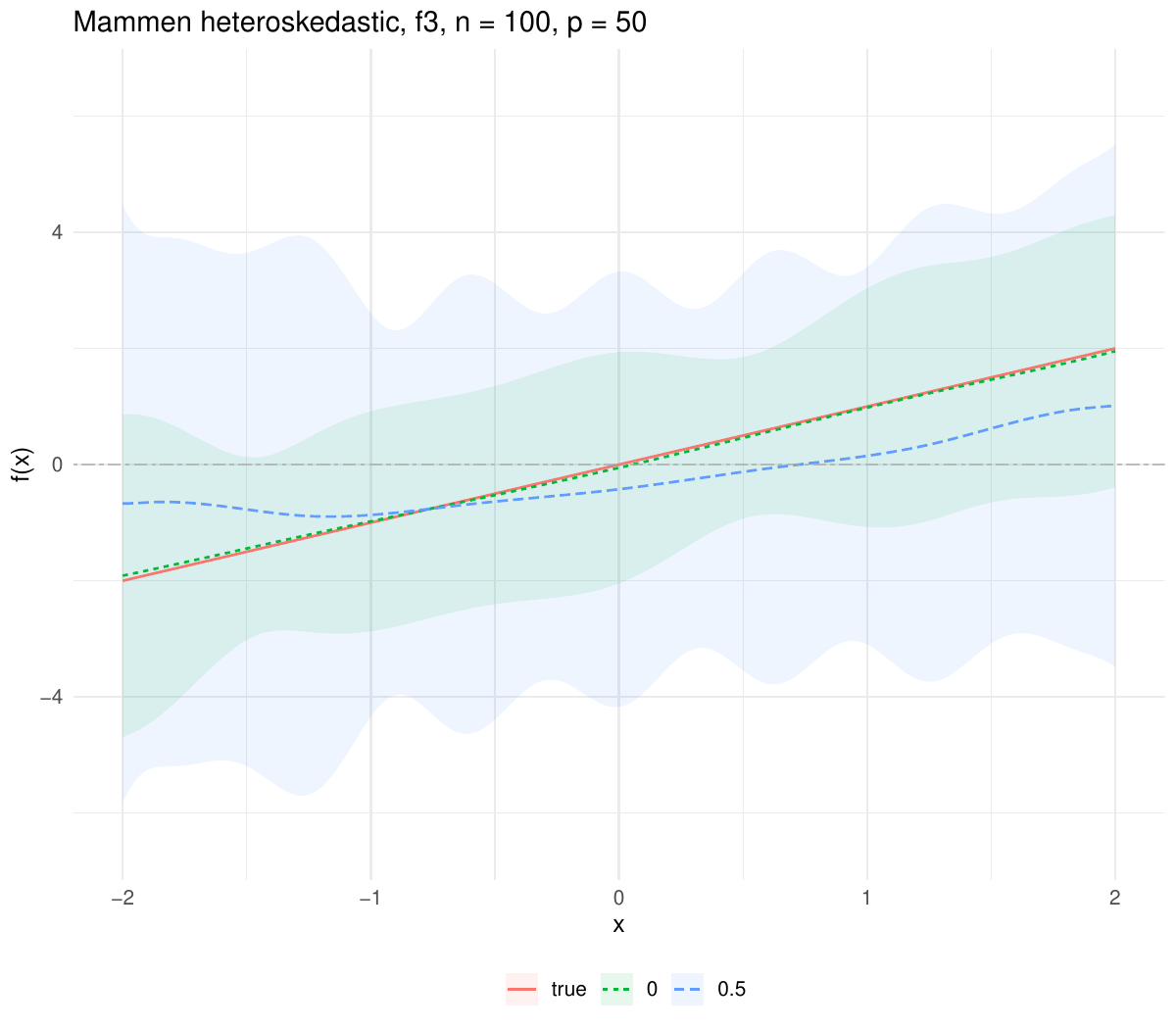} 
\includegraphics[scale=0.25]{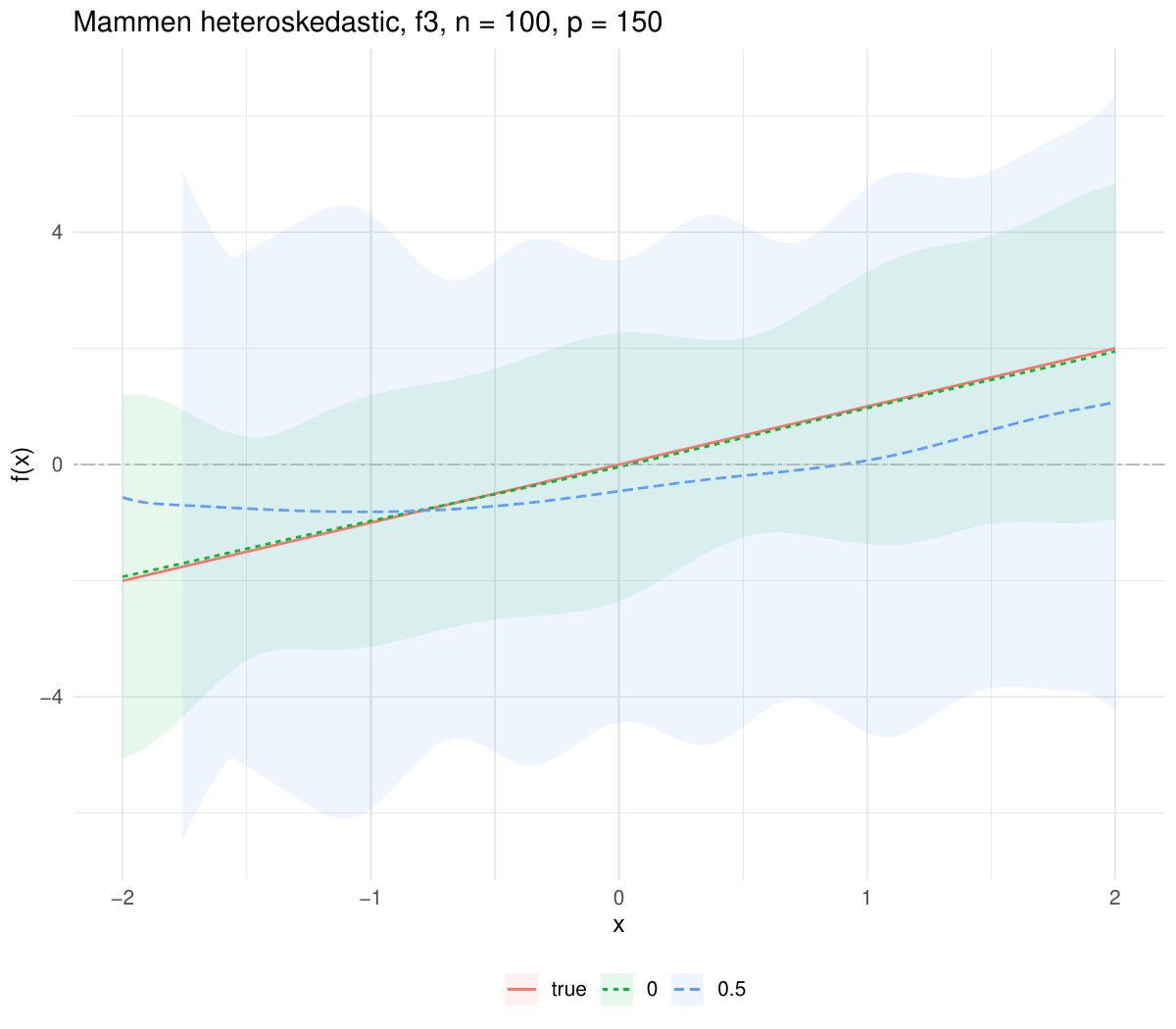} \\
\includegraphics[scale=0.25]{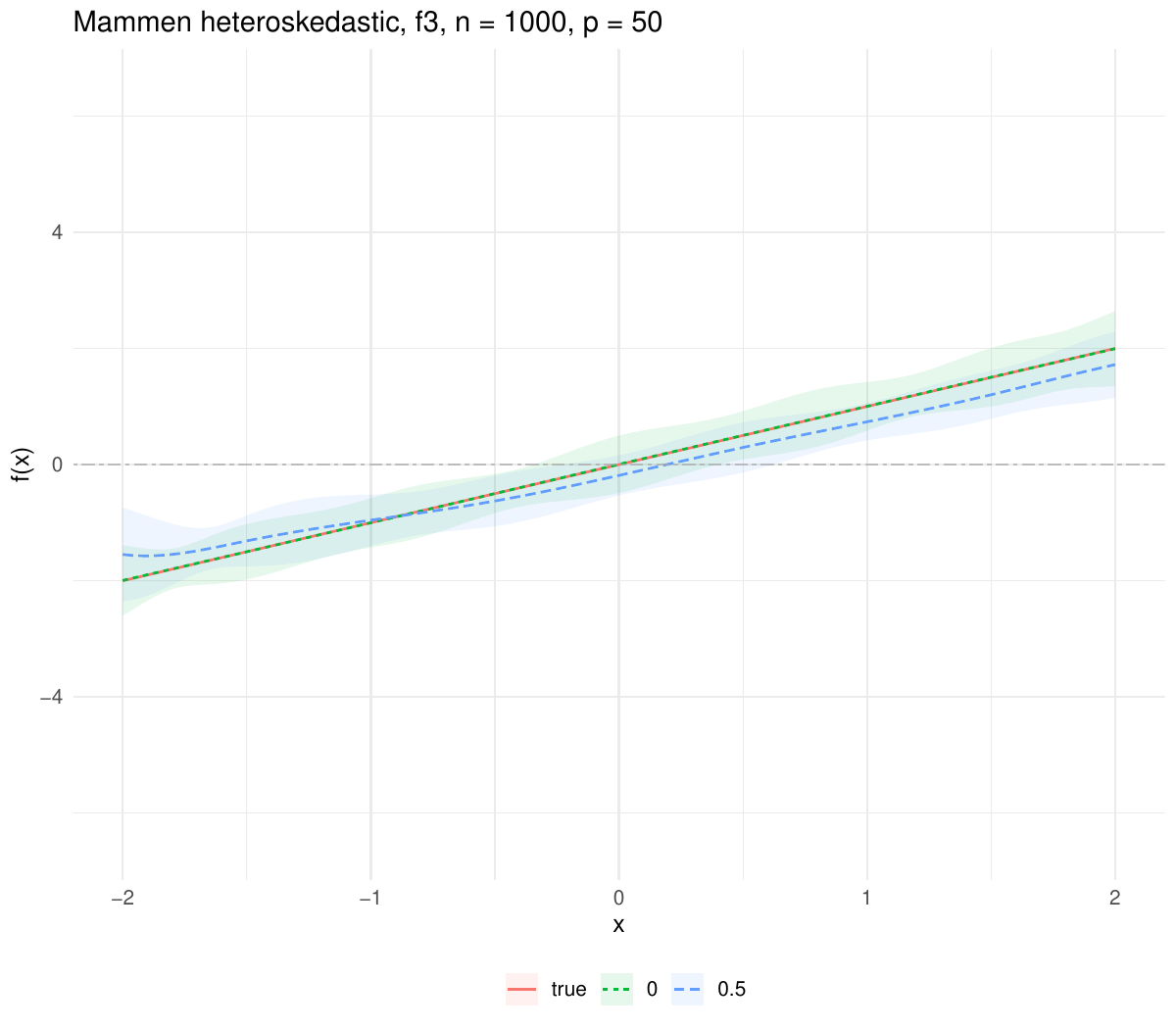}  
\includegraphics[scale=0.25]{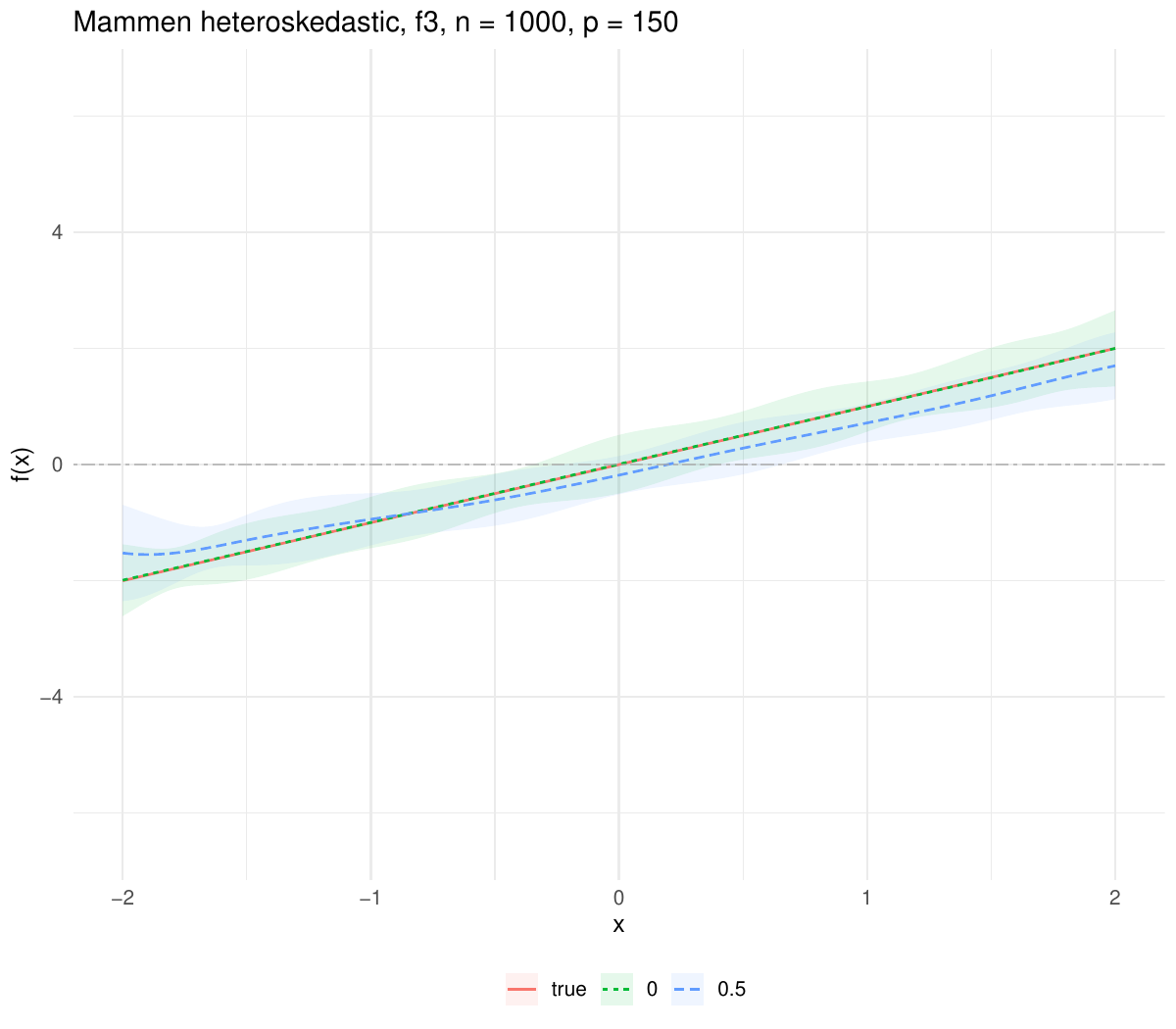} 
\caption{Average confidence bands, $f_3(x_3)$, heteroskedastic setting.}
\end{center}
The green dashed curve illustrates averaged estimated functions $\hat{f}_j(x_j)$  as obtained in $R=500$ repetitions in the scenario with $\rho = 0$. The corresponding averaged $95\%$-confidence bands are shaded green. The blue long-dashed line illustrates the results for the setting with $\rho = 0.5$ together with corresponding averaged confidence bands (shaded blue). The true function $f_j(x_j)$ is illustrated by the red solid curve.
\label{heteroskf3}
\end{figure}

\begin{figure}[t]
\begin{center}
\includegraphics[scale=0.25]{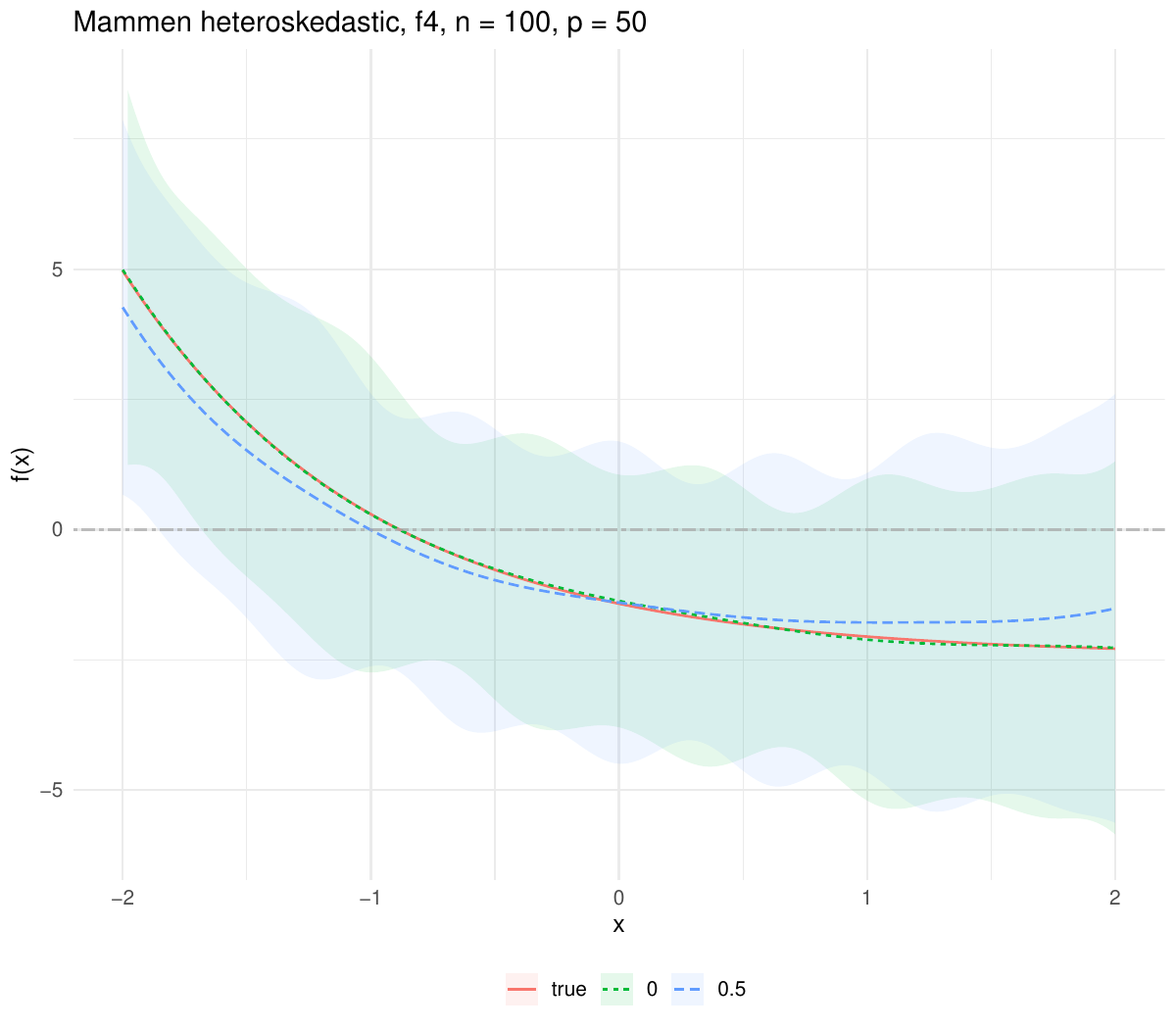} 
\includegraphics[scale=0.25]{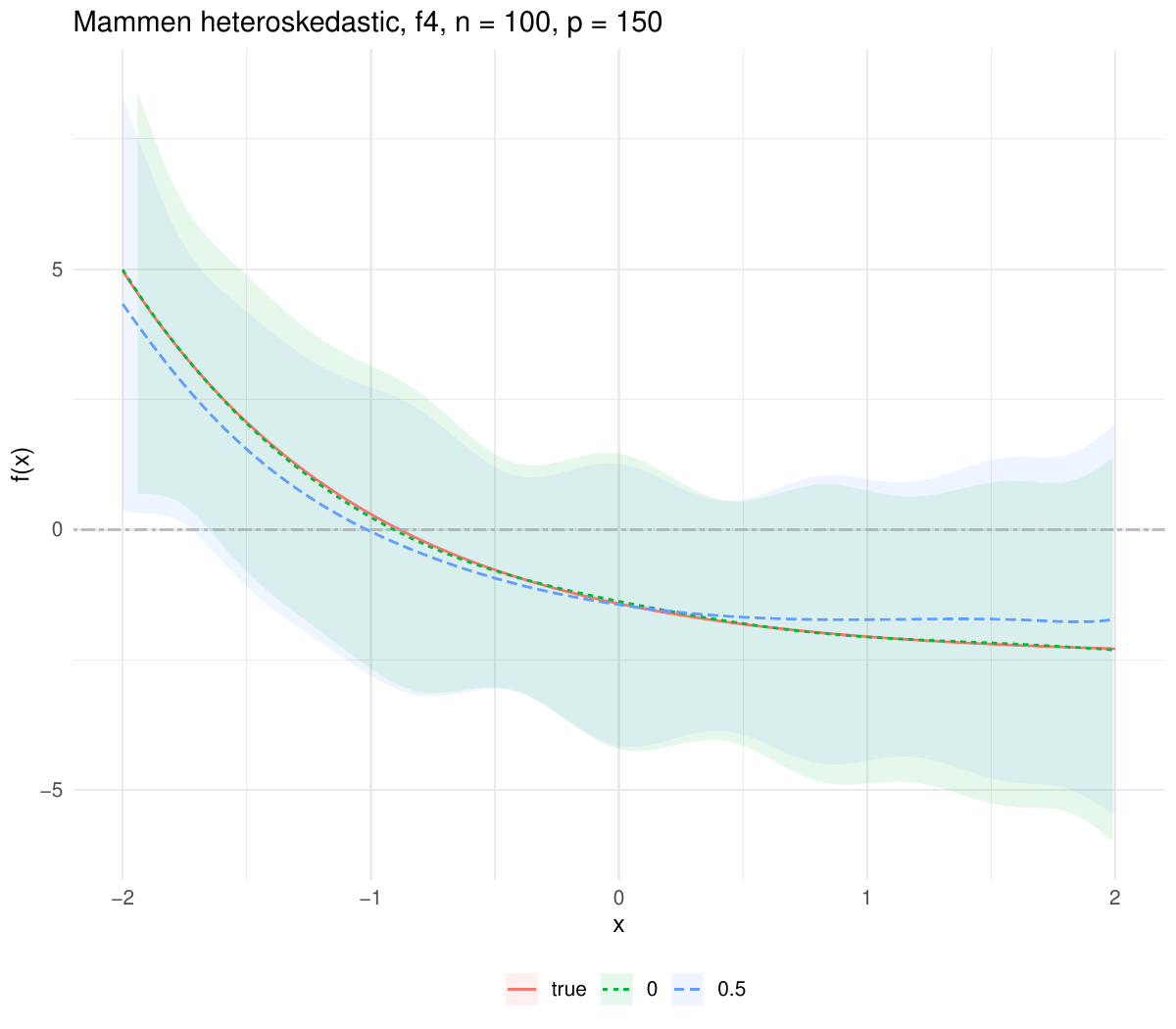} \\
\includegraphics[scale=0.25]{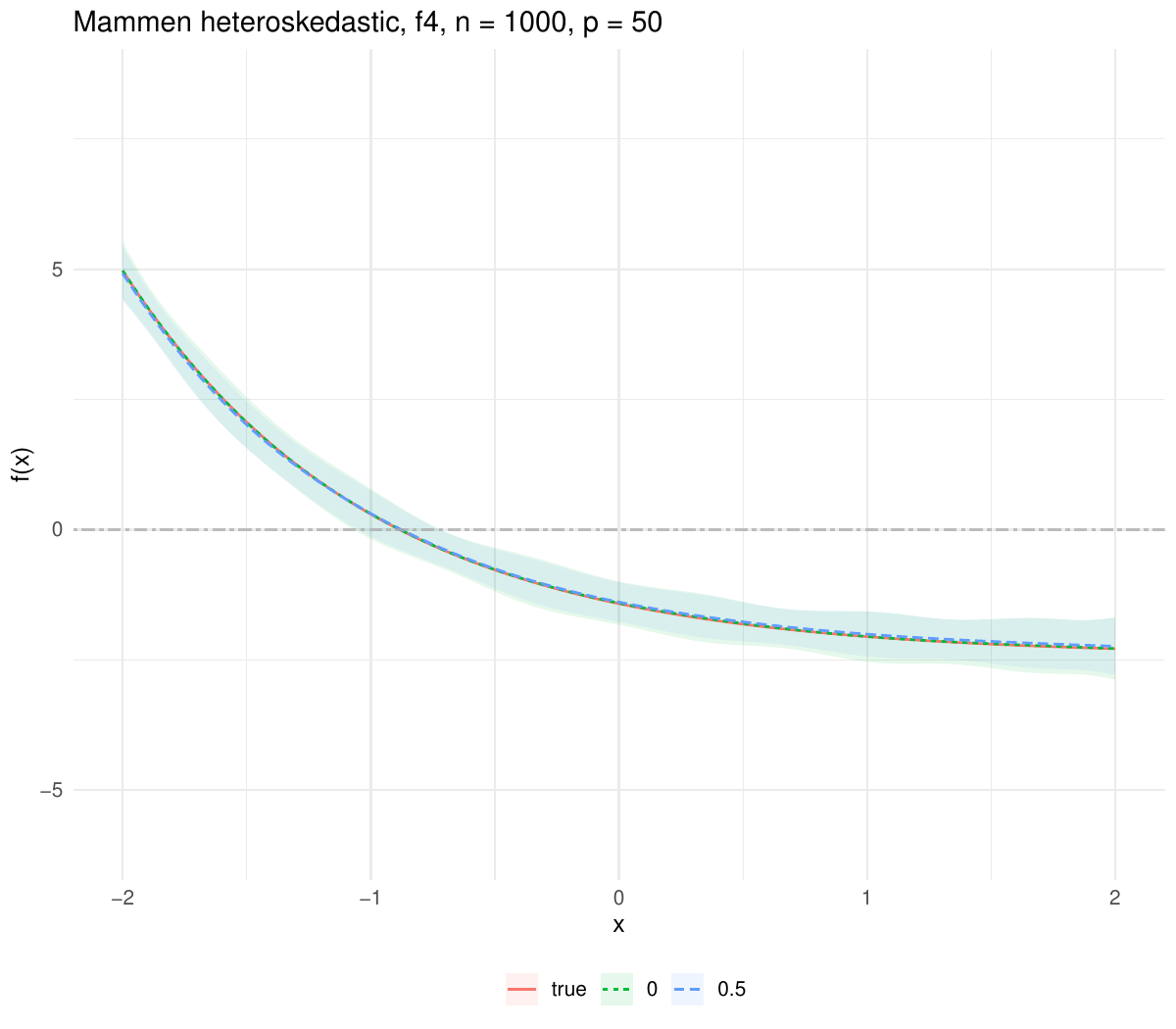}  
\includegraphics[scale=0.25]{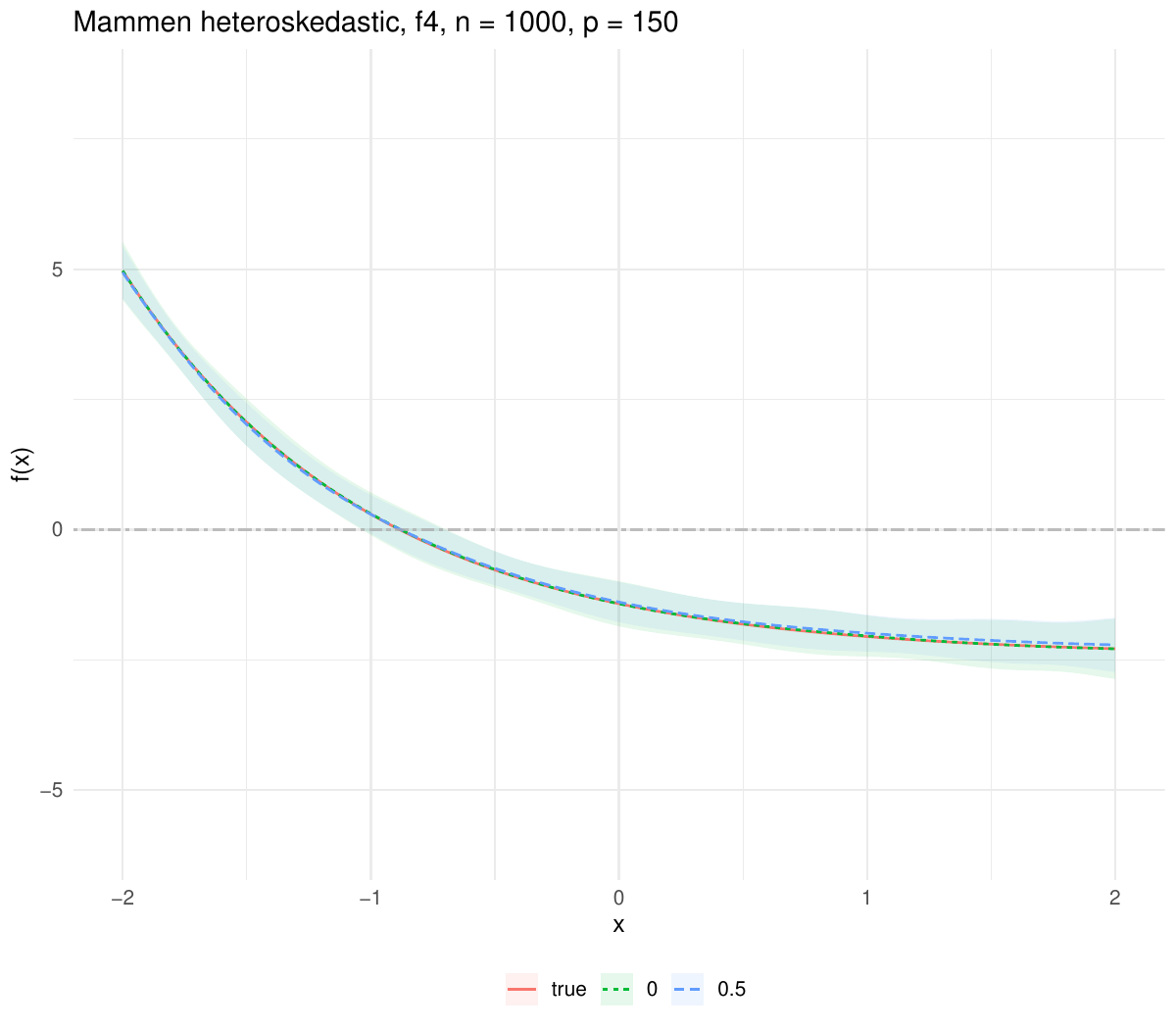} 
\caption{Average confidence bands, $f_4(x_4)$, heteroskedastic setting.}
\end{center}
The green dashed curve illustrates averaged estimated functions $\hat{f}_j(x_j)$  as obtained in $R=500$ repetitions in the scenario with $\rho = 0$. The corresponding averaged $95\%$-confidence bands are shaded green. The blue long-dashed line illustrates the results for the setting with $\rho = 0.5$ together with corresponding averaged confidence bands (shaded blue). The true function $f_j(x_j)$ is illustrated by the red solid curve.
\label{heteroskf4}
\end{figure}

\begin{figure}[t]
\begin{center}
\includegraphics[scale=0.25]{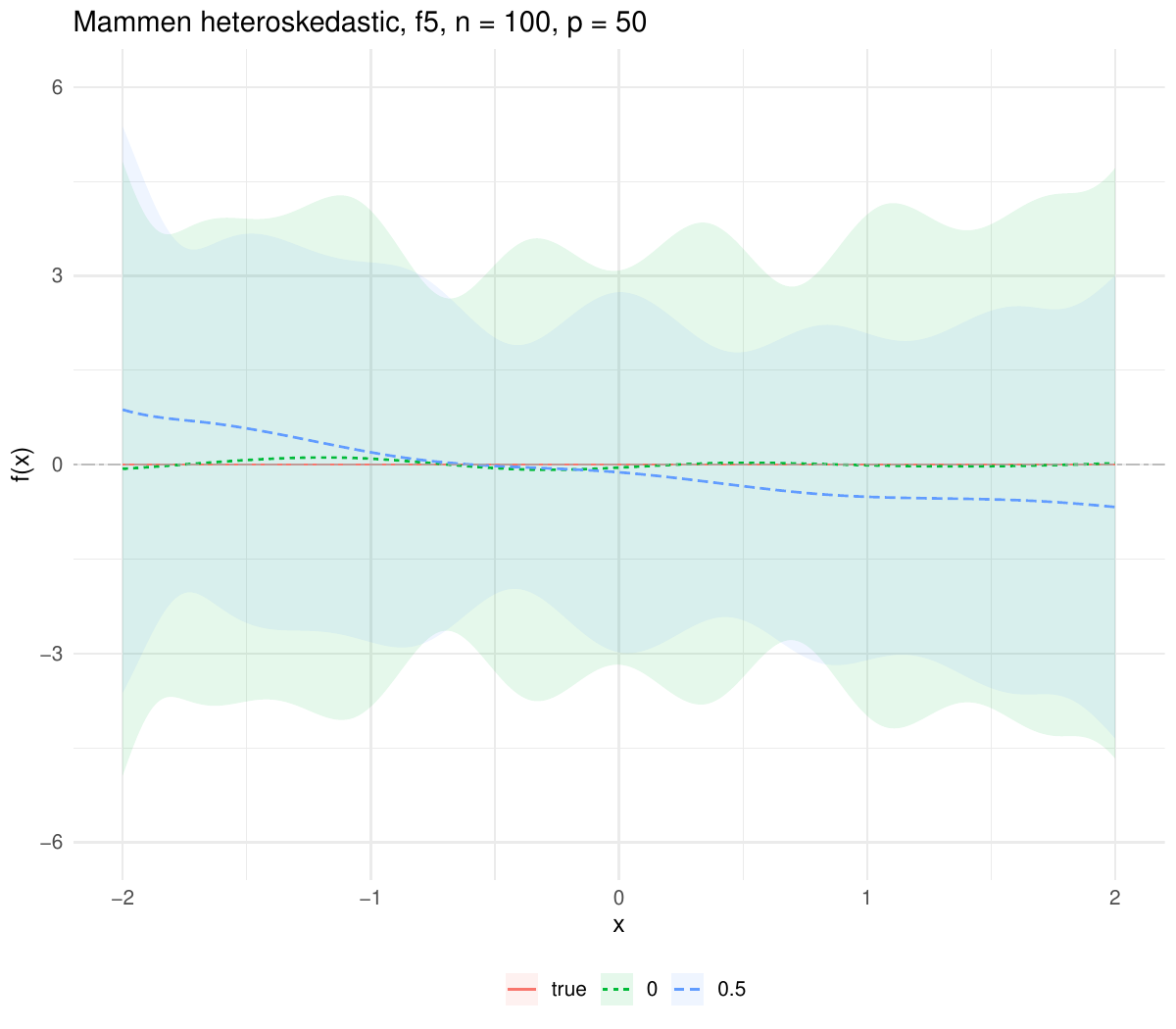} 
\includegraphics[scale=0.25]{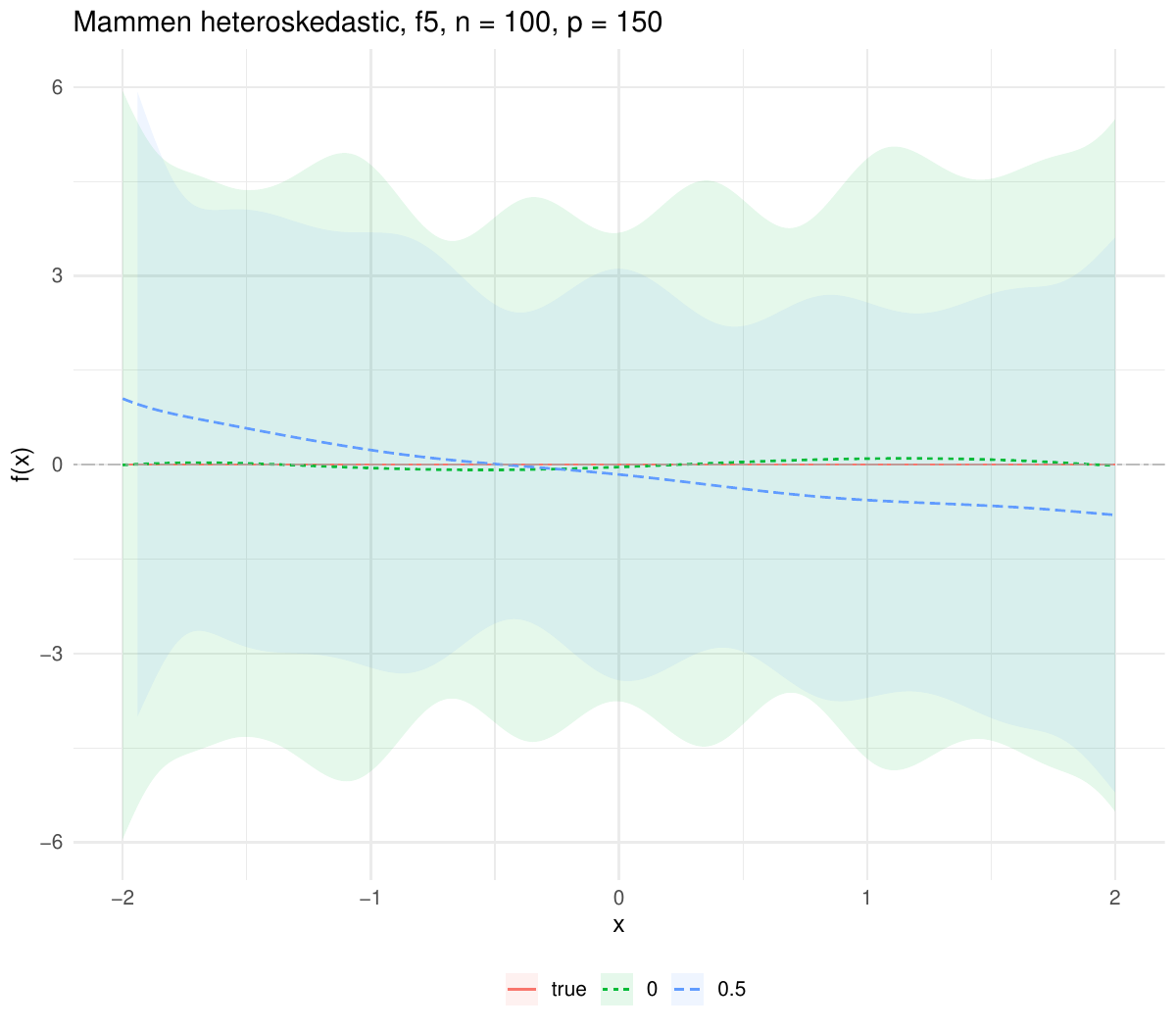} \\
\includegraphics[scale=0.25]{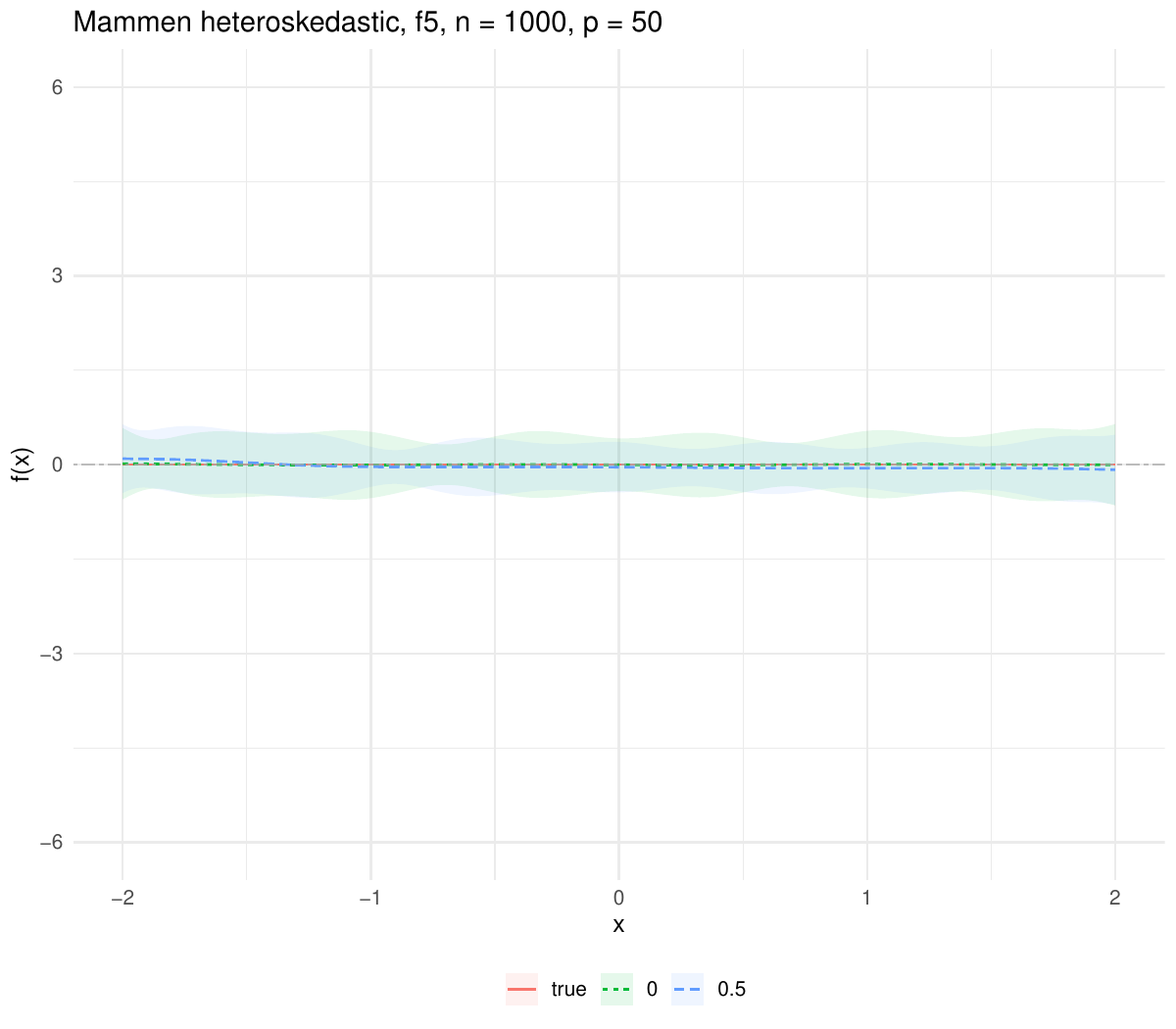}  
\includegraphics[scale=0.25]{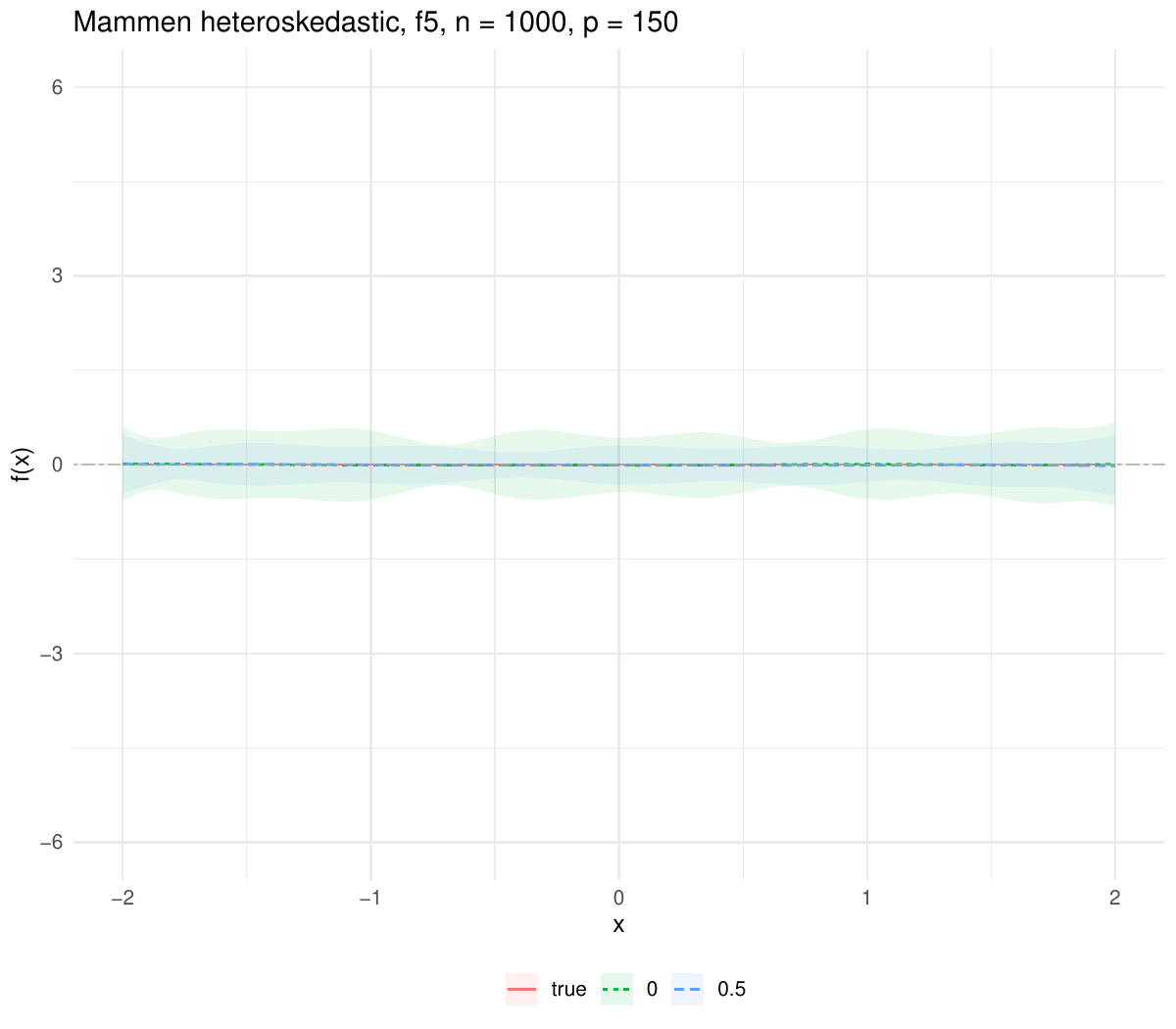} 
\caption{Average confidence bands, $f_5(x_5)$, heteroskedastic setting.}
\end{center}
The green dashed curve illustrates averaged estimated functions $\hat{f}_j(x_j)$  as obtained in $R=500$ repetitions in the scenario with $\rho = 0$. The corresponding averaged $95\%$-confidence bands are shaded green. The blue long-dashed line illustrates the results for the setting with $\rho = 0.5$ together with corresponding averaged confidence bands (shaded blue). The true function $f_j(x_j)$ is illustrated by the red solid curve.
\label{heteroskf5}
\end{figure}

\clearpage

\end{appendices}

\newpage

\footnotesize
\pagebreak
\bibliographystyle{plainnat}
\bibliography{Literatur_GAM_Kopie}

\end{document}